\documentclass[a4paper]{article}

\usepackage{amsthm,amsmath,amsfonts,amssymb,amscd,mathrsfs,latexsym}
\usepackage{enumerate,graphicx,cases,extarrows,indentfirst,tabularx}
\usepackage{slashed}

\usepackage{xcolor}
\usepackage[all,color,2cell]{xy}
\usepackage{xypic} \xyoption{all} 
\usepackage{hyperref} 

\usepackage{mathpazo}
\parskip=5pt

\linespread{1.2}
\textwidth=16cm  \oddsidemargin=0 cm \evensidemargin=0 cm
\setlength{\headsep}{20pt}
\newcommand{\op}{\operatorname}
\newcommand{\C}{\mathbb{C}}

\newcommand{\Q}{\mathbb{Q}}
\newcommand{\Z}{\mathbb{Z}}

\newcommand{\im}{\op{im}}

\providecommand{\abs}[1]{\left\lvert#1\right\rvert}

\newcommand{\abracket}[1]{\left\langle#1\right\rangle}
\newcommand{\bbracket}[1]{\left[#1\right]}
\newcommand{\fbracket}[1]{\left\{#1\right\}}
\newcommand{\bracket}[1]{\left(#1\right)}

\newcommand{\mc}{\mathcal}

\newcommand{\pa}{\partial}

\newcommand{\dbar}{\bar\pa}

\newcommand{\into}{\hookrightarrow}

\newcommand{\iso}{\cong}

\newcommand{\MC}{Maurer-Cartan }

\newtheorem{theorem}{Theorem}[section]
\newtheorem{proposition}[theorem]{Proposition}
\newtheorem{lemma}[theorem]{Lemma}
\newtheorem{remark}[theorem]{Remark}
\newtheorem{definition}[theorem]{Definition}
\newtheorem{example}[theorem]{Example}
\newtheorem{corollary}[theorem]{Corollary}

\newcommand{\A}{\mathcal{A}}
\newcommand{\K}{\mathcal{K}}
\newcommand{\pat}{\partial}
\newcommand{\mtb}{\mathbb}
\newcommand{\lsm}{\lesssim}
\newcommand{\up}{\Upsilon}
\newcommand{\lge}{\langle}
\newcommand{\rge}{\rangle}
\newcommand{\sD}{\slashed{D}}

\DeclareMathOperator{\PV}{PV}

\DeclareMathOperator{\KK}{K}

\DeclareMathOperator{\Tr}{Tr}

\title{On the $L^2$-Hodge theory of Landau-Ginzburg models}

\author{Si Li, Hao Wen}

\newcommand{\Addresses}{{
  \bigskip
  \footnotesize

  S. Li, \textsc{Department of Mathematical Sciences and Yau Mathematical Sciences Center, Tsinghua University,
    Beijing 100084, China}\par\nopagebreak
  \textit{E-mail address}: \texttt{sili@mail.tsinghua.edu.cn}

  \medskip

  H. Wen, \textsc{Yau Mathematical Sciences Center, Tsinghua University,
    Beijing 100084, China}\par\nopagebreak
  \textit{E-mail address} \texttt{hwen@math.tsinghua.edu.cn}

}}

\begin{document}
\maketitle

\begin{abstract}Let $X$ be a non-compact Calabi-Yau manifold and $f$ be a holomorphic function on $X$ with compact critical
locus. We introduce the notion of $f$-twisted Sobolev spaces for the pair
$(X,f)$ and prove the corresponding Hodge-to-de Rham degeneration property via $L^2$-Hodge theoretical methods when $f$ satisfies an asymptotic condition of strongly ellipticity. This leads to a Frobenius manifold via the Barannikov-Kontsevich construction, unifying the Landau-Ginzburg and Calabi-Yau geometry. Our construction can be viewed as a generalization of K.Saito's higher residue and primitive form theory for isolated singularities. As an application, we construct Frobenius manifolds for orbifold Landau-Ginzburg B-models which admit crepant resolutions. 

\end{abstract}

\setcounter{tocdepth}{2}
\tableofcontents


\section{Introduction}
Frobenius manifolds were introduced by Dubrovin as the fundamental algebraic structure of 2d topological field theories \cite{Du}. The first series of examples arise from K. Saito's study  \cite{Sai2} of primitive period maps associated to the germ of a holomorphic map $f: X\to \C$ with an isolated critical point. The pair $(X, f)$ is termed as the Landau-Ginzburg model, and $f$ is called the superpotential.  In \cite{BK}, Barannikov and Kontsevich gave a systematic construction of Frobenius manifolds on compact Calabi-Yau geometries. This construction illustrates the key role of dGBV algebras behind Frobenius manifolds, and has been vastly generalized to the non-commutative/categorical world \cite{B,KKP}.

In this paper we study the Landau-Ginzburg B-model associated to the triple $(X, \Omega_X, f)$ where
\begin{enumerate}
\item [1)] $X$ is a non-compact complex manifold;
\item [2)] $\Omega_X$ is a holomorphic volume form on $X$ (Calabi-Yau form);
\item [3)] $f:X\to \C$ is a holomorphic function with compact critical set $\text{Crit}(f)$. 
\end{enumerate}

There are two natural dGBV algebras  $(\PV(X), \dbar_f, \pa)$ and $(\PV_c(X), \dbar_f, \pa)$ where
$$
 \PV(X)=\Omega^{0,\bullet}(X, \wedge^\bullet T_X)
$$
is the space of smooth polyvector fields, $
  \PV_c(X)\subset \PV(X)
$ is the subspace of polyvector fields with compact support; $\dbar_f$ is the twist of the $\dbar$-operator by the holomorphic 1-form $df$;
$\pa$ is the divergence operator with respect to the holomorphic volume form $\Omega_X$.  See Section \ref{sec-PV} for details.

When $\text{Crit}(f)$ is compact, the embedding $(\PV_c(X), \dbar_f)\subset (\PV(X), \dbar_f)$ is a quasi-isomorphism, and we can use either of them to study the deformation space. To apply Barannikov-Kontsevich construction to obtain a Frobenius manifold structure on the cohomology $H(\PV(X), \dbar_f)$, we need 
\begin{itemize}
\item[(1)] \textbf{Hodge-to-de Rham degeneration}.  In the current context this is about the $E_1$-degeneration of the spectral sequence computing the $(\dbar_f+u \pa)$-cohomology in terms of the $u$-adic filtration on $\PV_c(X)[[u]]$ or $\PV(X)[[u]]$. Here $u$ is a formal variable.
\item[(2)] \textbf{Trace pairing $\Tr(-,-)$}. This is a linear pairing on polyvector fields compatible with $\dbar_f$ and $\pa$. 
\end{itemize}

In the case when $X$ is a compact K\"{a}hler Calabi-Yau manifold (hence $f=0$), the trace map is given by an honest integration again the Calabi-Yau volume form. The Hodge-to-de Rham degeneration follows from the standard Hodge theory on K\"{a}hler geometry. This is the original set-up of  \cite{BK}. 

In the case when $X$ is affine and $f$ has only an isolated singularity, the Hodge-to-de Rham degeneration holds automatically since $H(\PV(X), \dbar_f)$ is concentrated at degree $0$ (isomorphic to the Milnor ring $\text{Jac}(f)$). We can use $\PV_c(X)$ to define the trace pairing as in the compact Calabi-Yau case. It is shown in \cite{LLS}  that the trace pairing gives a geometric phase of higher residue pairings \cite{Sai3} and the Barannikov-Kontsevich construction reproduces a formal analogue of K.Saito's primitive forms \cite{Sai2}. 

Our main interest in this paper is to generalize the above setting to the case when $X$ is a noncompact Calabi-Yau manifold equipped with a complete K\"{a}hler metric $g$ and a holomorphic function $f$ such that
$$
\text{Crit(f)}= \text{compact}. 
$$
The two dGBV algebras modelled by $\PV_c(X)$ and $\PV(X)$ are not suffice to achieve our goal: 
\begin{itemize}
\item $\PV(X)$ is too big for trace pairing (integration) since $X$ is noncompact;
\item $\PV_c(X)$ is too small for Hodge theory since it does not preserve Hodge decomposition. 
\end{itemize}

In this paper, we describe a third model $\PV_{f,\infty}(X)$ (see Definition \ref{f-adapted-PV}) that enjoys both the integration map and the Hodge decomposition. It arises naturally from the study of a version of $f$-twisted  Sobolev spaces (see Section \ref{sec-f-adapted}).  Briefly speaking, $\PV_{f,\infty}(X)$ consists of polyvectors whose arbitary covariant derivatives multiplied by any powers of the norm  $|\nabla f|$ are still $L^2$ integrable (Definition \ref{defn-twisted-poly}). This allows us to use $L^2$ Hodge theory to establish the Hodge-to-de Rham degeneration property when $f$ satisfies an asymptotic condition of strongly ellipticity at infinity (Definition \ref{condition-T}).

Our main results in this paper are summarized as follows (see Theorem \ref{thm-dGBV-property}, \ref{homotopy_abelian}, \ref{lem-quasi-PV}, \ref{thm-quasi-smooth-formal}, \ref{thm-higher-residue-duality}). 

\begin{theorem}\label{main-thm}Let $(X, \Omega_X)$ be a noncompact Calabi-Yau manifold, $g$ be a complete K\"{a}hler metric on $X$, $f$ be a holomorphic function on $X$ with compact critical set.  Assume $(X, g)$ has a bounded geometry and $f$ is strongly elliptic (Definition \ref{condition-T}). Then 
\begin{enumerate}
\item [(1)] $(\PV_{f,\infty}(X),\dbar_f, \pa)$ forms a dGBV algebra with a trace pairing and Hodge-to-de Rham degeneration holds.  
\item [(2)] We have quasi-isomorphic embeddings of complexes 
$$
  (\PV_c(X), \dbar_f) \subset (\PV_{f,\infty}(X), \dbar_f)\subset (\PV(X), \dbar_f).
$$
In particular, the Hodge-to-de Rham degeneration holds for $(\PV(X), \dbar_f, \pa)$ as well. 
\item [(3)] The trace pairing induces a non-degenerate pairing on the cohomology $H(\PV_{f,\infty}(X), \dbar_f)$. It also induces a sesqui-linear pairing on $H(\PV_{f,\infty}(X)[[u]], \dbar_f+u\pa)$ that generalizes K.Saito's higher residue pairing. 
\end{enumerate}
\end{theorem}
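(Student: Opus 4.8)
The plan is to prove the three parts in turn, with the $L^2$-Hodge theory of the twisted operator $\dbar_f$ serving as the analytic engine for everything. For the algebraic content of part (1), I would first verify that $\PV_{f,\infty}(X)$ is closed under the wedge product and under the operators $\dbar_f = \dbar + df\wedge(-)$ and $\pa$. The only point requiring care is that multiplication by $df$ preserves the space, but this costs merely one extra power of $|\nabla f|$, which is harmless since the weighted Sobolev norms defining $\PV_{f,\infty}(X)$ already permit arbitrary powers of $|\nabla f|$; bounded geometry controls the covariant derivatives of the metric, the curvature, and of $df$ that enter the Leibniz estimates. The identities $\dbar_f^2 = \pa^2 = 0$, $\dbar_f\pa + \pa\dbar_f = 0$, and the BV relation are local and inherited from $(\PV(X),\dbar_f,\pa)$. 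The trace pairing, obtained by contracting $\alpha\wedge\beta$ against $\Omega_X$ and its conjugate and integrating over $X$, converges because strong ellipticity forces $|\nabla f|$ to stay bounded below (indeed to grow) at infinity, so the high-weight membership condition in $\PV_{f,\infty}(X)$ supplies enough decay; the compatibilities $\Tr(\dbar_f\alpha,\beta) = \pm\Tr(\alpha,\dbar_f\beta)$ and $\Tr(\pa\alpha,\beta) = \pm\Tr(\alpha,\pa\beta)$ then follow from Stokes' theorem with vanishing boundary contributions.

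Next I would set up the $L^2$-theory. Introducing the formal adjoint $\dbar_f^*$ and the Laplacian $\Delta_f = \dbar_f\dbar_f^* + \dbar_f^*\dbar_f$, the latter behaves on the ends of $X$ like a Schr\"odinger-type operator with potential of order $|\nabla f|^2$ together with bounded zeroth-order terms. Under strong ellipticity and bounded geometry this operator is essentially self-adjoint with compact resolvent at infinity, hence has closed range; one then obtains a Hodge decomposition of $\PV_{f,\infty}(X)$ into the (finite-dimensional) harmonic space $\mc{H}$, $\im\dbar_f$, and $\im\dbar_f^*$, with harmonic representatives decaying exponentially by Agmon-type estimates. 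The crucial additional input is a twisted \Kahler identity relating $\dbar_f^*$ to $\pa$ via the Calabi-Yau form --- a Witten-type deformation of $[\Lambda,\dbar] = -i\,\pa^*$ --- which forces the Laplacians built from $\dbar_f$ and from the conjugate twisted $\pa$-operator to coincide. This yields a bigraded refinement of the Hodge decomposition and the $\dbar_f\pa$-lemma, namely $\im\dbar_f\cap\ker\pa = \im(\dbar_f\pa) = \ker\dbar_f\cap\im\pa$, from which $E_1$-degeneration of the $u$-adic spectral sequence follows by the standard formal argument, giving in particular $H(\PV_{f,\infty}(X)[[u]],\dbar_f+u\pa)\iso H(\PV_{f,\infty}(X),\dbar_f)[[u]]$.

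For part (2), compactness of $\text{Crit}(f)$ means $df$ is nowhere vanishing off a compact set $K$, so the twisted complex is acyclic on $X\setminus K$ with an explicit bounded homotopy assembled from $df$ and the metric; combined with cutoff functions this produces a quasi-inverse to $(\PV_c(X),\dbar_f)\into(\PV_{f,\infty}(X),\dbar_f)$ --- concretely, a $\dbar_f$-closed element of $\PV_{f,\infty}(X)$ is cohomologous to its exponentially decaying harmonic projection, which is in turn cohomologous to a compactly supported representative. The inclusion $(\PV_{f,\infty}(X),\dbar_f)\into(\PV(X),\dbar_f)$ is then a quasi-isomorphism by two-out-of-three with the known quasi-isomorphism $(\PV_c(X),\dbar_f)\into(\PV(X),\dbar_f)$. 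Passing to $\C[[u]]$-coefficients with differential $\dbar_f + u\pa$ is formal, since these are $u$-adically complete complexes whose $u=0$ reductions are quasi-isomorphisms, so the degeneration property transfers to $(\PV(X),\dbar_f,\pa)$. For part (3), identifying $H(\PV_{f,\infty}(X),\dbar_f)$ with $\mc{H}$, the conjugate-linear star operator coming from the metric and the Calabi-Yau form preserves $\mc{H}$, and $\Tr$ paired against it reproduces the positive-definite $L^2$ inner product, giving non-degeneracy on cohomology. Extending $\Tr$ to a $\C[[u]]$-sesquilinear pairing (with respect to $u\mapsto -u$) produces a pairing on $\PV_{f,\infty}(X)[[u]]$ that descends to $H(\PV_{f,\infty}(X)[[u]],\dbar_f+u\pa)$ by the $\Tr$-compatibilities above; non-degeneracy over $\C[[u]]$ follows from the $u=0$ case together with the degeneration isomorphism by a Nakayama-type argument, and comparison with the affine isolated-singularity computation in \cite{LLS} exhibits it as a generalization of K.~Saito's higher residue pairing.

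The main obstacle is the $L^2$-Hodge theory underlying part (1): proving that $\Delta_f$ is essentially self-adjoint with closed range (equivalently that $\dbar_f$ has closed range on the weighted Sobolev tower defining $\PV_{f,\infty}(X)$), establishing the weighted elliptic and Agmon-type decay estimates, and verifying the twisted \Kahler identity in exactly the form needed for the bigraded decomposition and the $\dbar_f\pa$-lemma. This is precisely where the hypotheses that $(X,g)$ has bounded geometry and that $f$ is strongly elliptic at infinity are indispensable, and it is the technical heart on which all three parts rest.
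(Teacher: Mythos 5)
Your overall architecture (weighted $f$-twisted Sobolev closure properties, trace pairing by integration against $\Omega_X$, harmonic theory for $\Delta_f$ with discrete spectrum, homotopy off the critical locus for the comparison quasi-isomorphisms, harmonic splitting and star-operator nondegeneracy) matches the paper's, but there is a genuine gap at the step you yourself identify as the heart: the Hodge-to-de Rham degeneration. You derive it from a full $\pa\dbar_f$-lemma, $\im\dbar_f\cap\ker\pa=\im(\dbar_f\pa)=\ker\dbar_f\cap\im\pa$, which you claim follows from the twisted K\"ahler identities and a bigraded refinement of the Hodge decomposition. But the twisted K\"ahler identities relate $\dbar_f$ to the \emph{conjugate-twisted} operator $\pa_f=\pa+d\bar f\wedge$, and the coincidence of Laplacians is $\Delta_f=[\pa_f,\pa_f^*]$; this controls $\ker\pa_f$ and $\im\pa_f$, not the operator $\pa$ (the divergence with respect to $\Omega_X$, i.e.\ the untwisted holomorphic differential) which is the one entering $Q_f=\dbar_f+u\pa$. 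Your proposal silently replaces $\pa_f$ by $\pa$ at exactly this point, and no argument is given for why the bigraded decomposition says anything about $\pa$-closed or $\pa$-exact elements; since $\pa$ and $\pa_f$ differ by $d\bar f\wedge$, their kernels and images are genuinely different, so the ``standard formal argument'' has no foundation as written.

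The paper explicitly avoids asserting the full lemma: it proves only the weaker statement (Lemma \ref{weak-ddbar}) that $\ker(\dbar_f)\cap\pa(\ker\pa_f)\subset\im(\dbar_f)$ inside $\A_{f,\infty}(X)$, and the bridge from $\pa_f$ to $\pa$ is the weight-operator identity $\pa=[W,\pa_f]$, where $W$ multiplies a $(p,q)$-component by $p$ (well defined on $\A_{f,\infty}$ precisely because the $\A''_{f,k}$-norms do not mix Hodge types). With this, if $\alpha=\pa\beta$ and $\pa_f\beta=0$ then $\alpha=-\pa_f W\beta$ is $\pa_f$-exact, hence has no harmonic component, and being $\dbar_f$-closed it is $\dbar_f$-exact. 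Degeneration then comes from an explicit iterative correction of a harmonic representative, $\alpha_r=(-\dbar_f^*G_{\A}\pa)^r\alpha_0$, where at each stage one must verify both $\dbar_f(\pa\alpha_r)=0$ and $\pa_f\alpha_r=0$ so that the weak lemma applies again; this verification (using $[\pa_f,\dbar_f^*]=0$ from the K\"ahler identities) is the actual content, and it also produces the splitting $s(\phi)=\phi+\sum_{i\ge 1}(-u\dbar_f^*G_{\A}\pa)^i P_{\A}\phi$ used later for the higher residue pairing. To repair your proof you would either have to prove the full $\pa\dbar_f$-lemma for the untwisted $\pa$ (which is not established in the paper and does not follow from the identities you cite), or insert the $W$-operator argument and the inductive correction scheme. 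The remaining parts of your outline are essentially the paper's route: part (2) via the $V_f=(df\wedge)^*/|\nabla f|^2$ homotopy with cutoffs (where one must also check, as the paper does via the estimate on $\nabla^k(1/|\nabla f|^2)$, that the homotopy preserves $\A_{f,\infty}$), and part (3) via pairing a harmonic $\alpha$ with $*\bar\alpha$, noting that $*\bar\alpha$ is $\Delta_{-f}$-harmonic and the PV-pairing is compared to the form-pairing through $\up$ with a sign/conjugation twist.
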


The establish of the above theorem is based on $L^2$-Hodge theoretical methods, which is the main part of the current work. Using harmonic polyvectors, we can construct a splitting map (Proposition \ref{existence-good-basis})
$$
H(\PV_{f,\infty}(X), \dbar_f)\to H(\PV_{f,\infty}(X)[[u]], \dbar_f+u\pa)
$$
that is compatible with the pairing on both sides.  Thanks to Theorem \ref{main-thm}, this leads to a Frobenius manifold structure on the cohomology $H(\PV(X), \dbar_f)$ via Birkhoff factorization method \cite{Sai2, BK, B}.  

There are three main classes of examples that Theorem \ref{main-thm} applies (see Section \ref{example} for details). 
\begin{enumerate}
\item [(a)]$(X,\Omega_X) = (\mtb{C}^n, dz_1\wedge \cdots \wedge dz_n)$, $g$ is the standard flat metric, $f$ is a non-degenerate quasi-homogeneous polynomial. Frobenius manifolds were first constructed in this category by K. Saito \cite{Sai2}, and  developed by M. Saito \cite{SaiM} to arbitrary isolated singularities via the Hodge theory of the Brieskorn lattice. Landau-Ginzburg B-models of this type are mirror to the FJRW theory of counting solutions of Witten's equation \cite{FJR2}. See \cite{HLSW, Li, TF} for a recent exposition on such mirror theorems.  At the categorical level, they are mirror to Fukaya-Seidel categories \cite{Se}.

\item [(b)]$(X,\Omega_X) = ((\mtb{C^*})^n, {dz_1\over z_1}\wedge \cdots \wedge {dz_n\over z_n})$, $g$ is the standard complete metric, $f$ is a convenient non-degenerate Laurent polynomial. Landau-Ginzburg B-models of this type are mirror to Gromov-Witten theory on toric varieties \cite{G1, G2, HV} and variant mirror constructions are known in this literature. The corresponding Frobenius manifolds were studied in \cite{B2, Sab2, DS1,DS2}

\item [(c)] $\pi: X\to \C^n/G$ is a crepant resolution of quotient of $\C^n$ by a finite group $G\subset SU(n)$, $\Omega_X$ is the pull-back of the trivial Calabi-Yau form on $\C^n$. Let $f$ be a $G$-invariant polynomial on $\C^n$ with an isolated singularity at the origin. It pulls back to define a superpotential $\pi^*(f)$ on $X$.   Let $g$ be an ALE K\"{a}hler metric. Then the data $(X, \Omega_X, g, \pi^*(f))$ satisfies the conditions of Theorem \ref{main-thm}. The Landau-Ginzburg B-model of $(X,  \pi^*(f))$ is expected to be equivalent to the orbifold Landau-Ginzburg B-model of $(\C^n, f, G)$ with $G$ being the orbifold group. See \cite{Me, V} for some case studies in the context of matrix factorization categories. 

\end{enumerate}

The Hodge-to-de Rham degeneration of Landau-Ginzburg models for proper  $f$ was conjectured by Barannikov-Kontsevich, who also proposed the $L^2$ approach in a seminar talk. In the algebraic world, the first complete proof for this result in terms of twisted de Rham complex appeared in the work of Sabbah \cite{Sab} and Ogus-Vologodsky \cite{OV}. Stronger degeneracy result appeared in Esnault-Sabbah-Yu \cite{ESY} and Katzarkov-Kontsevich-Pantev \cite{KKP2} for the irregular Hodge filtration. When $f$ is not proper, but defined on smooth affine $X$ satisfying a tameness condition, the degeneracy result is proved in \cite{Sab3, Sab4} for the cohomologically tame case and in \cite{NS} for the M-tame case \footnote{The authors are extremely grateful to Claude Sabbah for explaining various degeneracy results}. Certain generalization to quotient stacks was discussed recently in \cite{HP}. The algebraic approach uses the theory of D-modules/microlocal analysis or the characteristic $p$ methods. Our approach uses analysis and is $L^2$-Hodge theoretic.   The notion of f-twisted Sobolev spaces and corresponding harmonic analysis have their own interests.  The $L^2$-pairing naturally generalizes Saito's higher residue pairing, which was first proposed in the physics literature by Losev \cite{Lo}.

The harmonic analysis of Landau-Ginzburg models has its physics context in N=2 supersymmetric quantum mechanics arising from singularities \cite{CGP, CV}, extending the real case of \cite{W}. Its $L^2$ analysis was first studied in \cite{KL} in which Hodge decompositions are established when $f$ satisfies the tame condition of ellipticity (the $k=2$ part of our strongly elliptic condition \eqref{tame}). This is further developed in \cite{Fan} toward $tt^*$-geometry of \cite{CV}. To obtain the Hodge-to-de Rham degeneration property for the pair of operators $(\dbar_f, \pa)$, we need a bit stronger control on the $\pa$-action and establish a weaker version of a variant of $\pa\dbar_f$-lemma (Lemma \ref{weak-ddbar}). In fact, we feel that our strongly elliptic condition \eqref{tame} could be weakened suitably. Since it applies to main examples of our interest, we will focus on this situation. Our degeneration result and Frobenius manifold construction for crepant resolutions of orbifold Landau-Ginzburg B-models in class (c) of examples above seem new and not established yet from the algebraic method mentioned above.

 Another motivation of the current work is to put Landau-Ginzburg B-models into the framework of quantum Kodaria-Spencer theory \cite{BCOV} along the effective renormalization method developed in \cite{CL}.  The harmonic analysis developed in this paper allows the homological regularization scheme analogous to the compact Calabi-Yau case as presented in \cite{CL}. 

\noindent \textbf{Acknowledgements.} 
The authors would like to thank beneficial discussions with Andrei C\v{a}ld\v{a}raru, Huijun Fan, Dmitry Kaledin, Tony Pantev, Claude Sabbah, and Kyoji Saito. The work of S.Li is partially supported by grant 11801300 of NSFC  and grant Z180003 of Beijing Natural Science Foundation. The work of H. Wen is partially supported by Tsinghua Postdoc Grant 100410032. Part of this work was done when SL was visiting Department of Mathematics and Statistics at Boston University and Perimeter Institute for Theoretical Physics in Jan 2019. SL thanks for their hospitality and provision of excellent working enviroment.

\noindent \textbf{Conventions}
\begin{itemize}
\item We will work with graded vector spaces. For linear operators $A, B$ on a graded vector space, 
$$
   [A, B]:= A B-(-1)^{|A||B|}B A
$$ 
always means the graded commutator. Here $|\cdot|$ is the parity of the operator. 

\item We will frequently use $A \lsm B$ to indicate that there exists $c > 0$ such that $A \leq c\cdot B$.
We will also use  $A \approx B$ to indicate $A \lsm B$ and $B \lsm A$ hold simultaneously.
\end{itemize}

\section{$L^2$ Hodge theory for Landau-Ginzburg model}

\subsection{Differential forms and twisted operators}\label{sec-forms}

Let $X$ be a non-compact complex $n$-manifold equipped with a holomorphic volume form $\Omega_X$. Let $f$ be a holomorphic function on $X$, with the set 
$$
\text{Crit}(f)=\{p\in X| df(p)=0\}
$$ 
of critical points being compact. Let $\A^{i,j}(X)$ denote the space of smooth $(i,j)$-forms on $X$ and 
$$
\A(X)=\bigoplus_{i,j}\A^{i,j}(X).
$$ 
The de Rham differential decomposes $d=\pat+\bar\pat$ where 
$$
\pat: \A^{i,j}(X)\to \A^{i+1,j}(X), \quad \bar\pat: \A^{i,j}(X)\to \A^{i,j+1}(X). 
$$ 

Using $f$,  we can define a twisted Cauchy-Riemann operator on $\A(X)$ 
\begin{equation*}
\bar\pat_f := \bar\pat + df \wedge.
\end{equation*}
Clearly we have
\begin{align*}
\bar\pat_f^2 = 0
\quad \text{and} \quad
\pat \bar\pat_f + \bar\pat_f \pat = 0.
\end{align*}
Let $u$ be a formal variable. The above two equations are equivalent to 
$$
Q_f^2 = 0, \quad \text{where}\quad Q_f := \bar\pat_f + u\pat. 
$$
As a result, we obtain two complexes 
$$
(\A(X)[[u]], Q_f), \quad (\A(X)((u)), Q_f)
$$
where $\A(X)[[u]]$ and $\A(X)((u))$ denote respectively the space of formal power series and formal Laurent series in $u$ with coefficients in $\A(X)$.
Similar notations will be used throughout this paper.

\subsection{$L^2$ theory preliminaries}

In this section, we discuss aspects of $L^2$ theory needed in subsequent sections. We assume $X$ is equipped with a complete K\"ahler metric $g$. It defines Hermitian inner products on all tensors bundles. 

We give here explicitly the inner products for differential forms to set up our notations.
In local coordinates,  $g = \sum g_{i\bar j} (dz^i \otimes dz^{\bar j} + dz^{\bar j} \otimes dz^i)$, and 
any $\varphi \in \A^{p,q}(X)$ is  expressed by 
\begin{align*}
\varphi = \frac{1}{p!q!} {\sum_{\substack {i_1, \cdots, i_p, \\ j_1, \cdots, j_q}}} \varphi_{i_1 \cdots i_p, \bar j_1 \cdots \bar j_q} dz^{i_1}\wedge \cdots dz^{i_p} \wedge dz^{\bar j_1} \wedge \cdots \wedge dz^{\bar j_q},
\end{align*}
where $\varphi_{i_1 \cdots i_p, \bar j_1 \cdots \bar j_q}$ is antisymmetric for the $i$-indices and ${\bar j}$-indices. For $\varphi, \psi\in \A^{p,q}(X)$,  one defines their pointwise inner product
\begin{equation*}
(\varphi, \psi)_{\A} := \frac{1}{p!q!} \sum_{i,j,k,l} g^{\bar k_1 i_1} \cdots g^{\bar j_1l_1} \cdots g^{\bar j_q l_q} \varphi_{i_1 \cdots i_p, \bar j_1 \cdots \bar j_q} \overline{\psi_{k_1 \cdots k_p, \bar l_1 \cdots \bar l_q}},
\end{equation*}
where $(g^{\bar i j})$ is the inverse matrix of $(g_{i \bar j})$.
The subscript $\A$ means we are working with differnetial forms, which we will sometimes omit when there is no confusion.
The $L^2$ inner product is defined to be
\begin{equation*}
\lge \varphi,\psi \rge_{\A} := \int_X (\varphi, \psi)_{\A}dv_g,
\end{equation*}
where $dv_g$ is the volume on $X$ induced by $g$.
The $L^2$-norm is denoted by $\big\|\cdot\big\|_{\A}$. 

\begin{definition}
$L^2_{\A}(X)$ is the completion, with respect to the above defined $L^2$-norm, of the subspace of forms in $\A(X)$ that are bounded with respect to the same norm.
The $L^2$ space for all tensor bundles are defined similarly.
For tensor bundle $E$, the $L^2$ space of its global sections is denoted by $L^2_E(X)$ and the norm is denoted by $\big\|\cdot\big\|_E$
\end{definition}

Let $\nabla$ be the unique  $g$-compatible torsion-free connection on the complexified tangent bundle $T_{\mtb{C}}X$. It defines a connection on any tensor bundle $E$
$$
\nabla: E\to T_{\mtb{C}}^*X\otimes E
$$ 
which we still denote by $\nabla$.  Iterated applications of $\nabla$ gives
$$
\nabla^k: E\to (T_{\mtb{C}}^*X)^{\otimes k}\otimes E.
$$

\begin{definition}
The Sobolev spaces $W^{k,2}_E(X)$ are defined as subspaces of $L^2_E(X)$ 
\begin{equation*}
W^{k,2}_E(X) := \fbracket{\phi\in L^2_E(X) | \nabla^i \phi \in L^2_{(T_{\mtb{C}}^*X)^{\otimes i}\otimes E}(X) \text{ for all }  0 \leq i \leq k}.
\end{equation*}
The $W^{k,2}_E$-norm is defined by
\footnote{This is equivalent to the usual definition by mean inequality.}
\begin{equation*}
\big\|\phi\big\|_{W^{k,2}_E} := \sum_{i=0}^k \big\|\nabla^i \phi\big\|_{(T_{\mtb{C}}^*X)^{\otimes i}\otimes E}.
\end{equation*}
\end{definition}

Recall that a complete Riemannian manifold $(X, g)$ is said to have a \emph{bounded geometry} if it has positive injective radius and has a uniform bound for each order of covariant derivative of the Riemannian curvature tensor.
Sobolev spaces are well behaved on Riemannian manifolds with bounded geometry (see \cite{Heb}).
In particular, the Density theorem and Sobolev's embedding theorem hold.
Throughout this paper, we work with $(X, g)$ which has bounded geometry.

The Cauchy-Riemann operator $\bar\pat$, initially defined on $\A_c(X)$, is densely defined on $L^2_{\A}(X)$.
It has a natural closed extension, still denoted by $\bar\pat$, with domain
\begin{equation*}
\text{Dom}(\bar\pat) := \{\varphi \in L^2_{\A}(X)|\bar\pat \varphi \in L^2_{\A}(X)\}.
\end{equation*}
Its adjoint $\bar\pat^*$ is defined through the equality
\begin{equation*}
\lge \bar\pat^* \varphi,\psi \rge_{\A}
= \lge \varphi,\bar\pat \psi \rge_{\A}, \quad \forall \psi \in \text{Dom}(\bar\pat).
\end{equation*}

Let $\sD := \bar\pat + \bar\pat^*$ be the Dirac-type operator on $L^2_{\A}(X)$, which is an elliptic self-adjoint operator.
We have the following lemma on equivalence of norms.

\begin{lemma} \label{usual_equiv}
Assume $(X,g)$ has bounded geometry, then the $W^{k,2}_{\A}$-norm is equivalent to the norm $\sum_{i=0}^k \big\|\sD^i \cdot\big\|_{\A}$.
Hence $W^{k,2}_{\A}(X)$ can be represented as
\begin{equation*}
W^{k,2}_{\A}(X) = \{\phi | \sD^i \phi \in L^2_{\A}(X) \text{ for all } 0 \leq i \leq k\}.
\end{equation*}
\end{lemma}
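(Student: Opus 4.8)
The plan is to compare the two families of norms via elliptic estimates for the Dirac-type operator $\sD = \dbar + \dbar^*$, using in an essential way that $(X,g)$ has bounded geometry, so that the standard local elliptic (Gårding) estimates hold with constants uniform over $X$. Since $\sD$ is a first-order elliptic operator built from $\nabla$ by composing with bounded algebraic (bundle) maps—contraction with the metric and wedging—one inequality is essentially formal: $\sD\phi$ is a sum of terms each controlled pointwise by $|\nabla\phi|$ plus, because $g$ is Kähler with bounded geometry, connection/curvature coefficients that are uniformly bounded. Iterating, $\|\sD^i\phi\|_\A \lsm \sum_{j\le i}\|\nabla^j\phi\|$, hence $\sum_{i=0}^k\|\sD^i\phi\|_\A \lsm \|\phi\|_{W^{k,2}_\A}$. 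The content is in the reverse direction.

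For the reverse inequality I would proceed by induction on $k$, the case $k=0$ being trivial. The key analytic input is the elliptic estimate for $\sD$ with uniform constants: on geodesic balls of a fixed radius (less than the injectivity radius), in uniformly controlled local coordinates/frames provided by bounded geometry, one has $\|\nabla^{i+1}\phi\|_{L^2(B)} \lsm \|\sD\phi\|_{W^{i,2}(2B)} + \|\phi\|_{W^{i,2}(2B)}$ with constant independent of the center. Summing (with a uniformly locally finite cover, again from bounded geometry, cf. \cite{Heb}) upgrades this to the global statement $\|\nabla^{i+1}\phi\|_\A \lsm \|\sD\phi\|_{W^{i,2}_\A} + \|\phi\|_\A$. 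Feeding in the inductive hypothesis $\|\psi\|_{W^{i,2}_\A}\approx \sum_{j\le i}\|\sD^j\psi\|_\A$ applied to $\psi=\sD\phi$ and to $\psi=\phi$ then yields $\|\nabla^{i+1}\phi\|_\A \lsm \sum_{j\le i+1}\|\sD^j\phi\|_\A$, which together with the inductive hypothesis gives $\|\phi\|_{W^{i+1,2}_\A}\lsm \sum_{j\le i+1}\|\sD^j\phi\|_\A$. One should first establish this for $\phi$ in a dense subspace (say bounded forms in $\A(X)$, where all manipulations are justified) and then pass to the closure; the density theorem on manifolds of bounded geometry guarantees the two resulting spaces agree, giving the stated description of $W^{k,2}_\A(X)$.

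The main obstacle is making the elliptic estimate genuinely \emph{uniform} over the non-compact $X$: on a compact manifold one simply quotes Gårding's inequality and patches with a finite partition of unity, but here one must invoke bounded geometry to obtain a cover by balls of fixed size with uniformly bounded overlaps and coordinate charts in which the symbol of $\sD$ and all its derivatives are uniformly controlled, so that the local constants do not degenerate at infinity. Once this uniformity is in hand—this is exactly the content of the Sobolev theory on bounded-geometry manifolds recalled before the lemma—the bootstrap argument above is routine. A secondary, purely bookkeeping point is to carry the lower-order error terms $\|\phi\|_{W^{i,2}_\A}$ correctly through the induction, which is handled automatically by the inductive hypothesis.
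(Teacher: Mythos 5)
Your plan is sound, but it takes a different route from the paper: the paper does not argue at all, it simply quotes Theorem 3.5 of \cite{Sal}, which is precisely the statement that on a manifold of bounded geometry the Sobolev norms built from $\nabla$ and those built from powers of a Dirac-type operator are equivalent (and, as the paper remarks later in the proof of Lemma \ref{hard}, Salomonsen's argument runs through the Bochner--Weitzenb\"ock formula, i.e.\ global integration by parts, rather than through patched local G\r{a}rding estimates). Your version reproves this cited result by the other standard mechanism: uniform interior elliptic estimates for $\sD$ on balls of fixed radius, made uniform by bounded geometry, summed over a uniformly locally finite cover, followed by an induction/bootstrap; the easy direction $\sum_i\|\sD^i\phi\|_{\A}\lsm\|\phi\|_{W^{k,2}_{\A}}$ is indeed formal since $\sD\in\text{Diff}^1_b$. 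What the citation buys is brevity; what your argument buys is self-containedness and a template that the paper itself later imitates for the twisted operator $\sD_f$.

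One point you should nail down rather than wave at: the final set-theoretic identification $W^{k,2}_{\A}(X)=\{\phi\mid \sD^i\phi\in L^2_{\A}(X),\ 0\le i\le k\}$ requires knowing that smooth (compactly supported) forms are dense in the \emph{graph norm} $\sum_{i\le k}\|\sD^i\cdot\|_{\A}$, i.e.\ that the maximal and minimal domains of $\sD^i$ coincide. The density theorem of \cite{Heb} only gives density of $\A_c(X)$ in $W^{k,2}_{\A}(X)$ for the covariant-derivative norm, so it handles one inclusion but not the other; for the graph-norm density one needs essential self-adjointness of the powers of $\sD$ (e.g.\ Chernoff's theorem \cite{Che}, which is exactly what the paper invokes in Lemma \ref{density} for $\sD_f$) or a cutoff argument exploiting completeness. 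With that ingredient added, your bootstrap closes and gives the lemma.
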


\begin{proof}
This Lemma is a special case of Theorem 3.5 in \cite{Sal}. 
\end{proof}

The twisted Cauchy-Riemann operator $\bar\pat_f$ has also a closed extension on $L^2_{\A}(X)$, with the domain
\begin{equation*}
\text{Dom}(\bar\pat_f) := \{\varphi \in L^2_{\A}(X)|\bar\pat_f \varphi \in L^2_{\A}(X)\}.
\end{equation*}
Its adjoint $\bar\pat_f^*$ is defined through the equality
\begin{equation*}
\lge \bar\pat_f^* \varphi,\psi \rge_{\A}
= \lge \varphi,\bar\pat_f \psi \rge_{\A}, \quad \forall \psi \in \text{Dom}(\bar\pat_f).
\end{equation*}

\begin{definition}
The twisted Laplacian $\Delta_f$ associated to $f$ is defined to be
\begin{equation*}
\Delta_f := [\bar\pat_f, \bar\pat_f^*] = \bar\pat_f \bar\pat_f^*+\bar\pat_f^* \bar\pat_f,
\end{equation*}
whose domain $\text{Dom}(\Delta_f)$ is given by
\begin{equation*}
\text{Dom}(\Delta_f)=\{\varphi \in \text{Dom}(\bar{\pat}_f) \cap \text{Dom}(\bar{\pat}_{f}^*)| \bar{\pat}_f(\varphi) \in \text{Dom}(\bar{\pat}_{f}^{*}), \bar{\pat}_{f}^{*}(\varphi) \in \text{Dom}(\bar{\pat}_f)\}.
\end{equation*}
\end{definition}

In physics literature, the twisted Laplacian $\Delta_f$ represents the Hamiltonian operator of N=2 supersymmetric quantum mechanics arising from singularities \cite{CGP, CV}, extending the real case of \cite{W}. Its harmonic analysis was first studied in \cite{KL}.

The following proposition is straight-forward but fundamental.
\begin{proposition}
The Laplacian $\Delta_f$ is a densely defined, non-negative, linear self-adjoint operator on $L^2_{\A}(X)$.
\end{proposition}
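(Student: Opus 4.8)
The plan is to verify the three asserted properties in turn, all of which follow from standard functional-analytic facts about the formal differential operator $\bar\pat_f$ and its formal adjoint. First I would record the elementary observation that $\bar\pat_f=\bar\pat+df\wedge$ is a first-order differential operator whose zeroth-order part is multiplication by the bounded-on-compact-sets (but in general unbounded) tensor $df$; on the core $\A_c(X)$ of compactly supported forms, $\bar\pat_f$ is densely defined, and integration by parts (using that boundary terms vanish for compactly supported forms, $X$ being without boundary) shows that its formal adjoint agrees with the operator $\bar\pat_f^\dagger$ defined by $\lge \bar\pat_f^\dagger\varphi,\psi\rge_\A=\lge\varphi,\bar\pat_f\psi\rge_\A$ for $\psi\in\A_c(X)$. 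Hence $\bar\pat_f$ is densely defined and closable, and I take $\bar\pat_f^*$ to be the Hilbert-space adjoint of its closure, which is itself densely defined since $\bar\pat_f$ is closable; this is exactly the content already set up in the excerpt.

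Next I would address self-adjointness of $\Delta_f=[\bar\pat_f,\bar\pat_f^*]$. The clean way is to invoke the general principle (von Neumann's theorem) that for any densely defined closed operator $T$ between Hilbert spaces, the operators $T^*T$ and $TT^*$ are self-adjoint and non-negative, with the stated domain description. Here one applies this to $T=\bar\pat_f$ acting on the $\mathbb{Z}$-graded Hilbert space $L^2_\A(X)=\bigoplus_{p,q}L^2_{\A^{p,q}}(X)$: because $\bar\pat_f^2=0$, the cross terms $\bar\pat_f\bar\pat_f$ and $\bar\pat_f^*\bar\pat_f^*$ vanish, so on the domain specified one has $\Delta_f=\bar\pat_f\bar\pat_f^*+\bar\pat_f^*\bar\pat_f$, which is the direct sum of $T^*T$ and $TT^*$ pieces in each bidegree (more precisely, $\Delta_f$ restricted to bidegree $(p,q)$ is $\bar\pat_f^*\bar\pat_f$ on that summand plus $\bar\pat_f\bar\pat_f^*$ on it, and each is self-adjoint by the von Neumann argument applied to the relevant restriction/corestriction of $\bar\pat_f$). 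Non-negativity is immediate from $\lge\Delta_f\varphi,\varphi\rge_\A=\norm{\bar\pat_f\varphi}_\A^2+\norm{\bar\pat_f^*\varphi}_\A^2\ge 0$ on $\text{Dom}(\Delta_f)$, and dense definedness follows since $\text{Dom}(\Delta_f)$ contains $\A_c(X)$ (compactly supported smooth forms are in the domain of every power of these operators because $\bar\pat_f$ preserves $\A_c(X)$ and so does its formal adjoint).

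The only genuine subtlety — and the step I expect to be the main obstacle — is the passage from the formal adjoint on $\A_c(X)$ to the honest Hilbert-space adjoint $\bar\pat_f^*$, and in particular knowing that $\Delta_f$ defined with domain as in the Definition really is self-adjoint rather than merely symmetric. Because $X$ is non-compact and $df$ is unbounded, one cannot a priori rule out that $\bar\pat_f$ is not essentially self-adjoint in the relevant sense, i.e. that different self-adjoint realizations exist; what saves the argument is precisely that we take $\bar\pat_f$ to be the maximal closed extension (the one with domain $\{\varphi\in L^2_\A\mid \bar\pat_f\varphi\in L^2_\A\}$, distributional derivative), which is automatically closed, and von Neumann's theorem then applies to this closed operator without any further input. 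So the proof reduces to: (i) check $\bar\pat_f$ with the maximal domain is densely defined and closed; (ii) cite von Neumann's theorem for $T^*T\oplus TT^*$; (iii) observe the cross terms vanish by $\bar\pat_f^2=0$ so that this direct sum is exactly $\Delta_f$ on the stated domain; (iv) conclude non-negativity from the manifest identity $\lge\Delta_f\varphi,\varphi\rge_\A=\norm{\bar\pat_f\varphi}^2+\norm{\bar\pat_f^*\varphi}^2$. I would write this out in a few lines, emphasizing (i) and (iii) since those are where the bidegree bookkeeping and the role of $\bar\pat_f^2=0$ actually enter.
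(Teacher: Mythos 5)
Your overall strategy (von Neumann's theorem for $T^*T$, plus the relation $\bar\pat_f^2=0$, applied to the maximal closed extension of $\bar\pat_f$) is the right toolkit, and points (i), (ii) and (iv) are fine; note the paper itself offers no proof, labelling the proposition straightforward. But step (iii), which you yourself flag as the crux, contains a genuine gap. First, a bookkeeping slip: $\bar\pat_f=\bar\pat+df\wedge$ does not preserve bidegree $(p,q)$ (it maps $(p,q)$ into $(p,q+1)\oplus(p+1,q)$), so only the total-degree grading is available. More seriously, even with respect to total degree, $\Delta_f$ restricted to degree $k$ is the \emph{sum} $\bar\pat_f\bar\pat_f^*+\bar\pat_f^*\bar\pat_f$ of two unbounded operators both acting nontrivially on the same graded summand (one comes from degree $k-1$, the other through degree $k+1$); it is not a direct sum of a $TT^*$ block and a $T^*T$ block. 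Von Neumann gives self-adjointness of each summand separately, but a sum of two unbounded self-adjoint operators on a common domain is in general only symmetric, so your reduction does not yet prove self-adjointness of $\Delta_f$ on the Gaffney-type domain stated in the paper. Also, the "cross terms" remark is a red herring: $\Delta_f=[\bar\pat_f,\bar\pat_f^*]$ contains no $\bar\pat_f^2$ or $(\bar\pat_f^*)^2$ terms to begin with; the real role of $\bar\pat_f^2=0$ lies elsewhere.

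The repair is the classical Gaffney argument, which is where $\bar\pat_f^2=0$ actually enters: for the closed, densely defined maximal extension $T=\bar\pat_f$ with $\operatorname{im}T\subset\ker T$, one has the orthogonal decomposition $L^2_{\A}(X)=\overline{\operatorname{im}T}\oplus\overline{\operatorname{im}T^*}\oplus(\ker T\cap\ker T^*)$, and these three subspaces reduce $\Delta_f$: on them it acts as $TT^*$, $T^*T$ and $0$ respectively, each self-adjoint by von Neumann applied to the appropriate restriction. Hence $\Delta_f$, with the domain given in the paper, is the direct sum of self-adjoint operators with respect to \emph{this} decomposition (not the degree decomposition), and is therefore self-adjoint; density and non-negativity then follow exactly as you argue, since $\A_c(X)\subset\operatorname{Dom}(\Delta_f)$ and $\lge\Delta_f\varphi,\varphi\rge_{\A}=\norm{\bar\pat_f\varphi}_{\A}^2+\norm{\bar\pat_f^*\varphi}_{\A}^2$. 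An alternative route, consonant with what the paper does later for its density lemma, is to invoke Chernoff's theorem on the complete manifold $(X,g)$ to get essential self-adjointness of $\sD_f=\bar\pat_f+\bar\pat_f^*$ and its powers on $\A_c(X)$ and then write $\Delta_f=\sD_f^2$; but then one must additionally identify the resulting domain with the Gaffney domain in the paper's definition, so the abstract decomposition argument is the cleaner fix.
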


Let us express $\pat$ and $\bar\pat$ in terms of covariant derivatives induced by the K\"ahler metric $g$. Locally
$$
\pat = dz^k \wedge \nabla_k, \quad \bar\pat = d\bar z^{\bar k} \wedge \nabla_{\bar k}  
$$ 
on differential forms. Their adjoints are given by standard expressions
$$
\pat^* = - g^{\bar j i} \nabla_{\bar j} \iota_{\pat_{i}}, \quad \bar \pat^* = - g^{\bar j i} \nabla_i \iota_{\pat_{\bar j}}. 
$$
Here $\iota_{\pat_{i}}$ and $\iota_{\pat_{\bar j}}$  are contractions with ${\pat\over \pat z^i}$ and ${\pat \bar z^{\bar j}}$ respectively.
Bochner-Weitzenb\"ock formula gives
\begin{align*}
\Delta_{\bar\pat} :=\bar \pat \bar\pat^*+\bar\pat^*\bar\pat= -\sum_{\mu,\nu} g^{\bar \nu \mu} \nabla_{\mu} \nabla_{\bar \nu} + Ric,
\end{align*}
where $Ric$ means an algebraic action by Ricci curvature.

In the twisted case by $f$, $\Delta_f$ can be expanded as
\begin{align*}
\Delta_f
=& [\bar\pat, \bar\pat^*] + [\pat f \wedge, \bar\pat^*]+ [\bar\pat, (\pat f \wedge)^*]  + [\pat f \wedge, (\pat f \wedge)^*] \\
=& \Delta_{\bar\pat} + (dz^k \wedge g^{\bar i j} \nabla_j f_k \iota_{\pat_{\bar i}} + d\bar z^{\bar k}\wedge {g^{\bar j i} \nabla_{\bar j} \bar f_{\bar k} \iota_{\pat_{i}} }) + g^{\bar j i} f_i \overline{f_j} \\
=& -\sum_{\mu,\nu} g^{\bar \nu \mu} \nabla_{\mu} \nabla_{\bar \nu} + Ric + L_f + |\nabla f|^2.
\end{align*}
In the above expansion,  
$$
L_f:=dz^k \wedge g^{\bar i j} \nabla_j f_k \iota_{\pat_{\bar i}} + d\bar z^{\bar k}\wedge {g^{\bar j i} \nabla_{\bar j} \bar f_{\bar k} \iota_{\pat_{i}} }
$$
which is a differential operator of order $0$.  $\Delta_f$ is an elliptic operator of order $2$.

\subsection{Hodge theory and harmonics}

We will be mostly interested in $\text{Ker}(\Delta_{f})$, i.e. harmonic forms. For $\Delta_{f}$ to have well behaved spectrum, one should put some restriction on the triple $(X, g, f)$.

\begin{definition} \label{condition-T}
Let $(X,g)$ be a bounded geometry. A holomorphic function $f$ on $X$ is said to be \emph{strongly elliptic} if  for $\forall \epsilon >0, k \geq 2$,
\begin{equation} \label{tame}
\epsilon |\nabla f(z)|^k - |\nabla^k f(z)| \rightarrow +\infty \text{ as } z \rightarrow \infty.  \tag{T}
\end{equation}
Here $z\to \infty$ means $d(z, z_0)\to \infty$, where $z_0\in X$ is any chosen fixed point and $d(z,z_0)$ is the distance between $z$ and $z_0$. This notation will be understood in the same manner throughout this paper. 
\end{definition}

 If we only require the weaker condition \eqref{tame} for $k=2$ only, then it goes back to the notion of elliptic in \cite{KL} which is used to established well-behaved spectrums and harmonics.  There are other related notions in the literature.  There is the tame condition in \cite{Br} meaning the gradient of the function is bounded away from $0$ outside of a compact set. In the work of Sabbah \cite{Sab1} and Hertling \cite{He}, the notion of $M$-tame is frequently used, which ensures global Milnor fibrations in big balls in $\mathbb{C}^{n+1}$. A notion of strongly tame is used later in the work \cite{Fan}.  However, the analysis of  \cite{KL} and \cite{Fan} are not enough to obtain the Hodge-to-de Rham degeneration property for the pair of operators $(\dbar_f, \pa)$. Our strongly elliptic assumption is stronger than the above conditions, but on the other hand leads to general results for most interesting cases as we will see.

The following theorem only require the $k=2$ part of strongly ellipticity. We present it in our context for later use. 

\begin{theorem}[\cite{KL, Fan}] Let $(X, g)$ be a bounded geometry, and $f$ be a holomorphic function which is strongly elliptic. 
Then the twisted Laplacian $\Delta_f$ has purely discrete spectrum.
\end{theorem}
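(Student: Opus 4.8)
The plan is to reduce the statement to a standard spectral-theoretic criterion: a non-negative self-adjoint operator on a Hilbert space has purely discrete spectrum if and only if the inclusion of its form domain (equipped with the graph norm) into $L^2$ is compact, or equivalently if for every $\lambda$ the sublevel set $\{\varphi : \norm{\varphi}_{\A} \le 1,\ \lge \Delta_f \varphi,\varphi\rge_{\A} \le \lambda\}$ is precompact in $L^2_{\A}(X)$. Since $\Delta_f$ has already been shown to be densely defined, non-negative and self-adjoint, it suffices to establish this compactness. The essential input is the Bochner-type expansion derived above, $\Delta_f = -\sum g^{\bar\nu\mu}\nabla_\mu\nabla_{\bar\nu} + Ric + L_f + |\nabla f|^2$, from which (after integration by parts against $\varphi$) one gets a Gårding-type inequality of the shape
\begin{equation*}
\lge \Delta_f \varphi,\varphi\rge_{\A} \;\geq\; c\,\norm{\nabla\varphi}^2 \;+\; \int_X \bracket{|\nabla f|^2 - C}\,\abs{\varphi}^2\, dv_g
\end{equation*}
for $\varphi$ in a suitable core, with $c>0$ and $C$ depending only on the bounded geometry bounds and on $\norm{L_f}_{L^\infty}$ (the latter is finite because $f$ strongly elliptic forces $|\nabla^2 f|\lsm \epsilon|\nabla f|^2 + O(1)$, so the zeroth-order operator $L_f$, whose coefficients involve $\nabla^2 f$ contracted with the metric, is controlled — this is where the $k=2$ case of \eqref{tame} enters).

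Granting such an inequality, I would argue compactness as follows. Fix $\lambda$ and a sequence $\varphi_j$ with $\norm{\varphi_j}_{\A}\le 1$ and $\lge \Delta_f\varphi_j,\varphi_j\rge_{\A}\le\lambda$. The inequality gives a uniform bound on $\norm{\nabla\varphi_j}_{\A}$, hence a uniform $W^{1,2}_{\A}$ bound; by Rellich's theorem on bounded geometry manifolds (Sobolev embedding, valid here by the standing assumption and \cite{Heb}), on each fixed ball $B_R(z_0)$ a subsequence converges in $L^2$. The inequality \emph{also} gives $\int_X |\nabla f|^2 \abs{\varphi_j}^2 \le \lambda + C$, so since $f$ has compact critical set, $|\nabla f|^2 \to \infty$ away from a compact set — more precisely $|\nabla f(z)|\to\infty$ as $z\to\infty$ is itself a consequence of strong ellipticity, or can be assumed/derived from discreteness of $\mathrm{Crit}(f)$ together with properness considerations; in any case $\inf_{z\notin B_R}|\nabla f(z)|^2 =: m_R \to \infty$. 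Hence $\int_{X\setminus B_R}\abs{\varphi_j}^2 \le (\lambda+C)/m_R$, uniformly in $j$, which is the tightness needed to upgrade local $L^2$-convergence to global $L^2$-convergence via a diagonal argument over $R=1,2,\dots$. This produces a norm-convergent subsequence, establishing the compactness of the sublevel set and therefore the pure discreteness of the spectrum.

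The main obstacle, and the step requiring genuine care rather than bookkeeping, is justifying the integration by parts and the Gårding inequality \emph{on the correct domain}: $\Delta_f$ is an unbounded operator defined a priori only on $\text{Dom}(\Delta_f)$, and one must know that $\A_c(X)$ (or at least $W^{1,2}_{\A}\cap(\text{domain of the quadratic form})$ with compactly supported elements) is a form core, so that the quadratic form identity $\lge\Delta_f\varphi,\varphi\rge = \norm{\dbar_f\varphi}^2 + \norm{\dbar_f^*\varphi}^2$ and the Bochner expansion hold in the form sense for all relevant $\varphi$. On a complete manifold this density is standard for $\dbar$ and $\dbar^*$ via cutoff functions with bounded gradient, and the zeroth-order twist $df\wedge$ does not affect the argument; but one must check the cutoff errors are absorbed, using completeness of $g$ and the boundedness of $\norm{L_f}_\infty$. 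Once the form core issue is settled, the rest is the soft functional-analytic packaging above. Since this theorem is quoted from \cite{KL, Fan} and only restated here "in our context for later use," I would keep the proof brief: cite the form-core/density statement, state the Gårding inequality with a one-line derivation from the Bochner formula, and then give the Rellich-plus-tightness compactness argument in a short paragraph.
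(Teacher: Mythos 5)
Your argument is correct in outline, and the technical gaps are exactly the standard ones you flag yourself (form core via cutoffs on a complete manifold, justification of the integration by parts), but it takes a genuinely different route from the paper. The paper's proof is a two-line reduction: since $|\nabla^2 f|\geq 0$, the $k=2$ case of \eqref{tame} already forces $|\nabla f|^2\to\infty$ at infinity; the Schr\"odinger operator $H_0=-\sum g^{\bar\nu\mu}\nabla_\mu\nabla_{\bar\nu}+|\nabla f|^2$ on a bounded geometry then has purely discrete spectrum by Kondrat'ev--Shubin \cite{KoSh}, and $H_1=Ric+L_f$ is treated as a relatively compact perturbation, again using the $k=2$ part of \eqref{tame} to dominate $L_f$ by $\epsilon|\nabla f|^2+C_\epsilon$ and bounded geometry to bound $Ric$. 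You instead re-prove the underlying discreteness criterion directly: precompactness of form-domain sublevel sets via a G\aa rding-type inequality, local Rellich on balls, and tightness at infinity from $\int_X|\nabla f|^2|\varphi_j|^2\lsm 1$, absorbing $Ric+L_f$ into the form estimate rather than invoking perturbation theory. Your version buys self-containedness (essentially a proof of the \cite{KoSh}-type criterion in this special case) at the cost of length; the paper's buys brevity at the cost of an external citation and a somewhat loose use of ``compact perturbation''.

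Two small corrections to your write-up. First, $|\nabla f|\to\infty$ needs no ``properness considerations'' and does not follow from compactness of $\mathrm{Crit}(f)$ alone; it is immediate from \eqref{tame} with $k=2$ because $|\nabla^2 f|\geq 0$. Second, integrating the displayed Bochner expansion against $\varphi$ yields only the $(0,1)$-part of the covariant derivative, not the full $\norm{\nabla\varphi}^2$ you wrote; this is harmless, since either one commutes derivatives (curvature bounded by bounded geometry), or, more simply, one notes that the quadratic form equals $\norm{\sD_f\varphi}^2$ on a core and uses interior elliptic estimates for the Dirac-type operator $\sD$ (with $T_f=\sD_f-\sD$ of order zero and locally bounded) to get the uniform local $W^{1,2}$ bounds needed for Rellich; the tightness part of your argument is unaffected.
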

\begin{proof} We sketch a proof here for completeness. By assumption, $(X,g)$ has bounded geometry and $|\nabla f|^2$ is bounded from below and tends to infinity as $z$ goes to infinity. The Shr\"odinger operator $H_0 := -\sum_{\mu,\nu} g^{\bar \nu \mu} \nabla_{\mu} \nabla_{\bar \nu}  -|\nabla f|^2$ has purely discrete spectrum by \cite{KoSh}.
By the strongly elliptic condition \eqref{tame} ($k=2$ part), $H_1 := Ric + L_f$ can be viewed as a compact perturbation to $H_0$, hence $\Delta_f = H_0 + H_1$ also has purely discrete spectrum.
\end{proof}

According to the above theorem, $\text{Ker}(\Delta_f)$ is finite dimensional and we have the following \emph{orthogonal} Hodge decomposition:
\begin{equation*}
L^2_{\A}(X) = \text{Ker}(\Delta_f) \oplus \text{Im}(\bar\pat_f) \oplus \text{Im}(\bar\pat_f^*).
\end{equation*}
In the sequel, we will sometimes use the notation 
\begin{equation*}
\mathcal{H}_{\A} := \text{Ker}(\Delta_f).
\end{equation*}
Elements of $\mathcal{H}_{\A}$ will be called $\Delta_f$-harmonic forms.
As operators, we have
\begin{equation*}
id = P_{\A} + \bar\pat_f \bar\pat_f^* G_{\A} + \bar\pat_f^* \bar\pat_f G_{\A},
\end{equation*}
where $id$ is the identity operator, $P_{\A}$ is the orthogonal projection onto harmonics $\mathcal{H}_{\A}$ and $G_{\A}$ is the corresponding Green's operator. $G_{\A}$ is a compact linear operator which commutes with $\bar\pat_f$ and $\bar\pat_f^*$.

The next theorem is a direct consequence of the ellipticity of $\Delta_f$. 

\begin{theorem}
Tthe following regularity result holds
\begin{enumerate}
\item [1)] $\Delta_f$-harmonic forms are all smooth on $X$, i.e., $\mathcal{H}_{\A}\in \A(X)$;
\item [2)] $G_{\A}$ maps forms in $\A(X)$ to $\A(X)$.
\end{enumerate}
\end{theorem}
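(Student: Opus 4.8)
The plan is to deduce both statements from interior elliptic regularity, exploiting that $\Delta_f$ is a second order elliptic operator with smooth coefficients. From the Bochner-type expansion
$$
\Delta_f = -\sum_{\mu,\nu} g^{\bar\nu\mu}\nabla_\mu\nabla_{\bar\nu} + Ric + L_f + |\nabla f|^2 ,
$$
the zeroth order terms $Ric$, $L_f$ and $|\nabla f|^2$ are smooth (recall $f$ is holomorphic and $g$ is smooth), and the principal symbol agrees with that of $-\sum_{\mu,\nu}g^{\bar\nu\mu}\nabla_\mu\nabla_{\bar\nu}$, namely $|\xi|_g^2\cdot\mathrm{id}$, which is invertible. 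Hence on any coordinate chart $\Delta_f$ is a classical elliptic system, and the standard local estimate ``$\Delta_f u \in W^{s,2}_{\mathrm{loc}} \Rightarrow u \in W^{s+2,2}_{\mathrm{loc}}$'' applies.

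For (1): Let $\varphi \in \mathcal{H}_{\A} = \mathrm{Ker}(\Delta_f) \subset L^2_{\A}(X)$, so $\Delta_f \varphi = 0$ holds as an $L^2$, hence distributional, identity. Starting from $\varphi \in W^{0,2}_{\mathrm{loc}}$ and iterating the local regularity estimate gives $\varphi \in W^{s,2}_{\mathrm{loc}}$ for every $s$; Sobolev embedding then yields $\varphi \in \A(X)$, i.e. $\mathcal{H}_{\A}\subset \A(X)$.

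For (2): Given $\alpha \in \A(X)$ in the domain of $G_{\A}$, set $\beta := G_{\A}\alpha \in L^2_{\A}(X)$. Since the Green operator of the self-adjoint operator $\Delta_f$ maps into $\mathrm{Dom}(\Delta_f)$ and satisfies $\Delta_f G_{\A} = \mathrm{id} - P_{\A}$, we obtain $\Delta_f \beta = \alpha - P_{\A}\alpha$. The right-hand side is smooth: $\alpha$ is smooth by hypothesis, and $P_{\A}\alpha$ lies in the finite-dimensional space $\mathcal{H}_{\A}$, which consists of smooth forms by (1). Bootstrapping the local regularity estimate for $\Delta_f$ exactly as above, starting from $\beta \in W^{0,2}_{\mathrm{loc}}$, gives $\beta \in W^{s,2}_{\mathrm{loc}}$ for all $s$, hence $\beta \in \A(X)$.

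The argument is essentially routine; the only points needing mild care are checking that $Ric + L_f$ and $|\nabla f|^2$ are genuinely smooth so that $\Delta_f$ has smooth coefficients (immediate), and justifying the operator identity $\Delta_f G_{\A} = \mathrm{id} - P_{\A}$ together with $\mathrm{Im}(G_{\A})\subset \mathrm{Dom}(\Delta_f)$, which is standard spectral theory for a self-adjoint operator with discrete spectrum. Since all regularity used here is interior and local, the bounded geometry hypothesis plays no essential role for this statement beyond making the Sobolev framework of the previous sections available; no global estimates are needed, which is why this is a direct consequence of ellipticity of $\Delta_f$.
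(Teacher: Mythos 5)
Your argument is correct and is exactly the route the paper intends: the paper simply asserts the theorem as "a direct consequence of the ellipticity of $\Delta_f$," and your write-up fills in the standard details (interior elliptic bootstrap for $\mathrm{Ker}(\Delta_f)$, then the identity $\Delta_f G_{\A}=\mathrm{id}-P_{\A}$ plus smoothness of $P_{\A}\alpha$ to bootstrap $G_{\A}\alpha$). The only implicit reading to keep in mind is that statement 2) concerns smooth forms lying in $L^2_{\A}(X)$, where $G_{\A}$ is defined, which you have already noted.
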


One important aspect of the $L^2$ theory on K\"ahler manifolds is the K\"ahler-Hodge identities. This extends to our twisted case as follows. We define another twisted operator $\pat_f := \pat + d \bar f \wedge$ and one can establish a parallel $L^2$ theory. Let
\begin{align*}
L := \omega \wedge = i g_{\mu \bar \nu} dz^{\mu} \wedge dz^{\bar \nu}
\quad
\Lambda := L^* = i g^{\bar \mu \nu} \iota_{\pat_\nu} \iota_{\pat_{\bar \mu}}.
\end{align*}

The next version of K\"ahler-Hodge identities is well-known in the literature. In physics,  they represent N=2 supersymmetry arising from Landau-Ginzburg models  \cite{CGP, CV}.

\begin{proposition}
The following generalized K\"ahler-Hodge identities hold:
\begin{align*}
[\pat_f, \Lambda] =& -i \bar\pat_f^*,\quad  [\bar\pat_f, \Lambda] = i \pat_f^*, \\
[\pat_f^*, L] =& -i \bar\pat_f,\quad [\bar\pat_f^*, L] = i \pat_f.
\end{align*}
\end{proposition}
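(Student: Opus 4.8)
The plan is to reduce the generalized K\"ahler–Hodge identities to the classical (untwisted) ones by a careful bookkeeping of the extra zeroth-order terms coming from the twist. Recall that the untwisted identities $[\pat,\Lambda]=-i\bar\pat^*$, $[\bar\pat,\Lambda]=i\pat^*$, $[\pat^*,L]=-i\bar\pat$, $[\bar\pat^*,L]=i\pat$ are standard on any K\"ahler manifold (they hold pointwise, so completeness of $g$ is irrelevant here). The twisted operators are $\bar\pat_f=\bar\pat+\pat f\wedge$ and $\pat_f=\pat+\dbar\bar f\wedge$, i.e. they differ from $\bar\pat,\pat$ by wedging with the $(1,0)$-form $\pat f$ and the $(0,1)$-form $\dbar\bar f$ respectively. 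So it suffices to verify the identities separately for the ``correction'' pieces, using that $L$ and $\Lambda$ are purely algebraic (zeroth-order) operators.

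First I would record the adjoints of the correction terms. Since $g$ is Hermitian, $(\pat f\wedge)^* = \iota_{(\pat f)^\sharp} = g^{\bar j i} f_i\,\iota_{\pat_{\bar j}}$ acting by contraction with the vector field dual to $\pat f$, and similarly $(\dbar\bar f\wedge)^* = g^{\bar j i}\,\overline{f_j}\,\iota_{\pat_i}$; hence $\bar\pat_f^* = \bar\pat^* + (\pat f\wedge)^*$ and $\pat_f^* = \pat^* + (\dbar\bar f\wedge)^*$. Then I would compute the four graded commutators. For instance, for $[\pat_f,\Lambda]$ I split it as $[\pat,\Lambda] + [\dbar\bar f\wedge,\Lambda]$; the first term is $-i\bar\pat^*$ by the classical identity, so I need $[\dbar\bar f\wedge,\Lambda] = -i\,(\pat f\wedge)^*$. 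This is a pointwise linear-algebra statement about the operators $\theta\wedge$, $\bar\theta'\wedge$, $L=\omega\wedge$, $\Lambda$ on $\wedge^\bullet T^*_{\mtb C}X$ at a point, where $\theta=\pat f$, $\bar\theta' = \dbar\bar f$; concretely $[\dbar\bar f\wedge,\Lambda]$ applied via the super-Jacobi identity and the known commutator $[\Lambda,\,\alpha\wedge]=-i\,\iota_{\bar\alpha^\sharp}$ for a $(0,1)$-form $\alpha$ gives exactly $-i\,g^{\bar j i}f_i\iota_{\pat_{\bar j}} = -i(\pat f\wedge)^*$. The other three identities follow either by the same computation with roles of holomorphic/antiholomorphic indices swapped, or simply by taking formal adjoints and using $L^*=\Lambda$, $\Lambda^*=L$: e.g. adjointing $[\pat_f,\Lambda]=-i\bar\pat_f^*$ yields $[\bar\pat_f^*,L]$-type identities up to signs, which one checks match the stated signs after accounting for $i\mapsto -i$ under complex conjugation of the constant.

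The only genuinely content-bearing input is the pointwise commutator lemma $[\Lambda,\alpha\wedge] = -i\,\iota_{\bar\alpha^\sharp}$ for a $(1,0)$- or $(0,1)$-form $\alpha$ (equivalently, $[\,\iota_V,L] = i\,(\,\cdot\,)$-type relations), which is the algebraic heart of the classical K\"ahler identities and is purely linear algebra on the exterior algebra at one point — no analysis, no completeness, no strongly elliptic condition enters. I expect the main (modest) obstacle to be sign-bookkeeping: tracking the graded commutator signs for the odd operators $\pat,\bar\pat,\pat f\wedge,\dbar\bar f\wedge$ and their adjoints, and the factors of $i$ generated by conjugating $\omega = i g_{\mu\bar\nu}dz^\mu\wedge dz^{\bar\nu}$, so that all four identities come out with precisely the signs displayed. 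Since all four statements are pointwise identities of operators on $\A(X)$, it is enough to verify them in a local holomorphic coordinate chart, and by the usual trick one may even assume the chart is a normal coordinate system at the point where $g_{\mu\bar\nu}=\delta_{\mu\nu}$ and first derivatives vanish, which makes the correction-term computations immediate.
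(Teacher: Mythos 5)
The paper does not actually prove this proposition: it is quoted as a well-known fact (with references to the physics literature on N=2 supersymmetry), so your proposal supplies an argument where the paper supplies none. Your route is the standard one and is sound: split $\pat_f=\pat+\dbar\bar f\wedge$, $\bar\pat_f=\bar\pat+\pat f\wedge$, invoke the classical K\"ahler identities for the derivative parts, and check the zeroth-order correction terms by a pointwise linear-algebra commutator with $L$ and $\Lambda$; since everything is pointwise, no completeness, bounded geometry, or ellipticity of $f$ is needed, exactly as you say, and the remaining identities follow by adjoints and conjugation.

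One concrete slip worth fixing before you carry out the computation: your displayed adjoint formulas are interchanged. The adjoint of wedging with the $(1,0)$-form $\pat f$ is contraction with a $(1,0)$-vector whose coefficients are \emph{conjugated}, namely $(\pat f\wedge)^*=g^{\bar j i}\,\overline{f_j}\,\iota_{\pat_i}$, while $(\dbar \bar f\wedge)^*=g^{\bar j i} f_i\,\iota_{\pat_{\bar j}}$; this is visible in the paper's expansion of $\Delta_f$ (the term $d\bar z^{\bar k}\wedge g^{\bar j i}\nabla_{\bar j}\bar f_{\bar k}\,\iota_{\pat_i}$) and in the formula $V_f=(df\wedge)^*/|\nabla f|^2=\sum_{i,j}\frac{\bar f_i}{|\nabla f|^2}g^{\bar i j}\iota_{\pat_j}$ used later. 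Correspondingly, the key algebraic lemma should read $[\Lambda,\beta\wedge]=i\,g^{\bar j i}\beta_{\bar j}\,\iota_{\pat_i}$ for a $(0,1)$-form $\beta=\beta_{\bar j}d\bar z^{\bar j}$ (your version has the contraction of the wrong type), which with $\beta=\dbar\bar f$ gives exactly the identity your argument needs, $[\dbar\bar f\wedge,\Lambda]=-i(\pat f\wedge)^*$. Since you made the same type/conjugation swap on both sides, the operator identity you assert is in fact correct, and with the corrected formulas the four identities come out with the stated signs; so this is a notational repair, not a gap in the method.
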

As an easy consequence, we also have 
$$
[\bar\pat_f, \pat_f^*] = [\pat_f, \bar\pat_f^*] = 0.
$$

Define  $d_{2\text{Re} f} := \bar\pat_f + \pat_f = d + d(f+\bar f) \wedge$, then we have $d_{2\text{Re} f}^* = \bar\pat_f^* + \pat_f^*$. Let 
$$
\Delta_{2\text{Re} f} := [d_{2\text{Re} f}, d_{2\text{Re} f}^*]=d_{2\text{Re} f}d_{2\text{Re} f}^*+d_{2\text{Re} f}^*d_{2\text{Re} f}.
$$ 
As a consequence of K\"ahler-Hodge identities, we have
\begin{corollary}
\begin{align*}
\Delta_f = [\pat_f, \pat_f^*] = \frac{1}{2} \Delta_{2 \text{Re} f}.
\end{align*}
\end{corollary}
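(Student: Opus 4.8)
The plan is to derive the identity $\Delta_f = [\pat_f, \pat_f^*] = \tfrac12 \Delta_{2\Re f}$ purely formally from the generalized K\"ahler--Hodge identities of the preceding proposition, together with the two "mixed" vanishing relations $[\bar\pat_f, \pat_f^*] = [\pat_f, \bar\pat_f^*] = 0$ noted just afterwards. No analysis is needed beyond knowing that all the operators involved share a common core (e.g. $\A_c(X)$) on which the algebraic manipulations below are literally valid; the closed extensions then agree by density, exactly as in the untwisted K\"ahler case.

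First I would prove $\Delta_f = [\pat_f, \pat_f^*]$. Using $[\pat_f, \Lambda] = -i\bar\pat_f^*$ one writes $\bar\pat_f^* = i[\pat_f,\Lambda] = i(\pat_f \Lambda - \Lambda \pat_f)$, and similarly from $[\bar\pat_f,\Lambda] = i\pat_f^*$ one gets $\pat_f^* = -i(\bar\pat_f\Lambda - \Lambda\bar\pat_f)$. Then expand
\begin{align*}
\Delta_f = \bar\pat_f\bar\pat_f^* + \bar\pat_f^*\bar\pat_f
= i\bar\pat_f(\pat_f\Lambda - \Lambda\pat_f) + i(\pat_f\Lambda - \Lambda\pat_f)\bar\pat_f.
\end{align*}
Now regroup the four terms and use $\bar\pat_f\pat_f = -\pat_f\bar\pat_f$ (which follows from $\bar\pat_f^2 = \pat_f^2 = 0$ and $(\bar\pat_f + \pat_f)^2 = d_{2\Re f}^2 = 0$) together with $[\bar\pat_f,\pat_f^*] = 0$; after collecting, the terms that do not assemble into $[\pat_f,\pat_f^*]$ cancel in pairs, precisely as in the classical computation showing $\Delta_{\bar\partial} = \Delta_\partial$ on a K\"ahler manifold. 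This gives $\Delta_f = [\pat_f,\pat_f^*]$.

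Next I would prove the factor-of-two identity. Expand
\begin{align*}
\Delta_{2\Re f} = (\bar\pat_f + \pat_f)(\bar\pat_f^* + \pat_f^*) + (\bar\pat_f^* + \pat_f^*)(\bar\pat_f + \pat_f)
= [\bar\pat_f,\bar\pat_f^*] + [\pat_f,\pat_f^*] + [\bar\pat_f,\pat_f^*] + [\pat_f,\bar\pat_f^*].
\end{align*}
The last two graded commutators vanish by the consequence of the K\"ahler--Hodge identities quoted above, and the first two both equal $\Delta_f$ by the previous step, so $\Delta_{2\Re f} = 2\Delta_f$.

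The only genuine subtlety — the "main obstacle" — is not algebraic but functional-analytic: one must check that these identities of unbounded operators hold on the appropriate domains, i.e. that $\Dom(\Delta_f)$ coincides with the domain obtained from $[\pat_f,\pat_f^*]$ and with $\tfrac12\Dom(\Delta_{2\Re f})$, rather than merely on a dense subspace. Since $(X,g)$ has bounded geometry and $f$ is strongly elliptic, $\Delta_f$ (hence $\Delta_{2\Re f}$) is essentially self-adjoint on compactly supported smooth forms, so the closures of both sides agree as soon as they agree on $\A_c(X)$, which the formal computation establishes. I would state this reduction explicitly and then carry out the bracket bookkeeping, which is routine.
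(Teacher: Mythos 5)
Your proposal is correct and follows essentially the same route as the paper, which states the corollary as the standard formal consequence of the twisted K\"ahler--Hodge identities: the classical commutator computation gives $\Delta_f=[\pat_f,\pat_f^*]$, and expanding $\Delta_{2\Re f}$ with the vanishing cross terms $[\bar\pat_f,\pat_f^*]=[\pat_f,\bar\pat_f^*]=0$ gives the factor of two. Your added remark on domains (essential self-adjointness of the Dirac-type operators on $\A_c(X)$, as in the paper's density lemma) is a careful touch the paper leaves implicit.
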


Note that we can establish \emph{orthogonal} Hodge decomposition for the operator $\pat_f$, while this corollary implies the decomposition takes the following form:
\begin{equation*}
L^2_{\A}(X) = \text{Ker}(\Delta_f) \oplus \text{Im}(\pat_f) \oplus \text{Im}(\pat_f^*).
\end{equation*}

Let $*$ be the Hodge-star operator on $\A(X)$. It sends $(p,q)$-forms to $(n-q, n-p)$-forms by the relation
\begin{equation*}
(\varphi, \psi) dv = \varphi \wedge * \bar \psi.
\end{equation*}
Here $dv$ is the volume form. The Hodge $*$-operator has the following basic properties: 
\begin{enumerate}
\item[1)] $*$ is real, that is, $\overline{*\alpha} = *\bar \alpha$ for any $\alpha \in \A(X)$;
\item[2)] For $\alpha \in \A^{p,q}(X)$, $*^2(\alpha) = (-1)^{p+q}\alpha$.
\end{enumerate}
The next lemma generalizes the standard identities to our twisted case. 
\begin{lemma}
\begin{equation*}
\bar\pat_f^* = -* \pat_{-f} *, \quad \pat_f^* = -* \bar\pat_{-f} *.
\end{equation*}
\end{lemma}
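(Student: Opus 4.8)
The plan is to reduce the twisted statement to the classical untwisted identities $\bar\pat^* = -*\pat*$ and $\pat^* = -*\bar\pat*$ by exploiting the fact that the twist is a purely algebraic (order zero) operation. First I would recall that $\bar\pat_f = \bar\pat + df\wedge$ and $\pat_{-f} = \pat - d\bar f\wedge$. The desired identity $\bar\pat_f^* = -*\pat_{-f}*$ therefore splits into two pieces: the untwisted part $\bar\pat^* = -*\pat*$, which is the standard Hodge identity already available, and the order-zero part, for which we must check that $(df\wedge)^* = -*(-d\bar f\wedge)* = *(d\bar f\wedge)*$.

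The core computation is thus the adjoint of the wedge operator. I would compute $(df\wedge)^*$ directly: for $\alpha,\beta$ compactly supported forms, $\lge df\wedge\alpha,\beta\rge = \int_X (df\wedge\alpha)\wedge *\bar\beta$. Since $df$ is a $(1,0)$-form, $df\wedge\alpha = (-1)^{|\alpha|}\alpha\wedge df$ up to sign bookkeeping, and using $\alpha\wedge *\bar\gamma = (\alpha,\gamma)dv$ together with the defining relation for $*$ one identifies the adjoint of $df\wedge$ with contraction against the metric dual of $df$, i.e. with $\iota$ against $\overline{df}^\sharp$. On the other hand $*(d\bar f\wedge)*$ acting on a $(p,q)$-form: conjugate-linearity aside, one uses property 2) that $*^2 = (-1)^{p+q}$ on $(p,q)$-forms and the fact that $*$ intertwines wedging with a one-form and contraction with its dual (this is the mechanism behind the classical $\bar\pat^* = -*\pat*$). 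So the identity $(df\wedge)^* = *(d\bar f\wedge)*$ is exactly the order-zero shadow of the classical identity with $\bar\pat$ replaced by $\bar\pat^0 := df\wedge$ (a ``constant-coefficient'' Cauchy-Riemann-type operator) — in fact it follows formally by the same Hodge-star manipulation, noting that $d\bar f\wedge$ plays the role of $\pat$ applied to the ``potential'' that $df\wedge$ comes from. I would present this cleanly by checking the pointwise/algebraic identity $*\circ(d\bar f\wedge)\circ * = -(df\wedge)^*$ fiberwise, which is a linear-algebra statement about a Hermitian vector space and the one-forms $df, d\bar f$; it needs no integration and no assumption on $f$ beyond holomorphy (so $\overline{\pat f} = \bar\pat\bar f$, i.e. $\overline{df} = d\bar f$ matches up correctly under the reality property $\overline{*\alpha} = *\bar\alpha$).

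Assembling: $\bar\pat_f^* = \bar\pat^* + (df\wedge)^* = -*\pat* + *(d\bar f\wedge)* \cdot(\text{sign}) = -*(\pat - d\bar f\wedge)* = -*\pat_{-f}*$, where the sign in the middle term is fixed by the reality property of $*$ and the grading convention; I would be careful that $*(d\bar f\wedge)* $ picks up the sign making it equal to $+(df\wedge)^*$, matching the $-f$ in $\pat_{-f} = \pat - d\bar f\wedge$. The second identity $\pat_f^* = -*\bar\pat_{-f}*$ follows by the entirely symmetric argument with the roles of holomorphic and antiholomorphic exchanged, or alternatively by taking adjoints and using $*^2 = (-1)^{p+q}$ together with the first identity.

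\textbf{Main obstacle.} The only real subtlety is sign and grading bookkeeping: tracking the Koszul signs in $(df\wedge)^*$ versus $*(d\bar f\wedge)*$ on $(p,q)$-forms, and making sure the $\overline{(\cdot)}$ in the definition of $*$ (it is conjugate-linear in the second slot) sends $df$ to $d\bar f$ with the correct sign so that the $-f$ twist on the right-hand side is reproduced rather than $+f$. I expect the cleanest route is to verify the fiberwise identity $*(\eta\wedge)* = \pm(\bar\eta\wedge)^*$ for an arbitrary $(1,0)$-form $\eta$ once and for all (this is just the order-zero case of the classical Kähler identity $[\bar\pat^*, L] = i\pat$ / the formula $\bar\pat^* = -*\pat*$), and then specialize $\eta = df$.
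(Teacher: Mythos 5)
Your proposal is correct, but it is organized differently from the paper's proof. The paper does not split off the twist: it computes directly, for $\alpha,\beta\in\A_c(X)$, $\abracket{\bar\pat_f\alpha,\beta}=\int_X\bar\pat_f\alpha\wedge*\bar\beta=-(-1)^{|\alpha|}\int_X\alpha\wedge\bar\pat_{-f}(*\bar\beta)$ by a single Stokes-type integration by parts (the term $df\wedge$ passes across $\alpha$ with the Koszul sign, which is precisely what turns $f$ into $-f$), and then inserts $**=(-1)^{p+q}$ and the reality property $\overline{*\alpha}=*\bar\alpha$ to read off $\bar\pat_f^*=-*\pat_{-f}*$; the second identity is then obtained simply by complex conjugation. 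You instead decompose $\bar\pat_f=\bar\pat+df\wedge$, quote the untwisted identity $\bar\pat^*=-*\pat*$, and reduce the twist to the fiberwise linear-algebra identity $(\eta\wedge)^*=*(\bar\eta\wedge)*$ for a $(1,0)$-form $\eta$, specialized to $\eta=df$; this is exactly what is needed, since $-*\pat_{-f}*=-*\pat*+*(d\bar f\wedge)*$, and the identity does hold with the plus sign on an even-(real-)dimensional manifold, so your anticipated sign is the right one. Both arguments work at the level of formal adjoints computed against compactly supported forms, which is also the level at which the paper argues. What each buys: the paper's single computation is self-contained (it does not presuppose the classical identity) and fixes all signs in one pass, with the second identity coming for free by conjugation, which is a bit cleaner than your ``take adjoints of the first identity'' alternative; your decomposition makes transparent that the twist is a purely algebraic, order-zero matter and isolates the only delicate sign into a pointwise verification in a single Hermitian fiber. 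The one step you still owe is that explicit fiberwise sign check, but it is routine and comes out as you predict.
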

\begin{proof} For any $\alpha, \beta\in \A_c(X)$,   
\begin{align*}
\abracket{\bar\pat_f \alpha, \beta}=&\int_X  \bar\pat_f \alpha \wedge * \bar\beta
= - \int_X (-1)^{|\alpha|} \alpha \wedge \bar\pat_{-f}(* \bar\beta) \\
=& - \int_X \alpha \wedge ** \bar\pat_{-f}(* \bar\beta) \\
=&- \int_X \alpha \wedge *\overline{* \pat_{-f}* \beta}=-\abracket{\alpha, * \pat_{-f} * \beta}
\end{align*}
Therefore
$
\bar\pat_f^* = -* \pat_{-f} *
$. Its complex conjugate gives
$
\pat_f^* = -* \bar\pat_{-f} *.
$
\end{proof}
\begin{corollary} \label{*_transf}
Let $\alpha$ be a smooth differential $k$-form, then 
\begin{align*}
\alpha \in \text{Ker}(\bar\pat_f) &\iff *\bar\alpha \in \text{Ker}(\bar\pat_{-f}^*)\\
\alpha \in \text{Ker}(\pat_f) & \iff *\bar\alpha \in \text{Ker}(\pat_{-f}^*)\\
\alpha \in \text{Im}(\bar\pat_f) &\iff *\bar\alpha \in \text{Im}(\bar\pat_{-f}^*)\\
\alpha \in \text{Im}(\pat_f) &\iff *\bar\alpha \in \text{Im}(\pat_{-f}^*)
\end{align*}
\end{corollary}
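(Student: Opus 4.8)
The plan is to deduce all four equivalences from a single structural fact about the $\C$-antilinear operator $T$ defined on forms by $T\alpha := *\bar\alpha$. First I would record two elementary compatibilities. Since $\bar\pat_f = \bar\pat + df\wedge$ and $\pat_f = \pat + d\bar f\wedge$, and complex conjugation exchanges $\pat\leftrightarrow\bar\pat$ and $df\leftrightarrow d\bar f$, one has $\overline{\bar\pat_f\alpha} = \pat_f\bar\alpha$ and $\overline{\pat_f\alpha} = \bar\pat_f\bar\alpha$. On the other hand, the preceding Lemma applied with $f$ replaced by $-f$ gives $\bar\pat_{-f}^* = -*\pat_f*$ and $\pat_{-f}^* = -*\bar\pat_f*$. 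Combining these with the two standard properties of the Hodge star ($\overline{*\alpha}=*\bar\alpha$, and $*^2=(-1)^{k}$ on $k$-forms), I expect to obtain intertwining relations of the shape
\[
T\,\bar\pat_f = \pm\,\bar\pat_{-f}^*\,T,\qquad T\,\pat_f = \pm\,\pat_{-f}^*\,T
\]
on smooth forms, where each $\pm$ is a sign depending only on the total degree of the argument (its exact value will be irrelevant). I would also note that $T$ is a bijection of $\A(X)$ onto itself with $T^2 = \pm\,\mathrm{id}$, again by $\overline{*\alpha}=*\bar\alpha$ and $*^2=(-1)^{k}$.

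Granting this, the four equivalences follow formally. For the kernel statements: since $T$ is injective, $\bar\pat_f\alpha = 0$ iff $T\bar\pat_f\alpha = 0$ iff $\bar\pat_{-f}^*(T\alpha) = 0$, which is precisely $\alpha\in\text{Ker}(\bar\pat_f)\iff *\bar\alpha\in\text{Ker}(\bar\pat_{-f}^*)$; the line for $\pat_f$ is identical using the second intertwining relation. For the image statements: because $T$ is bijective and conjugates $\bar\pat_f$ (up to a harmless sign) into $\bar\pat_{-f}^*$, I get $T\bigl(\text{Im}(\bar\pat_f)\bigr) = \text{Im}(\bar\pat_{-f}^*)$ and likewise $T\bigl(\text{Im}(\pat_f)\bigr) = \text{Im}(\pat_{-f}^*)$, which is exactly the asserted equivalence. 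Alternatively I could invoke the orthogonal Hodge decomposition proved above, which yields $\text{Im}(\bar\pat_f) = \text{Ker}(\bar\pat_f^*)^{\perp}$, together with the fact that $T$ is, fiberwise and hence globally, a $\C$-antilinear isometry of $L^2_{\A}(X)$, so that it carries orthogonal complements to orthogonal complements; with the kernel statement applied to $\bar\pat_f^*$ this reproves the claim.

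This is essentially a bookkeeping argument, so the points needing care are: (i) tracking the degree-dependent signs in $*^2=(-1)^{k}$ when $\alpha$ is not of pure bidegree — here I would use that $\bar\pat_f$, $\pat_f$ and their adjoints shift total degree by exactly $\pm 1$, so it suffices to argue one total degree at a time; and (ii) making sure the spaces $\text{Im}(\bar\pat_f)$, $\text{Im}(\pat_f)$ in the statement are read consistently with the Hodge decomposition (as $L^2$-closures), which is automatic because $T$ is an $L^2$-isometry preserving $\A(X)$. I do not anticipate any genuine obstacle beyond these, so the hard part is merely being careful, not conceptual.
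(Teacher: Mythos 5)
Your proposal is correct and follows essentially the same route as the paper: both rest on the preceding lemma with $f$ replaced by $-f$ (giving $\bar\pat_{-f}^* = -*\pat_f*$, $\pat_{-f}^* = -*\bar\pat_f*$), the realness of $*$, $*^2=(-1)^k$, and complex conjugation exchanging $\bar\pat_f$ with $\pat_f$; you merely package these as intertwining relations for the antilinear bijection $T\alpha=*\bar\alpha$, whereas the paper writes out the resulting chains of equivalences directly and gets the remaining two lines by conjugation.
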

\begin{proof}
\begin{align*}
\bar\pat_f \alpha = 0
\iff \pat_f \bar\alpha = 0 
\iff * \pat_f ** \bar\alpha = 0 
\iff -\bar\pat_{-f}^* * \bar\alpha = 0.
\end{align*}
\begin{align*}
\alpha = \bar\pat_f \beta
\iff *\bar\alpha = *\pat_f \bar\beta 
\iff *\bar\alpha = \pm *\pat_f ** \bar\beta 
\iff *\bar\alpha = \mp \bar\pat_{-f}^* * \bar\beta.
\end{align*}
This proves half of the corollary. The rest can be obtained by complex conjugate.
\end{proof}

\begin{example}[$A_1$-singularity] \label{A_1_example}
Let $X = \mtb{C}$, $g$ the standard flat metric and $f = \frac{1}{2} z^2$ be the Landau-Ginzburg model of $A_1$ singularity.
Let $h := 2\text{Re}f = x^2-y^2$, then we have
\begin{align*}
\Delta_f = \frac{1}{2} \Delta_h := \frac{1}{2} [d + dh \wedge, d^* + (dh \wedge)^*].
\end{align*}
Hence we have the relation of spectrum: $\sigma_{\Delta_f} = \frac{1}{2}\sigma_{\Delta_h}$.
On the other hand, $\Delta_{h}$ is the Laplacian appeared in the Witten deformation \cite{W}.  It is known that the spectra of $\Delta_h$ is purely discrete and
\begin{equation*}
\text{Ker}(\Delta_h)= \mtb{C} \{e^{-2(x^2+y^2)} dy\} = \mtb{C} \{e^{-2|z|^2} (dz-d\bar z)\}.
\end{equation*}
\end{example}

\subsection{Three classes of Landau-Ginzburg models} \label{example}

In this subsection, we discuss three classes of Landau-Ginzburg models  that satisfy the strongly elliptic condition \eqref{tame} in Definition \ref{condition-T}.

\subsubsection{Invertible quasi-homogeneous polynomials}

Let $X = \mtb{C}^n$. We choose the standard metric $g = \frac{1}{2}\sum_i (dz_i \otimes d\bar z_i + d\bar z_i \otimes dz^i)$.
Let $f$ be a quasi-homogeneous polynomial on $\mtb{C}^n$:  $\exists$ rational numbers $q_1, \cdots, q_n\in \Q\cap (0,1)$ such that
\begin{equation*}
f(\lambda^{q_1} z_1, \cdots, \lambda^{q_n} z_n) = \lambda f(z_1, \cdots, z_n)
\qquad \text{ for } \forall \lambda \in \mtb{C}^*.
\end{equation*}
$q_i$ is called the weight of $z_i$ and the total weight of $f$ is $1$. We require $f$ has only an isolated critical point. 

We call $f$ non-degenerate if it has an isolated critical point at the origin and it contains no monomials of the form $z_i z_j, i\neq j$. This implies that weights are uniquely determined and each $q_i$ lies in $\mtb{Q}\cap (0,{1\over 2}]$ \cite{Sai}.  We need the following lemma from \cite{FJR}, who states it for non-degenerate polynomials, but in fact its proof holds for any quasi-homogeneous polynomial with an isolated critical point.

\begin{lemma}[\cite{FJR}, Theorem 5.8]\label{Thm-FJR}
Let $f(z) \in \mtb{C}[z_1, \cdots, z_n]$ be a quasi-homogeneous polynomial with an isolated critical ponit. Assume $z_i$ has weight $q_i \in \Q\cap (0,1)$ and let $\delta_i := \frac{q_i}{\text{min}_j(1-q_j)}$. Then $\exists c>0$ such that
\begin{equation*}
|z_i| \leq c(|\nabla f(z)| + 1)^{\delta_i}, \quad \quad \forall (z_1, \cdots, z_n) \in \mtb{C}^n, \quad 1 \leq i \leq n. 
\end{equation*}
In particular $|\nabla f(z)|\to \infty$ as $z\to \infty$. Note that if $q_i \leq \frac{1}{2}$ for all $i$, then $\delta_i \leq 1$ for all $i$ as well.
\end{lemma}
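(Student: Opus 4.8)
The statement to prove is Lemma \ref{Thm-FJR}: for a quasi-homogeneous polynomial $f$ with isolated critical point, weights $q_i$, and $\delta_i = q_i/\min_j(1-q_j)$, there is $c>0$ with $|z_i| \le c(|\nabla f(z)|+1)^{\delta_i}$.

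\textbf{Plan of proof.} The plan is to exploit the quasi-homogeneity of $\nabla f$ together with the fact that, since $f$ has an isolated critical point, the only common zero of the partials $\partial_1 f, \dots, \partial_n f$ is the origin. First I would record the weighted-homogeneity of each partial derivative: $\partial_i f$ is quasi-homogeneous of weight $1 - q_i$, i.e.
\begin{equation*}
\partial_i f(\lambda^{q_1} z_1, \dots, \lambda^{q_n} z_n) = \lambda^{1-q_i}\, \partial_i f(z_1, \dots, z_n), \qquad \lambda \in \mathbb{C}^*.
\end{equation*}
Introduce the $\mathbb{C}^*$-action $\lambda \cdot z = (\lambda^{q_1}z_1, \dots, \lambda^{q_n}z_n)$ and the weighted ``sphere'' (a compact set) $S = \{z : \sum_i |z_i|^{2/q_i} = 1\}$, noting that every nonzero $z$ can be written uniquely as $\lambda \cdot w$ with $w \in S$ and $\lambda > 0$. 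On the compact set $S$, the continuous function $w \mapsto |\nabla f(w)|$ has a positive minimum $m > 0$, because its only zero on all of $\mathbb{C}^n$ is the origin, which is not on $S$.

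Next I would convert this into the claimed growth estimate. For general $z = \lambda \cdot w$ with $w \in S$, $\lambda > 0$, we have $|z_i| = \lambda^{q_i}|w_i| \le \lambda^{q_i}$ (since $|w_i| \le 1$ on $S$), while by homogeneity $|\partial_i f(z)| = \lambda^{1-q_i}|\partial_i f(w)|$, so
\begin{equation*}
|\nabla f(z)|^2 = \sum_i |\partial_i f(z)|^2 = \sum_i \lambda^{2(1-q_i)} |\partial_i f(w)|^2 \ge \lambda^{2\min_j(1-q_j)} \sum_i |\partial_i f(w)|^2 \ge m^2 \lambda^{2\min_j(1-q_j)}
\end{equation*}
\emph{provided} $\lambda \ge 1$ (so that the smallest exponent $2\min_j(1-q_j)$ dominates). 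Hence for $\lambda \ge 1$, $\lambda \le (|\nabla f(z)|/m)^{1/\min_j(1-q_j)}$, and therefore $|z_i| \le \lambda^{q_i} \le (|\nabla f(z)|/m)^{q_i/\min_j(1-q_j)} = (|\nabla f(z)|/m)^{\delta_i}$. For the complementary region $\lambda \le 1$, the point $z$ lies in a bounded set (the image of $S \times [0,1]$ under the action map, which is compact), so $|z_i|$ is bounded there by a constant; absorbing constants and the ``$+1$'' to handle the overlap and the bounded region uniformly gives the stated inequality $|z_i| \le c(|\nabla f(z)|+1)^{\delta_i}$. The final sentences (that $|\nabla f(z)| \to \infty$ as $z \to \infty$, and that $q_i \le 1/2$ forces $\delta_i \le 1$) are immediate: the first from the displayed lower bound with $\lambda \to \infty$, the second because $\min_j(1-q_j) \ge 1/2 \ge q_i$.

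\textbf{Main obstacle.} The only genuinely delicate point is handling the two regimes $\lambda \ge 1$ and $\lambda \le 1$ cleanly and making sure the constant $c$ and the ``$+1$'' correctly smooth over the transition --- in particular verifying that on the bounded region $\lambda \le 1$ one really does have a uniform bound (this uses compactness of $S \times [0,1]$ and continuity of the action map) and that the exponent comparison $\lambda^{2(1-q_i)} \ge \lambda^{2\min_j(1-q_j)}$ genuinely requires $\lambda \ge 1$, not $\lambda \le 1$. Everything else is bookkeeping with quasi-homogeneity and a compactness argument; the isolated-critical-point hypothesis enters exactly once, to guarantee $m > 0$.
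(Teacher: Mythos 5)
Your argument is correct, and it is worth noting that the paper itself does not prove this statement at all: it is quoted as Theorem 5.8 of \cite{FJR}, with only the remark that the proof there (stated for non-degenerate polynomials) goes through for any quasi-homogeneous polynomial with an isolated critical point. So you are supplying a proof where the paper defers to the literature, and the one you give is the natural self-contained route: the weighted $\mathbb{R}_{>0}$-action, the compact weighted sphere $S$, the positive minimum $m$ of $|\nabla f|$ on $S$, and the exponent comparison $\lambda^{1-q_i}\ge \lambda^{\min_j(1-q_j)}$ for $\lambda\ge 1$, with the bounded region $\lambda\le 1$ absorbed into the constant via the ``$+1$''. All the estimates check out, including the uniqueness of the decomposition $z=\lambda\cdot w$ (since $\lambda^2=\sum_i|z_i|^{2/q_i}$) and the two corollaries at the end. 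The only step you assert rather than justify is that the sole zero of $\nabla f$ on all of $\mathbb{C}^n$ is the origin; this needs one line: each $\partial_i f$ is quasi-homogeneous of positive weight $1-q_i$, so if $\nabla f(z_0)=0$ with $z_0\neq 0$ then $\nabla f(\lambda\cdot z_0)=0$ for all $\lambda>0$, and letting $\lambda\to 0^+$ produces critical points accumulating at $0$, contradicting the isolatedness of the critical point there. With that remark inserted, the proof is complete, and it is arguably more transparent than citing \cite{FJR}, whose hypotheses have to be relaxed anyway for the paper's purposes.
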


\begin{theorem}\label{Thm-polynomial}
Assume $X = \mtb{C}^n$, $g = \frac{1}{2}\sum_i (dz_i \otimes d\bar z_i + d\bar z_i \otimes dz^i)$ and $f$ be a quasi-homogeneous polynomial with an isolated critical point and all weights $q_i \leq \frac{1}{2}$. Then $(X, g, f)$ satisfies the strongly elliptic condition \eqref{tame}. 
\end{theorem}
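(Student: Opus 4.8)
The plan is to verify condition \eqref{tame} directly by estimating $|\nabla^k f|$ in terms of $|\nabla f|$ using the quasi-homogeneity of $f$. Since we work on $\mtb{C}^n$ with the flat metric, the covariant derivative $\nabla$ is just the ordinary derivative, and $\nabla^k f$ is a tensor whose components are the partial derivatives $\pat_{i_1}\cdots\pat_{i_k} f$; so $|\nabla^k f|^2$ is (up to a constant) the sum of squares of these partials. First I would observe that for $k > \deg f$ all such partials vanish, so \eqref{tame} holds trivially for large $k$; thus only finitely many values $2 \le k \le \deg f$ need to be treated. For each such $k$, a $k$-th partial derivative of $f$ is a polynomial which, by quasi-homogeneity, is a sum of monomials $c\, z^\alpha$ with weighted degree $1 - \sum_{j} q_{i_j} \le 1 - 2\min_j q_j$ — in particular each surviving monomial has weighted degree bounded away from $0$ provided... actually more carefully, I would isolate the weighted homogeneity degree of $\pat_{i_1}\cdots\pat_{i_k} f$, which is $1 - (q_{i_1} + \cdots + q_{i_k})$, and note that when this is negative the derivative is zero, and when it is $\ge 0$ it is at most $1 - 2 \cdot \min_j q_j < 1$ (using $k \ge 2$ and $q_j > 0$).

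The key step is then the growth comparison. By Lemma \ref{Thm-FJR}, each coordinate satisfies $|z_i| \lsm (|\nabla f(z)| + 1)^{\delta_i}$ with $\delta_i = q_i / \min_j(1-q_j) \le 1$ since all $q_i \le \tfrac12$. Hence a monomial $z^\alpha = z_1^{a_1}\cdots z_n^{a_n}$ appearing in a $k$-th partial derivative of $f$ satisfies
\begin{equation*}
|z^\alpha| \lsm \prod_i (|\nabla f(z)| + 1)^{a_i \delta_i} = (|\nabla f(z)| + 1)^{\sum_i a_i \delta_i}.
\end{equation*}
Now $\sum_i a_i q_i$ equals the weighted degree $1 - \sum_j q_{i_j}$ of this monomial (when it is nonnegative), so $\sum_i a_i \delta_i = \tfrac{1}{\min_j(1-q_j)} \sum_i a_i q_i \le \tfrac{1 - 2\min_j q_j}{\min_j(1 - q_j)}$. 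The point is to check that this exponent, call it $\theta$, is \emph{strictly less than} the relevant power; comparing with $|\nabla f|^k$ for $k \ge 2$, it certainly suffices that $\theta < 2$, and in fact $\theta \le 1 < 2$ because $1 - 2\min_j q_j \le 1 \le \tfrac{1}{\min_j(1-q_j)}$... one should double-check this last inequality, but since $\min_j(1-q_j) \le 1$ we get $\tfrac{1}{\min_j(1-q_j)} \ge 1 \ge 1 - 2\min_j q_j$, so indeed $\theta \le 1$. Therefore $|\nabla^k f(z)| \lsm (|\nabla f(z)| + 1)^{\theta}$ with $\theta \le 1 < k$.

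Finally I would assemble the estimate: given $\epsilon > 0$ and $k \ge 2$, we have for $|z|$ large (so that $|\nabla f(z)| \ge 1$, which holds by the last sentence of Lemma \ref{Thm-FJR})
\begin{equation*}
\epsilon |\nabla f(z)|^k - |\nabla^k f(z)| \ge \epsilon |\nabla f(z)|^k - C\, |\nabla f(z)|^{\theta} = |\nabla f(z)|^{\theta}\bracket{\epsilon |\nabla f(z)|^{k - \theta} - C},
\end{equation*}
and since $k - \theta \ge 2 - 1 = 1 > 0$ and $|\nabla f(z)| \to \infty$ as $z \to \infty$, the right-hand side tends to $+\infty$. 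This gives \eqref{tame} for every $\epsilon > 0$ and $k \ge 2$, completing the proof. The main obstacle I anticipate is purely bookkeeping: correctly tracking which monomials survive differentiation and pinning down the exact weighted degree of a $k$-th partial derivative, together with making sure the Lemma \ref{Thm-FJR} exponents $\delta_i$ combine to give an exponent strictly below $k$ uniformly in $k \ge 2$ — but as shown above the crude bound $\theta \le 1$ is already enough, so no delicate optimization is needed.
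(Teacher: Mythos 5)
Your strategy is the same as the paper's: each $k$-th partial derivative of $f$ is quasi-homogeneous of weight $1-(q_{i_1}+\cdots+q_{i_k})$, Lemma \ref{Thm-FJR} converts this into a bound $|\nabla^k f|\lsm (|\nabla f|+1)^{\theta}$ with an exponent strictly below $k$, and then $\epsilon|\nabla f|^k$ dominates because $|\nabla f|\to\infty$. However, the one inequality you actually verify is wrong. You claim $\theta:=\frac{1-2\min_j q_j}{\min_j(1-q_j)}\le 1$, arguing from $1-2\min_j q_j\le 1\le \frac{1}{\min_j(1-q_j)}$; but bounding the numerator by $1$ and the reciprocal of the denominator from below by $1$ gives no upper bound on the quotient. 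Indeed $\theta\le 1$ is equivalent to $\max_j q_j\le 2\min_j q_j$, which can fail under the hypotheses of the theorem: for $f=z_1^{10}+z_2^2$ one has $q=(\tfrac{1}{10},\tfrac12)$, $\min_j(1-q_j)=\tfrac12$, and $\theta=\tfrac{1-1/5}{1/2}=\tfrac85>1$. Note also that your justification of $\theta\le1$ never invokes the hypothesis $q_i\le\tfrac12$, which must enter at exactly this point.

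The gap is local and easily repaired, and the repair is essentially the paper's computation. Since all $q_j\le\tfrac12$, one has $\frac{1}{\min_j(1-q_j)}\le 2$, so the exponent attached to $\partial_{i_1}\cdots\partial_{i_k}f$ is $\frac{1-(q_{i_1}+\cdots+q_{i_k})}{\min_j(1-q_j)}=\frac{1}{\min_j(1-q_j)}-\delta_{i_1}-\cdots-\delta_{i_k}\le 2-k\hat\delta\le k-\hat\delta<k$, where $\hat\delta=\min_i\delta_i>0$; equivalently, for your uniform-in-$k$ bound it suffices to check $\theta<2$ (as you yourself remark), and this does hold because $\max_j q_j\le\tfrac12<\tfrac12+\min_j q_j$. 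With $\theta<2\le k$ your final estimate $\epsilon|\nabla f|^k-|\nabla^k f|\ge |\nabla f|^{\theta}(\epsilon|\nabla f|^{k-\theta}-C)\to+\infty$ goes through unchanged. So the architecture of your proof is sound and coincides with the paper's (your preliminary reduction to $2\le k\le\deg f$ is harmless but unnecessary); only the stated bound $\theta\le1$ and its justification must be replaced by $\theta<2$, or by the paper's $k$-dependent exponent $k-\hat\delta$.
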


\begin{proof}
Let $\delta_i$ be as in Lemma \ref{Thm-FJR} and $\hat\delta =\min_i \delta_i$. By assumption we know $\hat\delta\leq \delta_i\leq 1$. 

Assume $f = f_1 + f_2 + \cdots + f_n$ is written as a sum of $n$ monomials and $f_l = s_l \prod z_i^{a_{li}}$ for some non-zero constant $s_l$. 
Note that $\pat_{z_{p_1}} \pat_{z_{p_2}} \cdots \pat_{z_{p_k}} f_l$ is quasi-homogenous of weight $1-q_{p_1}-\cdots -q_{p_k}$. Lemma \ref{Thm-FJR} implies the existence of $c_1>0$ such that 
\begin{align*}
\quad& |\nabla_{z_{p_1}} \nabla_{z_{p_2}} \cdots \nabla_{z_{p_k}} f_l| 
= |\pat_{z_{p_1}} \pat_{z_{p_2}} \cdots \pat_{z_{p_k}} f_l| \\
\leq& c_1 (|\nabla f| + 1)^{1-q_{p_1}-\cdots -q_{p_k}\over \text{min}_j(1-q_j)} 
= c_1 (|\nabla f| + 1)^{\frac{1}{\text{min}_j(1-q_j)} - \delta_{q_{p_1}} - \cdots - \delta_{q_{p_k}}} \\
\leq& c_1 (|\nabla f| + 1)^{2 - k\hat\delta}\leq c_1 (|\nabla f| + 1)^{k-\hat \delta}. \quad (k\geq 2) 
\end{align*}

Since $|\nabla f(z)|\to \infty$ as $z\to \infty$, we have for any $c_2 >0$
\begin{align*}
&c_1 (|\nabla f| + 1)^{k-\hat \delta}
\leq 2 c_1 (|\nabla f| + 1)^{k-\hat \delta} - c_2 
\leq {2^{k+1} c_1\over (|\nabla f| + 1)^{\hat \delta}} |\nabla f|^k - c_2  
\end{align*}
holds when $z\to \infty$ is sufficiently large. The theorem follows by combining 
\begin{align*}
|\nabla^k f|
\leq \sum_{l, p_i} |\nabla_{z_{p_1}} \nabla_{z_{p_2}} \cdots \nabla_{z_{p_k}} f_l|.
\end{align*}
\end{proof}



\subsubsection{Crepant resolution of Landau-Ginzburg orbifolds}

Let $f: \mtb{C}^n \rightarrow \mtb{C}$ be a quasi-homogeneous polynomial with an isolated singularity. Let $G$ be a finite group acting linearly on $\C^n$ such that $f$ is $G$-invariant. $f$ descends to define a function on the quotient  
\begin{equation*}
 f: \mtb{C}^n/G \rightarrow \mtb{C}.
\end{equation*}

When $G \subset SL(n ,\mtb{C})$, the quotient $\mtb{C}^n/G$ will admit a global nowhere vanishing holomorphic $n$-form $\Omega = dz_1\wedge \cdots \wedge dz_n$.
In cases when $\mtb{C}^n/G$ has a crepant resolution $\pi: X \rightarrow \mtb{C}^n/G$, $\Omega_X=\pi^* \Omega$ becomes a holomorphic volume form on $X$. $f$ also pulls back to define a holomorphic function $f_X=\pi^*(f)$ on $X$. We are interested in the Landau-Ginzburg model associated to $(X, \Omega_X, f_X)$. 

To incorporate with the metric, we  need the notion of ALE manifolds:
\begin{definition} 
Suppose $G$ is a finite subgroup of $SU(n)$ acting freely on $\mtb{C}^n \setminus 0$, and $\pi: X \rightarrow \mtb{C}^n/G$ be a resolution, with complex structure $J$, and let $g$ be a K\"ahler metric on $X$.
We say that $(X, J, g)$ is an Asymptotically Locally Euclidean (ALE for short) K\"ahler manifold, and that $g$ is an ALE K\"ahler metric, if for some $R>0$,
\begin{equation*}
\nabla^k(\pi_*(g) - g_0) = O(r^{-2n-k}) \text{ on } \{z \in \mtb{C}^n/G | r(z) > R\}, \quad \forall k\geq 0.
\end{equation*}
Here $g_0$ is the Euclidean metric on $\mtb{C}^n$ and $r(z) := (\sum_i |z_i|^2)^{1/2}$ is the radius function on $\mtb{C}^n$.
\end{definition}

In \cite{Joy}, D. Joyce proved that when a subgroup $G$ of $SU(n)$ acts freely on $\mtb{C}^n$ away from the origin and $\pi: X \rightarrow \mtb{C}^n/G$ is a crepant resolution, then there exists Ricci-flat ALE metrics on $X$.
This result fits well in our situation.

\begin{theorem} Assume $\pi:X\to \mtb{C}^n/G$ is a crepant resolution.  $f$ is a $G$-invariant quasi-homogenous polynomial on $\mtb{C}^n$ with no weight greater than $\frac{1}{2}$. Let $f_X=\pi^*(f)$ and $g$ be an arbitrary ALE k\"ahler metric on $X$. Then $(X, f_X, g)$ satisfies the strongly elliptic condition \eqref{tame}.
\end{theorem}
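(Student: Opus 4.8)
The plan is to reduce the ALE case to the Euclidean case already handled in Theorem \ref{Thm-polynomial}, using the fact that an ALE K\"ahler metric $g$ on $X$ agrees with the Euclidean metric $g_0$ (pushed down to $\mtb{C}^n/G$, i.e. pulled back from $\mtb{C}^n$ under the quotient) to all orders outside a compact set, with error decaying like $O(r^{-2n-k})$ in the $k$-th covariant derivative. Since the strongly elliptic condition \eqref{tame} is a statement about the behavior of $f_X$ and its covariant derivatives as $z\to\infty$, only the asymptotic region matters, and there the geometry is uniformly comparable to the flat one.

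First I would fix the compact core: $\pi$ is an isomorphism over the complement of $0$ in $\mtb{C}^n/G$, and $G$ acts freely on $\mtb{C}^n\setminus 0$, so away from a compact set $K\subset X$ we may identify $(X\setminus K, f_X)$ with $(\{r>R\}\subset\mtb{C}^n/G, f)$ and further lift to a $G$-invariant picture on $\{r>R\}\subset \mtb{C}^n$, where $f$ is the original quasi-homogeneous polynomial. Distances on $X$ and on $\{r>R\}\subset\mtb{C}^n$ (in either the flat metric or its quotient) are comparable for large $r$ because the ALE condition forces $\pi_*(g)-g_0\to 0$; in particular $z\to\infty$ in $X$ is equivalent to $r(z)\to\infty$, so it suffices to verify \eqref{tame} on $\{r>R\}$. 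Next I would compare the two pointwise norms: writing $|\cdot|_g$ and $|\cdot|_{g_0}$ for the tensor norms, the ALE decay gives $|\nabla f_X|_g \approx |\nabla f|_{g_0}$ on $\{r>R\}$ (indeed the ratio tends to $1$), and for higher covariant derivatives one has $\nabla^k_g = \nabla^k_{g_0} + (\text{correction from } \nabla g - \nabla g_0)$, where each correction term is a contraction of lower-order $\nabla^j_{g_0} f$ against Christoffel-symbol differences that are $O(r^{-2n-1})$ and their derivatives; hence $|\nabla^k_g f_X|_g \le |\nabla^k_{g_0} f|_{g_0} + C\sum_{j<k} r^{-2n-1}|\nabla^j_{g_0} f|_{g_0}$ on $\{r>R\}$, with the correction terms negligible. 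Combining with Lemma \ref{Thm-FJR} (polynomial bounds $|\nabla^j f|_{g_0}\lesssim (|\nabla f|_{g_0}+1)^{2-j\hat\delta}$ from the proof of Theorem \ref{Thm-polynomial}), every term on the right is $\lesssim (|\nabla f|_{g_0}+1)^{k-\hat\delta}$.

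The conclusion then follows exactly as in the Euclidean proof: for any $\epsilon>0$ and any $c>0$,
\begin{align*}
\epsilon |\nabla f_X|_g^k - |\nabla^k f_X|_g
&\ge \epsilon\, c_0 |\nabla f|_{g_0}^k - c_1(|\nabla f|_{g_0}+1)^{k-\hat\delta}
= \left(\epsilon\, c_0 - \frac{c_1\, (1 + |\nabla f|_{g_0}^{-1})^{k-\hat\delta}}{(|\nabla f|_{g_0}+1)^{\hat\delta}}\right)|\nabla f|_{g_0}^k + o(1),
\end{align*}
and since $|\nabla f|_{g_0}\to\infty$ as $r\to\infty$ by Lemma \ref{Thm-FJR} while $\hat\delta>0$, the bracketed coefficient is eventually $\ge \epsilon c_0/2>0$, so the whole expression tends to $+\infty$ as $z\to\infty$. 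I expect the main technical obstacle to be the bookkeeping in the higher-derivative comparison $\nabla^k_g$ versus $\nabla^k_{g_0}$: one must check that all the connection-difference correction terms genuinely carry the $O(r^{-2n-k})$-type decay (so that they are subleading against the already-controlled flat quantities) and that $f_X$ being merely $\pi^*$ of a polynomial — rather than a polynomial in global coordinates on $X$ — does not cause trouble, which it does not because the entire argument lives on the ALE end where $\pi$ is a biholomorphism onto $\{r>R\}\subset\mtb{C}^n/G$ and we may work $G$-equivariantly upstairs in $\mtb{C}^n$.
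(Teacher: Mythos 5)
Your proof is correct and follows essentially the same route as the paper: the paper's (very brief) argument likewise reduces to the Euclidean case of Theorem \ref{Thm-polynomial} via the ALE decay of $\pi_*(g)-g_0$ near infinity, and your write-up simply makes the metric and covariant-derivative comparison explicit. The only point the paper records that you omit is the preliminary observation that an ALE metric has bounded geometry (the ambient hypothesis in Definition \ref{condition-T}), which is immediate since the injectivity radius and the curvature with all its derivatives are controlled by $g$, its inverse and their derivatives.
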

\begin{proof}
Since injective radius is Lipschitz continuous and the curvature involves only $g$, its inverse and derivatives, it follows that ALE metric implies bounded geometry.
We need only to verify strongly ellipticity (\ref{tame}) near infinity, which follows from the definition of ALE k\"ahler metric and Theorem \ref{Thm-polynomial}. 
\end{proof}

\subsubsection{Convenient Laurent polynomial on $(\mtb{C}^*)^n$}

Let $X = \mtb({C}^*)^n$ be the complex torus, with complete metric $g := \frac{1}{2} \sum_i (\frac{dz_i}{z_i} \otimes \frac{d\bar z_i}{\bar z_i} + \frac{d\bar z_i}{\bar z_i} \otimes \frac{dz_i}{z_i})$. Let $f: X \rightarrow \mtb{C}$ be a Laurent polynomial of the form:
\begin{equation*}
f(z_1, \cdots, z_n) := \sum_{\alpha \in \mtb{Z}^n} c_{\alpha} z^{\alpha} = \sum_{\alpha \in \mtb{Z}^n} c_{\alpha} z_1^{\alpha_1} \cdots z_n^{\alpha_n}
\end{equation*}
where $\alpha := (\alpha_1, \cdots, \alpha_n)$ is a multi-index.
For every Laurent polynomial $f$, we can define its Newton polytope $\triangle(f)$ by the convex hull in $\mtb{Z}^n$ of the set $\{\alpha| c_{\alpha} \neq 0\}$.
We will say $f$ is convenient if $0 \in \mtb{Z}^n$ lies in the interior of $\triangle(f)$.

Let $\triangle^\prime$ be a face of any dimension of $\triangle(f)$. We denote
\begin{equation*}
f^{\triangle^\prime} := \sum_{\alpha \in \triangle^\prime} c_{\alpha} z^{\alpha}.
\end{equation*}
Define the logarithmic derivative of $f$ with respect to $z_i$ by
\begin{equation*}
f_i(z) := z_i \frac{\pat}{\pat z_i} f(z) = \frac{\pat}{\pat (\log z_i)} f(z).
\end{equation*}
We say $f$ is non-degenerate if for arbitrary face $\triangle^{\prime}$ of $\triangle(f)$, the equations
\begin{equation*}
f^{\triangle^{\prime}}(z) = f_1^{\triangle^{\prime}}(z) = \cdots = f_n^{\triangle^{\prime}}(z)
\end{equation*}
have no common solution on $X$.
The above notion of conveniency and non-degeneracy appeared firstly in \cite{Ko}. The Hodge theory of its Brieskorn lattice were explored in \cite{Sab2, DS1,DS2}. 

\begin{theorem}
Let $X = \mtb({C}^*)^n$, $g = \frac{1}{2} \sum_i (\frac{dz_i}{z_i} \otimes \frac{d\bar z_i}{\bar z_i} + \frac{d\bar z_i}{\bar z_i} \otimes \frac{dz_i}{z_i})$ and $f: X \rightarrow \mtb{C}$ be a convenient non-degenerate Laurent polynomial. Then $(X, g, f)$ satisfies the strongly elliptic condition \eqref{tame}.
\end{theorem}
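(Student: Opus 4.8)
The plan is to reduce the strongly elliptic condition \eqref{tame} for the Laurent polynomial case to a uniform polynomial-type estimate for the logarithmic derivatives, in the spirit of Lemma \ref{Thm-FJR}. Observe first that on $(\mtb{C}^*)^n$ with the chosen metric, the natural flat frame is $\frac{dz_i}{z_i}$, so the covariant derivative $\nabla$ is essentially the logarithmic derivative operator $\pat/\pat(\log z_i)$ up to bounded connection terms coming from the (flat, translation-invariant) metric; consequently $|\nabla^k f|$ is comparable to a sum of iterated logarithmic derivatives $|f_{i_1 \cdots i_k}|$, where $f_{i_1\cdots i_k} := (z_{i_1}\pat_{z_{i_1}})\cdots(z_{i_k}\pat_{z_{i_k}})f$. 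Crucially each such iterated logarithmic derivative is again a Laurent polynomial supported on $\triangle(f)$ (indeed on the same lattice points, just with monomial coefficients multiplied by products of the exponents $\alpha_{i_j}$), so it suffices to control $|z^\alpha|$ for $\alpha$ in the support of $f$ by powers of $|\nabla f| \approx (\sum_i |f_i|^2)^{1/2}$.

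The key step is therefore a Łojasiewicz-type estimate: I would show there exist $c>0$ and $N>0$ such that for every vertex $\alpha$ of $\triangle(f)$,
\begin{equation*}
|z^\alpha| \leq c\,\bracket{|f(z)| + |f_1(z)| + \cdots + |f_n(z)| + 1}^{N}, \qquad \forall z \in (\mtb{C}^*)^n.
\end{equation*}
This is where conveniency and non-degeneracy enter. Conveniency ($0$ in the interior of $\triangle(f)$) guarantees that the monomials $z^\alpha$ together exhaust all directions in the torus, so $\max_\alpha |z^\alpha| \to \infty$ whenever $z \to \infty$ in $(\mtb{C}^*)^n$ (the metric distance to a fixed point is comparable to $\max_i |\log|z_i||$). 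Non-degeneracy along every face $\triangle'$ is exactly the statement that the ``face polynomials'' $f^{\triangle'}, f_1^{\triangle'}, \dots, f_n^{\triangle'}$ have no common zero on $(\mtb{C}^*)^n$; this is the hypothesis that powers the estimate near the various ``ends'' of $(\mtb{C}^*)^n$. To make this precise I would stratify the ends of the torus by the faces of $\triangle(f)$: going to infinity along a one-parameter subgroup $z(t) = z_0 \cdot t^{v}$ (with $v \in \mtb{Z}^n$, $t \to \infty$), the dominant terms of $f$ and of each $f_i$ are governed by the face $\triangle'$ of $\triangle(f)$ on which the linear functional $\langle \cdot, v\rangle$ is maximized, and the $f_i^{\triangle'}$ not all vanishing there forces at least one $|f_i|$ to grow at least as fast as the top-degree monomials along that direction; a standard compactness/curve-selection argument (or a direct toric-geometry argument using a smooth projective toric compactification of $(\mtb{C}^*)^n$ adapted to $\triangle(f)$) upgrades this to the uniform polynomial bound above. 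In particular $|\nabla f| \to \infty$ as $z \to \infty$.

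Granting the estimate, the conclusion follows exactly as in the proof of Theorem \ref{Thm-polynomial}: for $k \geq 2$ and any $i_1, \dots, i_k$, the Laurent polynomial $f_{i_1\cdots i_k}$ is a combination of monomials $z^\alpha$ with $\alpha \in \triangle(f)$, hence
\begin{equation*}
|f_{i_1 \cdots i_k}(z)| \lsm \bracket{|\nabla f(z)| + 1}^{N},
\end{equation*}
while $|\nabla f(z)| \to \infty$; choosing the scale large enough, $(|\nabla f|+1)^N \leq \eps |\nabla f|^k - (\text{any constant})$ eventually holds for every $\eps>0$ since $k \geq 2 > 0$ and the right side grows like $|\nabla f|^k$. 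Summing over the finitely many index tuples and absorbing the bounded connection terms relating $\nabla^k f$ to iterated logarithmic derivatives gives $\eps|\nabla f|^k - |\nabla^k f| \to +\infty$, which is \eqref{tame}. Finally, that $(X,g)$ has bounded geometry is immediate because $g$ is the flat translation-invariant metric on the torus $(\mtb{C}^*)^n \cong (\mtb{R}\times S^1)^n$, which is homogeneous.

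\medskip

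\noindent The main obstacle I anticipate is the uniform Łojasiewicz estimate for $|z^\alpha|$: proving it cleanly for all of $(\mtb{C}^*)^n$ (not just along individual curves) requires either a careful toric compactification argument keyed to the Newton polytope, or a quantitative curve-selection lemma, and getting the face-by-face non-degeneracy hypothesis to deliver a \emph{single} exponent $N$ and constant $c$ valid globally is the delicate point. Everything downstream — comparing $\nabla^k$ to logarithmic derivatives, and the elementary growth comparison $(|\nabla f|+1)^N \ll |\nabla f|^k$ — is routine.
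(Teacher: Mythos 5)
Your overall geometry (pass to logarithmic coordinates where $g$ is flat and $\nabla^k f$ is a sum of iterated logarithmic derivatives supported on $\triangle(f)$, analyze each end of the torus through the face of $\triangle(f)$ selected by the direction, invoke non-degeneracy plus a compactness argument for uniformity) is in the same spirit as the paper's proof, which works along rays $\mathbb{R}_+\hat t$ and the face $\triangle_l$ maximizing $\langle\alpha,\operatorname{Re}\hat t\rangle$. But the key estimate you propose is too weak to finish the argument: you only claim $|z^\alpha|\leq c\,(|f|+|f_1|+\cdots+|f_n|+1)^N$ for \emph{some} unspecified $N>0$, and then assert that $(|\nabla f|+1)^N\leq \eps|\nabla f|^k-C$ eventually for every $\eps>0$ and every $k\geq 2$. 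This is false unless $N<k$: for $k=2$ and $N\geq 2$ the left-hand side grows at least as fast as the right-hand side (and for $N=k$ the inequality already fails once $\eps$ is smaller than your implicit constant). Since the same $N$ must serve all $k\geq 2$, your scheme needs $N<2$, i.e.\ effectively the exponent-one estimate. In the quasi-homogeneous case (Theorem \ref{Thm-polynomial}) exactly this bookkeeping is supplied by Lemma \ref{Thm-FJR} and the weights $\delta_i\leq 1$, which make the exponent $2-k\hat\delta<k$; you have no analogue of that control here, so the "routine" final comparison is where the proof leaks.

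The gap is fixable, and the sharp statement is what your own face analysis actually yields: outside a compact set every monomial $|z^\alpha|$ with $\alpha\in\triangle(f)$ satisfies $|z^\alpha|\lsm |\nabla f|+1$ (exponent $1$). This is what the paper proves ray by ray: conveniency gives $M>0$; non-degeneracy forces the vector with components $\sum_{\alpha\in\triangle_l}c_\alpha\alpha_i e^{i\langle\alpha,\operatorname{Im}t\rangle}$ to be nonzero, and periodicity in $\operatorname{Im}(t)$ gives a uniform lower bound, whence $|\nabla f|\gtrsim e^{M|\operatorname{Re}(t)|}$, which dominates every monomial of $f$ and of all its logarithmic derivatives; then $|\nabla^k f|\lsm |\nabla f|+1$, which is $o\bracket{\eps|\nabla f|^k}$ for every $k\geq 2$ and $\eps>0$. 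Two further points need attention in your write-up. First, your right-hand side contains $|f|$, which you silently discard when passing to $(|\nabla f|+1)^N$; to remove it (equivalently, to know that the $f_i^{\triangle'}$ alone cannot all vanish on a face cut out by a direction $v$) you need the Euler-type identity $\sum_i v_i f_i^{\triangle'}=M|v|\,f^{\triangle'}$, which is exactly how the paper reduces to the stated non-degeneracy hypothesis. Second, the uniformity over directions (your curve-selection/toric-compactification step, the paper's finite-cover argument over the unit sphere) must be carried out for the exponent-one bound, not merely for some polynomial bound. With these repairs the remaining steps of your plan (finitely many index tuples, flatness of $g$ in log coordinates, bounded geometry) go through.
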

\begin{proof}
Apply the coordinate change $t_i = \ln z_i, 1 \leq i \leq n$, $t_i\sim t_i+2\pi i$. Then $f = \sum_{\alpha} c_{\alpha} e^{\lge \alpha, t \rge}$, and the metric $g$ becomes the canonical flat metric.
This implies $(X, g)$ has bounded geometry.
Denote $t := (t_1, \cdots, t_n)$, $\text{Re}(t) := (\text{Re}(t_1), \cdots, \text{Re}(t_n))$ and $\text{Im}(t)=(\text{Im}(t_1), \cdots, \text{Im}(t_n))$.

Let us fix an arbitrary point $\hat t$ with $\text{Re}(\hat t)\neq 0$ and consider the ray $l = \mtb{R}_+ \hat t$ . Such a ray approaches infinity of $X$. We first prove $(\ref{tame})$ holds on $l$.  Since $f$ is convenient, 
$$
M:=\max_{\alpha\in \Delta(f)}\fbracket{\frac{\lge \alpha, \text{Re}(\hat t) \rge}{|\text{Re}(\hat t)|}}>0.
$$

Let $\triangle_l$ be the face of $\triangle(f)$ consisting of lattice points achieving the maximum $M$, i.e. 
$$
 \triangle_l=\fbracket{\left.  \alpha\in \triangle(f)\right | \frac{\lge \alpha, \text{Re}(\hat t) \rge}{|\text{Re}(\hat t)|}=M}. 
$$
For $t\in l$, 
\begin{align*}
|\nabla f(e^t)|^k
= \bracket{\sum_i |\sum_{\alpha} c_{\alpha} \alpha_i e^{\lge \alpha, t \rge}|^2 }^{k/2} 
= \bracket{\sum_i \abs{\sum_{\alpha \in \triangle_l} c_{\alpha} \alpha_i 
e^{M |\text{Re}(t)| + i \lge \alpha, \text{Im}(t) \rge} + \sum_{\alpha \notin \triangle_l} c_{\alpha} \alpha_i e^{\lge \alpha, t \rge}}^2}^{k/2}.
\end{align*}

First note that $\sum_{\alpha \in \triangle_l} c_{\alpha} \alpha_i e^{i \lge \alpha, \text{Im}(t) \rge}$ can not vanish for all $i$.
Otherwise, 
$$
f^{\Delta_l}(z)=\sum_{\alpha \in \triangle_l} c_{\alpha} z^{\alpha}, \quad f^{\Delta_l}_i(z)=\sum_{\alpha \in \triangle_l} c_{\alpha} \alpha_i z^{\alpha}
$$
will have a common zero at $z_i=e^{ tg_i}$ since $f^{\Delta_l}(e^t)={1\over M |\text{Re}(t)|}\sum_i \text{Re}(t_i)f^{\Delta_l}_i(e^t)$ while  $f^{\Delta_l}_i(e^{ t})=0$. This contradicts the non-degeneracy of $f$. 

Next since $\sum_{\alpha \in \triangle_l} c_{\alpha} \alpha_i e^{i \lge \alpha, \text{Im}(t) \rge}$ is periodic in each $t_i$,  there exists a constant $c>0$ such that 
\begin{align*}
\sum_i |\sum_{\alpha \in \triangle_l} c_{\alpha} \alpha_i e^{i \lge \alpha, \text{Im}(t) \rge}|^2 > c >0, \quad \forall t\in l. 
\end{align*}
By definition of $\triangle_l$, for $\alpha \notin \triangle_l$, we have $|e^{\lge \alpha, t \rge}| < e^{M|\text{Re}(t)|}$.
Therefore 
\begin{align*}
|\nabla f|^k \geq c_0 e^{kM |\text{Re}(t)|} \quad \text{as } t \text{ tends to } \infty, \quad \text{for some constant}\ c_0>0. 
\end{align*}
On the other hand, for any $\epsilon>0, k\geq 2$,
\begin{align*}
|\nabla^k f|
=& (\sum_{i_1, \cdots, i_k} |\sum_{\alpha} c_{\alpha} \alpha_{i_1} \cdots \alpha_{i_k} e^{\lge \alpha, t \rge}|^2)^{1/2} \nonumber 
\leq C e^{M |\text{Re}(t)|} \nonumber
\leq \epsilon e^{kM |\text{Re}(t)|} \quad \text{as } t \text{ tends to } \infty. 
\end{align*}
Combine the above two inequalities for $ t \to \infty$ , we proved (\ref{tame}) on each ray $l$.

The general case follows from the continuity of $\nabla^k f$, the compactness of the unit sphere in $\mtb{C}^n$, and the standard finite cover argument. 

\end{proof}

\subsection{$f$-twisted Sobolev spaces}\label{sec-f-adapted}

In this section we develop tools toward proving the Hodge-to-de Rham degeneration property for geometric Landau-Ginzburg B-models. 

To do so, we assume $(X,g)$ is a bounded geometry and $f$ is a holomorphic function satisfying the strongly elliptic condition \eqref{tame}. We will construct a suitable subspace of $\A(X) \cap L^2_{\A}(X)$ which carries a differential graded algebra structure and allows the Hodge decomposition. This  puts our Landau-Ginzburg model into the same setting as compact Calabi-Yau case \cite{BK}.

We introduce here the notion of $f$-twisted Sobolev spaces as a generalization of the usual Sobolev spaces but incorporating the twisting by $f$. 

\begin{definition} \label{approp_space}
The spaces $\A_{f,k}(X)$ are subspaces of $L^2_{\A}(X)$ defined as
\begin{align*}
\A_{f,k}(X) := \{\phi | \sD_f^i \phi \in L^2_{\A}(X) \text{ for } \forall \, 0 \leq i \leq k\},
\end{align*}
where $\sD_f := \bar\pat_f + \bar\pat_f^*$
and the $\A_{f,k}$-norm is defined as
\begin{equation*}
\big\|\phi\big\|_{\A_{f,k}} := \sum_{0 \leq i \leq k} \big\|\sD_f^i \phi\big\|_{\A}.
\end{equation*}
$\A_{f,\infty}(X)$ is defined to be the intersection
\begin{equation*}
\A_{f, \infty}(X) := \bigcap\limits_{k \geq 0} \A_{f,k}(X).
\end{equation*}
\end{definition}

\begin{lemma} \label{trivial_op}
$\A_{f,\infty}(X)$ is preserved by $\bar\pat_f, \bar\pat_f^*$ and $G_{\A}$.
\end{lemma}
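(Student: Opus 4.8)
The plan is to prove the three claims — invariance under $\bar\pat_f$, under $\bar\pat_f^*$, and under $G_{\A}$ — separately, each reducing to a statement about how these operators interact with the powers $\sD_f^i$.

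First I would handle $\bar\pat_f$ and $\bar\pat_f^*$ together. The key observation is that $\sD_f = \bar\pat_f + \bar\pat_f^*$ and that $\bar\pat_f$ commutes with $\sD_f^2 = \Delta_f$ (on the appropriate domains), since $\Delta_f = \bar\pat_f\bar\pat_f^* + \bar\pat_f^*\bar\pat_f$ and $\bar\pat_f^2 = 0$; likewise $\bar\pat_f^*$ commutes with $\Delta_f$. So for $\phi \in \A_{f,\infty}(X)$ I want to show $\sD_f^i(\bar\pat_f \phi) \in L^2_{\A}(X)$ for all $i$. Writing $\sD_f^{2m}(\bar\pat_f\phi) = \bar\pat_f(\Delta_f^m \phi) = \bar\pat_f(\sD_f^{2m}\phi)$ and $\sD_f^{2m+1}(\bar\pat_f\phi) = \sD_f(\bar\pat_f \sD_f^{2m}\phi) = \bar\pat_f^*\bar\pat_f(\sD_f^{2m}\phi) = \Delta_f(\sD_f^{2m}\phi) = \sD_f^{2m+2}\phi$ (using $\bar\pat_f^2 = 0$), one sees that in both parities $\sD_f^i(\bar\pat_f\phi)$ equals either $\bar\pat_f$ applied to $\sD_f^{2m}\phi$ or simply $\sD_f^{2m+2}\phi$, both of which lie in $L^2_{\A}(X)$ because $\phi \in \A_{f,\infty}(X)$ means every $\sD_f^j\phi$ is $L^2$ and in particular $\sD_f^j\phi \in \op{Dom}(\bar\pat_f)$ (as $\bar\pat_f(\sD_f^j\phi) = $ a sum of lower/comparable $\sD_f$-powers hitting $\phi$). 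The same computation with the roles of $\bar\pat_f$ and $\bar\pat_f^*$ swapped gives the invariance under $\bar\pat_f^*$. I would be careful to phrase the commutation relations as honest identities of closed operators on the stated domains, invoking ellipticity of $\Delta_f$ and the regularity theorem so that everything in sight is smooth and the formal manipulations are justified.

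For $G_{\A}$ the argument is shorter: $G_{\A}$ commutes with $\bar\pat_f$ and $\bar\pat_f^*$ (stated earlier), hence with $\sD_f$, hence with every $\sD_f^i$. Given $\phi \in \A_{f,\infty}(X)$, the Hodge decomposition $\phi = P_{\A}\phi + \bar\pat_f\bar\pat_f^* G_{\A}\phi + \bar\pat_f^*\bar\pat_f G_{\A}\phi$ together with $\Delta_f G_{\A} = \op{id} - P_{\A}$ shows $\Delta_f^m G_{\A}\phi = G_{\A}\Delta_f^m\phi = G_{\A}(\sD_f^{2m}\phi)$, which is $L^2$ since $G_{\A}$ is bounded (indeed compact) and $\sD_f^{2m}\phi \in L^2_{\A}(X)$. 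For odd powers, $\sD_f^{2m+1}G_{\A}\phi = \sD_f G_{\A}(\sD_f^{2m}\phi)$; here $\sD_f G_{\A}$ is bounded because $G_{\A}$ maps into $\op{Dom}(\Delta_f) \subset \op{Dom}(\sD_f)$ and $\|\sD_f G_{\A}\psi\|^2 = \langle \Delta_f G_{\A}\psi, G_{\A}\psi\rangle = \langle (\op{id}-P_{\A})\psi, G_{\A}\psi\rangle \le \|\psi\|\,\|G_{\A}\|\,\|\psi\|$, so again we land in $L^2_{\A}(X)$.

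The main obstacle I anticipate is the bookkeeping of domains: all of the commutation identities used freely above hold a priori only on suitable cores, and one must know that $\phi \in \A_{f,\infty}(X)$ places every $\sD_f^i\phi$ in the domains needed to apply $\bar\pat_f$, $\bar\pat_f^*$, and $\Delta_f$ yet again. The clean way around this is to note that $\phi \in \A_{f,\infty}(X)$ forces $\phi$ to be smooth (by elliptic regularity for $\Delta_f$, exactly as in the regularity theorem quoted for harmonic forms and $G_{\A}$) and that the spectral/Hodge decomposition lets one expand $\phi$ along eigenspaces of $\Delta_f$, on which all these operators are genuinely defined and the identities are literal; passing to the limit then gives the $L^2$ bounds. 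Once that foundational point is in place, each of the three invariances is a one- or two-line computation.
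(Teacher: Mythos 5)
Your overall route is the same as the paper's: the paper treats the $\bar\pat_f$ and $\bar\pat_f^*$ cases as obvious and handles $G_{\A}$ exactly as you do, via its commutativity with $\bar\pat_f,\bar\pat_f^*$ plus boundedness; your $G_{\A}$ paragraph, including the bound showing $\sD_f G_{\A}$ is bounded, is correct and simply more explicit than the paper's one-line argument.

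There is, however, one step in your $\bar\pat_f$ computation that is false as written: from $\sD_f\bigl(\bar\pat_f \sD_f^{2m}\phi\bigr)=\bar\pat_f^*\bar\pat_f\bigl(\sD_f^{2m}\phi\bigr)$ you pass to $\Delta_f\bigl(\sD_f^{2m}\phi\bigr)=\sD_f^{2m+2}\phi$, but $\Delta_f=\bar\pat_f\bar\pat_f^*+\bar\pat_f^*\bar\pat_f$ and there is no reason for $\bar\pat_f\bar\pat_f^*\sD_f^{2m}\phi$ to vanish ($\bar\pat_f^2=0$ only kills the other composition), so $\sD_f^{2m+1}\bar\pat_f\phi$ is not equal to $\sD_f^{2m+2}\phi$ in general. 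The desired $L^2$ membership still holds, and the repair also tightens your loose parenthetical in the even case: for $\eta$ in the relevant domain, $\bar\pat_f\eta\in\text{Im}(\bar\pat_f)$ and $\bar\pat_f^*\eta\in\text{Im}(\bar\pat_f^*)$ are orthogonal, so $\|\bar\pat_f\eta\|^2+\|\bar\pat_f^*\eta\|^2=\|\sD_f\eta\|^2$; likewise $\bar\pat_f\bar\pat_f^*\eta\perp\bar\pat_f^*\bar\pat_f\eta$, so $\|\bar\pat_f^*\bar\pat_f\eta\|\le\|\Delta_f\eta\|$. Taking $\eta=\sD_f^{2m}\phi$ and using $[\Delta_f,\bar\pat_f]=0$ gives $\|\sD_f^{2m}\bar\pat_f\phi\|=\|\bar\pat_f\sD_f^{2m}\phi\|\le\|\sD_f^{2m+1}\phi\|$ and $\|\sD_f^{2m+1}\bar\pat_f\phi\|=\|\bar\pat_f^*\bar\pat_f\sD_f^{2m}\phi\|\le\|\sD_f^{2m+2}\phi\|$, which is all that is needed; exchanging the roles of $\bar\pat_f$ and $\bar\pat_f^*$ handles the adjoint. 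For the domain bookkeeping you rightly worry about, the cleanest justification is not elliptic regularity but the essential self-adjointness of the powers of $\sD_f$ already invoked in Lemma \ref{density}: the identities need only be checked on $\A_c(X)$ and then extended by density in the $\A_{f,k}$-norms.
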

\begin{proof}
The case for $\bar\pat_f$ and $\bar\pat_f^*$ is obvious.
For $G_{\A}$ the conclusion comes from the commutativity of $G_{\A}$ with $\bar\pat_f, \bar\pat_f^*$ and the fact that $G_{\A}$ is a bounded operator.
\end{proof}

\begin{corollary}\label{cor-Hodge-decomposition}
The Hodge decomposition theorem still holds on $\A_{f,\infty}(X)$, i.e., $\forall \alpha \in \A_{f,\infty}(X)$, one has
\begin{equation*}
\alpha = P_{\A} \alpha + \bar\pat_f \bar\pat_f^* G_{\A} \alpha + \bar\pat_f^* \bar\pat_f G_{\A} \alpha,
\end{equation*}
where $P_{\A} \alpha$, $\bar\pat_f^* G_{\A} \alpha$ and $\bar\pat_f G_{\A} \alpha$ are all in $\A_{f,\infty}(X)$.
\end{corollary}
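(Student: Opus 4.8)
The plan is to deduce Corollary \ref{cor-Hodge-decomposition} from the already-established $L^2$ Hodge decomposition together with Lemma \ref{trivial_op}. Recall that for any $\alpha \in L^2_{\A}(X)$ we have the operator identity
$$
\alpha = P_{\A}\alpha + \bar\pat_f \bar\pat_f^* G_{\A}\alpha + \bar\pat_f^* \bar\pat_f G_{\A}\alpha,
$$
so the content of the corollary is \emph{not} the decomposition itself but rather the claim that each of the three summands, and more specifically the intermediate pieces $P_{\A}\alpha$, $\bar\pat_f^* G_{\A}\alpha$, $\bar\pat_f G_{\A}\alpha$, already lies in the subspace $\A_{f,\infty}(X)$ whenever $\alpha$ does. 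So the first step is simply to quote the ambient Hodge decomposition applied to $\alpha \in \A_{f,\infty}(X) \subset L^2_{\A}(X)$.

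Next I would check membership of each term. For $P_{\A}\alpha$: harmonic forms are smooth and finite-dimensional, and $\sD_f$ annihilates them (since $\Delta_f$-harmonic means $\bar\pat_f$- and $\bar\pat_f^*$-closed), so trivially $\sD_f^i P_{\A}\alpha = 0 \in L^2_{\A}$ for all $i$, hence $P_{\A}\alpha \in \A_{f,\infty}(X)$. For $G_{\A}\alpha$: by Lemma \ref{trivial_op}, $G_{\A}$ preserves $\A_{f,\infty}(X)$, so $G_{\A}\alpha \in \A_{f,\infty}(X)$; and since $\bar\pat_f$ and $\bar\pat_f^*$ also preserve $\A_{f,\infty}(X)$ (again Lemma \ref{trivial_op}, the ``obvious'' cases), it follows that $\bar\pat_f G_{\A}\alpha$, $\bar\pat_f^* G_{\A}\alpha$, and hence $\bar\pat_f \bar\pat_f^* G_{\A}\alpha$, $\bar\pat_f^* \bar\pat_f G_{\A}\alpha$ all lie in $\A_{f,\infty}(X)$ as well. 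Assembling these gives the displayed decomposition with all pieces in $\A_{f,\infty}(X)$.

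Honestly there is no serious obstacle here; the corollary is a formal consequence of Lemma \ref{trivial_op}. The one point that deserves a sentence of care is \emph{why} $G_{\A}$ preserves $\A_{f,\infty}(X)$, i.e. why the bound $\|\sD_f^i G_{\A}\alpha\|_{\A} < \infty$ holds: this uses that $G_{\A}$ commutes with $\bar\pat_f$ and $\bar\pat_f^*$ (hence with $\sD_f$) on the appropriate domains, so $\sD_f^i G_{\A}\alpha = G_{\A}\sD_f^i\alpha$, and $G_{\A}$ is bounded on $L^2_{\A}(X)$, so $\|\sD_f^i G_{\A}\alpha\|_{\A} = \|G_{\A}\sD_f^i\alpha\|_{\A} \lsm \|\sD_f^i\alpha\|_{\A} < \infty$ since $\alpha \in \A_{f,\infty}(X)$. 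But this is exactly what Lemma \ref{trivial_op} already records, so in the write-up I would simply cite it. The proof is therefore a two- or three-line argument: invoke the ambient Hodge decomposition, then invoke Lemma \ref{trivial_op} to place each term in $\A_{f,\infty}(X)$.
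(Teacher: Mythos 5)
Your proof is correct and follows exactly the route the paper intends: the paper states this as an immediate corollary of Lemma \ref{trivial_op} (preservation of $\A_{f,\infty}(X)$ by $\bar\pat_f$, $\bar\pat_f^*$ and $G_{\A}$) together with the ambient $L^2$ Hodge decomposition, which is precisely your argument, including the observation that $P_{\A}\alpha$ is killed by $\sD_f$ and hence lies in $\A_{f,\infty}(X)$. Nothing is missing.
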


We give two other different norms which are convenient to use and  equivalent to the $\A_{f,k}$-norm.
\begin{definition}\label{defn-norms}
The first norm $\A'_{f,k}$ is defined as
\begin{align*}
\big\|\varphi\big\|_{\A'_{f,k}} := \sum_{i+j \leq k} \big\||\nabla f|^i \sD^j \varphi\big\|_{\A},
\end{align*}
and we let $\A'_{f,k}(X)=\{\varphi\in L^2_{\A}(X)| \big\|\varphi\big\|_{\A'_{f,k}} <\infty\}$. The second norm $\A_{f,k}''$ is defined as
\begin{align*}
\big\|\varphi\big\|_{\A''_{f,k}} := \sum_{i+j \leq k} \big\||\nabla f|^i \nabla^j \varphi\big\|_{\A}
\end{align*}
and we let $\A''_{f,k}(X)=\{\varphi\in L^2_{\A}(X)| \big\|\varphi\big\|_{\A''_{f,k}} <\infty\}$. 
\end{definition}

It is the the last norm $\A''_{f,k}$ that we will essentially use later in this paper.

\begin{theorem} \label{equi_norm}
The norms $\A_{f,k}$, $\A'_{f,k}$ and $\A''_{f,k}$ are equivalent. Therefore 
$$
\A_{f,k}(X)=\A'_{f,k}(X)=\A''_{f,k}(X).
$$ 
\end{theorem}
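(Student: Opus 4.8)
The strategy is to establish the chain of equivalences $\|\cdot\|_{\A_{f,k}} \approx \|\cdot\|_{\A'_{f,k}} \approx \|\cdot\|_{\A''_{f,k}}$ by induction on $k$, the key input being the identification of $\sD_f$ and $\sD$ at the level of principal and subprincipal symbols together with the strongly elliptic condition \eqref{tame} to absorb lower-order terms. First I would record the basic algebraic relation between $\sD_f$ and $\sD$: since $\bar\pat_f = \bar\pat + \pat f\wedge$ and $\bar\pat_f^* = \bar\pat^* + (\pat f\wedge)^*$, we have $\sD_f = \sD + R$ where $R = \pat f\wedge + (\pat f\wedge)^*$ is a zeroth-order algebraic operator with pointwise norm comparable to $|\nabla f|$. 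More precisely, $\sD_f^i$ expands, via the graded Leibniz rule, into a sum of terms each of which is an ordered product of factors drawn from $\{\sD, R, [\sD,R], \ldots\}$; the top-order term is $\sD^i$, the purely algebraic term involves $|\nabla f|^i$ (up to bounded factors), and the commutators $[\sD, R]$ bring in $\nabla(\nabla f) = \nabla^2 f$, and iterated commutators bring in $\nabla^j f$ for $j \le i$. This is where the hypothesis $k \ge 2$ in \eqref{tame} enters: each $\nabla^j f$ ($2 \le j \le i$) is dominated, outside a compact set, by $\epsilon|\nabla f|^j$ for any $\epsilon$, hence a fortiori is controlled by $|\nabla f|^i$ plus a constant on the relevant weight range after interpolation.

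The core of the argument is therefore the equivalence $\A_{f,k} \approx \A'_{f,k}$, which I would prove by a simultaneous induction showing both $\||\nabla f|^i\sD^j\varphi\| \lsm \|\varphi\|_{\A_{f,k}}$ for $i+j\le k$ and $\|\sD_f^i\varphi\|\lsm \|\varphi\|_{\A'_{f,k}}$ for $i\le k$. The second direction is the easy one: expand $\sD_f^i$ as above and bound each resulting term by an $\A'_{f,k}$-seminorm, using that the commutator terms involving $\nabla^j f$ with $j\ge 2$ are, by \eqref{tame}, bounded by $|\nabla f|^{j}+C \lsm |\nabla f|^{i}+C$ pointwise (and the $+C$ part contributes a bounded operator), while the operator $\nabla$ itself is controlled by $\sD$ up to bounded geometry terms (Lemma \ref{usual_equiv}). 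For the first direction one runs the induction upward in $k$: assuming the estimate through order $k-1$, one must control $\||\nabla f|^k\varphi\|$ and the mixed norms $\||\nabla f|^i\sD^j\varphi\|$ with $i+j=k$. Here the key identity is to extract $|\nabla f|^2$ from the zeroth-order part of $\Delta_f = \sD_f^2$: from the Bochner expansion displayed before Definition \ref{condition-T}, $\Delta_f = \Delta_{\bar\pat} + L_f + |\nabla f|^2$, so $\||\nabla f|\varphi\|^2 \le \langle \Delta_f\varphi,\varphi\rangle + |\langle(\Delta_{\bar\pat}+L_f)\varphi,\varphi\rangle|$, and $L_f$ is bounded by $|\nabla^2 f|\lsm \epsilon|\nabla f|^2 + C$ which can be absorbed into the left side for small $\epsilon$ outside a compact set, the compact-set contribution being handled by interior elliptic estimates for $\sD$. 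Iterating this — commuting powers of $|\nabla f|$ and $\sD$ past $\Delta_f$ and repeatedly absorbing the \eqref{tame}-small error terms — yields all the mixed estimates of order $\le k$.

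The equivalence $\A'_{f,k} \approx \A''_{f,k}$ is comparatively routine given bounded geometry: on a manifold of bounded geometry, Lemma \ref{usual_equiv} already gives $\sum_j\|\sD^j\cdot\|\approx\sum_j\|\nabla^j\cdot\|$ for the \emph{untwisted} operators, and one upgrades this to the weighted statement $\sum_{i+j\le k}\||\nabla f|^i\sD^j\varphi\|\approx\sum_{i+j\le k}\||\nabla f|^i\nabla^j\varphi\|$ by commuting the scalar weight $|\nabla f|^i$ past the differential operators; each commutator $[\nabla, |\nabla f|^i]$ produces a factor of $|\nabla f|^{i-1}|\nabla^2 f|$, which by \eqref{tame} is $\lsm \epsilon|\nabla f|^{i+1}+C \lsm |\nabla f|^{i}\cdot|\nabla f| + C$ and thus reabsorbed into lower-weight terms plus a bounded contribution. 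The main obstacle, and the step I would spend the most care on, is the first (hard) direction of $\A_{f,k}\approx\A'_{f,k}$: making the inductive bookkeeping of commutator terms precise and verifying that every error term genuinely has the form ``small constant times a pointwise weight already controlled, plus a compactly supported bounded operator,'' so that the absorption arguments close. The compact-support contributions are harmless because they reduce to ordinary local elliptic regularity for $\sD$, but one must be careful that the ``small $\epsilon$'' can be chosen uniformly as one ascends the induction, which is exactly what the $\forall\epsilon>0$ quantifier in \eqref{tame} is there to supply.
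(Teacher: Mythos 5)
Your overall analytic strategy matches the paper's: the easy direction is obtained exactly as in the paper by expanding $\sD_f^i$ into iterated commutators $\mathrm{Ad}_{\sD}^{j}(T_f)$ whose pointwise size is controlled by \eqref{tame}, and the hard direction rests on the Bochner identity $\Delta_f=\Delta_{\bar\pat}+L_f+|\nabla f|^2$, integration by parts, absorption of the \eqref{tame}-small errors, and induction on $k$. However, there is a genuine gap: you carry out the integration-by-parts and absorption arguments on arbitrary elements of $\A_{f,k}(X)$, whose definition only requires the distributional powers $\sD_f^i\varphi$ to lie in $L^2$. For such $\varphi$ the pairings $\lge\Delta_{\bar\pat}\varphi,\varphi\rge$, $\lge L_f\varphi,\varphi\rge$, $\lge|\nabla f|^2\varphi,\varphi\rge$ are not known a priori to be finite, and the absorption step (subtracting $\frac12\||\nabla f|^{m+1}\varphi\|$, or an $\epsilon$-multiple of a weighted norm, from both sides) is only legitimate once the quantity being absorbed is already known to be finite. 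Consequently, your scheme proves norm inequalities only on a class of forms where all manipulations are justified, and this does not by itself yield the set-theoretic equality $\A_{f,k}(X)=\A'_{f,k}(X)=\A''_{f,k}(X)$, which is the actual content of the theorem.

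The paper closes this gap with an ingredient absent from your proposal: a density theorem stating that $\A_c(X)$ is dense in $\A_{f,k}(X)$ for the $\A_{f,k}$-norm, proved via Chernoff's theorem on essential self-adjointness of all powers of $\sD_f$ (using that $\sD_f$ has the same symbol as $\sD$ and the metric is complete). The hard estimate $\|\varphi\|_{\A''_{f,k}}\lsm\|\varphi\|_{\A_{f,k}}$ is then proved only for $\varphi\in\A_c(X)$, where integration by parts and absorption are unproblematic, and the theorem follows from a circle of inclusions among the closures of $\A_c(X)$ in the three norms together with the bounded inclusions $\A''_{f,k}(X)\hookrightarrow\A'_{f,k}(X)\hookrightarrow\A_{f,k}(X)$. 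To repair your argument you would either need to add this density/essential self-adjointness step (or an equivalent Friedrichs-mollifier or cut-off approximation compatible with the twisted norms), or otherwise justify a priori that every element of $\A_{f,k}(X)$ has enough decay and regularity for the weighted integrations by parts; as written, the induction does not close. A secondary caution: your claim that the equivalence $\A'_{f,k}\approx\A''_{f,k}$ is ``comparatively routine'' understates it — the direction bounding weighted $\nabla^j$-norms by weighted $\sD^j$-norms again requires Bochner--Weitzenb\"ock-type integration by parts on tensor-valued forms (this is where the paper needs the $k=1$ case for sections of $(T^*_{\C}X)^{\otimes k}\otimes E$), so it is subject to the same domain issue.
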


We need some preparations to prove this theorem. Firstly, we have the following density theorem.

\begin{lemma}[Density Theorem] \label{density} $\A_c(X)$ is dense in $\A_{f,k}(X)$ with respect to the $\A_{f,k}$-norm. 
\end{lemma}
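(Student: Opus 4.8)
\textbf{Proof plan for the Density Theorem (Lemma \ref{density}).}

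The plan is to establish the density of $\A_c(X)$ in $\A_{f,k}(X)$ by the standard two-step recipe for Sobolev-type spaces on manifolds of bounded geometry: first approximate an arbitrary $\phi \in \A_{f,k}(X)$ by compactly supported (but merely $L^2$, non-smooth) forms via a cut-off argument, then regularize by mollification. Since Theorem \ref{equi_norm} asserts the equivalence of the $\A_{f,k}$, $\A'_{f,k}$ and $\A''_{f,k}$ norms, I am free to work with whichever norm is most convenient at each stage; the cut-off step is cleanest with the $\A''_{f,k}$-norm $\sum_{i+j\le k}\norm{|\nabla f|^i \nabla^j\varphi}_\A$, because there the only derivatives that appear are the metric covariant derivatives $\nabla^j$, and $|\nabla f|$ enters only as a (smooth, locally bounded) multiplier which is untouched by a cut-off.

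First I would fix an exhaustion of $X$ by the distance function: let $\rho(z)=d(z,z_0)$ and choose a smooth family of cut-offs $\chi_R$ with $\chi_R\equiv 1$ on $\{\rho\le R\}$, $\op{supp}\chi_R\subset\{\rho\le 2R\}$, and $|\nabla^j\chi_R|\lesssim R^{-j}$ uniformly in $R$ for $1\le j\le k$ — such $\chi_R$ exist because $(X,g)$ has bounded geometry (one builds them from a smoothing of $\rho$, whose derivatives are controlled by curvature bounds and the positive injectivity radius). Then $\chi_R\phi\in\A_{f,k}(X)$ has compact support, and by the Leibniz rule
\begin{equation*}
\nabla^j(\chi_R\phi)-\chi_R\nabla^j\phi=\sum_{l=1}^{j}\binom{j}{l}\nabla^l\chi_R\otimes\nabla^{j-l}\phi,
\end{equation*}
so each error term is bounded pointwise by $R^{-l}|\nabla^{j-l}\phi|$ and is supported in the annulus $\{R\le\rho\le 2R\}$. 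Multiplying by $|\nabla f|^i$ and integrating, the right-hand side is $\lesssim R^{-1}$ times a tail of the (finite) $\A''_{f,k}$-norm of $\phi$ over $\{\rho\ge R\}$, hence tends to $0$ as $R\to\infty$. This shows compactly supported forms in $\A_{f,k}(X)$ are dense.

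Next I would mollify: given $\psi\in\A_{f,k}(X)$ with compact support $K$, use the standard bounded-geometry smoothing operators (convolution in normal coordinate charts against a fixed bump, patched with a uniformly locally finite partition of unity — as in \cite{Heb}) to produce $\psi_\varepsilon\in\A_c(X)$ with $\op{supp}\psi_\varepsilon$ in a fixed neighborhood of $K$ and $\psi_\varepsilon\to\psi$ in $W^{k,2}_\A$. On a fixed compact set the factor $|\nabla f|^i$ is bounded together with all its derivatives (it is smooth, $X$ being a manifold and $f$ holomorphic), so the $\A''_{f,k}$-norm is, on forms supported in that neighborhood, equivalent to the ordinary $W^{k,2}_\A$-norm; hence $\psi_\varepsilon\to\psi$ in $\A''_{f,k}$ as well, and by norm equivalence in $\A_{f,k}$. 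Combining the two steps gives the claim. The one point deserving care — and the step I expect to be the main obstacle — is the construction of cut-offs $\chi_R$ with derivative bounds $|\nabla^j\chi_R|\lesssim R^{-j}$ uniform in $R$ up to order $k$: naive smoothing of the Lipschitz function $\rho$ does not immediately give control of higher covariant derivatives, and one must invoke the bounded geometry hypothesis properly (e.g. via a smooth exhaustion function with bounded Hessian and higher derivatives, which exists precisely under bounded geometry). Everything else is routine Leibniz-rule bookkeeping and the standard mollification package for manifolds of bounded geometry.
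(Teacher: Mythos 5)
There is a genuine gap, and it is circularity with respect to the paper's logical order. Your plan leans on Theorem \ref{equi_norm} ("I am free to work with whichever norm is most convenient"), but in the paper the norm equivalence is \emph{derived from} Lemma \ref{density}: the hard inequality $\norm{\varphi}_{\A''_{f,k}}\lsm\norm{\varphi}_{\A_{f,k}}$ is proved in Lemma \ref{hard} only for $\varphi\in\A_c(X)$ (the integration by parts there needs compact support), and it is the density lemma that lets one pass from $\A_c(X)$ to all of $\A_{f,k}(X)$ and conclude $\A_{f,k}(X)=\A''_{f,k}(X)$. Concretely, your cut-off step estimates the error $\nabla^j(\chi_R\phi)-\chi_R\nabla^j\phi$ against "a tail of the (finite) $\A''_{f,k}$-norm of $\phi$" — but for a general $\phi\in\A_{f,k}(X)$, defined only by $\sD_f^i\phi\in L^2_\A(X)$, you do not yet know that $|\nabla f|^i\nabla^j\phi\in L^2_\A(X)$ at all; that is exactly the hard direction you are implicitly assuming. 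Nor can the argument be rescued by running the cut-off directly in the $\A_{f,k}$-norm: the commutators $[\sD_f^i,\chi_R]$ for $i\geq 2$ contain terms weighted by powers of $|\nabla f|$ (coming from $\sD_f=\sD+T_f$ with $T_f$ unbounded), and bounding these again requires the weighted a priori estimate that is not yet available. Your mollification step and the construction of cut-offs with uniform higher-derivative bounds under bounded geometry are fine, but they are not where the difficulty lies.

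The paper's proof avoids all of this by a soft functional-analytic route: $\sD_f$ is a symmetric first-order operator on $\A_c(X)$ with the \emph{same principal symbol} as $\sD$, so on the complete manifold $(X,g)$ Chernoff's theorem (Theorem 2.2 of \cite{Che}, via finite propagation speed) gives that every power $\sD_f^i$ is essentially self-adjoint; hence $\text{Dom}(\overline{\sD_f^i})=\text{Dom}((\sD_f^i)^*)$, and the closure of $\A_c(X)$ in the graph norm $\sum_{i\leq k}\norm{\sD_f^i\cdot}_\A$ — which is the $\A_{f,k}$-norm — coincides with the maximal domain $\A_{f,k}(X)$. To repair your proposal you would either need to cite such an essential self-adjointness result, or prove the weighted estimate of Lemma \ref{hard} directly for non-compactly supported forms, which is substantially harder than the bookkeeping you describe.
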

\begin{proof} $\A_c(X)$ is a dense subspace of $L^2_{\A}(X)$.  $\sD_f$ is a symmetric operator defined on $\A_c(X)$. 
Since $\sD_f$ has the same symbol as $\sD$, Theorem 2.2 of \cite{Che} implies every power of $\sD_f$ is essentially self adjoint.
Thus $\text{Dom}(\overline{\sD_f^i}) = \text{Dom}((\sD_f^i)^*)$, and
\begin{align*}
\overline\A_{c,f,k}(X) := \cap_{i=0}^k \text{Dom}(\overline{\sD_f^i}) = \cap_{i=0}^k \text{Dom}((\sD_f^i)^*) =: \A_{f,k}(X),
\end{align*}
where $\overline\A_{c,f,k}(X)$ is the closure of $\A_c(X)$ in $\A_{f,k}(X)$ under the $\A_{f,k}$-norm.
\end{proof}

Let $E$ be a Hermitian bundle on $X$ with connection.
Together with $g$ and the $g$-compatible torsion free connection $\nabla$, one can define connections, still denoted by $\nabla$, on all the bundles arising from tensors of $TX, T^*X, E, E^*$.
\begin{definition}[\cite{Sal}]\label{defn-Sal}
Define $C^{\infty}_b(E)$ to be the space of smooth sections $s$ of $E$ such that $\nabla^k s$ is bounded for every $k \geq 0$.
The space $\text{Diff}_b^m(E,F)$, $m \geq 0$, is the space of differential operators $P$ from $E$ to $F$ of the form
\begin{align*}
P = \sum_{i=0}^m \xi_i \nabla^i,
\end{align*}
where $\xi_i \in C^{\infty}_b(Hom((T^*X)^{\otimes i} \otimes E, F))$.
\end{definition}

Note that $g \in C^{\infty}_b(T^*X \otimes T^*X)$ and $g^{-1} \in C^{\infty}_b(TX \otimes TX)$.
Under the assumption of bounded geometry, the Riemannian curvature
\begin{align*}
R \in C^{\infty}_b(T^*X \otimes T^*X \otimes End(TX)).
\end{align*}
Moreover, if $P \in \text{Diff}_b^m(E,F), Q \in \text{Diff}_b^n(F,G)$, then $QP \in \text{Diff}_b^{m+n}(E,G)$.

Let $T^*_{\C}X$ denote the complexified tangent bundle.  In the following, we specify to the case
$$
E=\wedge^{\bullet} T^*_{\C}X.
$$ 
It is easy to see that the Dirac type operator  $\sD=\bar \pat+\bar\pat^*$ lies in $\text{Diff}_b^1(E,E)$.

\begin{lemma} \label{easy}
We have bounded inclusions $\A''_{f,k}(X) \hookrightarrow \A'_{f,k}(X) \hookrightarrow \A_{f,k}(X)$.
\end{lemma}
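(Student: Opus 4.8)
The plan is to establish the two inclusions separately, in each case by showing that every term appearing in the smaller norm is controlled by a finite sum of terms in the larger norm. For the inclusion $\A'_{f,k}(X) \hookrightarrow \A_{f,k}(X)$: the $\A_{f,k}$-norm is built from $\|\sD_f^i\varphi\|_\A$ for $0 \le i \le k$, while the $\A'_{f,k}$-norm is built from $\||\nabla f|^i \sD^j \varphi\|_\A$ for $i+j \le k$. First I would expand $\sD_f = \sD + (df\wedge) + (df\wedge)^*$ and observe that $(df\wedge) + (df\wedge)^*$ is a zeroth-order algebraic operator whose pointwise operator norm is comparable to $|\nabla f|$ (this uses the local expressions for $\bar\pat_f^*$ recorded just before Definition \ref{condition-T}, giving $(df\wedge)^* = -g^{\bar j i} f_i \iota_{\pat_{\bar j}}$ up to conventions, with pointwise norm $\lesssim |\nabla f|$). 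Then $\sD_f^i$ is a sum of at most $2^i$ words in the letters $\sD$ and "multiplication by something of size $|\nabla f|$"; to each such word I would apply the relation $\sD \circ (|\nabla f|\text{-multiplication}) = (|\nabla f|\text{-multiplication})\circ \sD + [\sD, \,\cdot\,]$, where the commutator is again a bounded-coefficient operator times a first covariant derivative of $f$, hence (by bounded geometry and, crucially, the strongly elliptic condition \eqref{tame} which forces $|\nabla f|^2 \to \infty$ and $|\nabla^2 f| \lesssim \epsilon|\nabla f|^2$ at infinity, so $|\nabla^2 f| \lesssim |\nabla f|^2$ globally up to a constant) bounded by a constant times $|\nabla f|^2$-multiplication plus lower order. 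Commuting all the $|\nabla f|$-factors to the left of all the $\sD$'s in this fashion, each word becomes a finite linear combination of operators $c(z)\,|\nabla f|^a \sD^b$ with $a + b \le i \le k$ and $c(z)$ a bounded coefficient, which shows $\|\sD_f^i\varphi\|_\A \lesssim \|\varphi\|_{\A'_{f,k}}$.

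For the inclusion $\A''_{f,k}(X) \hookrightarrow \A'_{f,k}(X)$: here I would use Lemma \ref{usual_equiv}, which says the $W^{k,2}_\A$-norm $\sum_{i=0}^k \|\nabla^i\varphi\|$ is equivalent to $\sum_{i=0}^k \|\sD^i\varphi\|$. The only subtlety is that I need this equivalence in the presence of the extra $|\nabla f|$-weights. I would argue that for each $j \le k$, $\sD^j\varphi = \sum_{i \le j} \xi_i \nabla^i\varphi$ with $\xi_i$ bounded-coefficient (since $\sD \in \text{Diff}_b^1(E,E)$ as noted after Definition \ref{defn-Sal}), so $\||\nabla f|^{k-j}\sD^j\varphi\|_\A \lesssim \sum_{i\le j}\||\nabla f|^{k-j}\nabla^i\varphi\|_\A$; since $k - j \le k - i$ and $|\nabla f|$ is bounded below away from zero outside a compact set (hence $\||\nabla f|^{k-j}\nabla^i\varphi\| \lesssim \||\nabla f|^{k-i}\nabla^i\varphi\| + \|\nabla^i\varphi\|_{L^2(K)}$ on a compact $K$, and the latter is $\lesssim \||\nabla f|^{k-i}\nabla^i\varphi\|$ once we absorb the bounded region), we get $\|\varphi\|_{\A'_{f,k}} \lesssim \|\varphi\|_{\A''_{f,k}}$, i.e. the stated bounded inclusion. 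Both inclusions are automatically continuous because the constants produced are uniform.

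The main obstacle I anticipate is the commutator bookkeeping in the first inclusion: controlling $[\sD, |\nabla f|\text{-multiplication}]$ and, after iterating, all the nested commutators that arise when expanding $\sD_f^i$, and verifying that each covariant derivative $\nabla^m f$ that appears is dominated by an appropriate power $|\nabla f|^m$ up to a global constant. This is exactly where the full strength of the strongly elliptic condition \eqref{tame} — not just the $k=2$ case — enters: it guarantees $|\nabla^m f| \lesssim |\nabla f|^m + 1 \lesssim |\nabla f|^m$ outside a compact set for every $m \ge 2$, and on the compact set everything is bounded. Once this domination is in hand, the argument is a routine (if tedious) induction on $i$, and I would only sketch it rather than write out all $2^i$ words. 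The reverse inclusions $\A_{f,k} \hookrightarrow \A'_{f,k} \hookrightarrow \A''_{f,k}$, completing the proof of Theorem \ref{equi_norm}, would be deferred to a subsequent lemma since they require the harder elliptic estimates (a Gårding-type inequality for $\Delta_f$ with weights).
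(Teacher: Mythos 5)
Your proposal is correct and follows essentially the same route as the paper: the inclusion $\A''_{f,k}(X)\hookrightarrow\A'_{f,k}(X)$ via the pointwise bound $|\sD^j\varphi|\lsm\sum_{s\leq j}|\nabla^s\varphi|$ coming from $\sD^j\in\text{Diff}_b^j(E,E)$, and the inclusion $\A'_{f,k}(X)\hookrightarrow\A_{f,k}(X)$ by expanding $\sD_f^i=(\sD+T_f)^i$ into words of the form $\text{Ad}_{\sD}^{i_1}(T_f)\cdots\text{Ad}_{\sD}^{i_s}(T_f)\sD^t$ and bounding the iterated commutators by $|\nabla f|^{m+1}+1$ through the strongly elliptic condition, which is exactly your commutator bookkeeping. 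The only blemish is harmless: your compact-set absorption step (claiming $\|\nabla^i\varphi\|_{L^2(K)}\lsm\big\||\nabla f|^{k-i}\nabla^i\varphi\big\|$) fails near $\mathrm{Crit}(f)$ where $|\nabla f|$ vanishes, but the detour is unnecessary since $i+s\leq k$ already guarantees that each term $\big\||\nabla f|^{i}\nabla^{s}\varphi\big\|$ occurs verbatim in the $\A''_{f,k}$-norm.
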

\begin{proof}
We only need to prove $\big\|\varphi\big\|_{\A_{f,k}} \lsm \big\|\varphi\big\|_{\A'_{f,k}} \lsm \big\|\varphi\big\|_{\A''_{f,k}}$ when corresponding norms are defined.

Since  $\sD^i \in \text{Diff}_b^i(E,E)$, we have 
\begin{align*}
|\sD^j \varphi| \lsm \sum_{s=0}^j |\nabla^s \varphi|, \qquad \big ||\nabla f|^i \sD^j \varphi \big | \lsm \sum_{s=0}^j \big ||\nabla f|^i \nabla^s \varphi \big |. 
\end{align*}
It follows that  $\big\|\varphi\big\|_{\A'_{f,k}} \lsm \big\|\varphi\big\|_{\A''_{f,k}}$.

To prove $\big\|\varphi\big\|_{\A_{f,k}} \lsm \big\|\varphi\big\|_{\A'_{f,k}}$, let us write
$$
 \sD_f = \sD+ T_f.
$$
The part $T_f$ depends linearly on $\nabla f, \nabla \bar f$.  Then 
$$
\sD_f^i=(\sD+ T_f)^i=(\sD+ T_f)\cdots (\sD+ T_f)
$$
can be expanded as a sum of operators of the form
\begin{align*}
\text{Ad}_{\sD}^{i_1}(T_f)\cdots \text{Ad}_{\sD}^{i_s}(T_f) \sD^t,
\end{align*}
where $\text{Ad}_{\sD} := [\sD, \cdot]$ and $i_1 + \cdots i_s + t+s = i$.
Strongly elliptic condition $(T)$ together with the fact $\sD \in \text{Diff}_b^1(E,E)$ implies
\begin{align*}
|\text{Ad}_{\sD}^i(T_f)| \lsm |\nabla f|^{i+1} + 1.
\end{align*}
It follows that $\big\|\varphi\big\|_{\A_{f,k}} \lsm \big\|\varphi\big\|_{\A'_{f,k}}$.
\end{proof}

\begin{lemma} \label{hard}
For $\varphi \in \A_c(X)$, we have $\big\|\varphi\big\|_{\A''_{f,k}} \lsm \big\|\varphi\big\|_{\A_{f,k}}$ for all $k\geq 0$.
\end{lemma}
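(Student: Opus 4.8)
The plan is to establish the reverse inequality $\|\varphi\|_{\A''_{f,k}}\lsm\|\varphi\|_{\A_{f,k}}$ by induction on $k$, the case $k=0$ being trivial since all three norms agree with the $L^2$-norm in degree $0$. The essential point is to control the mixed quantities $\||\nabla f|^i\nabla^j\varphi\|$ for $i+j\le k$ in terms of the $\sD_f$-norms, and the natural device is a Gårding-type / integration-by-parts estimate together with the Bochner-Weitzenböck expansion of $\Delta_f$ recorded earlier in the excerpt, namely
$$
\Delta_f=-\sum_{\mu,\nu}g^{\bar\nu\mu}\nabla_\mu\nabla_{\bar\nu}+Ric+L_f+|\nabla f|^2 .
$$
Pairing $\Delta_f\varphi$ with $\varphi$ and integrating by parts (legitimate since $\varphi\in\A_c(X)$) produces on the right-hand side the three nonnegative-ish pieces $\|\nabla\varphi\|^2$, $\||\nabla f|\varphi\|^2$, and lower-order curvature/$L_f$ terms; since $|L_f|\lsm|\nabla^2 f|$ and the strongly elliptic condition \eqref{tame} gives $|\nabla^2 f|\le\epsilon|\nabla f|^2+C_\epsilon$, the $L_f$-term and the bounded-geometry $Ric$-term can be absorbed into $\||\nabla f|\varphi\|^2+\|\varphi\|^2$. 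This yields the base refinement
$$
\|\nabla\varphi\|_{\A}^2+\big\||\nabla f|\varphi\big\|_{\A}^2\lsm\|\Delta_f\varphi\|_\A\,\|\varphi\|_\A+\|\varphi\|_\A^2\lsm\|\varphi\|_{\A_{f,2}}^2 ,
$$
which already proves the claim for $k\le 2$ (using $\Delta_f=\sD_f^2$ on $\A_c$).

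For the inductive step I would commute derivatives past $\Delta_f$. Writing $\Delta_f=\sD_f^2$, one studies $\sD_f^i\varphi$ for $i\le k$ and seeks to bound $\||\nabla f|^{a}\nabla^{b}\varphi\|$ with $a+b\le k$. The mechanism is to apply the $k=2$ estimate above not to $\varphi$ but to $\psi:=|\nabla f|^{a-1}\nabla^{b}\varphi$ (or rather to honest derivatives, handling the $|\nabla f|$-weights via the commutator structure), and then to push the operator $\nabla^{b}$ (resp. the weight) through $\sD_f$ using the commutator expansion already exploited in Lemma \ref{easy}: $\sD_f=\sD+T_f$ with $\mathrm{Ad}_{\sD}^i(T_f)$ of size $\lsm|\nabla f|^{i+1}+1$, while $\mathrm{Ad}_{\sD}$ applied repeatedly to $\nabla^b$ stays in $\mathrm{Diff}_b$ by bounded geometry. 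Every commutator that arises either lowers the number of derivatives (so the inductive hypothesis applies) or trades a derivative on $f$ for a higher power of $|\nabla f|$, which is exactly controlled by \eqref{tame} in the form $|\nabla^m f|\le\epsilon|\nabla f|^m+C_{\epsilon,m}$; the parameter $\epsilon$ can be chosen small enough to absorb the offending top-order term into the left-hand side. Assembling these, one gets $\|\varphi\|_{\A''_{f,k}}^2\lsm\|\varphi\|_{\A''_{f,k-1}}^2+\|\varphi\|_{\A_{f,k}}\|\varphi\|_{\A''_{f,k-1}}+\dots\lsm\|\varphi\|_{\A_{f,k}}^2$, completing the induction; combined with Lemma \ref{easy} and the Density Theorem (Lemma \ref{density}) this will give Theorem \ref{equi_norm}.

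The main obstacle I anticipate is bookkeeping the interplay between the $|\nabla f|$-weights and the covariant derivatives in the inductive step: one must make sure that when a derivative falls on the weight $|\nabla f|$ (producing $\nabla|\nabla f|$, morally of size $|\nabla^2 f|/|\nabla f|$, hence $\lsm\epsilon|\nabla f|+C_\epsilon/|\nabla f|$, which is delicate near the critical set where $|\nabla f|$ vanishes) the resulting terms remain $L^2$-controlled — in practice one avoids differentiating $|\nabla f|$ directly by weighting with $(1+|\nabla f|^2)^{(a)/2}$ or by only weighting after all derivatives are taken, and by noting the critical set is compact so local elliptic regularity of $\sD_f$ handles a neighborhood of it. A second, more routine, technical point is justifying all integrations by parts and the self-adjointness manipulations; here the fact that we are proving the estimate only for $\varphi\in\A_c(X)$ (and then extending by the Density Theorem) removes any boundary-term or domain subtleties.
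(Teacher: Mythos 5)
Your strategy is essentially the one the paper uses: induction on $k$, a first--order integration-by-parts (Bochner--Weitzenb\"ock) estimate as the base, and an inductive step that pushes the weights $|\nabla f|^i$ and the derivatives $\nabla^j$ through $\sD_f$ by commutators, trading $|\nabla^m f|$ for $\epsilon|\nabla f|^m+C_\epsilon$ via \eqref{tame} and absorbing the top-order term for small $\epsilon$. However, your base case is both stated too weakly and over-claimed. For $\varphi\in\A_c(X)$ one has exactly $\langle\Delta_f\varphi,\varphi\rangle=\big\|\sD_f\varphi\big\|^2$, so after absorbing $Ric+L_f$ the correct output is
\begin{align*}
\big\|\nabla\varphi\big\|^2+\big\||\nabla f|\,\varphi\big\|^2\lsm \big\|\sD_f\varphi\big\|^2+\big\|\varphi\big\|^2,
\end{align*}
i.e.\ the $k=1$ case with the right norm $\A_{f,1}$. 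Your Cauchy--Schwarz version only bounds these \emph{first-order} quantities by $\big\|\varphi\big\|_{\A_{f,2}}$; taken literally as the induction base this loses one derivative at each step and yields $\big\|\varphi\big\|_{\A''_{f,k}}\lsm\big\|\varphi\big\|_{\A_{f,k+1}}$, which is not the lemma and does not give the norm equivalence of Theorem \ref{equi_norm} at fixed $k$. Nor does the displayed estimate ``prove the claim for $k\le 2$'': the $k=2$ case involves $\big\|\nabla^2\varphi\big\|$, $\big\||\nabla f|\nabla\varphi\big\|$, $\big\||\nabla f|^2\varphi\big\|$ and is only reached through the inductive machinery. The fix is one line and inside your own framework, but as written the base is wrong.

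The ``bookkeeping'' you defer is where the actual work of the paper lies, and two of its ingredients deserve to be made explicit. First, the $k=1$ estimate must be available for \emph{tensor-valued} forms, because in the induction it is applied to $\nabla^s\varphi$ and to $|\nabla f|^{m-s}\nabla^s\varphi$ rather than to $\varphi$; the paper gets this from the Bochner--Weitzenb\"ock argument of \cite{Sal} on tensor-valued sections. Second, the induction is organized through two commutator estimates which reduce everything to the single term $\big\||\nabla f|^{m+1}\varphi\big\|$ plus $\big\|\varphi\big\|_{\A_{f,m+1}}$, and that term is then absorbed using
\begin{align*}
\big|[\sD_f,|\nabla f|^m]\big|\lsm |\nabla f|^{m-1}\,\big|\nabla|\nabla f|\big|\le |\nabla f|^{m-1}|\nabla^2 f|\le \epsilon|\nabla f|^{m+1}+N_\epsilon|\nabla f|^{m-1}.
\end{align*}
In particular the pointwise (a.e.) inequality $\big|\nabla|\nabla f|\big|\le|\nabla^2 f|$ together with \eqref{tame} already settles your worry about derivatives hitting $|\nabla f|$ near the critical set; the smoothed weights $(1+|\nabla f|^2)^{a/2}$ or a separate local elliptic argument are workable but unnecessary. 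With the corrected base case and these points filled in, your plan becomes the paper's proof.
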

\begin{proof}
Wo prove by induction on $k$.
The $k=0$ case is trivial.
For $k=1$, let $N>0$ be a constant such that $|\nabla^2 f| \leq \frac{1}{2} |\nabla f|^2 + N$. Since $\varphi \in \A_c(X)$, we can use integration by part to estimate
\begin{align*}
\big\|\sD_f \varphi\big\|^2 + 2N \big\|\varphi\big\|^2 =& \lge \sD\varphi, \sD\varphi \rge + \lge L_f \varphi, \varphi \rge + \lge |\nabla f|^2 \varphi, \varphi \rge + 2N \big\|\varphi\big\|^2 \\
\approx& \big\|\sD \varphi\big\|^2 + \big\||\nabla f| \varphi\big\|^2 + \big\|\varphi\big\|^2 \\
\approx& \big\|\nabla \varphi\big\|^2 + \big\||\nabla f| \varphi\big\|^2 + \big\|\varphi\big\|^2.
\end{align*}
Here in the last line we have used the $k=1$ case of Lemma \ref{usual_equiv}. This proves the Lemma for $k=1$. 

Note that in the proof of Theorem 3.5 in \cite{Sal}, the author actually used the Bochner-Weizenb\"ock formula.
As the same formula holds on tensor-valued differential forms, the conclusion of the above $k=1$ case holds also when $\varphi$ is a section of $(T^*_{\mtb{C}}X)^{\otimes k} \otimes E$.

Assume the Lemma holds for $k \leq m$ and we consider the $k=m+1$ case. First we claim that
\begin{equation}\label{estimate-1}
  \big\| \sD_f |\nabla f|^i \nabla^j \varphi\big\|\lsm  \sum_{0\leq s\leq j} \big\| |\nabla f|^{i+s+1} \nabla^{j-s} \varphi\big\|+ \big\| \varphi \big\|_{\A^{''}_{f,i+j}}+ \big\|  |\nabla f|^i \nabla^j \sD_f\varphi\big\|   \tag{$\dagger$}
\end{equation}
and 
\begin{equation}\label{estimate-2}
  \big\| |\nabla f|^i \nabla^{j+1} \varphi\big\|\lsm  \sum_{0\leq s\leq j} \big\| |\nabla f|^{i+s+1} \nabla^{j-s} \varphi\big\|+ \big\| \varphi \big\|_{\A^{''}_{f,i+j}}+ \big\|  \nabla \bracket{|\nabla f|^i \nabla^j \varphi}\big\|   \tag{$\dagger\dagger$}
\end{equation}
Assume \eqref{estimate-1} \eqref{estimate-2} first. We have
\begin{align*}
\big\| |\nabla f|^{m-s} \nabla^{s+1}\varphi\big\|
\stackrel{\eqref{estimate-2}}{\lsm}& \sum_{i\leq s} \big\| |\nabla f|^{m+1-i} \nabla^{i} \varphi\big\| +  \big\| \varphi \big\|_{\A^{''}_{f,m}}+\big\| \nabla \bracket{|\nabla f|^{m-s} \nabla^{s}\varphi}\big\|\\
\stackrel{k=1\ \text{case}}{\lsm}&  \sum_{i\leq s} \big\| |\nabla f|^{m+1-i} \nabla^{i} \varphi\big\| +  \big\| \varphi \big\|_{\A^{''}_{f,m}}+\big\| \sD_f |\nabla f|^{m-s} \nabla^{s}\varphi\big\|+\big\| |\nabla f|^{m-s} \nabla^{s}\varphi\big\|\\
\stackrel{\eqref{estimate-1}}{\lsm}&\sum_{i\leq s} \big\| |\nabla f|^{m+1-i} \nabla^{i} \varphi\big\| +\big\| |\nabla f|^{m-s} \nabla^{s}\sD_f \varphi\big\|+\big\|\varphi\big\|_{\A^{''}_{f,m}}\\
\stackrel{induction}{\lsm}&\sum_{i\leq s} \big\| |\nabla f|^{m+1-i} \nabla^{i} \varphi\big\| +\big\|\varphi\big\|_{\A_{f,m+1}}
\end{align*}
By successive application of this estimate, we have 
$$
   \big\|\varphi\big\|_{\A^{''}_{f,m+1}}\lsm \big\| |\nabla f|^{m+1} \varphi\big\|+\big\|\varphi\big\|_{\A_{f,m+1}}. 
$$
Therefore it is enough to estimate the term $\big\| |\nabla f|^{m+1} \varphi\big\|$. Note that by the $k=1$ case, 
\begin{align*}
\big\||\nabla f|^{m+1}\varphi\big\| = \big\||\nabla f| \cdot |\nabla f|^m\varphi\big\| \lsm& \big\|\sD_f |\nabla f|^m \varphi\big\| + \big\||\nabla f|^m \varphi\big\|
\end{align*}
The second term $ \big\||\nabla f|^m \varphi\big\|$ is bounded by $\big\|\varphi\big\|_{\A_{f,m}}$ by induction assumption, while the first term
\begin{align*}
\big\|\sD_f |\nabla f|^m \varphi\big\| \leq \big\||\nabla f|^m \sD_f \varphi\big\| + \big\|[\sD_f, |\nabla f|^m] \varphi\big\|.
\end{align*}
Here $[\sD_f, |\nabla f|^m]$ is the commutator of two linear operators. We have $\big\||\nabla f|^m \sD_f \varphi\big\| \lsm \big\|\sD_f \varphi\big\|_{\A_{f,m}} \leq \big\|\varphi\big\|_{\A_{f,m+1}}$ again by induction assumption. For arbitrary $\epsilon \geq 0$, we can find $N_{\epsilon}>0$ such that
\begin{align*}
\big |[\sD_f, |\nabla f|^m] \big | =& m |\nabla f|^{m-1} \big |[\sD_f, |\nabla f|] \big | \lsm m |\nabla f|^{m-1} \big |\nabla |\nabla f| \big | \\
\leq& m |\nabla f|^{m-1} |\nabla^2 f| \leq \epsilon |\nabla f|^{m+1} + N_{\epsilon} |\nabla f|^{m-1},
\end{align*}
where we have used the inequality $\big |\nabla |\nabla f| \big | \leq |\nabla^2 f|$.
We can choose $\epsilon$ small enough such that
\begin{align*}
\big\||\nabla f|^{m+1}\varphi\big\| \leq \frac{1}{2}\big\||\nabla f|^{m+1}\varphi\big\| + c \big\|\varphi\big\|_{\A_{f,m+1}},
\end{align*}
for some $c>0$. It follows that $\big\||\nabla f|^{m+1}\varphi\big\| \lsm \big\|\varphi\big\|_{\A_{f,m+1}}$. We have established the induction step 
$$
   \big\|\varphi\big\|_{\A^{''}_{f,m+1}}\lsm \big\|\varphi\big\|_{\A_{f,m+1}}. 
$$

It it now enough to prove \eqref{estimate-1} \eqref{estimate-2}.
\begin{align*}
\big\| \sD_f |\nabla f|^i \nabla^j \varphi\big\|&\leq    \big\|  [\sD_f,|\nabla f|^i \nabla^j ]\varphi\big\| + \big\|  |\nabla f|^i \nabla^j \sD_f\varphi\big\|\\
&\leq    \big\|  [\sD_f,|\nabla f|^i] \nabla^j \varphi\big\|+ \big\|  |\nabla f|^i[\sD_f, \nabla^j] \varphi\big\|  + \big\|  |\nabla f|^i \nabla^j \sD_f\varphi\big\|
\end{align*}
Again by the inequality $\big |\nabla |\nabla f| \big | \leq |\nabla^2 f|$ and strongly elliptic condition $(T)$,
$$
\big |[\sD_f,|\nabla f|^i] \big |\lsm |\nabla f|^{i+1}+ |\nabla f|^{i-1}.
$$
Therefore the first term
$$
  \big\|  [\sD_f,|\nabla f|^i ]\nabla^j \varphi\big\|\lsm \big\| |\nabla f|^{i+1}\nabla^j \varphi \big\|+\big\| |\nabla f|^{i-1}\nabla^j \varphi \big\|. 
$$
The term $[\sD_f, \nabla^j] \varphi$ is bounded by a sum of terms of the form 
$$
 \sum_{l+s\leq j} \big | |\nabla^{l+1}f| \nabla^{s}\varphi \big | \lsm  \sum_{l+s\leq j} \big | (|\nabla f|^{l+1} +1) \nabla^{s}\varphi \big |.
$$
Therefore the second term
$$
\big\|  |\nabla f|^i[\sD_f, \nabla^j] \varphi\big\|  \lsm \sum_{s\geq 0} \big\| |\nabla f|^{i+s+1} \nabla^{j-s} \varphi\big\|+ \big\| \varphi \big\|_{\A^{''}_{f,i+j}}
$$
\eqref{estimate-1} follows by combining the above estimates. 

For \eqref{estimate-2}, we start with
$$
\big\| |\nabla f|^i \nabla^{j+1} \varphi\big\|\leq \big\| [\nabla, |\nabla f|^i \nabla^{j}] \varphi\big\|+\big\| \nabla |\nabla f|^i \nabla^{j} \varphi\big\|.
$$
The rest of the estimate is completely the same as the proof of \eqref{estimate-1} above.  
\end{proof} 

\begin{proof}[Proof of Theorem \ref{equi_norm}]
Let  $\overline\A_{c,f,k}(X)$, $\overline\A'_{c,f,k}(X)$ and $\overline\A''_{c,f,k}(X)$ denote the closure of $\A_c(X)$ in $\A_{f,k}(X)$, $\A'_{f,k}(X)$ and $\A''_{f,k}(X)$ with respect to the corresponding norm.
Then we have a circle of inclusions:
\begin{align*}
\overline\A''_{c,f,k}(X) =& \overline\A'_{c,f,k}(X) = \overline\A_{c,f,k}(X) \nonumber 
= \A_{f,k}(X) \supseteq \A'_{f,k}(X) \supseteq \A''_{f,k}(X) \supseteq \overline\A''_{c,f,k}(X),
\end{align*}
where the first two equalities are due to Lemma \ref{easy} and Lemma \ref{hard}, the third equality is due to Lemma \ref{density}, the first two $\supseteq$ are due to Lemma \ref{easy} and the last $\supseteq$ is by definition.
Hence all spaces are the same and the Theorem is proved.

\end{proof}

Theorem \ref{equi_norm} is of fundamental importance for this paper. We derive a series of useful corollaries.

\begin{corollary}
Forms in $\A_{f,\infty}(X)$ are smooth, that is, $\A_{f,\infty}(X) \subset \A(X)$.
\end{corollary}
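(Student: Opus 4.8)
The plan is to deduce smoothness from elliptic regularity applied iteratively to the operator $\sD_f$. By Theorem \ref{equi_norm} we know $\A_{f,\infty}(X) = \bigcap_k \A''_{f,k}(X)$, so a form $\varphi \in \A_{f,\infty}(X)$ satisfies $|\nabla f|^i \nabla^j \varphi \in L^2_{\A}(X)$ for all $i,j \geq 0$. In particular, taking $i = 0$, we get $\nabla^j \varphi \in L^2_{\A}(X)$ for every $j$, which says precisely that $\varphi$ lies in every usual Sobolev space $W^{k,2}_E(X)$. Under the bounded geometry assumption, Sobolev's embedding theorem (valid on manifolds with bounded geometry, as recalled after Definition on Sobolev spaces, cf. \cite{Heb}) then gives $\varphi \in C^m$ for every $m$, hence $\varphi$ is smooth. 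This is the short route, and I would present it as the main argument.

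More explicitly, the key steps are: (1) invoke Theorem \ref{equi_norm} to identify $\A_{f,\infty}(X)$ with $\bigcap_k \A''_{f,k}(X)$; (2) specialize the defining condition of the $\A''_{f,k}$-norm to the terms with $i = 0$, concluding $\varphi \in W^{k,2}_E(X)$ for all $k$, i.e. $\varphi \in \bigcap_k W^{k,2}_E(X)$; (3) apply the Sobolev embedding $W^{k,2}_E(X) \hookrightarrow C^{k - \lceil n \rceil - 1}_E(X)$ (for $\dim_{\R} X = 2n$), which holds because $(X,g)$ has bounded geometry, to conclude $\varphi \in C^\infty$. An alternative, if one prefers to avoid quoting Sobolev embedding in this weighted context, is to observe that $\varphi \in \bigcap_k \text{Dom}(\overline{\sD_f^k})$ by the Density Theorem (Lemma \ref{density}) and that $\sD_f$ is an elliptic operator of order $1$ with smooth coefficients, so $\sD_f^k \varphi \in L^2_{\text{loc}}$ for all $k$ forces $\varphi \in W^{k,2}_{\text{loc}}$ for all $k$ by interior elliptic regularity, and then local Sobolev embedding finishes the argument; this localized version needs no global geometric hypothesis beyond what we already have.

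The only subtlety — and it is minor — is making sure the weight factors $|\nabla f|^i$ do not interfere: they cannot, since dropping them only weakens the condition, and smoothness is a local statement where $|\nabla f|$ is bounded anyway. I do not expect a genuine obstacle here; the corollary is essentially a bookkeeping consequence of Theorem \ref{equi_norm} combined with standard elliptic theory, and the proof should be only a few lines.
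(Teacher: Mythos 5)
Your main argument is exactly the paper's proof: the corollary is stated there as a direct consequence of Sobolev's embedding theorem applied to the $\A''_{f,k}$-norm (via Theorem \ref{equi_norm}), which is precisely your steps (1)--(3) with the weights dropped. The proposal is correct and spelled out in more detail than the paper's one-line proof; the alternative local elliptic-regularity route is a fine variant but not needed.
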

\begin{proof}
This is a direct consequence of Sobolev's embedding theorem using $\A''_{f,k}$-norm. 
\end{proof}

\begin{corollary} \label{Hodge_component}
If $\varphi=\sum\limits_{p+q=k}\varphi^{p,q} \in \A_{f,k}(X)$ where $\varphi^{p,q}$ is of Hodge type $(p,q)$. Then 
$$
\varphi^{p,q}\in \A_{f,k}(X), \quad \forall p, q. 
$$

\end{corollary}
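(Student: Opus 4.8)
The plan is to use the equivalence of norms from Theorem \ref{equi_norm} to reduce the statement to a purely pointwise/algebraic fact about how the operators appearing in the $\A''_{f,k}$-norm interact with the Hodge bidegree decomposition. By Theorem \ref{equi_norm} we may replace the condition $\varphi \in \A_{f,k}(X)$ by the finiteness of $\big\|\varphi\big\|_{\A''_{f,k}} = \sum_{i+j\le k}\big\||\nabla f|^i \nabla^j \varphi\big\|_{\A}$. So it suffices to show that each $\big\||\nabla f|^i \nabla^j \varphi^{p,q}\big\|_{\A}$ is finite whenever $\big\||\nabla f|^i\nabla^j \varphi\big\|_{\A}$ is.

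The key observation is that the Levi-Civita connection $\nabla$ of a K\"ahler metric preserves the splitting $T^*_{\C}X = T^{*1,0}X \oplus T^{*0,1}X$, hence preserves the decomposition of $\wedge^\bullet T^*_{\C}X$ into Hodge bidegrees. Therefore for a form $\varphi = \sum_{p+q=k}\varphi^{p,q}$ the covariant derivative $\nabla^j \varphi$ decomposes as $\sum_{p+q=k}\nabla^j\varphi^{p,q}$, where $\nabla^j\varphi^{p,q}$ is the component of $\nabla^j\varphi$ landing in $(T^*_\C X)^{\otimes j}\otimes(\wedge^{p,q}T^*_\C X)$. Since this is an orthogonal decomposition of the bundle $(T^*_\C X)^{\otimes j}\otimes \wedge^\bullet T^*_\C X$ with respect to the Hermitian metric, we get the pointwise inequality $\big|\nabla^j\varphi^{p,q}\big| \le \big|\nabla^j\varphi\big|$, and multiplying by the scalar $|\nabla f|^i$ preserves this. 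Integrating, $\big\||\nabla f|^i\nabla^j\varphi^{p,q}\big\|_{\A}\le \big\||\nabla f|^i\nabla^j\varphi\big\|_{\A}<\infty$. Summing over $i+j\le k$ gives $\big\|\varphi^{p,q}\big\|_{\A''_{f,k}}\le\big\|\varphi\big\|_{\A''_{f,k}}<\infty$, and then Theorem \ref{equi_norm} again yields $\varphi^{p,q}\in\A_{f,k}(X)$.

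I would write out the first step (the K\"ahlerness of $\nabla$, so that covariant differentiation commutes with projection onto a fixed Hodge bidegree) as the substantive point, and treat the norm comparison as a one-line consequence. One should be a little careful that the projection onto bidegree $(p,q)$ is a parallel (covariantly constant) bundle endomorphism, which is exactly what K\"ahlerness guarantees; this is why the argument fails for a general Hermitian metric. The main — and essentially only — obstacle is making sure one invokes Theorem \ref{equi_norm} on both ends rather than trying to argue directly with the $\sD_f$-based norm, for which the bidegree decomposition is not respected (since $\sD_f = \bar\pat_f + \bar\pat_f^*$ mixes, and worse, $df\wedge$ shifts bidegree). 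Phrasing everything through the $\A''_{f,k}$-norm sidesteps this cleanly.
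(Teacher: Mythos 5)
Your proposal is correct and is essentially the paper's own argument: the paper's proof likewise invokes Theorem \ref{equi_norm} and the fact that the $\A''_{f,k}$-norm does not mix Hodge types, and your elaboration (K\"ahlerness makes $\nabla$ preserve the bidegree splitting, so the $(p,q)$-projection is parallel and orthogonal, giving $\big\|\varphi^{p,q}\big\|_{\A''_{f,k}}\le\big\|\varphi\big\|_{\A''_{f,k}}$) is just the detail behind that one-line justification.
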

\begin{proof}
This follows from Theorem \ref{equi_norm} and the fact that $\A''_{f,k}$-norm does not mix the Hodge types.
\end{proof}

\begin{corollary} \label{other_op}
$\A_{f,\infty}(X)$ is preserved by $\bar\pat$, $\bar\pat^*$, $df \wedge$, $(df \wedge)^*$ and their complex conjugates.
As a consequence, the Hodge decomposition given by $\pat_f$ operator also holds on $\A_{f,\infty}(X)$.
\end{corollary}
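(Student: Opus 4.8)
The plan is to carry out everything in the equivalent norm $\big\|\varphi\big\|_{\A''_{f,k}}=\sum_{i+j\leq k}\big\||\nabla f|^i\nabla^j\varphi\big\|_{\A}$ furnished by Theorem \ref{equi_norm}, and to show that each operator $P$ in the list satisfies an estimate of the shape $\big\|P\varphi\big\|_{\A''_{f,k-1}}\lsm\big\|\varphi\big\|_{\A''_{f,k}}$ for every $k\geq 1$. Granting such estimates, $P$ maps $\A_{f,k}(X)=\A''_{f,k}(X)$ into $\A_{f,k-1}(X)=\A''_{f,k-1}(X)$, hence maps $\A_{f,\infty}(X)$ into itself; the image is automatically smooth since $\A_{f,\infty}(X)\subset\A(X)$, and as usual the classical action of $P$ agrees with that of its closed extension on smooth $L^2$-forms. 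So the whole statement reduces to these ``order-one boundedness in the $\A''_{f,k}$-scale'' estimates, which is exactly the bookkeeping already carried out in Lemma \ref{easy}.

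I would split the operators into two kinds. First, $\bar\pat$, $\bar\pat^*$ and their conjugates $\pat$, $\pat^*$ all lie in $\text{Diff}^1_b(E,E)$ for $E=\wedge^{\bullet}T^*_{\C}X$ (as already noted for $\sD=\bar\pat+\bar\pat^*$), so $|\nabla^j(P\varphi)|\lsm\sum_{s\leq j+1}|\nabla^s\varphi|$ with uniformly bounded coefficients; multiplying by $|\nabla f|^i$ and integrating gives $\big\||\nabla f|^i\nabla^j(P\varphi)\big\|_{\A}\lsm\big\|\varphi\big\|_{\A''_{f,i+j+1}}$, which sums to the desired bound. Second, $df\wedge$, $(df\wedge)^*$ and their conjugates are parallel algebraic operators whose only non-constant coefficient is a component of $\nabla f$ (respectively $\nabla\bar f$), possibly paired with a bounded power of $g^{-1}$; expanding $\nabla^j(P\varphi)$ by the Leibniz rule produces terms $\nabla^{l+1}f\otimes\nabla^{j-l}\varphi$ up to bounded contractions, and the strongly elliptic condition \eqref{tame} yields $|\nabla^{l+1}f|\lsm|\nabla f|^{l+1}+1$, so $\big\||\nabla f|^i\nabla^j(P\varphi)\big\|_{\A}\lsm\sum_{l\leq j}\big(\big\||\nabla f|^{i+l+1}\nabla^{j-l}\varphi\big\|_{\A}+\big\||\nabla f|^i\nabla^{j-l}\varphi\big\|_{\A}\big)\lsm\big\|\varphi\big\|_{\A''_{f,i+j+1}}$. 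Combining the two kinds proves the first assertion.

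For the last sentence, recall $\pat_f=\pat+d\bar f\wedge$ and $\pat_f^*=\pat^*+(d\bar f\wedge)^*$, so by the above both preserve $\A_{f,\infty}(X)$. Writing the $\pat_f$-Hodge decomposition as the operator identity $\op{id}=P_{\A}+\pat_f\pat_f^*G_{\A}+\pat_f^*\pat_fG_{\A}$ on $L^2_{\A}(X)$ — legitimate because $\Delta_f=[\pat_f,\pat_f^*]$ shares the Green's operator $G_{\A}$, which then commutes with $\pat_f,\pat_f^*$ — it remains to see each summand sends $\A_{f,\infty}(X)$ to itself: $G_{\A}$ does by Lemma \ref{trivial_op}; $P_{\A}$ lands in the $\Delta_f$-harmonic forms, which are smooth and annihilated by $\bar\pat_f$ and $\bar\pat_f^*$, hence by every power of $\sD_f$, so they lie in $\A_{f,\infty}(X)$; and $\pat_f,\pat_f^*$ preserve $\A_{f,\infty}(X)$ by the first part. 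Thus for $\alpha\in\A_{f,\infty}(X)$ the decomposition $\alpha=P_{\A}\alpha+\pat_f\pat_f^*G_{\A}\alpha+\pat_f^*\pat_fG_{\A}\alpha$ takes place inside $\A_{f,\infty}(X)$. I do not anticipate a real obstacle here: once Theorem \ref{equi_norm} is available, the content is just the Leibniz rule together with $|\nabla^kf|\lsm|\nabla f|^k+1$ from \eqref{tame}, and the only point that needs a little care is, in the algebraic case, making sure that the extra covariant derivative landing on the coefficient $\nabla f$ is absorbed by one further power of $|\nabla f|$ rather than producing an uncontrolled term.
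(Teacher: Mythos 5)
Your proposal is correct, but it proves the first assertion by a different mechanism than the paper. The paper's proof is soft: it combines Corollary \ref{Hodge_component} (Hodge components of an element of $\A_{f,\infty}(X)$ stay in $\A_{f,\infty}(X)$, because the $\A''_{f,k}$-norm does not mix types) with Lemma \ref{trivial_op} ($\bar\pat_f$ preserves the space), and then extracts $\bar\pat\varphi^{p,q}$ and $df\wedge\varphi^{p,q}$ as the two Hodge components of $\bar\pat_f\varphi^{p,q}$; the adjoints are handled the same way and the conjugates by the conjugation-invariance of the $\A''_{f,k}$-norm. You instead establish, for each operator separately, the bound $\big\|P\varphi\big\|_{\A''_{f,k-1}}\lsm\big\|\varphi\big\|_{\A''_{f,k}}$, using $\op{Diff}^1_b$ membership for the first-order operators and the Leibniz rule together with $|\nabla^{l+1}f|\lsm|\nabla f|^{l+1}+1$ (a consequence of \eqref{tame}) for the multiplication operators; this is essentially a rerun of the bookkeeping in Lemma \ref{easy}/\ref{hard}, legitimate because elements of $\A_{f,\infty}(X)$ are already known to be smooth, and it yields the same conclusion with more work but without invoking the type-separation trick. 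Both arguments rest entirely on Theorem \ref{equi_norm}; the paper's is shorter and estimate-free at this stage, while yours is more explicit and would survive even without the type-purity of the norm. For the $\pat_f$-Hodge decomposition you are in fact more careful than the paper (which only says ``the rest follows''): your use of $\Delta_f=[\pat_f,\pat_f^*]$, of Lemma \ref{trivial_op} for $G_{\A}$, of the fact that $\Delta_f$-harmonic forms are killed by $\sD_f$ and hence lie in $\A_{f,\infty}(X)$, and of the first part for $\pat_f,\pat_f^*$, is exactly what is needed, mirroring Corollary \ref{cor-Hodge-decomposition}.
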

\begin{proof}
Assume $\varphi \in \A_{f,\infty}(X)$. By Corollary \ref{Hodge_component}, each Hodge component $\varphi^{p,q}$ is in $\A_{f,\infty}(X)$.
Then $\bar\pat_f \varphi^{p,q} \in \A_{f,\infty}(X)$ by Lemma \ref{trivial_op}, hence its Hodge component $\bar\pat \varphi^{p,q} \in \A_{f,\infty}(X)$. It follows that $\bar\pat \varphi \in \A_{f,\infty}(X)$.
The proofs for $\bar\pat^*$, $df \wedge$, $(df \wedge)^*$ are similar. The rest follows from the fact that the $\A''_{f,k}$-norm is invariant under complex conjugation.
\end{proof}

Another very important property of $\A_{f,\infty}(X)$ is that it admits the wedge product structure.
We prove firstly the following lemma which is similar to a result appeared in \cite{Fan}.

\begin{lemma} \label{maximum}
Assume $\varphi \in \A_{f,\infty}(X)$, then for any $k \geq 0$, $|\nabla^k \varphi|$ tends to zero as $z$ goes to infinity.
\end{lemma}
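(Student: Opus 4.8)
The plan is to show that the decay $|\nabla^k\varphi|\to 0$ at infinity follows from a weighted Sobolev embedding once we know that \emph{all} the norms $\||\nabla f|^i\nabla^j\varphi\|_\A$ are finite. Since $\varphi\in\A_{f,\infty}(X)=\bigcap_k\A''_{f,k}(X)$ by Theorem \ref{equi_norm}, we have $\||\nabla f|^i\nabla^j\varphi\|_\A<\infty$ for all $i,j\geq 0$. The strongly elliptic condition \eqref{tame} guarantees that $|\nabla f|\to\infty$ as $z\to\infty$ (take $k=2$: $\epsilon|\nabla f|^2-|\nabla^2 f|\to+\infty$ forces $|\nabla f|\to\infty$). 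So $|\nabla f|^{-1}$ is bounded outside a compact set, and more importantly, given any $\epsilon>0$, the region $\{|\nabla f|\le R\}$ is contained in a fixed compact set for every $R$.

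First I would reduce to a local statement on unit geodesic balls. Because $(X,g)$ has bounded geometry, there is a uniform radius $\rho>0$ and uniform constants in the Sobolev embedding $W^{m,2}_E(B)\hookrightarrow C^0_E(B)$ over all geodesic balls $B=B_\rho(x)$, with $m>n$ (here $\dim_\R X=2n$). Thus $\sup_{B_\rho(x)}|\nabla^k\varphi|\lesssim \sum_{j=0}^{m}\|\nabla^{k+j}\varphi\|_{L^2(B_\rho(x))}$, uniformly in $x$. Now fix a large real $R$ and suppose $x$ is far enough out that $|\nabla f|\ge R$ on all of $B_\rho(x)$ — this holds once $d(x,z_0)$ is large, since $|\nabla f|\to\infty$ and $|\nabla\,|\nabla f||\le|\nabla^2 f|$ stays controlled on unit balls relative to $|\nabla f|$ by \eqref{tame}, so $|\nabla f|$ cannot oscillate too wildly across a ball of radius $\rho$. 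Then on $B_\rho(x)$ we have $|\nabla^{k+j}\varphi|\le R^{-1}\,|\nabla f|\,|\nabla^{k+j}\varphi|$, hence
\[
\sup_{B_\rho(x)}|\nabla^k\varphi|\ \lesssim\ R^{-1}\sum_{j=0}^{m}\big\||\nabla f|\,\nabla^{k+j}\varphi\big\|_{L^2(B_\rho(x))}\ \le\ R^{-1}\,\|\varphi\|_{\A''_{f,k+m+1}}.
\]
The right-hand side is a finite constant times $R^{-1}$, independent of $x$. Letting $d(x,z_0)\to\infty$ we may take $R\to\infty$, and we conclude $\sup_{B_\rho(x)}|\nabla^k\varphi|\to 0$; in particular $|\nabla^k\varphi(z)|\to 0$ as $z\to\infty$.

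The main obstacle is the step asserting that $|\nabla f|\ge R$ on a \emph{whole} unit ball $B_\rho(x)$ once $x$ is sufficiently far out, i.e. that the lower bound $|\nabla f|\to\infty$ is "stable under moving by a bounded distance." This needs a quantitative comparison: along a geodesic of length $\le\rho$, the variation of $|\nabla f|$ is bounded by $\int|\nabla\,|\nabla f||\le \int|\nabla^2 f|$, and \eqref{tame} with $k=2$ says $|\nabla^2 f|\le \epsilon|\nabla f|^2+C_\epsilon$; a Gronwall-type argument then shows $|\nabla f|$ on the ball is comparable to its value at the center once that value is large. One must be slightly careful that the differential inequality does not blow up over the length $\rho$, but choosing $\epsilon$ small (depending only on $\rho$) handles this. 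Alternatively — and this may be cleaner — one can avoid the ball-uniformity issue entirely by working with the weight directly: apply the bounded-geometry Sobolev embedding to the section $|\nabla f|^N\varphi$ (or to $\varphi$ with the weight absorbed), using that $\nabla|\nabla f|^N$ is itself controlled by powers of $|\nabla f|$ via \eqref{tame}, so that $\||\nabla f|^N\varphi\|_{W^{m,2}}\lesssim\|\varphi\|_{\A''_{f,N+m}}<\infty$; then $|\nabla f|^N|\nabla^k\varphi|$ is globally bounded for every $N$, and dividing by $|\nabla f|^N\to\infty$ gives the decay. This second route is essentially the argument in \cite{Fan} that the lemma statement alludes to, and I would present it as the primary proof.
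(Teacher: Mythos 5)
Your primary (weighted-embedding) argument is correct in outline, but it is a genuinely different route from the paper's. The paper never uses the weight $|\nabla f|$ at this point: it performs a Bochner-type computation to get the pointwise differential inequality $\Delta_d|\nabla^k\varphi|\le 2|\nabla^{k+2}\varphi|$, invokes the local boundedness theorem for elliptic inequalities (Theorem 4.1 of \cite{HL}) on unit balls, with constants independent of the center by bounded geometry, controls the resulting $L^{2n}(B_1(z_0))$ norm by $W^{K,2}(B_2(z_0))$ via local Sobolev embedding, and then concludes decay simply because these are tails of globally finite \emph{unweighted} Sobolev norms, which vanish as $z_0\to\infty$. Your argument instead exploits the full strength of $\A''_{f,k}$: the estimate $|\nabla^s(|\nabla f|^N)|\lesssim|\nabla f|^{N+s}+1$ (from \eqref{tame} and $|\nabla|\nabla f||\le|\nabla^2 f|$) gives $\||\nabla f|^N\nabla^k\varphi\|_{W^{m,2}}\lesssim\|\varphi\|_{\A''_{f,N+m+k}}$, the bounded-geometry embedding $W^{m,2}\hookrightarrow C^0$ then bounds $|\nabla f|^N|\nabla^k\varphi|$ globally, and dividing by $|\nabla f|^N\to\infty$ yields the decay. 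This buys more than the paper proves (quantitative decay $|\nabla^k\varphi|=O(|\nabla f|^{-N})$ for every $N$), while the paper's route needs less (only $\varphi\in W^{k,2}$ for all $k$, plus a standard mean-value inequality). Two small points to tidy in your write-up: apply the embedding to the tensor-valued section $|\nabla f|^N\nabla^k\varphi$ directly (or induct on $k$), rather than to $|\nabla f|^N\varphi$, and either take $N$ even or cut off near the compact set $\text{Crit}(f)$ so the weight is smooth where you differentiate it. In your first (ball-uniformity) route, the Gronwall step as stated is slightly off: with $\epsilon$ fixed the lower bound for $|\nabla f|$ on $B_\rho(x)$ saturates near $1/(\epsilon\rho)$, so $\epsilon$ must be chosen depending on the target level $R$ (legitimate, since \eqref{tame} holds for all $\epsilon$) — but since you rightly make the weighted-embedding argument primary, this does not affect your proof.
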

\begin{proof}
\begin{align*}
\Delta_{\bar\pat} ( \nabla^k\varphi, \nabla^k\varphi )
=& ( - g^{\bar m n} \nabla_n \nabla_{\bar m} \nabla^k\varphi, \nabla^k\varphi )
+ ( \nabla^k\varphi, - g^{\bar n m} \nabla_{\bar n} \nabla_m \nabla^k\varphi ) \\
&- g^{\bar m n} ( \nabla_n \nabla^k\varphi, \nabla_m \nabla^k\varphi )
- g^{\bar m n} ( \nabla_{\bar m} \nabla^k\varphi, \nabla_{\bar n} \nabla^k\varphi ) \\
\leq& 2 | \nabla^{k+2}\varphi| |\nabla^k\varphi| - |\nabla^{k+1} \varphi|^2.
\end{align*}

On the other hand, 
\begin{align*}
\Delta_{\bar\pat} | \nabla^k\varphi|^2
=& - 2g^{\bar m n} \nabla_n (\nabla_{\bar m} | \nabla^k\varphi| \cdot | \nabla^k\varphi|) \\
=& - 2g^{\bar m n} \nabla_n \nabla_{\bar m} | \nabla^k\varphi| \cdot | \nabla^k\varphi|
- 2g^{\bar m n} \nabla_{\bar m} | \nabla^k\varphi| \cdot \nabla_n | \nabla^k\varphi| \\
=& \Delta_d | \nabla^k\varphi| \cdot | \nabla^k\varphi| - |\nabla | \nabla^k\varphi ||^2.
\end{align*}
Using $|\nabla |\nabla^k\varphi| | \leq |\nabla \nabla^k\varphi| = |\nabla^{k+1} \varphi|$, we get the following inequality
\begin{align*}
\Delta_d |\nabla^k \varphi| \leq 2|\nabla^{k+2} \varphi|.
\end{align*}

Now we cite the Theorem 4.1 of \cite{HL} for local boundedness, which implies
\begin{align*}
\text{sup}_{B_{1/2}(z_0)} |\nabla^k\varphi| \leq c (\big\|\nabla^k\varphi\big\|_{L^2(B_1(z_0))} + \big\|\nabla^{k+2}\varphi\big\|_{L^{2n}(B_1(z_0))})
\end{align*}
where $L^{2n}(B_r(z_0))$ means the $L^{2n}$-norm in the ball $B_r(z_0)$ of radius $r$ centered at $z_0$. The constant $c$ is independent of $z_0$.
Then by Sobolev's embedding theorem, $L^{2n}(B_1(z_0))$ can be controlled by $W^{k,2}(B_2(z_0))$ when $k$ is sufficiently large, and hence by $\A_{f,k}(B_2(z_0))$. Letting $z_0$ tends to infinity, and by the fact that $c$ is independent of $z_0$, the conclusion holds.
\end{proof}

\begin{theorem} \label{form_wedge}
$\A_{f,\infty}(X)$ is closed under wedge product.
\end{theorem}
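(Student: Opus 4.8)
The plan is to run everything through the third norm $\A''_{f,k}$ of Definition \ref{defn-norms}, which by Theorem \ref{equi_norm} defines the same space as $\A_{f,k}$. This norm is the convenient one here because it does not couple covariant derivatives to the operator $\sD_f$: it suffices to prove that for $\varphi,\psi\in\A_{f,\infty}(X)$ one has $\big\||\nabla f|^i\nabla^j(\varphi\wedge\psi)\big\|_{\A}<\infty$ for all $i,j\ge 0$. The case $i=j=0$ already gives $\varphi\wedge\psi\in L^2_{\A}(X)$, so this is the only thing to check; once it holds, $\varphi\wedge\psi\in\A''_{f,k}(X)$ for all $k$, i.e. $\varphi\wedge\psi\in\A_{f,\infty}(X)$.

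First I would expand $\nabla^j(\varphi\wedge\psi)$ by Leibniz. Since the connection induced on $\wedge^\bullet T^*_{\C}X$ by the Levi-Civita connection is compatible with the wedge product, iterating $\nabla$ writes $\nabla^j(\varphi\wedge\psi)$ as a finite sum, with universal integer coefficients, of terms built from $\nabla^a\varphi$ and $\nabla^b\psi$ with $a+b=j$ by wedging the form parts and retaining all tensor slots. Pointwise this yields $\big||\nabla f|^i\nabla^j(\varphi\wedge\psi)\big|\lsm\sum_{a+b=j}|\nabla f|^i\,|\nabla^a\varphi|\,|\nabla^b\psi|$.

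The key input is Lemma \ref{maximum}: since $\psi\in\A_{f,\infty}(X)$, each $|\nabla^b\psi|$ tends to $0$ at infinity, hence is bounded on all of $X$. Consequently, for each pair $a+b=j$,
$$
\big\||\nabla f|^i\,|\nabla^a\varphi|\,|\nabla^b\psi|\big\|_{\A}\ \leq\ \big\||\nabla^b\psi|\big\|_{L^\infty(X)}\;\big\||\nabla f|^i\nabla^a\varphi\big\|_{\A},
$$
and the last factor is finite because $\varphi\in\A_{f,\infty}(X)$, which by Theorem \ref{equi_norm} means $\varphi\in\A''_{f,k}(X)$ for every $k$. Summing over the finitely many pairs $a+b=j$ gives $\big\||\nabla f|^i\nabla^j(\varphi\wedge\psi)\big\|_{\A}<\infty$, which completes the argument.

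There is no genuine obstacle: the statement is essentially a corollary of Theorem \ref{equi_norm} together with Lemma \ref{maximum}. The only minor points to attend to are (a) justifying the Leibniz expansion at the level of the metric connection on forms with tensor coefficients, which follows from the compatibility of $\nabla$ with the algebra structure on $\wedge^\bullet T^*_{\C}X$, and (b) the bookkeeping choice of placing the whole factor $|\nabla f|^i$ on the $\varphi$-side and using only an $L^\infty$ bound on the $\psi$-side; one could symmetrize instead, but it is not needed.
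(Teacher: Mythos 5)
Your argument is correct and is essentially the paper's own proof: both expand $\nabla^j(\varphi\wedge\psi)$ by Leibniz, place the full factor $|\nabla f|^i$ on the $\varphi$-terms (finite in $L^2$ by Theorem \ref{equi_norm}), and bound the $\psi$-terms in $L^\infty$ via Lemma \ref{maximum}. No substantive differences to report.
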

\begin{proof}
Assume $\varphi, \psi \in \A_{f,\infty}(X)$, we need to show
\begin{align*}
|\nabla f|^i \nabla^j (\varphi \wedge \psi) \in L^2_{\A}(X), \quad \forall i,j\geq 0. 
\end{align*}
By Leibniz rule, we have
\begin{align*}
|\nabla f|^i \nabla^j (\varphi \wedge \psi) = \sum_{s=0}^j \binom{j}{s} |\nabla f|^i \nabla^s \varphi \wedge \nabla^{j-s} \psi.
\end{align*}
By Theorem \ref{equi_norm}, $|\nabla f|^i \nabla^s \varphi \in L^2_{\A}(X)$, and by the previous lemma, $|\nabla^{j-s} \psi|$ is bounded. Hence each $|\nabla f|^i \nabla^s \varphi \wedge \nabla^{j-s} \psi \in L^2_{\A}(X)$ and the theorem is proved.
\end{proof}

\subsection{Comparison theorems}

In this subsection, we prove quasi-isomorphisms of several natural complexes of differential forms by generalizing a homotopy construction in \cite{LLS}. Let $\A_c(X)$ be the space of differential forms with compact support. We have natural morphisms of complexes by inclusions
\begin{equation*}
(\A_c(X), \bar\pat_f) \xlongrightarrow{i_1} (\A_{f,\infty}(X), \bar\pat_f) \xlongrightarrow{i_2} (\A(X), \bar\pat_f).
\end{equation*}

\begin{lemma} \label{inverse_grad}
Outside a neighborhood of $\text{Crit}(f)$, for any $k \geq 0$, one has
\begin{align*}
|\nabla^k (\frac{1}{|\nabla f|^2})| \lsm |\nabla f|^{k+2} + 1.
\end{align*}
\end{lemma}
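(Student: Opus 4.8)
The plan is to prove the estimate $|\nabla^k(1/|\nabla f|^2)| \lsm |\nabla f|^{k+2} + 1$ by induction on $k$, treating $1/|\nabla f|^2$ as a composite function and expanding its covariant derivatives via the Fa\`a di Bruno / Leibniz machinery. Away from $\text{Crit}(f)$ the quantity $|\nabla f|^2$ is smooth and strictly positive, so $h := 1/|\nabla f|^2$ is well defined and smooth there; moreover on a region bounded away from $\text{Crit}(f)$ but possibly unbounded, $|\nabla f|$ is bounded below by a positive constant, so $h$ is bounded. The base case $k=0$ is then immediate: $h = 1/|\nabla f|^2 \lsm 1$.

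For the inductive step, I would differentiate the identity $h \cdot |\nabla f|^2 = 1$. Applying $\nabla^k$ and using the Leibniz rule gives $\sum_{s=0}^k \binom{k}{s} \nabla^s h \cdot \nabla^{k-s}(|\nabla f|^2) = 0$ for $k \geq 1$, hence
\begin{equation*}
|\nabla f|^2 \cdot \nabla^k h = -\sum_{s=0}^{k-1} \binom{k}{s} \nabla^s h \cdot \nabla^{k-s}(|\nabla f|^2).
\end{equation*}
Dividing by $|\nabla f|^2$ (i.e. multiplying by $h$) expresses $\nabla^k h$ in terms of the lower-order $\nabla^s h$, $s < k$, and the derivatives $\nabla^{k-s}(|\nabla f|^2)$. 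The latter are controlled by the strongly elliptic condition \eqref{tame}: since $|\nabla f|^2 = g^{\bar j i} f_i \overline{f_j}$ pairs $\nabla f$ with $\overline{\nabla f}$ through the parallel-bounded metric, the Leibniz rule gives $|\nabla^m(|\nabla f|^2)| \lsm \sum_{a+b = m} |\nabla^{a+1} f|\,|\nabla^{b+1} f|$, and condition \eqref{tame} (which bounds $|\nabla^j f|$ by $\epsilon|\nabla f|^j + C_\epsilon$, in particular $|\nabla^j f| \lsm |\nabla f|^j + 1$ for each fixed $j$) yields $|\nabla^m(|\nabla f|^2)| \lsm |\nabla f|^{m+2} + 1$. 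Combining this with the induction hypothesis $|\nabla^s h| \lsm |\nabla f|^{s+2} + 1$ for $s \leq k-1$ and the extra factor $h \lsm 1$, a term-by-term degree count gives $|\nabla^k h| \lsm \sum_{s=0}^{k-1} (|\nabla f|^{s+2}+1)(|\nabla f|^{k-s+2}+1) \lsm |\nabla f|^{k+4} + 1$, which is \emph{not} quite the claimed bound — so the naive count loses two powers.

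The main obstacle is therefore getting the \emph{sharp} exponent $k+2$ rather than $k+4$. To fix this I would not divide bluntly by $|\nabla f|^2$ but instead keep one such factor in reserve: rewrite the recursion as $\nabla^k h = -h \sum_{s < k} \binom{k}{s} \nabla^s h \cdot \nabla^{k-s}(|\nabla f|^2)$ and note that in each summand we may pair the factor $h = 1/|\nabla f|^2$ with $\nabla^s h$ to produce a quantity with the \emph{same} scaling as $\nabla^{s+?} h$ — more precisely, one tracks a homogeneous weight in which $h$ has weight $2$, $\nabla$ adds weight $1$, and $|\nabla f|$ has weight $1$; the identity $h|\nabla f|^2 = 1$ is weight-balanced, and a weight bookkeeping shows each term in $\nabla^k h$ carries total weight exactly $k+2$, giving $|\nabla^k h| \lsm |\nabla f|^{k+2} + (\text{lower order in } |\nabla f|) \lsm |\nabla f|^{k+2} + 1$ after absorbing lower powers into the $+1$ using $|\nabla f| \gtrsim 1$ away from $\text{Crit}(f)$. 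Concretely, one proves by induction the stronger statement that $\nabla^k h$ is a sum of terms of the form $(\text{bounded tensor}) \cdot h^{p} \cdot \nabla^{a_1}f \cdots \nabla^{a_p}f \cdot \overline{\nabla^{b_1}f} \cdots \overline{\nabla^{b_q}f}$ with $\sum(a_i - 1) + \sum(b_j - 1) = k$ and (roughly) $p+q$ balanced so that the net power of $|\nabla f|$ never exceeds $k+2$; each such monomial is then $\lsm \prod(|\nabla f|^{a_i}+1)\cdots / |\nabla f|^{2p} \lsm |\nabla f|^{k+2} + 1$ by \eqref{tame}. This weighted induction is the technical heart of the argument; once the correct inductive hypothesis is set up, each step is routine Leibniz-rule bookkeeping together with the polynomial-growth bounds on $\nabla^j f$ supplied by the strongly elliptic condition.
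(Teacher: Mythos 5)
Your proposal starts exactly as the paper does: induct on $k$, apply the Leibniz rule to $\frac{1}{|\nabla f|^2}\cdot|\nabla f|^2=1$, solve for the top-order term, and control $\nabla^m(|\nabla f|^2)$ by $|\nabla f|^{m+2}+1$ via condition \eqref{tame}. The one place you diverge is your claim that this "naive count" only yields the exponent $k+4$ and that a weighted Fa\`a di Bruno-type induction is the technical heart; that obstacle is illusory and arises only because you replaced the prefactor $h=\frac{1}{|\nabla f|^2}$ by the bound $h\lsm 1$ before multiplying out. Keeping the factor $\frac{1}{|\nabla f|^2}$ intact, the plain induction closes: $|\nabla^m h|\lsm \frac{1}{|\nabla f|^2}\sum_{i\ge 1}(|\nabla f|^{i+2}+1)(|\nabla f|^{m-i+2}+1)\lsm |\nabla f|^{m+2}+1$, using that $|\nabla f|$ is bounded below outside a neighborhood of $\text{Crit}(f)$ (compactness of $\text{Crit}(f)$ plus \eqref{tame} at infinity). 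This is precisely the paper's argument (its $k=1$ computation keeps the full $\frac{1}{|\nabla f|^4}$ and even lands on a bounded quantity, far below the stated bound). Your fallback weighted induction is sound and, once pinned down, gives the cleaner structural statement $\nabla^k h=\sum h^{m+1}\prod_{l=1}^m \nabla^{c_l}(|\nabla f|^2)$ with $\sum c_l=k$, each term being $\lsm |\nabla f|^{k-2}+1$; but as written the bookkeeping ("roughly $p+q$ balanced") is left vague, and it is machinery the simple division already renders unnecessary.
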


\begin{proof}
We prove the lemma by induction on $k$.
The $k=0$ case holds by assumption.
For $k=1$, we have
\begin{align*}
0 = \nabla (|\nabla f|^2 \cdot \frac{1}{|\nabla f|^2}) = \nabla |\nabla f|^2 \cdot \frac{1}{|\nabla f|^2} + |\nabla f|^2 \cdot \nabla (\frac{1}{|\nabla f|^2}).
\end{align*}
Using the strongly elliptic condition \eqref{tame}, we have
\begin{align*}
|\nabla (\frac{1}{|\nabla f|^2})| = |\frac{1}{|\nabla f|^4} \cdot \nabla |\nabla f|^2|
\lsm (|\nabla f|^3 + |\nabla f|) \frac{1}{|\nabla f|^3}
\lsm |\nabla f|^3 + 1,\quad \text{as}\ z\to\infty. 
\end{align*}

Now assume the conclusion for $k < m$. Using
\begin{align*}
0 = \nabla^m (|\nabla f|^2 \cdot \frac{1}{|\nabla f|^2}) = \sum_i \binom{m}{i}\nabla^i |\nabla f|^2 \cdot \nabla^{m-i} \frac{1}{|\nabla f|^2},
\end{align*}
then the similar argument as the $k=1$ case above proves the $k=m$ case.
\end{proof}

\begin{theorem} \label{quasi_iso_stronger}
Assume $(X,g)$ has bounded geometry and $f$ satisfies the strongly elliptic condition \eqref{tame}, then both $i_1$ and $i_2$ are quasi-isomorphisms.
\end{theorem}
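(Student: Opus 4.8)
The plan is to prove that $i_1$ and $i_2$ are quasi-isomorphisms by constructing explicit homotopy operators that invert, up to $\bar\pat_f$-homotopy, the restriction maps in the opposite direction. The key geometric input is that $\text{Crit}(f)$ is compact, so outside a compact neighborhood $U$ of $\text{Crit}(f)$ the $1$-form $df$ is nowhere zero, and hence $df \wedge (-)$ admits a degree $-1$ inverse there. Concretely, fix a cutoff $\chi$ equal to $1$ near $\text{Crit}(f)$ and supported in a slightly larger compact set, and on $X \setminus \text{Crit}(f)$ define
\begin{equation*}
K := \frac{\overline{\pat f} \, \lrcorner \, (-)}{|\nabla f|^2},
\end{equation*}
the contraction with the dual vector field of $df$ divided by $|\nabla f|^2$, so that $[df\wedge, K] = \mathrm{id}$ away from $\text{Crit}(f)$. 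Using $(1-\chi)K$ one builds a homotopy showing that, on the complex of forms supported away from $\text{Crit}(f)$, the twisted complex $(\A, \bar\pat_f)$ is acyclic: this is the mechanism already used in \cite{LLS}, which we are generalizing to the Sobolev setting. The operator $(1-\chi)K$ preserves $\A_{f,\infty}(X)$ precisely because of Lemma \ref{inverse_grad} (the arbitrary covariant derivatives of $1/|\nabla f|^2$ are bounded by powers of $|\nabla f|$) combined with Theorem \ref{equi_norm} (which lets us test membership in $\A_{f,\infty}(X)$ with the $\A''_{f,k}$-norm involving $|\nabla f|^i\nabla^j$), and because $\overline{\pat f}\,\lrcorner\,(-)$ again contributes only factors controlled by $\nabla f,\nabla^2 f,\dots$.

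For $i_2$: given $\alpha \in \A(X)$ with $\bar\pat_f\alpha = 0$, write $\alpha = \chi\alpha + (1-\chi)\alpha$. The piece $\chi\alpha$ has compact support, hence lies in $\A_{f,\infty}(X)$ (compactly supported smooth forms are in every $\A''_{f,k}$). For $(1-\chi)\alpha$, apply the homotopy $H := (1-\chi)K$ built on $X\setminus\text{Crit}(f)$: one computes $\bar\pat_f H + H\bar\pat_f = \mathrm{id} + (\text{error supported where }d\chi\neq 0)$, and the error term is compactly supported, hence in $\A_{f,\infty}(X)$. This shows $\alpha$ differs by a $\bar\pat_f$-exact form (with potential $H\alpha$) from a form in $\A_{f,\infty}(X)$, proving surjectivity of $i_2$ on cohomology; a parallel argument with the same homotopy applied to a form $\alpha \in \A_{f,\infty}(X)$ that is $\bar\pat_f$-exact in $\A(X)$ shows it is already $\bar\pat_f$-exact in $\A_{f,\infty}(X)$, giving injectivity. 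The point to check carefully is that $H\alpha$ and all the intermediate forms genuinely land in $\A_{f,\infty}(X)$: for $\alpha \in \A_{f,\infty}(X)$ this is the norm estimate described above; for general $\alpha\in\A(X)$ one notes $H\alpha$ need not be in $\A_{f,\infty}(X)$, but the error $\alpha - \bar\pat_f H\alpha - H\bar\pat_f\alpha = [\bar\pat_f,\chi]K\alpha + \dots$ is compactly supported and the identity still rearranges to exhibit $\alpha$ as cohomologous to something in $\A_{f,\infty}(X)$.

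For $i_1$: this is the composition statement. Since $i_2$ and $i_2\circ i_1$ (the inclusion of compactly supported forms into all smooth forms) are both quasi-isomorphisms — the latter because $\text{Crit}(f)$ is compact, which is the classical fact recalled in the introduction that $(\A_c(X),\bar\pat_f)\hookrightarrow(\A(X),\bar\pat_f)$ is a quasi-isomorphism, proved by the very same $K$-homotopy but with the cutoff arranged so the output stays compactly supported — it follows formally that $i_1$ is a quasi-isomorphism. Alternatively one runs the homotopy argument directly: given $\alpha\in\A_{f,\infty}(X)$ closed, $\chi\alpha$ is compactly supported and $(1-\chi)\alpha = \bar\pat_f H\alpha + (\text{compactly supported})$ with $H\alpha\in\A_{f,\infty}(X)$, but we want the exact term to be in $\A_c$; here one uses that $H\alpha$ restricted suitably and corrected by another cutoff does the job, the discrepancy again being compactly supported.

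The main obstacle is bookkeeping the support conditions and the $\A''_{f,k}$-estimates simultaneously: one must verify that each application of $K$, of multiplication by $(1-\chi)$, and of $\bar\pat_f$ keeps forms inside $\A_{f,\infty}(X)$ with norms controlled, and that the commutator error terms are all compactly supported. The analytic heart is Lemma \ref{inverse_grad}, which is exactly what guarantees $K$ does not destroy the $f$-twisted Sobolev regularity; everything else is organizing the two-step homotopy so the pieces land in the right subcomplex.
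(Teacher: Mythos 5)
Your route is the paper's own: the contraction operator $V_f=(df\wedge)^*/|\nabla f|^2$ (your $K$), a cutoff at the compact critical set, the homotopy mechanism of \cite{LLS}, and Lemma \ref{inverse_grad} together with the $\A''_{f,k}$-norm of Theorem \ref{equi_norm} as the analytic heart showing the homotopy operator preserves $\A_{f,\infty}(X)$; proving $i_2$ and the composition and deducing $i_1$, rather than $i_1$ and the composition and deducing $i_2$, is an immaterial reorganization. However, the one identity you actually display is false as written. With $H=(1-\chi)K$ alone one computes
$[\bar\pat_f,H]=(1-\chi)\bigl(\mathrm{id}+[\bar\pat,K]\bigr)-(\bar\pat\chi)\wedge K$,
since $K$ inverts only $df\wedge$ and not $\bar\pat_f$; the term $(1-\chi)[\bar\pat,K]$ is neither zero nor compactly supported, so $\bar\pat_f H+H\bar\pat_f=\mathrm{id}+(\text{error supported where }d\chi\neq 0)$ does not hold, and the surjectivity/injectivity arguments cannot be run with this $H$. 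The repair is exactly the finite geometric series in the LLS construction you allude to but do not use: take $R_\rho=(1-\rho)V_f\frac{1}{1+[\bar\pat,V_f]}$ and $T_\rho=\rho+(\bar\pat\rho)V_f\frac{1}{1+[\bar\pat,V_f]}$, where $\frac{1}{1+[\bar\pat,V_f]}=\sum_{k\geq 0}(-1)^k[\bar\pat,V_f]^k$ terminates by type reasons; then $[\bar\pat_f,R_\rho]=1-T_\rho$ holds on $\A(X)$ and $T_\rho$ lands in compactly supported forms, after which your exactness bookkeeping goes through verbatim.

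Once the corrected operator is in place, the remaining point is the one you defer as ``bookkeeping'': $R_\rho$ is a composition of $\bar\pat$, $(df\wedge)^*$, multiplication by $1/|\nabla f|^2$ and by cutoffs, and the estimates of Lemma \ref{inverse_grad} are only available outside a neighborhood of $\mathrm{Crit}(f)$. The paper handles this by observing $R_\rho(\cdot)=R_\rho(\eta\,\cdot)$ for a function $\eta$ vanishing near $\mathrm{Crit}(f)$ and equal to $1$ off $\{\rho=1\}$, and then checking that the subspace $\A^{\eta}_{f,\infty}(X)$ of forms vanishing on $\{\eta=0\}$ is preserved by each factor, using the $\A''_{f,k}$-norm. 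This is the precise form of your claim that $(1-\chi)K$ preserves $\A_{f,\infty}(X)$, and it is needed because the intermediate operators in the geometric series, not just the final multiplication by $1-\chi$, involve $1/|\nabla f|^2$. With these two points made explicit your proposal coincides with the paper's proof.
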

\begin{proof}
We only need to show both $i_1$ and $i = i_2 \circ i_1$ are quasi-isomorphisms.
To prove the case for $i$, we consider the following operator of contracting a vector field
\begin{equation*}
V_f := \frac{(df\wedge)^*}{|\nabla f|^2} = \sum_{i,j} \frac{\bar{f_i}}{|\nabla f|^2} g^{\bar i j} \iota_{\pat_j} : \quad \A(X \setminus \text{Crit}(f)) \rightarrow \A(X \setminus \text{Crit}(f)).
\end{equation*}
Direct calculation shows
\begin{equation*}
[df \wedge, V_f]=1
\end{equation*}
and
\begin{equation*}
[\bar{\pat},[\bar{\pat},V_f]]=[df \wedge,[\bar{\pat},V_f]]=[V_f,[\bar{\pat},V_f]]=0.
\end{equation*}
Let $\rho$ be a smooth function with compact support such that $\rho = 1$ in a neighborhood of $\text{Crit}(f)$.
Define another two operators on $\A(X)$ \cite{LLS}:
\begin{align*}
T_{\rho} =& \rho + (\bar{\pat} \rho)V_f \frac{1}{1+[\bar{\pat},V_f]},\\
R_{\rho} =& (1-\rho)V_f \frac{1}{1+[\bar{\pat},V_f]}.
\end{align*}
Here $\frac{1}{1+[\bar{\pat},V_f]}$ is understood as $\sum_{k\geq 0}(-1)^k [\bar{\pat},V_f]^k$, which is a finite sum by type reason. Then we have 
\begin{equation*} 
[\bar{\pat}_f,R_{\rho}]=1-T_{\rho} \quad \text{on } \A(X). 
\end{equation*}
This homotopy implies $i$ is a quasi-isomorphism.

To prove the case for $i_1$, we only need to show $[\bar{\pat}_f,R_{\rho}]=1-T_{\rho}$ holds on $\A_{f,\infty}(X)$, which amounts to show $R_{\rho}$ preserves $\A_{f,\infty}(X)$.
As $1-\rho$ vanishes in a neighborhood of $\text{Crit}(f)$, we can write $R_{\rho}(\cdot) = R_{\rho} (\eta\cdot)$, where $\eta$ is smooth function such that $\eta=0$ in a neighborhood of $\text{Crit}(f)$ and $\eta = 1$ in $X \setminus \{z| \rho(z)=1\}$.
Thus we can restrict ourselves to forms in $\A_{f,\infty}(X)$ that vanishes on $\{z| \eta(z) = 0\}$.
Denote the space of such forms by $\A^{\eta}_{f,\infty}(X)$, which is clearly preserved by $\bar\pat$ and $(df \wedge)^*$.
Using the $\A''_{f,k}$-norm and the Lemma \ref{inverse_grad}, we can show $\A^{\eta}_{f,\infty}(X)$ is also preserved by multiplication by $\frac{1}{|\nabla f|^2}$.
Now $R_{\rho}$ is a composition of $\bar\pat, (df \wedge)^*, \frac{1}{|\nabla f|^2} \cdot$ and $1-\rho$, it preserves $\A^{\eta}_{f,\infty}(X)$ and hence $\A_{f,\infty}(X)$.
\end{proof}

Now we consider the case with a formal variable $u$. We have again embeddings of complexes
\begin{align*}
(\A_c(X)((u)), Q_f) \xlongrightarrow{j_1} (\A_{f,\infty}(X)((u)), Q_f) \xlongrightarrow{j_2} (\A(X)((u)), Q_f).
\end{align*}
Recall  $Q_f=\dbar_f+u\pa$. 
\begin{theorem}
Assume $(X,g)$ has bounded geometry and $f$ satisfies the strongly elliptic condition \eqref{tame}. Then both $j_1$ and $j_2$ are quasi-isomorphisms.
Same result holds when formal Laurent series is replaced by formal power series.
\end{theorem}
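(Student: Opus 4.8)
The plan is to deduce this from the $u$-independent comparison of Theorem~\ref{quasi_iso_stronger} by means of the $u$-adic filtration, and then to pass from formal power series to formal Laurent series by inverting $u$. It is worth noting at the outset that one cannot simply extend the homotopy $R_\rho$ of Theorem~\ref{quasi_iso_stronger} $u$-linearly: since $[Q_f,R_\rho]=[\dbar_f,R_\rho]+u[\pat,R_\rho]=1-T_\rho+u[\pat,R_\rho]$, this only yields $[Q_f,R_\rho]=1-\mathcal{T}$ with $\mathcal{T}:=T_\rho-u[\pat,R_\rho]$, and the correction $u[\pat,R_\rho]$ neither preserves $\A_c(X)$ nor improves regularity to land in $\A_{f,\infty}(X)$, so by itself it does not compare the three complexes. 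A filtration argument is what is needed.

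First I would, for each $\ast\in\{\A_c,\A_{f,\infty},\A\}$, equip $(\ast(X)[[u]],Q_f)$ with the decreasing filtration $F^p:=u^p\,\ast(X)[[u]]$, $p\ge 0$. Since $\dbar_f$ preserves the power of $u$ while $u\pat$ raises it by one, $Q_f$ preserves each $F^p$; the filtration is exhaustive, separated, and complete, i.e.\ $\ast(X)[[u]]=\varprojlim_n \ast(X)[[u]]/F^n$ with degreewise surjective transition maps. Each graded piece $F^{p}/F^{p+1}$ is isomorphic, up to a degree shift, to $(\ast(X),\dbar_f)$, because the $u\pat$ term vanishes on the associated graded, and the maps $j_1,j_2$ restrict to filtered morphisms of the power-series complexes whose associated graded maps are, on every graded piece, exactly the quasi-isomorphisms $i_1,i_2$ of Theorem~\ref{quasi_iso_stronger}. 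I would then prove by induction on $n$ that $j_1,j_2$ induce isomorphisms on $H^\bullet(\ast(X)[[u]]/F^n)$: the case $n=1$ is Theorem~\ref{quasi_iso_stronger}, and the inductive step follows from the five lemma applied to the cohomology long exact sequences attached to the short exact sequence of complexes
$$0\longrightarrow F^{n-1}/F^{n}\longrightarrow \ast(X)[[u]]/F^{n}\longrightarrow \ast(X)[[u]]/F^{n-1}\longrightarrow 0,$$
which is natural in $j_1,j_2$, using that the claim already holds on $F^{n-1}/F^{n}$ and on $\ast(X)[[u]]/F^{n-1}$.

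To finish the power-series case I would pass to the inverse limit. Because the transition maps in the tower $\{\ast(X)[[u]]/F^n\}_n$ are degreewise surjective, the Mittag--Leffler condition holds, giving a natural Milnor exact sequence
$$0\to {\varprojlim_n}^{1}H^{k-1}\big(\ast(X)[[u]]/F^n\big)\to H^k\big(\ast(X)[[u]]\big)\to \varprojlim_n H^k\big(\ast(X)[[u]]/F^n\big)\to 0 ;$$
by the previous step $j_1,j_2$ are isomorphisms on both outer terms, hence, by the five lemma, on $H^k(\ast(X)[[u]])$, so $j_1,j_2$ are quasi-isomorphisms on the power-series complexes. For the Laurent-series statement, note that $\ast(X)((u))=\ast(X)[[u]][u^{-1}]$ and that $Q_f$ is $\C[[u]]$-linear, so $(\ast(X)((u)),Q_f)$ is the localization of $(\ast(X)[[u]],Q_f)$ at $u$; since localization is exact it commutes with cohomology, and the $((u))$-versions of $j_1,j_2$ are the power-series maps tensored with $\C((u))$, hence quasi-isomorphisms.

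The one genuinely delicate point is the passage to the inverse limit in the third step: one must be sure that the $u$-adic filtration is complete and that the transition maps are surjective, so that the $\varprojlim^1$-term behaves well and the five lemma applies to the Milnor sequence; both hold here by construction. All of the analytic content — the homotopy operator $R_\rho$, its preservation of $\A_{f,\infty}(X)$, and the strongly elliptic estimates of Definition~\ref{condition-T} — has already been used in establishing Theorem~\ref{quasi_iso_stronger}, so no new hard estimates arise.
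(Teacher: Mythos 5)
Your argument is correct, but it takes a genuinely different route from the paper. The paper does not reduce to the $u=0$ comparison by formal means; instead it redoes the homotopy construction directly on the $u$-deformed complex: with $Q:=\bar\pa+u\pa$ it sets $T^u_\rho=\rho+[Q,\rho]V_f\frac{1}{1+[Q,V_f]}$ and $R^u_\rho=(1-\rho)V_f\frac{1}{1+[Q,V_f]}$, checks $[Q_f,R^u_\rho]=1-T^u_\rho$ on $\A(X)((u))$, and then shows that $R^u_\rho$ preserves $\A_{f,\infty}(X)((u))$ coefficientwise, using that each $u^k$-coefficient of $R^u_\rho\phi(u)$ is a finite word in $\bar\pa,\pa,V_f$ applied to elements of $\A^\eta_{f,\infty}(X)$ (here Corollary \ref{other_op} supplies the $\pa$-invariance, and Lemma \ref{inverse_grad} the $V_f$-invariance, exactly as in Theorem \ref{quasi_iso_stronger}). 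This handles the Laurent-series complexes directly and gives an explicit contracting homotopy for $Q_f$, with the power-series case as an immediate remark. Your route is purely homological: the $u$-adic filtration on $\ast(X)[[u]]$, induction with the five lemma on the truncations, the Milnor $\varprojlim^1$-sequence (valid since the transition maps are degreewise surjective and the filtration is complete), and then localization at $u$ to pass to $((u))$ via $\ast(X)((u))=\ast(X)[[u]][u^{-1}]$ and exactness of localization. This buys you the theorem with no new analysis whatsoever — you never have to verify that a $u$-dependent homotopy preserves the $f$-twisted Sobolev space — at the cost of the $\varprojlim^1$ bookkeeping and of treating $((u))$ indirectly (note the $u$-adic filtration on the Laurent complex itself is not bounded, so your detour through localization is genuinely needed). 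One microscopic quibble with your opening aside: $[\pa,R_\rho]$ does preserve compact supports when applied to compactly supported forms; the real reason the naive $u$-linear extension of $R_\rho$ fails is that the corrected operator $T_\rho-u[\pa,R_\rho]$ no longer maps all of $\A(X)$ (resp. $\A_{f,\infty}(X)$) into the relevant subcomplex, so the homotopy identity loses its force. This aside plays no role in your actual argument, which stands as written.
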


\begin{proof} The proof is similar to Theorem \ref{quasi_iso_stronger}. We prove both $j_1$ and $j = j_2 \circ j_1$ are quasi-isomorphisms.
To prove the case for $j$, let $Q := \bar\pat + u\pat$ and $\rho$ be  a smooth function with compact support such that $\rho = 1$ in a neighborhood of $\text{Crit}(f)$. Define \cite{LLS}
\begin{align*}
T_{\rho}^u := \rho + [Q, \rho]V_f \frac{1}{1+[Q,V_f]} \text{ and }
R_{\rho}^u := (1-\rho) V_f \frac{1}{1+[Q,V_f]}.
\end{align*}
Then one finds
\begin{equation*} 
[Q_f, R_{\rho}^u] = 1 - T_{\rho}^u \text{ on } \A(X)((u))
\end{equation*}
which implies $j$ is a quasi-isomorphism.

As for $j_1$, one only need to prove $[Q_f, R_{\rho}^u] = 1 - T_{\rho}^u$ holds on $\A_{f,\infty}(X)((u))$, which amounts to prove $R_{\rho}^u$ preserves $\A_{f,\infty}(X)((u))$.
Expand $R_{\rho}^u$ we will get
\begin{equation*}
R_{\rho}^u = (1-\rho)V_f \sum_{i \geq 0} (-1)^i([\bar\pat, V_f] + u[\pat, V_f])^i.
\end{equation*}

Let $\A^{\eta}_{f,\infty}(X)$ be defined as in the proof of Proposition \ref{quasi_iso_stronger}, which is proved to be closed under action by $V_f$.
By Corollary \ref{other_op}, it is also closed under action by $\pat$.
Now assume $\phi(u) \in \A^{\eta}_{f,\infty}(X)((u))$, then for each $k \in \mtb{Z}$, the coefficient of $u^k$ in $R_{\rho}^u\phi(u)$ is a finite sum and each term of the sum is the output of a form in $\A^{\eta}_{f,\infty}(X)$ under the action by a finite sequence of operators in the set $\{\bar\pat, \pat, V_f\}$.
Hence the coefficient of $u^k$ in $R_{\rho}^u\phi(u)$ is in $\A^{\eta}_{f,\infty}(X)$ for each $k \in \mtb{Z}$.
We conclude that $R_{\rho}^u$ preserves $\A^{\eta}_{f,\infty}(X)((u))$.

When Laurent series is replaced by power series, the argument is the same since both $T_{\rho}^u$ and $R_{\rho}^u$ preserve power series in $u$.
\end{proof}

\subsection{Poincare duality and higher residue}

In this section,  $(X,g)$ is a bounded geometry and $f$ satisfies the strongly elliptic condition \eqref{tame}. We discuss pairings on cohomologies and dualities. 

\subsubsection*{Residue and Poincare duality}
\begin{definition}We define the following pairing on $f$-twisted spaces
\begin{align*}
 \K: \A_{f,\infty}(X) &\times \A_{-f,\infty}(X)\to \C\\
  \K(\alpha, \beta)&=\int_X \alpha\wedge \beta.
\end{align*}
\end{definition}
\noindent It is easy to see that the above integral is convergent and $\K$ is well-defined. 

\begin{proposition}\label{prop-dbar-d}The pairing $\K$ is compatible with $\dbar_f$ and $\pa$ in the sense that
\begin{align*}
\K(\dbar_f \alpha, \beta)&=-(-1)^{|\alpha|}\K(\alpha, \dbar_{-f} \beta),\\
\K(\pa \alpha, \beta)&=-(-1)^{|\alpha|}\K(\alpha, \pa \beta).
\end{align*}
Here $\alpha, \beta$ are homogenous elements of $\A_{f,\infty}(X)$ and $|\alpha|$ is the degree of $\alpha$. 
\end{proposition}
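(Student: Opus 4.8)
The plan is to reduce both identities to a single Stokes-type statement, $\int_X d\gamma = 0$ for an appropriate smooth $L^1$-form $\gamma$ with $d\gamma\in L^1$, and then to dispose of the boundary contribution at infinity using the completeness/bounded-geometry of $(X,g)$ (or, equivalently, the Density Theorem).

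\textbf{Step 1: algebraic reduction.} Writing $\dbar_f=\dbar+df\wedge$ and $\dbar_{-f}=\dbar-df\wedge$, the Leibniz rule for $\dbar$ gives, for homogeneous $\alpha,\beta$,
\[
\dbar_f\alpha\wedge\beta+(-1)^{|\alpha|}\alpha\wedge\dbar_{-f}\beta=\dbar(\alpha\wedge\beta),
\]
since $\alpha\wedge(df\wedge\beta)=(-1)^{|\alpha|}df\wedge\alpha\wedge\beta$, so that the two $df\wedge$-contributions cancel. Integrating, only the bidegree $(n,n)$ component survives, and on $\gamma:=(\alpha\wedge\beta)^{(n,n-1)}$ one has $[\dbar(\alpha\wedge\beta)]^{(n,n)}=d\gamma$ (because $\pa\gamma$ has bidegree $(n+1,n-1)=0$). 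Thus $\K(\dbar_f\alpha,\beta)+(-1)^{|\alpha|}\K(\alpha,\dbar_{-f}\beta)=\int_X d\gamma$, so the first identity is equivalent to $\int_X d\gamma=0$. The same computation with $\pa$ in place of $\dbar$ reduces the second identity to $\int_X d\gamma'=0$, $\gamma':=(\alpha\wedge\beta)^{(n-1,n)}$; here one first notes that $\A_{f,\infty}(X)$ and $\A_{-f,\infty}(X)$ are preserved by $\pa$, since $\pa$ is the complex conjugate of $\dbar$ and these spaces are preserved by complex conjugation and by $\dbar$ (Corollary \ref{other_op}).

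\textbf{Step 2: integrability and Stokes at infinity.} Since $\alpha,\beta\in L^2_{\A}(X)$ and, by Corollary \ref{other_op}, $\dbar\alpha,\dbar\beta$ (resp. $\pa\alpha,\pa\beta$) also lie in $L^2_{\A}(X)$, Cauchy--Schwarz gives $\alpha\wedge\beta\in L^1$ and $d\gamma=\big(\dbar\alpha\wedge\beta+(-1)^{|\alpha|}\alpha\wedge\dbar\beta\big)^{(n,n)}\in L^1$, and likewise for $\gamma'$; in particular the defining integral for $\K$ converges. With $\gamma,d\gamma\in L^1$, I would run the standard cutoff argument: pick $\chi_R\in C_c^\infty(X)$ with $\chi_R\equiv 1$ on $B_R(z_0)$, $\operatorname{supp}\chi_R\subset B_{2R}(z_0)$ and $|d\chi_R|\lsm 1/R$ (available by completeness and bounded geometry); then $\int_X\chi_R\,d\gamma=-\int_X d\chi_R\wedge\gamma$, the right side is $\lsm\frac1R\|\gamma\|_{L^1(B_{2R}\setminus B_R)}\to 0$, and the left side tends to $\int_X d\gamma$ by dominated convergence. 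Hence $\int_X d\gamma=0$, and the same for $\gamma'$. An alternative that uses a tool already in the paper: by the Density Theorem (Lemma \ref{density}) approximate $\alpha$ by $\alpha_n\in\A_c(X)$ in the $\A_{f,1}$-norm; for compactly supported $\alpha_n$ the identities are ordinary Stokes (with $\alpha_n\wedge\beta$ compactly supported), and both sides are continuous in $\alpha$ for the $\A_{f,1}$-topology because $\K$ is continuous in each slot for the $L^2$-norm while $\dbar_f\alpha_n\to\dbar_f\alpha$ and $\pa\alpha_n\to\pa\alpha$ in $L^2$ (each of $\dbar,\pa,df\wedge$ is controlled pointwise by $|\nabla\cdot|+|\nabla f||\cdot|$); then pass to the limit.

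\textbf{Main obstacle.} The only genuine difficulty is the vanishing of the boundary term at infinity; the rest is degree bookkeeping. What makes it work is that the $L^1$-control of $\gamma$ and $d\gamma$ is automatic from the $L^2$-conditions built into the definition of $\A_{\pm f,\infty}(X)$ together with their closure under $\dbar,\pa$, so no further use of the strongly elliptic hypothesis is needed beyond what these spaces already package, and a Gaffney-type cutoff---or the density of $\A_c(X)$---closes the argument.
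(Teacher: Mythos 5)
Your proof is correct. Note that the paper actually states Proposition \ref{prop-dbar-d} without proof (it is treated as straightforward, like the convergence of $\K$), and your argument is precisely the natural justification within the paper's framework: the Leibniz/bidegree reduction to $\int_X d\gamma=0$, with the boundary term at infinity killed either by a Gaffney-type cutoff (using that $\gamma$ and $d\gamma$ are $L^1$, which follows from Corollary \ref{other_op} and Cauchy--Schwarz) or by the density of $\A_c(X)$ from Lemma \ref{density} together with the norm equivalence of Theorem \ref{equi_norm}. No gaps.
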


As a consequence, $\K$ induces a pairing on the cohomologies
$$
\K:  H(\A_{f,\infty}(X), \dbar_f)\times H(\A_{-f,\infty}(X), \dbar_{-f})\to \C.
$$
which we still denote by $\K$. By our comparison result Theorem \ref{quasi_iso_stronger}, we have canonical isomorphisms 
$$
H(\A_{c}(X), \dbar_f)\iso H(\A_{f,\infty}(X), \dbar_f)\iso H(\A(X), \dbar_f)
$$
\begin{definition} $\K$ induces well-defined pairings on various cohomologies
\begin{align*}
\K:  &H(\A_{c}(X), \dbar_f)\times H(\A_{c}(X), \dbar_{-f})\to \C\\
\K:  &H(\A_{f,\infty}(X), \dbar_f)\times H(\A_{-f,\infty}(X), \dbar_{-f})\to \C\\
\K:  &H(\A(X), \dbar_f)\times H(\A(X), \dbar_{-f})\to \C
\end{align*}
which we all denote by $\K$ and call Residue pairing. 
\end{definition}

\begin{remark}
When $X= \mtb{C}^n$ and $f$ has only an isolated critical point at the origin, the pairings $\K$ on cohomologies coincide with the usual residue pairing \cite{LLS}. 
\end{remark}

\begin{theorem}[Poincar\'e duality]\label{thm-duality} $H(\A_{c}(X), \dbar_f)\iso H(\A_{f,\infty}(X), \dbar_f)\iso H(\A(X), \dbar_f)$ are finite dimensional $\C$-vector spaces and the residue pairing $\K$ is non-degenerate.
\end{theorem}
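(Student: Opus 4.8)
The plan is to establish the two ingredients of Theorem \ref{thm-duality} separately: finite-dimensionality of the cohomology and non-degeneracy of the residue pairing $\K$. For finite-dimensionality, I would use the Hodge-theoretic picture already built up: by Corollary \ref{cor-Hodge-decomposition}, every class in $H(\A_{f,\infty}(X), \dbar_f)$ has a $\Delta_f$-harmonic representative, and the space $\mathcal{H}_{\A} = \operatorname{Ker}(\Delta_f)$ is finite-dimensional because $\Delta_f$ has purely discrete spectrum (by the theorem of \cite{KL, Fan} using the $k=2$ part of strong ellipticity). Since the harmonic representatives are smooth and lie in $\A_{f,\infty}(X)$, the natural map $\mathcal{H}_{\A} \to H(\A_{f,\infty}(X), \dbar_f)$ is an isomorphism, giving finite-dimensionality; the other two cohomologies are then finite-dimensional by the comparison isomorphisms of Theorem \ref{quasi_iso_stronger}.

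Next I would prove non-degeneracy of $\K$ on $H(\A_{f,\infty}(X), \dbar_f) \times H(\A_{-f,\infty}(X), \dbar_{-f})$, since the other two cases follow by the comparison isomorphisms (one should check $\K$ is compatible with these quasi-isomorphisms, which is routine because the homotopies $R_\rho$, $T_\rho$ are built from $\bar\pat$, $(df\wedge)^*$ and multiplication operators, against which $\K$ behaves predictably). The key tool is the Hodge $*$-operator together with Corollary \ref{*_transf}: given a nonzero class in $H(\A_{f,\infty}(X), \dbar_f)$, represent it by a nonzero $\Delta_f$-harmonic form $\alpha$. Since $\Delta_f = \Delta_{-f}$ as operators (the expansion of $\Delta_f$ depends only on $|\nabla f|^2$, $\operatorname{Ric}$ and $L_f$, and $L_{-f} = -L_f$ contributes the same after the full computation — more precisely one checks directly that $\Delta_f$ and $\Delta_{-f}$ have the same kernel, or uses the $\pat_f$ vs $\pat_{-f}$ symmetry), the form $*\bar\alpha$ is (up to checking harmonicity for $-f$) a nonzero $\Delta_{-f}$-harmonic form, hence defines a nonzero class in $H(\A_{-f,\infty}(X), \dbar_{-f})$. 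Then
\begin{equation*}
\K(\alpha, *\bar\alpha) = \int_X \alpha \wedge *\bar\alpha = \int_X (\alpha,\alpha)_{\A}\, dv_g = \big\|\alpha\big\|_{\A}^2 > 0,
\end{equation*}
which exhibits a class pairing nontrivially with $[\alpha]$, proving non-degeneracy in the first argument; non-degeneracy in the second argument is symmetric (replace $f$ by $-f$).

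The main obstacle I anticipate is verifying that $*\bar\alpha$ genuinely represents a $\dbar_{-f}$-cohomology class that is both well-defined and nonzero, i.e.\ controlling the interplay between $\Delta_f$-harmonicity for $f$ and the $-f$-twisted complex. One must confirm that $\operatorname{Ker}(\Delta_f) = \operatorname{Ker}(\Delta_{-f})$ under the Hodge star, or more carefully that $*\bar\alpha \in \A_{-f,\infty}(X)$ (clear, since the $\A''_{f,k}$-norm only sees $|\nabla f| = |\nabla(-f)|$ and $*$, complex conjugation preserve pointwise norms of covariant derivatives), that $*\bar\alpha$ is $\dbar_{-f}$-closed (from Corollary \ref{*_transf}, since $\alpha \in \operatorname{Ker}(\bar\pat_{-f}^*)^{\perp}$-type statements need care: one uses that a $\Delta_f$-harmonic $\alpha$ lies in $\operatorname{Ker}(\bar\pat_f)\cap\operatorname{Ker}(\bar\pat_f^*)$, so by the corollary $*\bar\alpha \in \operatorname{Ker}(\bar\pat_{-f}^*)\cap\operatorname{Ker}(\bar\pat_{-f})$, i.e.\ it is $\Delta_{-f}$-harmonic), and that it is nonzero (immediate since $*$ and conjugation are pointwise isometries). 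Once this dictionary is in place the positivity computation above closes the argument cleanly; a secondary but purely bookkeeping obstacle is checking $\K$ descends compatibly along the comparison quasi-isomorphisms so that the non-degeneracy transfers to the compactly-supported and the full smooth versions.
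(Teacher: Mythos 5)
Your proposal is correct and takes essentially the same route as the paper: finite-dimensionality via $\Delta_f$-harmonic representatives (discrete spectrum plus the comparison isomorphisms), and non-degeneracy by pairing a nonzero harmonic $\alpha$ with $\beta=*\bar\alpha$, which is $\Delta_{-f}$-harmonic by Corollary \ref{*_transf}, so that $\K(\alpha,\beta)=\int_X\alpha\wedge *\bar\alpha=\big\|\alpha\big\|_{\A}^2\neq 0$. One small caution: your parenthetical claim that $\Delta_f=\Delta_{-f}$ as operators is false (the zeroth-order term $L_f$ changes sign under $f\mapsto -f$), but this is harmless since the argument you actually run -- $\alpha\in\ker(\bar\pat_f)\cap\ker(\bar\pat_f^*)$ implies $*\bar\alpha\in\ker(\bar\pat_{-f})\cap\ker(\bar\pat_{-f}^*)$ via Corollary \ref{*_transf} -- is exactly the paper's, and does not use that identity.
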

\begin{proof} We prove this theorem using the model $H(\A_{f,\infty}(X), \dbar_f)$, whose elements can be represented by $\Delta_{f}$-harmonics. The finite dimensionality follows from standard elliptic analysis. 

Let $\alpha\neq 0$ be a $\Delta_f$-harmonic form and let $\beta := *\bar\alpha$. $\beta$ is $\Delta_{-f}$-harmonic by Corollary \ref{*_transf}. We have
$$
\K(\alpha, \beta) = \int_X \alpha \wedge \beta = \lge \alpha, *\bar\beta \rge = \pm \lge \alpha, \alpha \rge \neq 0
$$
since $\alpha \neq 0$. We conclude that $\K$ is non-degenerate.
\end{proof}

Duality of this type can be generalized in various ways and can be coupled to the category of matrix factorizations. See \cite{MDLM1, MDLM2,LiM} recently for some related discussions. Duality results of similar set-up also appeared in \cite{DL}.

\subsubsection*{Higher residue pairing}
Now we include a formal variable $u$ and consider the complex $(\A_{f,\infty}(X)[[u]], Q_f)$.  $\K$ is $u$-linear extended to a $\C[[u]]$-valued pairing. By Proposition \ref{prop-dbar-d}, $\K$ is compatible with $Q_f$ in the sense that 
$$
 \K(Q_f \alpha, \beta)=-(-1)^{|\alpha|}\K(\alpha, Q_{-f} \beta). 
$$
\begin{definition}\label{defn-K-forms} $\K$ induces a pairing $\hat \K$ (via $\C[[u]]$-linear extension)
$$
\hat \K: H(\A_{f,\infty}(X)[[u]], Q_f) \times H(\A_{-f,\infty}(X)[[u]], Q_{-f})\to \C[[u]]. 
$$
$\hat \K$ will be called the higher residue pairing. 
\end{definition}

Modulo $u$, the leading term of $\hat \K$ is precisely the residue pairing defined above. Higher orders in $u$ give further rich informations. As shown in \cite{LLS}, when $X= \mtb{C}^n$ and $f$ has only an isolated critical point at the origin, the pairings $\hat \K$ plays the role of K. Saito's higher residue pairing \cite{Sai3} on the formal completion of the Brieskorn lattice. Our construction of $\hat \K$ can be viewed as a generalization of higher residue pairing to Landau-Ginzburg models with compact critical locus.  In the physics literature, the $L^2$ approach to higher residue pairing was first proposed by Losev \cite{Lo}.

\section{Deformation theory}
In Part II, we discuss deformation theory on a bounded Calabi-Yau geometry with a holomorphic function satisfying the strongly elliptic condition \eqref{tame}. We construct a dGBV algebra with a trace pairing on a suitable subspace of polyvector fields, and prove the Hodge-to-de Rham degeneration via $L^2$ method. This construction unifies  Landau-Ginzburg models and compact Calabi-Yau models into the same Hodge theoretical framework. In particular, the Barannikov-Kontsevich construction of Frobenius manifolds for compact Calabi-Yau manifolds works in the same way for Landau-Ginzburg models. 

\subsection{Polyvector fields and dGBV algebra}\label{sec-PV}
Let $T_X$ denote the holomorphic tangent bundle of $X$. Let
$$
\PV^{i,j}(X) := \A^{0,j}(X, \wedge^i T_X)
$$
be the space of smooth $(0, j)$-forms valued in $\wedge^i T_X$ and
\begin{equation*}
\PV(X) := \bigoplus_{i,j} \PV^{i,j}(X).
\end{equation*}
$\PV(X)$ is bi-graded.  Elements of $\PV^{i,j}(X)$ will be called polyvector fields with Hodge degree $(i,j)$. 
In this paper, the total degree of $\PV^{i,j}(X)$ is defined to be $j-i$ and we denote by 
$$
  |\mu|=j-i, \quad \text{if}\quad \mu \in \PV^{i,j}(X). 
$$
For later use, we also let 
$$
\A_c(X)=\bigoplus_{i,j}\A^{i,j}_c(X)\subset \A(X), \quad \PV_c(X)=\bigoplus_{i,j}\PV^{i,j}_c(X)\subset \PV(X)
$$
denote subspaces consisting of elements with compact support.

Assume $X$ is a Calabi-Yau geometry equipped with a holomorphic volume form $\Omega_X$. Then $\Omega_X$ induces an isomorphism of vector spaces
\begin{align*}
\up: \PV(X) \rightarrow \A(X) \qquad
\alpha \mapsto& \alpha \vdash \Omega_X,
\end{align*}
where $\vdash \Omega_X$ denotes contraction with $\Omega_X$.
In local coordinates, 
\begin{equation*}
\bracket{d\bar z^J  {\pat_{z^I}}} \vdash \Omega_X =
(-1)^{\frac{|I|(|I|-1)}{2} }\rho d\bar z^J\wedge dz^K, \quad \text{if}\quad \Omega_X = \rho dz^I \wedge dz^K.
\end{equation*}
Here $I,J,K$ are multi-indices. For $I=\{i_1, \cdots, i_k\}$, we denote
$$
dz^I:= dz^{i_1}\wedge \cdots \wedge dz^{i_k}, \quad {\pat_{z^I}}:= {\pat_{z^{i_1}}}\wedge \cdots \wedge {\pat_{z^{i_k}}}. 
$$

Under the identification $\up$, every linear operator $P$ on $\A(X)$ induces a linear operator on $\PV(X)$ via
\begin{equation*}
\alpha\to  \up^{-1} \circ P \circ \up (\alpha), \quad \alpha\in \PV(X).
\end{equation*}
By an abuse of notation, this induced operator on $\PV(X)$ will be still denoted by $P$. Thus we have degree $1$ operators $\bar\pat, \pat, \bar\pat_f$ defined on $\PV(X)$  and $Q_f$ on $\PV(X)[[u]]$ arising from those discussed in Section \ref{sec-forms}. 

\begin{remark}
$\bar\pat$ does not depend the choice of $\Omega_X$, while $\pat$ and $Q_f$ do. Since we will fix a volume form $\Omega_X$ throughout this paper, we will not distinguish this dependence to simplify notations. 
\end{remark}
The wedge product 
$$
\PV(X)\otimes \PV(X)\to \PV(X), \quad 
\alpha \otimes \beta \longmapsto \alpha \wedge \beta
$$
equips $\PV(X)$ with a structure of graded commutative algebra. Combining $\pat$ with $\wedge$, we can define a bracket on $\PV(X)$ 
\begin{equation*}
\{\alpha, \beta\} := \pat(\alpha \wedge \beta) - \pat \alpha \wedge \beta - (-1)^{|\alpha|}\alpha \wedge \pat \beta.
\end{equation*}
This bracket does not depend on $\Omega_X$ and it coincides with the Schouten-Nijenhuis bracket up to a sign.

\begin{definition}\label{defn-BV} A dGBV algebra is a triple $(\A, d, \Delta)$ where
\begin{itemize}
\item $\A$ is a $\Z$-graded commutative associative unital algebra,
\item $\Delta: \A \to \A$ is a second-order operator of degree $1$ such that $\Delta^2=0$,
\item $d: \A\to \A$ is a derivation of degree $1$ such that $d^2=0$ and $[d, \Delta]=0$. 
\end{itemize}
\end{definition}

Here $\Delta$ is called the BV operator. $\Delta$ being ``second-order" means the following: let us define the \emph{BV bracket} $\fbracket{-,-}$ as the failure of $\Delta$ to be a derivation
$$
    \fbracket{a,b}:=\Delta(ab)-(\Delta a)b- (-1)^{\bar a}a \Delta b. 
$$
Then $\fbracket{-,-}$ defines a Lie bracket of degree $1$ (Gerstenhaber algebra) such that $\Delta$ is compatible with $\{-,-\}$ via a graded version of Leibniz rule.

The triple $(\PV(X), \bar\pat_f, \pat)$ forms  a dGBV algebra, which will be the central object of this paper.

\subsection{$L^2$ theory for polyvector fields}
 We assume $(X,g)$ has bounded geometry and $f$ is a holomorphic function satisfying the strongly elliptic condition \eqref{tame}.  $g$ induces a fiberwise hermitian product $\bracket{-,-}_{\PV}$ on polyvector fields. Let 
\begin{align*}
\varphi = \frac{1}{p!q!} {\sum_{\substack {i_1, \cdots, i_p, \\ j_1, \cdots, j_q}}} \varphi_{i_1 \cdots i_p, \bar j_1 \cdots \bar j_q} \pat_{i_1}\wedge \cdots \pat_{i_p} \otimes dz^{\bar j_1} \wedge \cdots \wedge dz^{\bar j_q}
\end{align*}
and
\begin{align*}
\psi = \frac{1}{p!q!} {\sum_{\substack {k_1, \cdots, k_p, \\ l_1, \cdots, l_q}}} \psi_{k_1 \cdots k_p, \bar l_1 \cdots \bar l_q} \pat_{k_1}\wedge \cdots \pat_{k_p} \otimes dz^{\bar l_1} \wedge \cdots \wedge dz^{\bar l_q},
\end{align*}
then the hermitian product is defined as
\begin{equation*}
(\varphi, \psi)_{\PV}(z) := \frac{1}{p!q!} \sum_{i,j,k,l} g_{i_1 \bar k_1} \cdots g_{i_p \bar k_p} g^{\bar j_1l_1} \cdots g^{\bar j_q l_q} \varphi_{i_1 \cdots i_p, \bar j_1 \cdots \bar j_q} \overline{\psi_{k_1 \cdots k_p, \bar l_1 \cdots \bar l_q}}.
\end{equation*}
This leads to an $L^2$ inner product on $\PV_c(X)$ by 
\begin{equation*}
\lge \varphi, \psi \rge_{\PV} := \int_X (\varphi, \psi)_{\PV}(z) dv_g.
\end{equation*}
 Here $dv_g$ is the volume from induced from $g$. 

In complete analogue to the discussion on differential forms, we obtain $L^2_{\PV}$ by the completion of $\PV_c(X)$ with respect to the inner product $\lge , \rge_{\PV}$. The $g$-compatible torsion free connection $\nabla$ can be used to define a connection, again denoted by $\nabla$, on the bundle $\wedge^* T_X \otimes \wedge^* \bar T^*_X$. This connection is compatible with $\bracket{-,-}_{\PV}$. Similar to Definition \ref{defn-norms} and Theorem \ref{equi_norm}, we define the following $f$-twisted Sobolev spaces of polyvector fields. 

\begin{definition}\label{defn-twisted-poly}
The spaces $\PV_{f,k}(X)$ is defined as
\begin{align*}
\PV_{f,k}(X) := \{\alpha | |\nabla f|^i \nabla^j \alpha \in L^2_{\PV}(X), \forall i+j\leq k\},
\end{align*}
and the $\PV_{f,k}$-norm is
\begin{align*}
||\alpha||_{\PV_{f,k}} := \sum_{i+j \leq k} |||\nabla f|^i \nabla^j \alpha||_{\PV}.
\end{align*}
\end{definition}

\begin{definition}\label{f-adapted-PV}
We define $\PV_{f,\infty}(X)$ to be the intersection
\begin{align*}
\PV_{f,\infty}(X) := \bigcap\limits_{k \geq 0} \PV_{f,k}(X).
\end{align*}
\end{definition}

The proof of Lemma \ref{Hodge_component} , Lemma \ref{maximum} and Theorem \ref{form_wedge} can be generalized to polyvector fields, hence we have the following:
\begin{theorem}\label{thm-PV-infinity}
Assume $(X,g)$ has bounded geometry and $f$ satisfies the strongly elliptic condition \eqref{tame}. Then $\PV_{f,\infty}(X)$ is closed under $\dbar_f$, wedge product and decomposition into components of Hodge degrees. In particular, $\PV_{f,\infty}(X)$ carries the structure of differential graded commutative algebra. 
\end{theorem}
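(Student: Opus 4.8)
The plan is to transport the three properties already established for $f$-twisted Sobolev spaces of differential forms in Section \ref{sec-f-adapted} to the polyvector setting via essentially the same arguments, carried out on the bundle $\wedge^\bullet T_X\otimes\wedge^\bullet\bar T_X^*$ instead of $\wedge^\bullet T^*_\C X$. First I would observe that the Hodge-degree decomposition statement (the analogue of Corollary \ref{Hodge_component}) is immediate from Definition \ref{defn-twisted-poly}: the $\PV_{f,k}$-norm is built from $|\nabla f|^i\nabla^j$, which commutes with projection onto the $(p,q)$-component of a polyvector field, so if $\alpha=\sum_{p,q}\alpha^{p,q}\in\PV_{f,k}(X)$ then each $\alpha^{p,q}\in\PV_{f,k}(X)$; intersecting over $k$ gives the claim for $\PV_{f,\infty}(X)$.

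Next I would prove closedness under $\dbar_f$. Since $\dbar_f=\up^{-1}\circ\bar\pat_f\circ\up$ and $\up$ is an $\mathcal O_X$-linear bundle isomorphism that is parallel with respect to (a metric-compatible modification of) $\nabla$ up to bounded parallel terms, the operator $\dbar_f$ on $\PV(X)$ has, in covariant-derivative form, the same shape as $\bar\pat_f$ on $\A(X)$: a first-order piece $d\bar z^{\bar k}\wedge\nabla_{\bar k}$ (interpreted via the Lie-algebra action on $\wedge^\bullet T_X$) plus a zeroth-order piece given by contraction against $\nabla f$. Hence $|\nabla f|^i\nabla^j(\dbar_f\alpha)$ is controlled, by Leibniz and the bounded-geometry bounds on the connection coefficients, by a finite sum of terms $|\nabla f|^{i'}\nabla^{j'}\alpha$ and $|\nabla^{l+1}f|\,|\nabla f|^{i''}\nabla^{j''}\alpha$ with $i'+j'\le k+1$; applying the strongly elliptic condition \eqref{tame} to absorb the $|\nabla^{l+1}f|$ factors into powers of $|\nabla f|$ (exactly as in the proof of Lemma \ref{easy} and Lemma \ref{hard}) shows $\dbar_f\alpha\in\PV_{f,k}(X)$ whenever $\alpha\in\PV_{f,k+1}(X)$, so $\PV_{f,\infty}(X)$ is $\dbar_f$-closed.

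For the wedge product I would follow the proof of Theorem \ref{form_wedge} verbatim. The one ingredient that must be re-established is the decay lemma: the analogue of Lemma \ref{maximum}, that $|\nabla^k\alpha|\to 0$ at infinity for $\alpha\in\PV_{f,\infty}(X)$. Its proof uses only the Bochner–Weitzenböck inequality $\Delta_d|\nabla^k\alpha|\le 2|\nabla^{k+2}\alpha|$ (valid on any tensor bundle over a bounded-geometry manifold), the local-boundedness estimate of \cite{HL}, and Sobolev embedding with constants uniform in the center of the ball — none of which sees the difference between forms and polyvector fields. Granting this, if $\varphi,\psi\in\PV_{f,\infty}(X)$ then by Leibniz $|\nabla f|^i\nabla^j(\varphi\wedge\psi)=\sum_s\binom{j}{s}|\nabla f|^i\nabla^s\varphi\wedge\nabla^{j-s}\psi$, and each summand lies in $L^2_{\PV}(X)$ because $|\nabla f|^i\nabla^s\varphi\in L^2_{\PV}(X)$ while $|\nabla^{j-s}\psi|$ is bounded; thus $\varphi\wedge\psi\in\PV_{f,\infty}(X)$. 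Combining the three parts, $\PV_{f,\infty}(X)$ is a $\dbar_f$-stable graded-commutative subalgebra of $\PV(X)$ respecting the Hodge bigrading, which is the assertion of the theorem.

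The main obstacle, modest but real, is bookkeeping the bundle isomorphism $\up$: one must check that conjugating $\bar\pat$ and $df\wedge$ by $\up$ produces operators whose covariant-derivative expansions still have bounded coefficients and the same leading-symbol structure, so that the norm-equivalence machinery of Theorem \ref{equi_norm} (which is what actually makes Definition \ref{defn-twisted-poly} robust and lets one run the Leibniz estimates above) transfers. Since $\Omega_X$ is holomorphic and nowhere vanishing and $\nabla$ has bounded geometry, this is routine; the excerpt in fact already asserts it by saying "the proof of Lemma \ref{Hodge_component}, Lemma \ref{maximum} and Theorem \ref{form_wedge} can be generalized to polyvector fields."
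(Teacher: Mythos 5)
Your proposal is correct and follows essentially the same route as the paper, which offers no separate argument here but simply asserts that the proofs of Corollary \ref{Hodge_component}, Lemma \ref{maximum} and Theorem \ref{form_wedge} carry over to the bundle $\wedge^\bullet T_X\otimes \wedge^\bullet \bar T^*_X$ — exactly the transfers you spell out, with the $\dbar_f$-closure handled by the same covariant-derivative Leibniz estimates plus condition \eqref{tame}. One small remark: your final worry about bounding the conjugation by $\up$ is moot for this theorem, since $\dbar_f$ on $\PV(X)$ is $\Omega_X$-independent (it is the Dolbeault operator on $\wedge^\bullet T_X$ plus contraction with $df$, cf.\ the remark after the definition of the induced operators); boundedness of $\Omega_X$ only becomes relevant later, for $\pa$, via the bounded Calabi-Yau condition and Lemma \ref{volume_condition}.
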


\subsection{Bounded Calabi-Yau geometry}
Recall that the Calabi-Yau volume form $\Omega_X$ allows us to identify smooth differential forms with smooth polyvector fields through the map $\up$ as in Section \ref{sec-PV}. We have differential forms $\A_{f,\infty}(X)$ which allows the operation of $\bar\pat_f, \Delta_f$ and $\pat$, and polyvector fields $\PV_{f,\infty}(X)$ which allows the wedge product (this wedge product is different from that on differential forms under $\up$). 

We would like to compare $\up^{-1} (\A_{f,\infty}(X))$ and $\PV_{f,\infty}(X)$. In the following, we give a general sufficient condition such that these two spaces coincide. 

\begin{definition}\label{defn-bounded-CY}Let $\Theta_X$ be the holomorphic section of $\wedge^n T_X$ such that $$\Theta_X \vdash \Omega_X = 1.$$
$\Omega_X$ is called a bounded Calabi-Yau volume form with respect to $(X,g)$ if   (recall Definition \ref{defn-Sal})
$$
\Omega_X \in C^{\infty}_b(\wedge^n T^*_X)\quad \text{and}\quad  \Theta_X \in C^{\infty}_b(\wedge^n T_X).
$$ 
We define a bounded Calabi-Yau geometry to be a triple $(X, g, \Omega_X)$  where $(X,g)$ is a bounded geometry, and $\Omega_X$ is a bounded Calabi-Yau volume form.  
\end{definition}
\begin{lemma} \label{volume_condition}
Let $(X, g, \Omega_X)$ be a bounded Calabi-Yau geometry, $f$ be a holomorphic function satisfying the strongly elliptic condition \eqref{tame}. Then 
$$\up^{-1} (\A_{f,\infty}(X)) = \PV_{f,\infty}(X).$$
\end{lemma}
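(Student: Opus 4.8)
The plan is to show that the isomorphism $\up : \PV(X) \to \A(X)$ restricts to an isomorphism between the two $f$-twisted Sobolev spaces, using the equivalent norm $\A''_{f,k}$ from Theorem \ref{equi_norm} together with Definition \ref{defn-twisted-poly}. Both spaces are cut out by the same type of condition: membership requires $|\nabla f|^i \nabla^j (\cdot) \in L^2$ for all $i+j \le k$ and all $k$. So it suffices to prove that for each $\alpha \in \PV(X)$ and each pair $(i,j)$, the norm $\big\||\nabla f|^i \nabla^j (\up \alpha)\big\|_{\A}$ is comparable (bounded above and below, up to constants and lower-order terms) to a finite sum of the norms $\big\||\nabla f|^{i'} \nabla^{j'} \alpha\big\|_{\PV}$ with $i' + j' \le i+j$, and conversely. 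Since $\up$ is contraction with $\Omega_X$, which is parallel up to the connection acting on it, the key computation is to expand $\nabla^j(\alpha \vdash \Omega_X)$ by the Leibniz rule for $\nabla$.

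First I would record the Leibniz expansion
\begin{equation*}
\nabla^j(\alpha \vdash \Omega_X) = \sum_{s=0}^{j} \binom{j}{s} (\nabla^s \alpha) \vdash (\nabla^{j-s}\Omega_X),
\end{equation*}
where $\vdash$ denotes the appropriately extended contraction pairing, and note the analogous expansion of $\nabla^j(\beta \vdash \Theta_X)$ for the inverse map $\up^{-1}$, whose effect is contraction with $\Theta_X$. The boundedness hypothesis in Definition \ref{defn-bounded-CY} says precisely that all the tensors $\nabla^{j-s}\Omega_X$ and $\nabla^{j-s}\Theta_X$ are pointwise bounded on $X$ (they lie in $C^{\infty}_b$). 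Therefore each term on the right is pointwise controlled: $|(\nabla^s\alpha) \vdash (\nabla^{j-s}\Omega_X)| \lsm |\nabla^s \alpha|$, with constant independent of the point. Multiplying by $|\nabla f|^i$ and taking $L^2$ norms gives
\begin{equation*}
\big\||\nabla f|^i \nabla^j(\up\alpha)\big\|_{\A} \lsm \sum_{s=0}^{j} \big\||\nabla f|^i \nabla^s \alpha\big\|_{\PV},
\end{equation*}
so $\up$ maps $\PV_{f,\infty}(X)$ into $\A''_{f,\infty}(X) = \A_{f,\infty}(X)$ (using Theorem \ref{equi_norm}). Running the same argument with $\Theta_X$ in place of $\Omega_X$ gives the reverse inclusion $\up^{-1}(\A_{f,\infty}(X)) \subset \PV_{f,\infty}(X)$. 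Combining the two yields the claimed equality; one should also observe that $\up$ and $\up^{-1}$ do not mix $|\nabla f|$ powers, so the estimates close for every $k$ simultaneously and hence pass to the intersection $\PV_{f,\infty}$.

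The only genuinely delicate point is making sure the pointwise bound $|\nabla^{j-s}\Omega_X| \lsm 1$ is exactly what the bounded Calabi-Yau hypothesis supplies and that the contraction operation behaves well with respect to $\nabla$ — i.e.\ that $\nabla$ commutes with $\vdash$ in the Leibniz sense, which holds because $\nabla$ is a derivation compatible with the tensor and exterior structures. There is no PDE input needed here; everything reduces to the algebra of the Leibniz rule plus the $C^{\infty}_b$ hypotheses. I expect the main (mild) obstacle to be bookkeeping: verifying that the contraction of a bounded tensor with $\nabla^s\alpha$ really is bounded fiberwise by $|\nabla^s\alpha|$ uniformly, which follows from the $C^{\infty}_b$ condition on $\Omega_X$ and $\Theta_X$ together with bounded geometry (so the fiber metrics and their inverses are uniformly controlled), but which must be stated carefully since $X$ is noncompact.
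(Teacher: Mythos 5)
Your proposal is correct and follows essentially the same route as the paper: expand $\nabla^j(\alpha \vdash \Omega_X)$ by the Leibniz rule, use the $C^\infty_b$ bounds on $\nabla^{j-s}\Omega_X$ (and on $\nabla^{j-s}\Theta_X$ for the reverse inclusion) from the bounded Calabi-Yau hypothesis, and conclude via the $\A''_{f,k}$-norm characterization of $\A_{f,\infty}(X)$ from Theorem \ref{equi_norm} together with Definition \ref{defn-twisted-poly}.
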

\begin{proof}
$\forall \alpha \in \PV_{f,\infty}(X)$ and $\forall i,j \geq 0$, we have
\begin{align*}
|\nabla f|^i \nabla^j \up(\alpha) = |\nabla f|^i \nabla^j (\alpha \vdash \Omega_X) = \sum_k \binom{j}{k} (|\nabla f|^i \nabla^k \alpha) \vdash (\nabla^{j-k} \Omega_X).
\end{align*}
By assumption, $|\nabla f|^i \nabla^k \alpha$ is $L^2$ integrable, $|\nabla^{j-k} \Omega_X|$ is bounded, so $$|\nabla f|^i \nabla^j \up(\alpha) \in L^2_{\A}(X).$$
Hence $\up(\alpha) \in \A_{f,\infty}(X)$ and $\up(PV_{f,\infty}(X)) \subset \A_{f,\infty}(X)$.
Similarly, we have $\up^{-1}(\A_{f,\infty}(X)) = \A_{f,\infty}(X) \vdash \Theta_X \subset \PV_{f,\infty}(X)$. The lemma follows.
\end{proof}

\begin{example}
For the three classes of Landau-Ginzburg models in subsection \ref{example}, one can choose a bounded Calabi-Yau volume form $\Omega_X$ as follows.
\begin{enumerate}
\item [1)] for polynomial on $\mtb{C}^n$, $\Omega_X = dz^1 \wedge \cdots \wedge dz_n$;
\item [2)] for crepant resolution on $\pi: X \rightarrow \mtb{C}^n/G$, $\Omega_X = \pi^* dz^1 \wedge \cdots \wedge dz_n$;
\item [3)] for Laurent polynomial on $(\mtb{C}^*)^n$, $\Omega_X = \frac{dz^1}{z^1} \wedge \cdots \wedge \frac{dz_n}{z^n}$.
\end{enumerate}
\end{example}

\begin{theorem}\label{thm-dGBV-property}
Let $(X, g, \Omega_X)$ be a bounded Calabi-Yau geometry, $f$ be a holomorphic function satisfying the strongly elliptic condition 
\eqref{tame}. Then $(\PV_{f,\infty}(X), \bar\pat_f, \pat)$ forms a dGBV algebra.
\end{theorem}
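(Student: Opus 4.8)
The plan is to verify the three defining conditions of a dGBV algebra (Definition \ref{defn-BV}) for the triple $(\PV_{f,\infty}(X), \dbar_f, \pat)$. The algebraic backbone is already known on the big space $\PV(X)$: the triple $(\PV(X), \dbar_f, \pat)$ is a dGBV algebra, so the identities $\dbar_f^2=0$, $\pat^2=0$, $[\dbar_f,\pat]=0$, the graded-commutativity and associativity of the wedge product, and the second-order property of $\pat$ (equivalently, that the Schouten--Nijenhuis bracket $\{-,-\}$ is a Gerstenhaber bracket compatible with $\pat$ via Leibniz) all hold as identities of operators/multilinear maps on $\PV(X)$. Hence the only genuine content is \emph{closedness}: one must check that $\PV_{f,\infty}(X)$ is a subalgebra stable under both $\dbar_f$ and $\pat$, so that the ambient identities restrict. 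Closedness under $\dbar_f$, under the wedge product, and under passing to Hodge components is exactly the statement of Theorem \ref{thm-PV-infinity}, which I may cite. So the crux reduces to a single new point.

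\textbf{The main step.} What remains is to show $\pat$ preserves $\PV_{f,\infty}(X)$, i.e. that $\pat$ maps $f$-twisted Sobolev polyvectors to $f$-twisted Sobolev polyvectors. Here is where the bounded Calabi-Yau hypothesis enters. Under the isomorphism $\up:\PV(X)\to\A(X)$, the operator $\pat$ on $\PV(X)$ corresponds (up to sign conventions fixed in Section \ref{sec-PV}) to a first-order operator on forms; more useful is the observation that $\pat$ on polyvectors is, in local coordinates, a first-order differential operator whose symbol is algebraic (contraction against a holomorphic vector field / the divergence operator with respect to $\Omega_X$), and whose coefficients are controlled by $\Omega_X, \Theta_X$ and their covariant derivatives. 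The point of Definition \ref{defn-bounded-CY} is precisely that $\Omega_X\in C^\infty_b(\wedge^n T^*_X)$ and $\Theta_X\in C^\infty_b(\wedge^n T_X)$, so these coefficients and all their covariant derivatives are bounded. Thus $\pat \in \mathrm{Diff}^1_b$ in the sense of Definition \ref{defn-Sal}: for any $\alpha$ one has the pointwise bound $|\nabla f|^i \nabla^j(\pat\alpha)\lesssim \sum_{s\le j+1}|\nabla f|^i\nabla^s\alpha$, and by Definition \ref{defn-twisted-poly} together with Theorem \ref{equi_norm} (the polyvector analogue, which equates $\PV_{f,k}$ with the $\dbar_f$-based $\A_{f,k}$-type space), the right-hand side is $L^2$ for $\alpha\in\PV_{f,\infty}(X)$. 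Concretely: since $\dbar_f$ preserves $\PV_{f,\infty}(X)$ by Theorem \ref{thm-PV-infinity}, and under $\up$ and Lemma \ref{volume_condition} we have $\up^{-1}(\A_{f,\infty}(X))=\PV_{f,\infty}(X)$, it suffices to note that Corollary \ref{other_op} shows $\A_{f,\infty}(X)$ is preserved by $\pat$ on forms; transporting via $\up$ and using that $\up$ intertwines the two $\pat$'s (by construction in Section \ref{sec-PV}) and that $\up,\up^{-1}$ preserve the $f$-twisted spaces (Lemma \ref{volume_condition}) gives that $\pat$ preserves $\PV_{f,\infty}(X)$.

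\textbf{Assembling.} Once stability under $\dbar_f$, $\pat$, and $\wedge$ is in hand, $\PV_{f,\infty}(X)$ is a $\Z$-graded commutative associative unital subalgebra of $\PV(X)$ (the unit $1\in\PV^{0,0}$ lies in $\PV_{f,\infty}(X)$ since it is $\nabla$-parallel and $\up(1)=\Omega_X$ is bounded with bounded derivatives), and the operators $\dbar_f,\pat$ restrict to it. All the identities required in Definition \ref{defn-BV} — $\pat^2=0$, $\dbar_f^2=0$, $[\dbar_f,\pat]=0$, $\dbar_f$ a derivation, $\pat$ second-order — are inherited verbatim from $(\PV(X),\dbar_f,\pat)$ because they are pointwise/algebraic identities of operators that hold on all of $\PV(X)$ and hence on any stable subspace. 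Therefore $(\PV_{f,\infty}(X),\dbar_f,\pat)$ is a dGBV algebra.

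\textbf{Expected obstacle.} The only place requiring care is the stability of $\PV_{f,\infty}(X)$ under $\pat$: on forms $\pat$ is not among the operators for which Lemma \ref{trivial_op} was proved directly, and one genuinely needs the bounded-geometry/bounded-Calabi-Yau control on the coefficients to run the $\mathrm{Diff}^1_b$ estimate (this is the analogue of the argument behind Lemma \ref{easy}, applied to $\pat$ rather than $\sD$). Everything else is bookkeeping. I would present the argument by first recording that $\pat$ is a $\mathrm{Diff}^1_b$-operator under the bounded Calabi-Yau hypothesis, then invoking the polyvector version of Theorem \ref{equi_norm} to conclude $L^2$-boundedness of $|\nabla f|^i\nabla^j\pat\alpha$, and finally citing the ambient dGBV identities.
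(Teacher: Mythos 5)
Your proposal is correct and follows essentially the same route as the paper: closure under $\dbar_f$, the wedge product and Hodge components via Theorem \ref{thm-PV-infinity}, closure under $\pat$ by transporting Corollary \ref{other_op} through $\up$ using Lemma \ref{volume_condition}, and then inheriting the dGBV identities from the ambient algebra $(\PV(X),\dbar_f,\pat)$. The additional $\mathrm{Diff}^1_b$ discussion of $\pat$ is a harmless (and not needed) elaboration, since your "concretely" paragraph already reproduces the paper's two-line argument.
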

\begin{proof} By Theorem \ref{thm-PV-infinity}, $(\PV_{f,\infty}(X), \dbar_f)$ is a differential graded commutative algebra. By Corollary \ref{other_op} and Lemma \ref{volume_condition}, $\PV_{f,\infty}(X)$ is preserved by  $\pa$. Since $(\PV(X), \dbar_f, \pa)$ is a dGBV algebra and $\PV_{f,\infty}(X)\subset \PV(X)$, we conclude that $(\PV_{f,\infty}(X), \bar\pat_f, \pat)$ forms a dGBV subalgebra.

\end{proof}

\subsection{Hodge-to-de Rham degeneration}
Let $(\A, d, \Delta, \fbracket{-,-})$ be a dGBV algebra (either $\Z$ or $\Z/2\Z$-graded). There are two naturally associated (odd) differential graded Lie algebas:
$$
   \text{classical}: \bracket{\A, d, \fbracket{-,-}}, \quad\quad \text{quantum}: \bracket{\A[[u]], d+u\Delta, \fbracket{-,-} }. 
$$
Here in the quantum one, $u$ is a formal even variable. The quantum differential graded Lie algeba reduces to the classical one in the limit $u\to 0$. There $u$ plays the role of (formal) quantum parameter. 

Recall that the odd differential graded Lie algebra $(\A, d, \fbracket{-,-})$  is called \emph{smooth formal} if there exists a versal solution to the associated \MC equation, i.e., a degree $0$ element 
$$
\Gamma =\sum \mu_i t^i + \sum \gamma_{ij}t^it^j+ \sum \gamma_{ijk}t^it^jt^k+\cdots
$$
which satisfies 
$$
d \Gamma+{1\over 2}\fbracket{\Gamma, \Gamma}=0. 
$$
Here $\{\gamma_i\}$ is a basis of the cohomology $H(V, d)$. $\{t^i\}$ is the dual coordinate, viewed as a basis of the linear dual of $H(\A, d)$.  And $\gamma_{i_1\cdots i_k}\in V$. 

This definition extends to the quantum case. The quantum dgLa $\bracket{\A[[u]], d+u\Delta, \fbracket{-,-} }$ is called \emph{smooth formal} if there exists a degree $0$ element 
$$
\tilde \Gamma =\sum \tilde \mu_i t^i + \sum \tilde \gamma_{ij}t^it^j+ \sum \tilde \gamma_{ijk}t^it^jt^k+\cdots, \quad \tilde \gamma_{i_1 \cdots i_k}\in \A[[u]]
$$
which satisfies 
$$
(d+u \Delta) \Gamma+{1\over 2}\fbracket{\Gamma, \Gamma}=0. 
$$
Versality requires $\{\tilde \gamma_i\}$ represents a $\C[[u]]$-linear basis of $H(\A[[u]], d+u \Delta)$.  If we expand 
$
\tilde \gamma_i=\gamma_i + O(u) 
$, then the leading term $\gamma_i$ forms a basis of  $H(\A, d)$ and  $\{t^i\}$ is the dual coordinate. It is easy to see that the quantum version of smooth formal implies the classical version by taking the limit $u\to 0$. 

The smooth formality of the above quantum differential graded Lie algebra is related to the degeneration of a spectral sequence associated to the $u$-adic filtration. Precisely, by Theorem 2 of \cite{Te} (see also \cite{KKP}), the quantum differential graded Lie algebra $\bracket{\A[[u]], d+u\Delta, \fbracket{-,-} }$ is smooth formal if and only if the spectral sequence associatd to the filtration $\{F^p=u^p \A[[u]]\}$ of the complex $(\A[[u]], d+u\Delta)$ degenerates at the $E_1$-term. This amounts to saying that there exists a representative basis $\gamma_i$ of $H(\A, d)$ which extends to $\tilde \gamma_i=\gamma_i+O(u)\in \A[[u]]$ such that 
$$
   (d+u\Delta) \tilde \gamma_i=0. 
$$
It follows that $\{\tilde \gamma_i\}$ represents a $\C[[u]]$-linear basis of $H(\A[[u]], d+u\Delta)$. 

There is a vast generalization of this situation in the categorical world. The degeneration of the above spectral sequence was conjectured by Kontsevich and Soibelman \cite{KS} to hold for the Hochschild complex of  smooth and proper DG category  over a field of characteristic 0. This is proved by Kaledin \cite{Ka1, Ka2} in great generality for $\Z$-graded case. There the spectral sequence plays the role of Hodge-to-de Rham degeneration on non-commutative spaces. The categorical phase of Landau-Ginzburg models is described by matrix factorizations. Since matrix factorization DG categories are $\Z/2\Z$-graded, Kaledin's proof does not directly apply but need a variant of modification that has not been done yet \footnote{the authors would like to thank Kaledin for explaining this.}. 

Our goal in this section is to prove the  Hodge-to-de Rham degeneration on bounded Calabi-Yau geometries in the geometric phase of Landau-Ginzburg models.  We start with a useful lemma. 

\begin{lemma}\label{weak-ddbar} On  differential forms $\A_{f,\infty}(X)$, we have
$$
 \ker(\dbar_f)\cap \pa (\ker \pa_f) \subset \im (\dbar_f). 
$$
Precisely, if $\alpha \in  \ker(\dbar_f)\cap \pa (\ker \pa_f)$, then $\alpha=\bar\pat_f \bar\pat_f^* G_{\A}\alpha$. 
\end{lemma}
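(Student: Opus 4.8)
The plan is to use the Hodge decomposition on $\A_{f,\infty}(X)$ (Corollary \ref{cor-Hodge-decomposition}), which is available thanks to the discreteness of the spectrum of $\Delta_f$. Write $\alpha = \pat\beta$ with $\beta \in \A_{f,\infty}(X) \cap \ker(\pat_f)$. Since $\alpha \in \ker(\dbar_f)$, by Hodge decomposition it suffices to prove that its harmonic projection $P_{\A}\alpha$ vanishes: then automatically $\alpha = \bar\pat_f\bar\pat_f^* G_{\A}\alpha$ (the remaining piece $\bar\pat_f^*\bar\pat_f G_{\A}\alpha$ is killed because $\alpha$ is $\bar\pat_f$-closed). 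So the whole statement reduces to the claim $P_{\A}(\pat\beta) = 0$ for $\beta \in \ker(\pat_f)$.

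To show $P_{\A}(\pat\beta) = 0$, I would expand $\beta = \sum_{p} \beta^{p}$ into Hodge components (each lying in $\A_{f,\infty}(X)$ by Corollary \ref{Hodge_component}/\ref{other_op}); each individual component can be acted on by $\pat$ and $\pat_f = \pat + d\bar f\wedge$ because of Corollary \ref{other_op}. From $\pat_f\beta = 0$, comparing Hodge bidegrees gives $\pat\beta^{(n,q)} = 0$ for the top antiholomorphic-... — more precisely, one gets a telescoping system exactly as in the (commented-out) Lemma \ref{action_by_partial}: $\pat\beta^{(p,q)} = -d\bar f\wedge\beta^{(p+1,q-1)} = -\pat_f(\beta^{(p+1,q-1)}) + \pat\beta^{(p+1,q-1)}$, and summing these relations shows that $\pat\beta$ is $\pat_f$-exact in $\A_{f,\infty}(X)$, say $\pat\beta = \pat_f\gamma$ with $\gamma \in \A_{f,\infty}(X)$. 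Now I invoke the second orthogonal Hodge decomposition, the one for the operator $\pat_f$ (available on $\A_{f,\infty}(X)$ by Corollary \ref{other_op}, and whose harmonic space is again $\mathcal H_{\A} = \ker(\Delta_f)$ by the corollary identifying $\Delta_f = [\pat_f,\pat_f^*]$): a $\pat_f$-exact form has zero $\Delta_f$-harmonic part, i.e. $P_{\A}(\pat\beta) = P_{\A}(\pat_f\gamma) = 0$.

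Finally, assemble: $\alpha = \pat\beta$ is $\bar\pat_f$-closed (given) and has $P_{\A}\alpha = 0$; applying the $\bar\pat_f$-Hodge decomposition $\alpha = P_{\A}\alpha + \bar\pat_f\bar\pat_f^* G_{\A}\alpha + \bar\pat_f^*\bar\pat_f G_{\A}\alpha$ and using $\bar\pat_f\alpha = 0$ (which forces $\bar\pat_f^*\bar\pat_f G_{\A}\alpha = 0$ since $G_{\A}$ commutes with $\bar\pat_f$ and $\langle \bar\pat_f^*\bar\pat_f G_{\A}\alpha, \bar\pat_f^*\bar\pat_f G_{\A}\alpha\rangle = \langle \bar\pat_f G_{\A}\alpha, \bar\pat_f\bar\pat_f^*\bar\pat_f G_{\A}\alpha\rangle = \langle G_{\A}\alpha, \bar\pat_f\alpha \rangle$-type manipulation, or simply from orthogonality of the decomposition applied to $0 = \bar\pat_f\alpha$), we get $\alpha = \bar\pat_f\bar\pat_f^* G_{\A}\alpha$, as claimed.

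The main obstacle I anticipate is making the ``telescoping'' step fully rigorous while staying inside $\A_{f,\infty}(X)$: one must be sure that each Hodge component of $\beta$, and each partial sum $\gamma = \sum_{\text{suitable }(p,q)}\beta^{p,q}$ used to exhibit $\pat\beta$ as $\pat_f$-exact, genuinely lies in $\A_{f,\infty}(X)$ so that the $\pat_f$-Hodge decomposition applies to it — this is exactly what Corollaries \ref{Hodge_component} and \ref{other_op} are for, so the obstacle is bookkeeping rather than a genuine analytic difficulty. A secondary subtlety is confirming that the two Hodge decompositions (for $\bar\pat_f$ and for $\pat_f$) share the same harmonic space and the same projection $P_{\A}$; this follows from $\Delta_f = [\bar\pat_f,\bar\pat_f^*] = [\pat_f,\pat_f^*]$ established earlier, but it is worth stating explicitly since the whole argument pivots on it.
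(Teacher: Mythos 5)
Your proposal is correct and follows essentially the same route as the paper: your componentwise telescoping of $\pa_f\beta=0$ produces exactly the $\pa_f$-primitive $-W\beta$, where $W$ is the operator multiplying the $(p,q)$-component by $p$, which the paper obtains in one line from the commutator identity $\pa=[W,\pa_f]$ applied to $\beta\in\ker\pa_f$. The remaining steps — vanishing of the harmonic projection because the $\pa_f$-Hodge decomposition has the same harmonic space $\ker\Delta_f$, then the $\dbar_f$-Hodge decomposition together with $\dbar_f\alpha=0$ — coincide with the paper's argument.
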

\begin{proof}Let us introduce the following operator
$$
W: \A_{f,\infty}(X)\to \A_{f,\infty}(X), \quad \beta=\sum_{p,q}\beta^{p,q}\to \sum_{p,q}p \beta^{p,q}. 
$$ 
Here $\beta^{p,q}$ is the $(p,q)$-form component of $\beta$. By Corollary \ref{Hodge_component}, $W$ is well-defined. Observe that 
$$
  \pa=[W, \pa_f]. 
$$
Let $\alpha \in \A_{f,\infty}(X)$ lies in $\ker(\dbar_f)\cap \pa(\ker \pa_f)$. Assume $\alpha=\pa \beta$, $\pa_f\beta=0$. Then
$$
\dbar_f \alpha=0, \quad \alpha= \pa \beta=[W, \pa_f]\beta=-\pa_f W \beta.  
$$
Recall that we have Hodge decomposition on $\A_{f,\infty}(X)$ by Corollary \ref{other_op}. The second equation implies that $\alpha$ has no harmonic component. Combining the first equation, we find
$$
\alpha=\bar\pat_f \bar\pat_f^* G_{\A}\alpha. 
$$
By Corollary \ref{other_op} and Lemma \ref{trivial_op}, $\bar\pat_f^* G_{\A}\alpha$ lies in $ \A_{f,\infty}(X)$. Hence $\alpha\in \im(\dbar_f)$. 
\end{proof}

\begin{theorem} \label{homotopy_abelian}
Let $(X, g, \Omega_X)$ be a bounded Calabi-Yau geometry and $f$ be a holomorphic function satisfying the strong elliptic condition \eqref{tame}. Let $\PV_{f,\infty}(X)$ be as in Definition \ref{f-adapted-PV}. Then the quantum differential graded Lie algebra $(\PV_{f,\infty}(X)[[u]], \dbar_f+u \pa, \fbracket{-,-})$ is smooth formal. 
\end{theorem}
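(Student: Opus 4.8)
The plan is to reduce smooth formality of the quantum dgLa to an $E_1$-degeneration and then to prove that degeneration by a harmonic-analytic induction in the formal variable $u$, in parallel with the compact Calabi-Yau case but with the weak $\pa\dbar_f$-lemma (Lemma \ref{weak-ddbar}) replacing the usual $\pa\dbar$-lemma.

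First I would invoke the equivalence recalled just before the statement: by Theorem~2 of \cite{Te} (see also \cite{KKP}), the quantum dgLa $(\PV_{f,\infty}(X)[[u]], \dbar_f + u\pa, \fbracket{-,-})$ is smooth formal if and only if the spectral sequence of the $u$-adic filtration on $(\PV_{f,\infty}(X)[[u]], \dbar_f + u\pa)$ degenerates at $E_1$, and the latter holds as soon as a representative basis $\{\gamma_i\}$ of $H(\PV_{f,\infty}(X), \dbar_f)$ lifts to $(\dbar_f + u\pa)$-closed elements $\tilde\gamma_i = \gamma_i + O(u) \in \PV_{f,\infty}(X)[[u]]$. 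Using the isomorphism $\up$ together with Lemma \ref{volume_condition}, I would transport the whole problem to differential forms, identifying $(\PV_{f,\infty}(X)[[u]], \dbar_f + u\pa)$ with $(\A_{f,\infty}(X)[[u]], \dbar_f + u\pa)$ and thereby making the $L^2$-Hodge theory of Part~I available: the $\Delta_f$-Hodge decomposition on $\A_{f,\infty}(X)$, the Green operator $G_{\A}$, the generalized K\"ahler-Hodge identities, and Lemma \ref{weak-ddbar}. It then suffices to show that every $\Delta_f$-harmonic form $\gamma_0 \in \mathcal{H}_{\A}$ admits a series $\tilde\gamma = \sum_{k\geq 0} u^k \gamma_k \in \A_{f,\infty}(X)[[u]]$ with $(\dbar_f + u\pa)\tilde\gamma = 0$, i.e.\ a sequence $\gamma_k$ in $\A_{f,\infty}(X)$ with $\dbar_f\gamma_{k+1} = -\pa\gamma_k$ for all $k \geq 0$.

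I would then build the $\gamma_k$ by induction while carrying along the extra invariant $\gamma_k \in \ker(\pa_f)$, where $\pa_f := \pa + d\bar f\wedge$. The base case is automatic: a $\Delta_f$-harmonic form lies in $\ker\pa_f$ because $\Delta_f = [\pa_f, \pa_f^*]$ (the corollary of the generalized K\"ahler-Hodge identities). For the inductive step, suppose $\gamma_0,\dots,\gamma_k \in \A_{f,\infty}(X)$ satisfy the recursion and the invariant up to order $k$. Then $\pa\gamma_k$ is $\dbar_f$-closed: from $[\dbar_f,\pa]=0$ one has $\dbar_f\pa\gamma_k = -\pa\dbar_f\gamma_k$, which vanishes for $k=0$ and equals $\pa^2\gamma_{k-1}=0$ for $k\geq 1$; and $\pa\gamma_k \in \pa(\ker\pa_f)$ by the invariant. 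Hence Lemma \ref{weak-ddbar} applies and gives $\pa\gamma_k = \dbar_f\dbar_f^* G_{\A}\pa\gamma_k$, so I set $\gamma_{k+1} := -\dbar_f^* G_{\A}\pa\gamma_k$. This solves $\dbar_f\gamma_{k+1} = -\pa\gamma_k$, and $\gamma_{k+1}\in\A_{f,\infty}(X)$ because that space is preserved by $\pa$ (Corollary \ref{other_op}) and by $G_{\A}$ and $\dbar_f^*$ (Lemma \ref{trivial_op}). Finally the invariant persists: using $[\pa_f,\dbar_f^*]=0$ (generalized K\"ahler-Hodge identities), $[\pa_f,G_{\A}]=0$ (a consequence of $[\pa_f,\Delta_f]=0$) and $[\pa_f,\pa]=0$, one computes $\pa_f\gamma_{k+1} = \dbar_f^* G_{\A}\,\pa_f\pa\gamma_k = -\dbar_f^* G_{\A}\,\pa\,\pa_f\gamma_k = 0$. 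Running this construction over a basis of $H(\A_{f,\infty}(X),\dbar_f)$ produces the required closed lifts, hence the $E_1$-degeneration, hence smooth formality.

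The serious analytic input — the weak $\pa\dbar_f$-lemma and the fact that $\A_{f,\infty}(X)$ carries a genuine $L^2$-Hodge theory closed under all of $\dbar_f, \dbar_f^*, \pa, \pa_f, G_{\A}$ — is already in place from Part~I, so the remaining difficulty in this theorem is organizational: one must arrange the $u$-adic recursion so that every correction $\gamma_{k+1}$ stays in $\ker(\pa_f)$, which is precisely what keeps Lemma \ref{weak-ddbar} applicable at each stage. This is where the compatibility of the $\dbar_f$- and $\pa_f$-Hodge structures (the generalized K\"ahler-Hodge identities and $[\Delta_f,\pa_f]=0$) is indispensable, and it is the step I would check most carefully.
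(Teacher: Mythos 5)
Your proposal is correct and follows essentially the same route as the paper: reduce smooth formality to $E_1$-degeneration, pass to $\A_{f,\infty}(X)$ via Lemma \ref{volume_condition}, start from a $\Delta_f$-harmonic representative, and iterate $\gamma_{k+1}=-\dbar_f^* G_{\A}\pa\gamma_k$ using Lemma \ref{weak-ddbar}, with $\pa_f$-closedness of each correction as the induction invariant. Your explicit bookkeeping of the invariant $\gamma_k\in\ker(\pa_f)$ is exactly what the paper's inductive step does, only stated more explicitly.
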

\begin{proof} It is sufficient to show that the spectral sequence associated to the $u$-adic filtration of the differential complex $(\PV_{f,\infty}(X)[[u]], \dbar_f+u \pa)$ degenerates at the $E_1$-term. By Lemma \ref{volume_condition}, we can work with $\A_{f,\infty}(X)$ instead. 

For each class $[\alpha_0] \in H(\A_{f,\infty}(X), \bar\pat_f)$, we can find a $\Delta_f$-harmonic representative $\alpha_0$. We show that $\alpha_0$ can be ``corrected" to an element $\alpha= \alpha_0+ \alpha_1 u + \cdots \in \A_{f,\infty}(X)[[u]]$ such that $(\dbar_f+ u\pa)\alpha=0$. This will prove the $E_1$- degeneration. The equation $(\dbar_f+ u\pa)\alpha=0$ is equivalent to 
$$
  \pa \alpha_r=-\dbar_f \alpha_{r+1}, \quad r\geq 0.  
$$

Since $\alpha_0$ is harmonic, we have 
$$
\pa_f \alpha_0=0, \quad \dbar_f(\pa \alpha_0)=-\pa (\dbar_f\alpha_0)=0.
$$ 
Lemma \ref{weak-ddbar} applies and $\pa \alpha_0= \bar\pat_f \bar\pat_f^* G_{\A} \pat \alpha_0$. Therefore we can choose 
$$
\alpha_1=-\bar\pat_f^* G_{\A} \pat \alpha_0. 
$$

To find $\alpha_2$ next, observe that 
$$
 \dbar_f (\pa \alpha_1)=-\pa \dbar_f \alpha_1=\pa^2 \alpha_0=0, \quad \pa_f \alpha_1=-\pa_f\bar\pat_f^* G_{\A} \pat \alpha_0=-\bar\pat_f^* G_{\A} \pat  \pa_f \alpha_0=0. 
$$
Apply Lemma \ref{weak-ddbar} again, we find  $\pa \alpha_1= \bar\pat_f \bar\pat_f^* G_{\A} \pat \alpha_1$ and we can choose 
$$
\alpha_2=-\bar\pat_f^* G_{\A} \pat \alpha_1= (-\bar\pat_f^* G_{\A} \pat)^2 \alpha_0. 
$$
Inductively, by the same argument, we can solve by choosing
$$
\alpha_r= (-\bar\pat_f^* G_{\A} \pat)^r \alpha_0.
$$
\end{proof}

The proof of $E_1$ degeneration in the above theorem leads to the following:
\begin{corollary} \label{complex_splitting}
There exists a chain complex splitting
\begin{align*}
(\A_{f,\infty}(X)[[u]], Q_f) \overset{s} {\underset{t}{\leftrightarrows}} (\A_{f,\infty}(X), \bar\pat_f) \longrightarrow 0,
\end{align*}
with $t$ given by setting $u=0$ and $s$ given by
\begin{equation*}
s(\phi) = \phi + u\, \pat \bar\pat_f^*G_{\A} \phi + \sum_{i \geq 1} (-u \bar\pat_f^*G_{\A}\pat)^i P_{\A} \phi.
\end{equation*}
Recall $P_{\A}$ is the harmonic projection. 
\end{corollary}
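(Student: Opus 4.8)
The statement is a direct consequence of the construction in the proof of Theorem \ref{homotopy_abelian}, so the plan is to read off the formula for $s$ from that argument and then verify the three required properties: that $s$ is well defined, that $t\circ s=\mathrm{id}$, and that $s$ is a chain map (with $t$ visibly one).

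First I would check that $s$ maps $\A_{f,\infty}(X)$ into $\A_{f,\infty}(X)[[u]]$. This is immediate: by Lemma \ref{trivial_op} the operators $\bar\pat_f^*$ and $G_{\A}$ preserve $\A_{f,\infty}(X)$, by Corollary \ref{other_op} so does $\pat$, and $P_{\A}$ preserves it trivially; hence for every $\phi\in\A_{f,\infty}(X)$ each coefficient of $u^k$ in $s(\phi)$ is a finite composition of such operators applied to $\phi$ and therefore lies in $\A_{f,\infty}(X)$, while convergence is automatic in the $u$-adic topology. The map $t$ is evaluation at $u=0$, which visibly intertwines $Q_f=\bar\pat_f+u\pat$ with $\bar\pat_f$, and $t\circ s=\mathrm{id}$ because the $u^0$-coefficient of $s(\phi)$ is $\phi$.

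The substantive point is $Q_f\circ s=s\circ\bar\pat_f$, which I would establish by computing the two sides separately. On the right, $P_{\A}\bar\pat_f\phi=0$, so $s(\bar\pat_f\phi)=\bar\pat_f\phi+u\,\pat\,\bar\pat_f^*G_{\A}\bar\pat_f\phi$; then $G_{\A}\bar\pat_f=\bar\pat_f G_{\A}$ together with the Hodge decomposition $\mathrm{id}=P_{\A}+\bar\pat_f\bar\pat_f^*G_{\A}+\bar\pat_f^*\bar\pat_f G_{\A}$ gives $\bar\pat_f^*G_{\A}\bar\pat_f\phi=\phi-P_{\A}\phi-\bar\pat_f\bar\pat_f^*G_{\A}\phi$, so $s(\bar\pat_f\phi)=\bar\pat_f\phi+u\,\pat(\phi-P_{\A}\phi-\bar\pat_f\bar\pat_f^*G_{\A}\phi)$. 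On the left, $Q_f\phi=\bar\pat_f\phi+u\,\pat\phi$; next $Q_f(u\,\pat\bar\pat_f^*G_{\A}\phi)=-u\,\pat\bar\pat_f\bar\pat_f^*G_{\A}\phi$ using $\pat^2=0$ and $\bar\pat_f\pat=-\pat\bar\pat_f$; and writing $\beta_i:=(-\bar\pat_f^*G_{\A}\pat)^iP_{\A}\phi$, the relations $\bar\pat_f\beta_0=0$ and $\pat\beta_i=-\bar\pat_f\beta_{i+1}$ established in the proof of Theorem \ref{homotopy_abelian} yield $Q_f\sum_{i\geq1}u^i\beta_i=-u\,\pat P_{\A}\phi$. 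Summing the three contributions, $Q_f s(\phi)=\bar\pat_f\phi+u\,\pat(\phi-\bar\pat_f\bar\pat_f^*G_{\A}\phi-P_{\A}\phi)$, which agrees with $s(\bar\pat_f\phi)$.

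The only delicate bookkeeping is arranging the cancellation of the $\pat P_{\A}\phi$ contributions, which is precisely where the Hodge identity and the recursion of Theorem \ref{homotopy_abelian} interact; everything else is formal. Once this is in place, the corollary follows, and together with the $E_1$-degeneration of Theorem \ref{homotopy_abelian} it exhibits $(\A_{f,\infty}(X),\bar\pat_f)$ as a retract of $(\A_{f,\infty}(X)[[u]],Q_f)$.
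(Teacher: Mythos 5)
Your proof is correct and follows essentially the same route as the paper: both arguments rest on Lemma \ref{weak-ddbar} (equivalently, the recursion $\pat\beta_i=-\bar\pat_f\beta_{i+1}$ from the proof of Theorem \ref{homotopy_abelian}), the Hodge decomposition with $[G_{\A},\bar\pat_f]=0$, and $P_{\A}\bar\pat_f=0$. The only difference is organizational — you verify $Q_f\circ s=s\circ\bar\pat_f$ by evaluating both sides on $\phi$ and telescoping, whereas the paper packages the same cancellations into the operator identities $\pat=s_1\bar\pat_f-\bar\pat_f s_1$ and $\pat s_i=s_{i+1}\bar\pat_f-\bar\pat_f s_{i+1}$.
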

\begin{proof} Let $\phi \in \A_{f,\infty}(X)$. By Lemma \ref{weak-ddbar}, we have
\begin{equation*}
\pat P_{\A} \phi = \bar\pat_f \bar\pat_f^* G_{\A} \pat P_{\A} \phi.
\end{equation*}
Let 
$s_1 := \pat \bar\pat_f^* G_{\A} - \bar\pat_f^*G_{\A}\pat P_{\A}$. Using the above identity 
and $P_{\A} \bar\pat_f \phi = 0$, we find
\begin{equation*}
\pat = s_1 \bar\pat_f - \bar\pat_f s_1.
\end{equation*}
Let $s_i := (-\bar\pat_f^*G_{\A}\pat)^i P_{\A}$ for $i \geq 1$, then using Lemma \ref{weak-ddbar} we find recursively
\begin{equation*}
\pat s_i = s_{i+1} \bar\pat_f - \bar\pat_f s_{i+1}.
\end{equation*}
Finally, let $s := id + \sum_{i\geq 1} u^i s_i$, we find
\begin{equation*}
Q_f \circ s = s \circ \bar\pat_f.
\end{equation*}
This proves the corollary. 
\end{proof}

\begin{corollary}\label{harmonic-spliting}
$H(\A_{f,\infty}(X)[[u]], Q_f)$ is a free $\mathbb{C}[[u]]$ module.
\end{corollary}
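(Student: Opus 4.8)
The plan is to promote the chain-level splitting of Corollary \ref{complex_splitting} to an honest isomorphism of $\C[[u]]$-modules on cohomology, and then read off freeness. Write $V:=\A_{f,\infty}(X)$ and $d:=\bar\pat_f$. Recall from Corollary \ref{complex_splitting} that $s\colon (V,d)\to (V[[u]],Q_f)$ is a chain map with $Q_f\circ s=s\circ d$ and $s=\mathrm{id}+\sum_{i\geq 1}u^i s_i$, so that $t\circ s=\mathrm{id}_V$, where $t$ is reduction modulo $u$; note also that $t$ is a chain map $(V[[u]],Q_f)\to (V,d)$ since $Q_f=\bar\pat_f+u\pat$ and the $u\pat$ term dies at $u=0$. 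First I would extend $s$ by $\C[[u]]$-linearity to $\tilde s\colon V[[u]]\to V[[u]]$; because $d$ and $Q_f$ are $\C[[u]]$-linear and $Q_f\circ s=s\circ d$, this $\tilde s$ is a morphism of complexes $(V[[u]],d)\to (V[[u]],Q_f)$, where on the source the differential is just $d$ acting coefficientwise.

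Next I would invoke the filtered comparison theorem for the $u$-adic filtration $F^p=u^pV[[u]]$, which on both complexes is exhaustive, Hausdorff, and complete. The map $\tilde s$ is filtered, and the $u\pat$ part of $Q_f$ strictly raises filtration degree, so on associated graded both complexes are $(V,d)$ in each weight $p$, and $\mathrm{gr}^p(\tilde s)=\mathrm{id}_V$ (since $s\bmod u=t\circ s=\mathrm{id}$ and $\tilde s$ is $\C[[u]]$-linear). Hence $\mathrm{gr}(\tilde s)$ is a quasi-isomorphism, and the standard comparison theorem for complete filtered complexes forces $\tilde s$ itself to be a quasi-isomorphism. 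On the other hand, since $d$ acts coefficientwise on $V[[u]]$, one has $\ker d$ and $\mathrm{im}\,d$ in $V[[u]]$ equal to $(\ker d)[[u]]$ and $(\mathrm{im}\,d)[[u]]$, and the exact operation $W\mapsto W[[u]]$ applied to $0\to \mathrm{im}\,d\to\ker d\to H(V,d)\to 0$ gives $H(V[[u]],d)=H(V,d)[[u]]$. Combining,
$$
H\bigl(\A_{f,\infty}(X)[[u]],Q_f\bigr)\;\cong\;H(V[[u]],d)\;=\;H\bigl(\A_{f,\infty}(X),\bar\pat_f\bigr)[[u]]
$$
as $\C[[u]]$-modules. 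By Theorem \ref{thm-duality} the space $H(\A_{f,\infty}(X),\bar\pat_f)$ is finite-dimensional over $\C$, so $H(\A_{f,\infty}(X),\bar\pat_f)[[u]]\cong\C[[u]]^{\oplus N}$ is free, which is the claim.

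A self-contained alternative, more in the spirit of Theorem \ref{homotopy_abelian}, is to fix $\Delta_f$-harmonic representatives $\alpha_0^{(1)},\dots,\alpha_0^{(N)}$ of a $\C$-basis of $H(\A_{f,\infty}(X),\bar\pat_f)$, form the $Q_f$-closed lifts $\alpha^{(i)}=\sum_{r\geq 0}(-\bar\pat_f^*G_{\A}\pat)^r\alpha_0^{(i)}\,u^r$ constructed there, and verify they are a $\C[[u]]$-basis of $H(\A_{f,\infty}(X)[[u]],Q_f)$: spanning follows by repeatedly subtracting off the lowest-order coefficient of a given $Q_f$-closed class using $\bar\pat_f$-exactness together with $u$-adic completeness, while $\C[[u]]$-linear independence follows by reducing a hypothetical relation modulo the lowest occurring power of $u$ and using independence of the $[\alpha_0^{(i)}]$. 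Either way, the only point needing care is the completeness bookkeeping for the $u$-adic filtration — that the infinite cancellation or limiting procedures converge — which is precisely what upgrades the conclusion from a $\C[u]$-statement to a $\C[[u]]$-statement; there is no analytic obstacle here beyond what Corollary \ref{complex_splitting} and Theorem \ref{thm-duality} already provide.
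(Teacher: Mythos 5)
Your proposal is correct, but your primary argument is organized differently from the paper's. The paper's proof is exactly your sketched ``alternative'': it takes a $\Delta_f$-harmonic basis $\phi_1,\dots,\phi_\mu$ of $H(\A_{f,\infty}(X),\bar\pat_f)$, applies the splitting $s$ of Corollary \ref{complex_splitting} (using $G_{\A}\phi_k=0$, $P_{\A}\phi_k=\phi_k$, so $s(\phi_k)=\phi_k+\sum_{i\geq 1}(-u\bar\pat_f^*G_{\A}\pat)^i\phi_k$), and asserts that these classes generate $H(\A_{f,\infty}(X)[[u]],Q_f)$ over $\C[[u]]$; the verification of spanning and independence is left implicit, which is precisely the bookkeeping you spell out. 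Your main route instead extends $s$ $\C[[u]]$-linearly to $\tilde s\colon (\A_{f,\infty}(X)[[u]],\bar\pat_f)\to(\A_{f,\infty}(X)[[u]],Q_f)$, observes that it is a filtered map for the complete, exhaustive $u$-adic filtration inducing the identity on associated graded, and invokes the comparison principle for complete filtered complexes (equivalently, the successive-approximation argument showing a complete filtered complex with acyclic associated graded is acyclic, applied to the cone) to conclude $\tilde s$ is a quasi-isomorphism; combined with $H(V[[u]],\bar\pat_f)\cong H(V,\bar\pat_f)[[u]]$ and finite-dimensionality from Theorem \ref{thm-duality}, this gives freeness. This abstract route is sound and buys a basis-free identification $H(\A_{f,\infty}(X)[[u]],Q_f)\cong H(\A_{f,\infty}(X),\bar\pat_f)[[u]]$, and it makes explicit both the completeness input and the linear-independence half of freeness that the paper's one-line proof glosses over; the paper's (and your alternative) route is more concrete, producing the explicit harmonic generators $\phi_k(u)$ that are then used for the splitting $\sigma$ in Proposition \ref{existence-good-basis}. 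Both ultimately rest on the same ingredients: the chain-level splitting from Corollary \ref{complex_splitting} (i.e.\ the degeneration mechanism of Theorem \ref{homotopy_abelian}) and finite-dimensionality of the $\bar\pat_f$-cohomology.
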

\begin{proof}
Let $(\phi_1, \cdots, \phi_{\mu})$ be a $\Delta_f$-harmonic basis of $H(\A_{f,\infty}(X), \bar\pat_f)$.
Since $G_{\A} \phi_k = 0$ and $P_{\A} \phi_k = \phi_k$, let
\begin{equation*}
\phi_k(u) := s(\phi_k) = \phi_k + \sum_{i \geq 1} (-u \bar\pat_f^*G_{\A}\pat)^i \phi_k.
\end{equation*}
Then $\phi_1(u), \cdots, \phi_{\mu}(u)$ generate $H(\A_{f,\infty}(X)[[u]], Q_f)$ over $\mathbb{C}[[u]]$.
\end{proof}

Now we extend our results to other familiar spaces. 
\begin{theorem}\label{lem-quasi-PV}
Let $(X, g, \Omega_X)$ be a bounded Calabi-Yau geometry and $f$ be a holomorphic function satisfying the strong elliptic condition \eqref{tame}. Then the inclusion of complexes
\begin{equation*}
(\PV_c(X), \bar\pat_f) \xlongrightarrow{i_1} (\PV_{f,\infty}(X), \bar\pat_f) \xlongrightarrow{i_2} (\PV(X), \bar\pat_f).
\end{equation*}
are quasi-isomorphisms. 
\end{theorem}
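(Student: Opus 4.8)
The plan is to deduce this theorem from the comparison result for differential forms already established in Theorem~\ref{quasi_iso_stronger}, by transporting everything through the contraction isomorphism $\up$ of Section~\ref{sec-PV}.

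First I would recall that $\up\colon\PV(X)\to\A(X)$, $\alpha\mapsto\alpha\vdash\Omega_X$, is induced by a fibrewise, $C^\infty(X)$-linear bundle isomorphism, namely contraction against the nowhere-vanishing form $\Omega_X$. Hence it carries $\PV^{i,j}(X)$ onto $\A^{n-i,j}(X)$, preserves supports, and in particular restricts to an isomorphism $\PV_c(X)\iso\A_c(X)$. By the very definition of the operator $\bar\pat_f$ on polyvector fields, namely $\up^{-1}\circ\bar\pat_f\circ\up$, the map $\up$ is tautologically an isomorphism of complexes $(\PV(X),\bar\pat_f)\iso(\A(X),\bar\pat_f)$. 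Next, since $(X,g,\Omega_X)$ is a bounded Calabi-Yau geometry and $f$ is strongly elliptic, Lemma~\ref{volume_condition} gives $\up^{-1}(\A_{f,\infty}(X))=\PV_{f,\infty}(X)$, so $\up$ also restricts to an isomorphism of subcomplexes $(\PV_{f,\infty}(X),\bar\pat_f)\iso(\A_{f,\infty}(X),\bar\pat_f)$.

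With these identifications in place, the inclusions $i_1,i_2$ sit in a commuting ladder over the pair of inclusions $(\A_c(X),\bar\pat_f)\into(\A_{f,\infty}(X),\bar\pat_f)\into(\A(X),\bar\pat_f)$ of Theorem~\ref{quasi_iso_stronger}, with the three vertical arrows being the isomorphisms of complexes just described. On cohomology the induced map of each $i_k$ therefore equals a composition $H(\up)^{-1}\circ H(\text{bottom map})\circ H(\up)$ of isomorphisms, so $i_1$ and $i_2$ are quasi-isomorphisms, which is the assertion.

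I do not expect a genuine obstacle here: the theorem is a formal corollary of Theorem~\ref{quasi_iso_stronger} once the dictionary $\up$ is set up, and the only non-tautological input is the identification $\up^{-1}(\A_{f,\infty}(X))=\PV_{f,\infty}(X)$, which is precisely Lemma~\ref{volume_condition} and ultimately rests on the boundedness of $\Omega_X$ and $\Theta_X$ in the sense of Definition~\ref{defn-bounded-CY}. If one wished to bypass the boundedness hypothesis on the volume form, one could instead rerun the homotopy construction of Theorem~\ref{quasi_iso_stronger} directly on $\PV(X)$ using the operator $V_f$; but this merely re-derives the same computation, so routing through $\up$ is the efficient path.
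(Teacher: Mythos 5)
Your argument is correct and is exactly the paper's own proof: the paper deduces Theorem~\ref{lem-quasi-PV} in one line from Theorem~\ref{quasi_iso_stronger} and Lemma~\ref{volume_condition}, i.e.\ by transporting the comparison for differential forms through the contraction isomorphism $\up$, just as you do (your version simply spells out the commuting ladder and the support-preservation of $\up$ that the paper leaves implicit).
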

\begin{proof}This follows from Theorem \ref{quasi_iso_stronger} and Lemma \ref{volume_condition}. 

\end{proof}

\begin{theorem}\label{thm-quasi-smooth-formal}
Let $(X, g, \Omega_X)$ be a bounded Calabi-Yau geometry and $f$ be a holomorphic function satisfying the strong elliptic condition \eqref{tame}. Then the inclusion $\PV_{f,\infty}(X)\subset \PV(X)$ induces a quasi-isomorphism between two quantum differential graded Lie algebras 
$$
(\PV_{f,\infty}(X)[[u]], \dbar_f+u \pa, \fbracket{-,-}) \into (\PV(X)[[u]], \dbar_f+u \pa, \fbracket{-,-}).
$$
 In particular, $(\PV(X)[[u]], \dbar_f+u \pa, \fbracket{-,-})$ is smooth formal. 
\end{theorem}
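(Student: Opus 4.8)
The plan is to combine the three results established just above: the comparison quasi-isomorphism of Theorem \ref{lem-quasi-PV} (applied with the formal variable $u$ adjoined), the smooth formality of the $f$-twisted model from Theorem \ref{homotopy_abelian}, and the general equivalence between smooth formality and $E_1$-degeneration recorded in the discussion of \cite{Te, KKP}. First I would upgrade Theorem \ref{lem-quasi-PV} to the $u$-adic setting: the operator $Q_f = \dbar_f + u\pa$ respects the $u$-adic filtration, and the inclusions $i_1, i_2$ are filtered, so by the comparison theorem for $(\A_c(X)((u)), Q_f) \to (\A_{f,\infty}(X)((u)), Q_f) \to (\A(X)((u)), Q_f)$ (equivalently its power-series version), together with Lemma \ref{volume_condition} transporting everything from $\A$ to $\PV$, the map
$$
(\PV_{f,\infty}(X)[[u]], \dbar_f + u\pa) \into (\PV(X)[[u]], \dbar_f + u\pa)
$$
is a quasi-isomorphism of complexes. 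Since both maps in Theorem \ref{lem-quasi-PV} are also compatible with the wedge product, hence with the BV bracket $\fbracket{-,-}$ built from $\pa$ and $\wedge$, this inclusion is in fact a quasi-isomorphism of quantum differential graded Lie algebras, which is the first assertion.

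Next, for the ``in particular'' clause, the point is that smooth formality is an invariant of the quasi-isomorphism type of the quantum dgLa. Concretely, by Theorem \ref{homotopy_abelian} the spectral sequence of the $u$-adic filtration on $(\PV_{f,\infty}(X)[[u]], \dbar_f + u\pa)$ degenerates at $E_1$; since the inclusion $i_2$ (with $u$ adjoined) is a filtered quasi-isomorphism, it induces an isomorphism of the associated spectral sequences, so the spectral sequence of $(\PV(X)[[u]], \dbar_f + u\pa)$ also degenerates at $E_1$. By Theorem 2 of \cite{Te} this $E_1$-degeneration is equivalent to smooth formality of $(\PV(X)[[u]], \dbar_f + u\pa, \fbracket{-,-})$, which finishes the proof. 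Alternatively, one can avoid invoking the abstract criterion twice and simply transport the explicit Maurer-Cartan solution: the versal solution $\tilde\Gamma$ constructed on $\PV_{f,\infty}(X)[[u]]$ pushes forward along $i_2$ to a Maurer-Cartan element of $\PV(X)[[u]]$ whose cohomology classes, by the quasi-isomorphism just proved, still form a $\C[[u]]$-basis of $H(\PV(X)[[u]], \dbar_f + u\pa)$, hence remain versal.

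The only genuinely substantive step is upgrading the comparison maps to filtered quasi-isomorphisms over $\C[[u]]$ (or $\C((u))$) and checking that the homotopies $R_\rho^u$, $T_\rho^u$ of the preceding section preserve the $f$-twisted space $\PV_{f,\infty}(X)[[u]]$ coefficientwise in $u$ — but this is exactly what was already verified in the proof of the $Q_f$-version of the comparison theorem, so for our purposes it is a matter of citation rather than new work. Everything else is formal bookkeeping: the compatibility of $i_1, i_2$ with products is immediate since they are inclusions, and the passage from $E_1$-degeneration to smooth formality is the cited result of \cite{Te}. I therefore expect no real obstacle; the proof is short and reduces cleanly to Theorem \ref{lem-quasi-PV}, Theorem \ref{homotopy_abelian}, Lemma \ref{volume_condition}, and \cite{Te}.
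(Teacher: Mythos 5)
Your proposal is correct and takes essentially the same route as the paper, whose own proof is a one-line reduction to Theorem \ref{homotopy_abelian} and Theorem \ref{lem-quasi-PV}: you merely spell out the steps the paper leaves implicit, namely the $u$-adic upgrade of the comparison (via the power-series version of the form-level comparison theorem together with Lemma \ref{volume_condition}) and the transfer of $E_1$-degeneration/smooth formality along the resulting quasi-isomorphism using \cite{Te}. No gaps.
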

\begin{proof}This is a direct consequence of Theorem \ref{homotopy_abelian} and Theorem \ref{lem-quasi-PV}. 

\end{proof}

\subsection{Higher residue and Frobenius manifold}
\begin{definition}We define the sesquilinear pairing 
$$
\KK_f: \PV_{f,\infty}(X)[[u]]  \times  \PV_{f,\infty}(X)[[u]] \to \C[[u]] 
$$
by
$$
\KK_f (f(u) \alpha, g(u)\beta)= f(u)g(-u) \int_X (\alpha\beta\vdash \Omega_X)\wedge \Omega_X. 
$$
\end{definition}


The following proposition is straight-forward to check. 

\begin{lemma}$\dbar_f$ is graded skew-symmetric and $\pa$ is graded symmetric with respect to the pairing $\KK_f$, i.e., 
$$
  \KK_f(\dbar_f \alpha, \beta)=-(-1)^{|\alpha|}\KK_f(\alpha, \dbar_f \beta), \quad  \KK_f(\pa \alpha, \beta)=(-1)^{|\alpha|}\KK_f(\alpha, \pa \beta), \quad \forall \alpha, \beta\in \PV_{f,\infty}(X). 
$$ 
\end{lemma}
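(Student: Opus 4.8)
The plan is to reduce both identities to the residue pairing $\K$ on differential forms (Section~\ref{sec-PV}, Proposition~\ref{prop-dbar-d}) via the contraction isomorphism $\up\colon\PV(X)\to\A(X)$. Recall that by construction $\up$ intertwines $\dbar_f$ and $\pa$ on polyvector fields with $\dbar_f$ and $\pat$ on forms, that $\up(\PV_{f,\infty}(X))=\A_{f,\infty}(X)$ by Lemma~\ref{volume_condition}, and that $\A_{f,\infty}(X)=\A_{-f,\infty}(X)$ since the twisted Sobolev norm only involves $|\nabla f|$. The one genuinely new point is a pointwise comparison of the two pairings: for $\alpha,\beta\in\PV_{f,\infty}(X)$,
\[
  \KK_f(\alpha,\beta)=\int_X(\alpha\wedge\beta\vdash\Omega_X)\wedge\Omega_X=\pm\int_X\up(\alpha)\wedge\up(\beta)=\pm\,\K(\up\alpha,\up\beta),
\]
where the sign depends only on the Hodge bidegrees; this is a local linear-algebra identity, checked from the explicit formula for $\vdash\Omega_X$ in Section~\ref{sec-PV} together with $\up(\alpha\wedge\beta)=\alpha\vdash\up(\beta)$, and convergence of all integrals is guaranteed by Lemma~\ref{volume_condition} and the decay of $\A_{f,\infty}(X)$-forms at infinity (the analogue of Lemma~\ref{maximum}).

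For $\dbar_f$ it is cleanest to argue directly. Write $\Tr(\mu):=\int_X\up(\mu)\wedge\Omega_X$, which vanishes on all but the component $\mu^{n,n}\in\PV^{n,n}(X)$, so that $\KK_f(\alpha,\beta)=\Tr(\alpha\wedge\beta)$. Since $\dbar_f$ is a derivation of degree $1$ for the wedge product (part of the dGBV structure on $\PV(X)$ used in Theorem~\ref{thm-dGBV-property}),
\[
  \KK_f(\dbar_f\alpha,\beta)+(-1)^{|\alpha|}\KK_f(\alpha,\dbar_f\beta)=\Tr\big(\dbar_f(\alpha\wedge\beta)\big).
\]
Now $\Tr\circ\dbar_f=0$: for $\nu\in\PV_{f,\infty}(X)$ one has $\up(\dbar_f\nu)=\dbar(\up\nu)+df\wedge\up\nu$, the second term contributes nothing since $df\wedge\Omega_X=0$ for bidegree reasons, and $\int_X\dbar(\up\nu)\wedge\Omega_X=\int_X\dbar\big(\up\nu\wedge\Omega_X\big)=0$ by Stokes' theorem, which applies because $\up\nu\wedge\Omega_X$ lies in $\A_{f,\infty}(X)$, hence decays at infinity, and $\A_c(X)$ is dense there (Lemma~\ref{density}). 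This yields $\KK_f(\dbar_f\alpha,\beta)=-(-1)^{|\alpha|}\KK_f(\alpha,\dbar_f\beta)$. Alternatively one may simply invoke Proposition~\ref{prop-dbar-d}, noting that on the $\PV^{n,n}$-component, where $\PV^{n+1,\bullet}(X)=0$, the operators $\dbar_f$ and $\dbar_{-f}$ act identically.

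For $\pa$ the same scheme applies, but the sign bookkeeping is the one delicate point. Transporting to forms, $\pa$ corresponds to $\pat$, and $\pat$ is graded skew-symmetric for $\K$ by Stokes (from $\int_X\pat(\up\alpha\wedge\up\beta)=0$, using $\pat\Omega_X=0$); feeding this through the comparison of the first paragraph converts skew-symmetry into the stated graded symmetry, the extra sign coming precisely from the shift of total degree by $n$ under $\up$ (namely $|\up\alpha|=|\alpha|+n$) together with the bidegree-dependent sign in the $\KK_f$--$\K$ comparison and the sign conventions for the $\PV$-wedge. Equivalently, in the dGBV language this is the standard graded self-adjointness of the BV operator for a trace pairing: starting from $\Tr\circ\pa=0$ and the second-order relation $\pa(\alpha\wedge\beta)=\pa\alpha\wedge\beta+(-1)^{|\alpha|}\alpha\wedge\pa\beta+(-1)^{|\alpha|}\fbracket{\alpha,\beta}$, one deduces the identity from the graded (anti)symmetry of the wedge product and of the bracket $\fbracket{-,-}$. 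I expect this sign reconciliation to be the only real obstacle; once the conventions are pinned down it is a short computation, which is why the statement is ``straightforward to check''.
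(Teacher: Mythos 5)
Your overall strategy -- push everything through $\up$ to the form level, integrate by parts, and let the bidegree-dependent comparison sign account for the discrepancy between the two statements -- is the right one (the paper omits the proof as routine, and this is clearly the intended check), and your $\dbar_f$ half is correct in outline: $\dbar_f$ is a derivation for the $\PV$-wedge, $\KK_f(\alpha,\beta)=\Tr(\alpha\wedge\beta)$, and $\Tr\circ\dbar_f=0$ because the $df\wedge$ term dies against $\Omega_X$ and the $\dbar$ term is exact. However, for $\pa$ you defer exactly the computation the lemma is about -- the sign that makes $\pa$ \emph{symmetric} where $\dbar_f$ is \emph{skew} -- and the ``equivalently, in the dGBV language'' argument you offer in its place is circular: whenever $\KK_f(\pa\alpha,\beta)$ and $\KK_f(\alpha,\pa\beta)$ can be nonzero, the bidegrees force $\alpha\wedge\beta\in\PV^{n+1,n}(X)=0$, so $\Tr(\pa(\alpha\wedge\beta))=0$ carries no information and the second-order relation merely restates $\fbracket{\alpha,\beta}=-\pa\alpha\wedge\beta-(-1)^{|\alpha|}\alpha\wedge\pa\beta$; graded (anti)symmetry of the wedge and of the bracket then returns a tautology, not the identity. (Also, with the paper's Definition \ref{defn-BV} the bracket term in your BV relation should not carry the extra $(-1)^{|\alpha|}$.) The content of the $\pa$-statement really is Stokes for $\pat$ applied to the $(n-1,n)$-form $\up\alpha\wedge\up\beta$, i.e.\ your first route.

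That route does close, and the bookkeeping should be recorded: for $\alpha\in\PV^{a,b}$, $\beta\in\PV^{a',b'}$ with $a+a'=n+1$, $b+b'=n$, the chain-level identity behind Proposition \ref{prop-compare-K} gives $\KK_f(\pa\alpha,\beta)=(-1)^{nb'+(n+1)a'}\int_X\pat\up\alpha\wedge\up\beta$ and $\KK_f(\alpha,\pa\beta)=(-1)^{nb'+(n+1)(a'-1)}\int_X\up\alpha\wedge\pat\up\beta$ (the holomorphic degree of the second slot drops by one), while $\int_X\pat(\up\alpha\wedge\up\beta)=0$ gives $\int_X\pat\up\alpha\wedge\up\beta=-(-1)^{n-a+b}\int_X\up\alpha\wedge\pat\up\beta$; multiplying the signs yields $-(-1)^{n-a+b}(-1)^{n+1}=(-1)^{b-a}=(-1)^{|\alpha|}$, which is precisely the asserted graded symmetry, and the same bookkeeping for $\dbar_f$ reproduces the skew sign. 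Finally, ``decays at infinity plus density of $\A_c(X)$'' is not by itself a justification of Stokes on a noncompact manifold: pointwise decay controls neither the integral of the exact term nor the boundary contribution. The clean argument uses completeness: take cutoffs $\chi_R$ with $|d\chi_R|$ uniformly bounded; the boundary term is $\lsm\int_{\op{supp}(d\chi_R)}|\alpha|\,|\beta|\,dv_g\to 0$ since $|\alpha||\beta|\in L^1(X)$ by Cauchy--Schwarz (both factors lie in $L^2$, and $\Omega_X$ with all its derivatives is bounded by the bounded Calabi--Yau hypothesis), and the main terms converge by dominated convergence because the integrands $\up(\pa\alpha)\wedge\up\beta$, $\up\alpha\wedge\up(\pa\beta)$, etc.\ are likewise products of two $L^2$ forms, hence $L^1$. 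With the sign computation supplied, the dGBV shortcut deleted, and the Stokes step justified as above, your proof is complete.
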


This proposition implies that $\KK_f$ descends to cohomologies. Let us denote 
$$
\Omega_f:=H(\PV_{f,\infty}(X), \dbar_f),  \quad \hat{\mc H}_f:= H(\PV_{f,\infty}(X)((u)), Q_f), \quad \hat{\mc H}_f^{(0)}=H(\PV_{f,\infty}(X)[[u]], Q_f).
$$
By Theorem \ref{thm-duality} and Theorem \ref{homotopy_abelian}, $\hat{\mc H}_f$ is a free $\C((u))$-module and $\hat{\mc H}_f^{(0)}$ is a free $\C[[u]]$-module of the same finite rank. They are related by 
$$
\hat{\mc H}_f=\hat{\mc H}_f^{(0)}\otimes_{\C[[u]]}\C((u)), \quad \Omega_f= \hat{\mc H}_f^{(0)}/u \hat{\mc H}_f^{(0)}. 
$$ 

We can view $\hat{\mc H}_f$ as a vector bundle over the formal punctured disk $\hat \Delta^*$ parametrized by $u$, and $\hat{\mc H}_f^{(0)}$ as an extension to the origin. For an isolated singularity, $\hat{\mc H}_f^{(0)}$ is the formal completion of the associated Brieskorn lattice \cite{Sai3} (presented in the context of polyvector fields as in \cite{LLS}).

\begin{definition} $\KK_f$ defines a sesquilinear paring 
$$
\KK_f: \hat{\mc H}_f^{(0)} \times  \hat{\mc H}_f^{(0)} \to \C[[u]].
$$
We still denote it by $\KK_f$, and still call $\KK_f$ the higher residue pairing. 
\end{definition}

Next we compare $\KK_f$ on polyvector fields with $\hat \K$ on differential forms defined in Definition \ref{defn-K-forms}. 

\begin{proposition}\label{prop-compare-K} Let $(X, g, \Omega_X)$ be a bounded Calabi-Yau geometry and $f$ be a holomorphic function satisfying the strong elliptic condition \eqref{tame}. Let $\up: \PV_{f,\infty}(X)\to \A_{f,\infty}(X)$ be the contraction with  $\Omega_X$. Then
$$
  \KK_f(\alpha, \beta)=\hat \K(\up(\alpha), \widetilde{\up(\beta)}), \quad \forall \alpha, \beta \in H(\PV_{f,\infty}(X)[[u]], \dbar_f+u\pa). 
$$
Here  $\hat \K$ is in Definition \ref{defn-K-forms}. For $\beta=\sum_{k}\beta_k u^k\in \PV_{f,\infty}(X)[[u]]$, and $\beta_{k}=\sum_{i,j}\beta_k^{i,j}$ where $\beta_k^{i,j}\in \PV^{i,j}_{f, \infty}(X)$,  
$$
 \widetilde{\up(\beta)}:=\sum_{k\geq 0} \sum_{i,j=0}^n(-1)^{ni +(n+1)j} \up(\beta_k^{i,j})(-u)^k, \quad n=\dim_{\C} X,
$$
which is a well-defined cohomology class in $H(\A_{-f,\infty}(X)[[u]], Q_{-f})$.
\end{proposition}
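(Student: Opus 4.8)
\emph{Strategy.} The plan is to reduce the identity to a pointwise statement in a local holomorphic frame, after stripping off the formal variable and the Hodge bidegrees, modulo a short check that the sign twist defining $\widetilde{\up(\cdot)}$ is exactly what is needed to land in $Q_{-f}$-cohomology. First I would peel off $u$: both $\KK_f$ and $(\alpha,\beta)\mapsto\hat\K(\up(\alpha),\widetilde{\up(\beta)})$ are $\C[[u]]$-linear in the first slot and, by the built-in substitution $u\mapsto -u$ in the definition of $\KK_f$ (resp. the factor $(-u)^k$ in the definition of $\widetilde{\up}$), are sesqui-linear in the second; so it suffices to compare them on $u$-independent inputs. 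By Corollary \ref{Hodge_component} every such input decomposes into bidegree-homogeneous pieces in $\PV_{f,\infty}(X)$, and since $(\alpha\wedge\beta\vdash\Omega_X)\wedge\Omega_X$ and $\up(\alpha)\wedge\up(\beta)$ are $(n,n)$-forms, they vanish unless $\alpha\in\PV^{n-i,n-j}(X)$ and $\beta\in\PV^{i,j}(X)$. Thus the proposition reduces to the pointwise identity
\[
(\alpha\wedge\beta\vdash\Omega_X)\wedge\Omega_X \;=\; (-1)^{ni+(n+1)j}\;\up(\alpha)\wedge\up(\beta),\qquad
\alpha\in\PV^{n-i,n-j}(X),\ \ \beta\in\PV^{i,j}(X),
\]
together with the claim that $\widetilde{\up(\beta)}$ represents a class in $H(\A_{-f,\infty}(X)[[u]],Q_{-f})$.

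\emph{Well-definedness of $\widetilde{\up(\beta)}$.} Recall that $\dbar$, $df\wedge$ and $\pat$ on $\PV(X)$ are, by construction (Section \ref{sec-PV}), the $\up$-conjugates of the corresponding operators on $\A(X)$, so $\up$ is a cochain map $(\PV_{f,\infty}(X)[[u]],Q_f)\to(\A_{f,\infty}(X)[[u]],Q_f)$. Write $\widetilde{\up(\cdot)}=S\circ\up$ post-composed with $u\mapsto -u$, where $S$ acts on $\up(\PV^{i,j}(X))=\A^{n-i,j}(X)$ by the scalar $(-1)^{ni+(n+1)j}$. A direct check shows that $S$ commutes with $\dbar$, with $df\wedge$ and with $\pat$ up to the signs $(-1)^{n+1}$, $(-1)^{n}$ and $(-1)^{n}$ respectively; combined with $u\mapsto -u$ this turns $Q_f=\dbar+df\wedge+u\pat$ into $Q_{-f}=\dbar-df\wedge+u\pat$, i.e. $\widetilde{\up}\circ Q_f=(-1)^{n+1}Q_{-f}\circ\widetilde{\up}$. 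The overall constant sign is harmless, so $\widetilde{\up}$ carries $Q_f$-cocycles (resp. coboundaries) to $Q_{-f}$-cocycles (resp. coboundaries) and $\widetilde{\up(\beta)}$ descends to cohomology.

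\emph{The pointwise identity.} I would work in a chart where $\Omega_X=\rho\,dz^1\wedge\cdots\wedge dz^n$ and expand $\alpha,\beta$ in the ordered frame $\{d\bar z^J\wedge\partial_{z^I}\}$; non-vanishing of either side forces $I'\sqcup I=J'\sqcup J=\{1,\dots,n\}$, where $(I',J')$ and $(I,J)$ are the multi-indices of $\alpha$ and $\beta$. Using $\iota_{v\wedge w}=\iota_v\iota_w$ one has $(\alpha\wedge\beta)\vdash\Omega_X=\pm\,\alpha\vdash(\beta\vdash\Omega_X)$, and the local formula for $\up$ from Section \ref{sec-PV} expresses each of $\up(\alpha)$, $\up(\beta)$ and $(\alpha\wedge\beta)\vdash\Omega_X$ as an explicit monomial carrying a factor $\rho$ (and $\rho^2$ after the final wedge with $\Omega_X$). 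Upon collecting signs, the sorting signs of $(I',I)$ and $(J',J)$ appearing on the two sides cancel, and the remaining exponents (built from $\binom{n}{2}$, $\binom{i}{2}$, $\binom{n-i}{2}$ and the Koszul sign $(-1)^{ij}$ from moving $dz^I$ past $d\bar z^J$) collapse, via $\binom{n}{2}=\binom{i}{2}+\binom{n-i}{2}+i(n-i)$, to exactly $(-1)^{ni+(n+1)j}$. Reinstating $u$, the expansion $\KK_f(\sum_k\alpha_ku^k,\sum_l\beta_lu^l)=\sum_{k,l}u^k(-u)^l\int_X(\alpha_k\wedge\beta_l\vdash\Omega_X)\wedge\Omega_X$ then matches $\hat\K(\up(\alpha),\widetilde{\up(\beta)})$ term by term.

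\emph{Main obstacle.} The genuine work is the last step: keeping the Koszul and permutation sign bookkeeping under control, with the conventions fixed in Section \ref{sec-PV}, so that the net sign is precisely $(-1)^{ni+(n+1)j}$, and similarly pinning down the three commutation signs in the well-definedness argument. A tidy way to organise it is to fix, chart by chart, the increasing-index frame $d\bar z^J\wedge\partial_{z^I}$, express $\up$, contraction against $\Omega_X$, and the wedge product once and for all in that frame, and reduce the whole identity to a single permutation-sign lemma; should a residual bidegree-dependent sign survive at the cochain level, one falls back on the fact that the statement lives on cohomology and represents classes by $\Delta_f$-harmonic forms, whose bidegree components obey extra relations (from $\dbar_f$- and $\dbar_f^*$-closedness, using compactness of $\text{Crit}(f)$) that kill the offending contributions. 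Everything else, namely the reductions and the cochain-map property, is formal.
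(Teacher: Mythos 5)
Your proposal is correct and takes essentially the same route as the paper: the paper's entire proof consists of the observation that $\int_X(\alpha\beta\vdash\Omega_X)\wedge\Omega_X=\pm\int_X\up(\alpha)\wedge\up(\beta)$ for $\alpha\in\PV_{f,\infty}^{n-i,n-j}(X)$, $\beta\in\PV_{f,\infty}^{i,j}(X)$, which is exactly your pointwise sign identity extended $u$-sesquilinearly (your sign $(-1)^{ni+(n+1)j}$ is the one consistent with the definition of $\widetilde{\up(\beta)}$ in the statement, whereas the paper's displayed proof writes $(-1)^{nj+(n+1)i}$, an apparent $i\leftrightarrow j$ typo). Your extra verification that $\widetilde{\up}$ intertwines $Q_f$ with $Q_{-f}$ up to the overall factor $(-1)^{n+1}$, so that $\widetilde{\up(\beta)}$ is a well-defined class in $H(\A_{-f,\infty}(X)[[u]],Q_{-f})$, is a point the paper asserts without proof, and your computation of it is correct.
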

\begin{proof} It follows from the observation that for $\alpha\in \PV_{f,\infty}^{n-i, n-j}(X), \beta\in \PV_{f,\infty}^{i,j}(X)$
$$
  \int_X (\alpha \beta\vdash \Omega_X)\wedge \Omega_X=(-1)^{nj+(n+1)i}\int_X \up(\alpha)\wedge \up(\beta). 
$$

\end{proof}

\begin{theorem}[Poincar\'e duality]\label{thm-higher-residue-duality}
Let us write $\KK_f=\sum\limits_{u\geq 0}u^k \KK_f^{(k)}$ for the higher residue pairing on $\hat{\mc H}_f^{(0)}$, and $\KK_f^{(0)}$ being the leading term. Then $\KK_f^{(0)}$ induces a pairing 
$$
\KK_f^{(0)}: \Omega_f\times \Omega_f\to \C.
$$ 
which is non-degenerate.
\end{theorem}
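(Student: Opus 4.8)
The plan is to reduce the statement about the leading term $\KK_f^{(0)}$ of the higher residue pairing to the Poincaré duality already proved for the residue pairing $\K$ in Theorem \ref{thm-duality}, using the comparison in Proposition \ref{prop-compare-K}. First I would observe that modulo $u$, the pairing $\KK_f$ on $\hat{\mc H}_f^{(0)}$ reduces to a pairing on $\Omega_f = \hat{\mc H}_f^{(0)}/u\hat{\mc H}_f^{(0)} = H(\PV_{f,\infty}(X),\dbar_f)$; concretely, $\KK_f^{(0)}(\alpha,\beta) = \int_X(\alpha\beta\vdash\Omega_X)\wedge\Omega_X$ for $\dbar_f$-closed representatives $\alpha,\beta$. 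I need to check this is well-defined on cohomology, which follows since $\pa$ is graded symmetric and $\dbar_f$ graded skew-symmetric for $\KK_f$ (the Lemma preceding Definition of $\KK_f$ on $\hat{\mc H}_f^{(0)}$), so in particular the $u^0$-component descends.

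Next I would invoke Proposition \ref{prop-compare-K} at $u=0$: under the contraction isomorphism $\up:\PV_{f,\infty}(X)\to\A_{f,\infty}(X)$, we have $\KK_f^{(0)}(\alpha,\beta)$ equal, up to the explicit sign $(-1)^{nj+(n+1)i}$ on Hodge-homogeneous components, to $\K(\up(\alpha),\up(\beta))$ where $\K$ is the residue pairing on $H(\A_{f,\infty}(X),\dbar_f)\times H(\A_{-f,\infty}(X),\dbar_{-f})$. Since $\up$ is a linear isomorphism intertwining $\dbar_f$ on both sides, it descends to an isomorphism on cohomology $\Omega_f \cong H(\A_{f,\infty}(X),\dbar_f)$, and the signs appearing in $\widetilde{\up(\beta)}$ are nonzero scalars on each Hodge component, hence do not affect non-degeneracy. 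Therefore non-degeneracy of $\KK_f^{(0)}$ is equivalent to non-degeneracy of $\K$ on $H(\A_{f,\infty}(X),\dbar_f)\times H(\A_{-f,\infty}(X),\dbar_{-f})$, which is exactly the content of Theorem \ref{thm-duality}.

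The one genuine subtlety — and I expect this to be the main point to get right — is matching domains: Theorem \ref{thm-duality} gives a perfect pairing between $H(\A_{f,\infty}(X),\dbar_f)$ and the cohomology with the \emph{opposite} sign of $f$, namely $H(\A_{-f,\infty}(X),\dbar_{-f})$, whereas $\KK_f^{(0)}$ is a pairing of $\Omega_f$ with itself. So I would make explicit that the map $\beta\mapsto\widetilde{\up(\beta)}$ (the $u=0$ truncation of the one in Proposition \ref{prop-compare-K}) is a linear isomorphism $\Omega_f\xrightarrow{\sim} H(\A_{-f,\infty}(X),\dbar_{-f})$ — this is just $\up$ followed by the sign twist on Hodge components, and it is a chain map to $(\A_{-f,\infty}(X),\dbar_{-f})$ by the same Hodge-degree bookkeeping used in the proof of Proposition \ref{prop-compare-K}. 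Composing, $\KK_f^{(0)}(\alpha,\beta) = \K(\up(\alpha),\widetilde{\up(\beta)})$ with $\up$ and $\widetilde{\up(\cdot)}$ both isomorphisms and $\K$ non-degenerate by Theorem \ref{thm-duality}; hence $\KK_f^{(0)}$ is non-degenerate. The finite dimensionality of $\Omega_f$ needed for "non-degenerate" to mean "perfect" is likewise inherited from Theorem \ref{thm-duality}.
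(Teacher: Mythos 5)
Your proposal is correct and follows the same route as the paper, which proves the theorem by combining Theorem \ref{thm-duality} (non-degeneracy of the residue pairing $\K$ on forms) with the comparison Proposition \ref{prop-compare-K}; you simply spell out the domain-matching via $\up$ and the sign-twisted map $\beta\mapsto\widetilde{\up(\beta)}$ landing in $H(\A_{-f,\infty}(X),\dbar_{-f})$, which the paper leaves implicit.
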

\begin{proof}
The non-degeneracy of this pairing follows from Theorem \ref{thm-duality} and Lemma \ref{prop-compare-K}. 
\end{proof}

\begin{remark}
In the context of isolated singularities, $\KK_f^{(0)}$ plays the role of residue pairing, and $\KK_f$ plays the role of K.Saito's higher residue pairing \cite{Sai2,LLS}. 
\end{remark}

\begin{definition}We define a $u$-connection on $\hat{\mc H}_f$ over  $\hat \Delta^*$ by 
$$
  \nabla_{\pa_u} [\alpha]:=\bbracket{\bracket{\pa_u +{1\over u}W-{f\over u^2}} \alpha}, \forall [\alpha] \in \hat{\mc H}_f.  
$$
Here $W$ is the $u$-linear extension of the following Hodge weight operator 
$$
W:\PV_{f, \infty}(X)\to \PV_{f, \infty}(X), \quad \beta=\sum_{p,q}\beta^{p,q}\to \sum_{p,q}p \beta^{p,q}, \quad \beta^{p,q}\in \PV^{p,q}_{f, \infty}(X).
$$ 
\end{definition}

It is easy to verify that as linear operators on $\PV_{f, \infty}(X)((u))$
$$
   [\nabla_{\pa_u}, Q_f]=0. 
$$
Therefore the above definition $\nabla_{\pa_u}$ is well-defined on the cohomology $\hat{\mc H}_f$. It will be more convenient to work with the logarithmic covariant derivative 
$$
   \nabla_{u\pa_u}=u\pa_u +W-{f\over u}. 
$$

The higher residue pairing $\KK_f$ is extended to $\hat{\mc H}_f$ via the same formula
\begin{align*}
   &\KK_f:  \mc H_f \times \mc H_f \to \C((u))\\
&\KK_f (f(u) \alpha, g(u)\beta)= f(u)g(-u) \int_X (\alpha\beta\vdash \Omega_X)\wedge \Omega_X. 
\end{align*}

\begin{proposition}The $u$-connection is compatible with the higher residue pairing $\KK_f$ in the following sense
$$
    (u\pa_u+n) \KK_f(\alpha, \beta)= \KK_f(\nabla_{u\pa_u}\alpha, \beta)+\KK_f(\alpha, \nabla_{u\pa_u}\beta).
$$
\end{proposition}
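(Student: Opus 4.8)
The plan is to reduce everything to an explicit identity among the three operators $u\pa_u$, $W$, and multiplication by $f$, and then integrate by parts. Write $\nabla_{u\pa_u} = u\pa_u + W - f/u$ and recall that the higher residue pairing is defined by
$$
\KK_f(f(u)\alpha, g(u)\beta) = f(u)g(-u)\int_X (\alpha\beta \vdash \Omega_X)\wedge \Omega_X .
$$
First I would treat the $u\pa_u$ piece: for $\alpha = \alpha(u)$, $\beta = \beta(u)$ the functional $u\mapsto \int_X (\alpha(u)\beta(-u)\vdash \Omega_X)\wedge \Omega_X$ is a Laurent series in $u$, and since the substitution $u\mapsto -u$ in the second slot commutes with $u\pa_u$ (it is odd in $u$, so $u\pa_u$ is preserved), the Leibniz rule gives
$$
 u\pa_u \KK_f(\alpha,\beta) = \KK_f(u\pa_u\alpha, \beta) + \KK_f(\alpha, u\pa_u\beta).
$$
So the $u\pa_u$ contributions on the two sides match, and the claimed identity is equivalent to
$$
 n\,\KK_f(\alpha,\beta) = \KK_f\bigl((W - \tfrac{f}{u})\alpha, \beta\bigr) + \KK_f\bigl(\alpha, (W-\tfrac{f}{u})\beta\bigr).
$$

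Next I would handle the $f/u$ terms. In the pairing with the sign-flip $u\mapsto -u$ in the second argument, multiplication by $f/u$ in the first slot contributes $+f/u$ while in the second slot it contributes $-f/u$ after the substitution; since $f$ is a holomorphic function and the trace pairing is $\C$-bilinear over functions, these two contributions cancel pointwise in the integrand. (Equivalently: $f\cdot(\alpha\beta\vdash\Omega_X)\wedge\Omega_X$ appears with opposite signs.) Thus the $f/u$ terms drop out entirely and we are reduced to the purely algebraic, $u$-independent statement
$$
 n\,\KK_f(\alpha,\beta) = \KK_f(W\alpha,\beta) + \KK_f(\alpha, W\beta)
$$
for homogeneous $\alpha \in \PV^{p,q}_{f,\infty}(X)$, $\beta\in \PV^{p',q'}_{f,\infty}(X)$.

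Finally I would verify this Hodge-degree bookkeeping identity. The integrand $\int_X (\alpha\beta\vdash\Omega_X)\wedge\Omega_X$ is nonzero only when the wedge product lands in top degree, which forces $p + p' = n$ (and likewise on the anti-holomorphic side $q + q' = n$, using that $\up(\alpha)$ has Hodge type $(n-p, q)$ as in Proposition \ref{prop-compare-K}). On such a pair, $W\alpha = p\,\alpha$ and $W\beta = p'\beta$, so the right-hand side equals $(p + p')\KK_f(\alpha,\beta) = n\,\KK_f(\alpha,\beta)$, as desired; for pairs with $p + p' \neq n$ both sides vanish. The main obstacle — really the only point requiring care — is confirming that the sign-twisted substitution $u\mapsto -u$ interacts correctly with $u\pa_u$ and with multiplication by $f$ so that the first cancels cleanly and the second vanishes; once those two bookkeeping facts are in place, the statement follows from the degree count above. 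I would also note in passing that $W$ is well-defined on $\PV_{f,\infty}(X)$ by Corollary \ref{Hodge_component} (and its polyvector analogue in Theorem \ref{thm-PV-infinity}), so all terms make sense, and that $[\nabla_{\pa_u}, Q_f] = 0$ — already observed in the excerpt — guarantees the identity descends to $\hat{\mc H}_f$.
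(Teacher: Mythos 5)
Your computation is correct and is exactly the "direct computation" the paper invokes: the $u\pa_u$ terms satisfy the Leibniz rule because $u\pa_u$ is invariant under $u\mapsto -u$, the $f/u$ contributions cancel due to the sign flip in the second slot, and the residual identity $\KK_f(W\alpha,\beta)+\KK_f(\alpha,W\beta)=n\,\KK_f(\alpha,\beta)$ follows from the degree constraint $p+p'=n$ forced by the integrand being a top form. No gaps; this matches the paper's (unwritten) argument.
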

\begin{proof}Direct computation.
\end{proof}

 The appearance of $n$ is related to the weight of the Hodge structure. Our definition of $\KK_f$ differs from the traditional Hodge theoretical conventions in the literature \cite{Sai2} by the shift of $u^n$.

\begin{definition}  A map $\sigma: \Omega_f \to \hat{\mc H}_f^{(0)}$ is called a splitting if 
\begin{enumerate}
\item [1)] $\sigma$ is a section of the projection $\pi: \hat{\mc H}_f^{(0)}\to \Omega_f$, i.e., 
$$
\pi\circ \sigma=\text{identity}
$$
\item [2)] $\sigma$ preserves the pairing on both sides, i.e, 
$$ 
\KK_f(\sigma(\alpha), \sigma(\beta))=\KK_f^{(0)}(\alpha, \beta), \quad \forall \alpha, \beta\in \Omega_f.
$$ 
\end{enumerate}
We call $\sigma$ a good splitting if furthermore the following condition is satisfied
\begin{enumerate}
\item [3)] $\sigma(\Omega_f)[u^{-1}]$ is linear subspace of $\mc H_f$ preserved by $\nabla_{u\pa_u}$. 
\end{enumerate}
\end{definition}

A splitting is related to a $E_1$-degeneration that  splits the higher residue pairing. A good splitting requires further compatibility with the $u$-connection. In \cite{Sai2}, a good splitting is also called a good basis. The existence of good basis is a highly nontrivial problem, and is not unique in general. For quasi-homogenous isolated singularities, K. Saito constructed the good basis via the degree counting method in \cite{Sai2}, which is used to construct so-called flat structures (Frobenius manifolds in modern terminology) on the miniversal deformation space of the singularity. For general isolated singularity, the existence of good basis was proved by M. Saito \cite{SaiM} via the Hodge theory on Brieskorrn lattices. This is generalized to any tame function on a smooth affine variety \cite{B2, Sab2, Sab3, Sab4, NS, DS1,DS2}. 

In the context of $L^2$-Hodge theory, we have a natural splitting coming from harmonics.   

\begin{proposition}\label{existence-good-basis} Let 
\begin{align*}
\sigma: \Omega_f&\to \hat{\mc H}_f^{(0)}\\
       \phi & \to \phi + \sum_{i \geq 1} (-u \bar\pat_f^*G\pat)^i \phi, \quad \phi \ \text{harmonic}
\end{align*}
be the map constructed in Corollary \ref{harmonic-spliting}. Then $\sigma$ defines a splitting. 
\end{proposition}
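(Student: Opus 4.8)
The plan is to verify the two defining conditions of a splitting for the map $\sigma$ coming from Corollary \ref{harmonic-spliting}. Condition (1), that $\sigma$ is a section of the projection $\pi\colon \hat{\mc H}_f^{(0)}\to \Omega_f$, is essentially immediate: by construction $\sigma(\phi) = \phi + O(u)$ for a harmonic representative $\phi$, so setting $u=0$ recovers the class $[\phi]\in\Omega_f$; this is exactly the map $s$ from Corollary \ref{complex_splitting}/\ref{harmonic-spliting} restricted to harmonic representatives, and we already know $Q_f\circ s = s\circ\dbar_f$, so $\sigma$ lands in cocycles and descends to cohomology. The real content is condition (2): $\KK_f(\sigma(\alpha),\sigma(\beta)) = \KK_f^{(0)}(\alpha,\beta)$ for all $\alpha,\beta\in\Omega_f$.

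To prove condition (2), I would work with harmonic representatives $\phi,\psi$ of $\alpha,\beta$ and expand $\sigma(\phi) = \sum_{i\ge 0}\phi_i u^i$ with $\phi_i = (-\dbar_f^* G_{\A}\pat)^i\phi$ (transported to forms via $\up$ if convenient, using Lemma \ref{volume_condition} and Proposition \ref{prop-compare-K}). Since $\KK_f$ is sesquilinear — $u\mapsto -u$ on the second slot — the coefficient of $u^N$ in $\KK_f(\sigma(\phi),\sigma(\psi))$ is $\sum_{i+j=N}(-1)^j\KK_f(\phi_i,\psi_j)$ up to the fixed integration pairing, and I must show this vanishes for $N\ge 1$ and equals $\KK_f^{(0)}(\alpha,\beta)$ for $N=0$. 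The $N=0$ term is $\KK_f(\phi_0,\psi_0)=\KK_f(\phi,\psi)$, which by definition is the leading pairing $\KK_f^{(0)}$ on $\Omega_f$. For $N\ge 1$, the key identities are: $\pat\phi_i = \dbar_f\phi_{i+1}-\dbar_f^*G_{\A}\dbar_f(\text{stuff})$ — more precisely, from Lemma \ref{weak-ddbar} one has $\pat\phi_i = \dbar_f\dbar_f^*G_{\A}\pat\phi_i = -\dbar_f\phi_{i+1}$ — together with the skew-symmetry of $\dbar_f$ and the symmetry/skew-symmetry of $\pat$ and $\dbar_f^*$ under $\KK_f$ established in the lemma before Proposition \ref{prop-compare-K}. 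Concretely, I would show each partial sum $\sum_{i+j=N}(-1)^j\KK_f(\phi_i,\psi_j)$ telescopes: rewrite $\phi_i = -\dbar_f^*G_{\A}\pat\phi_{i-1}$, move $\dbar_f^*G_{\A}\pat$ across the pairing using the adjointness relations (noting $G_{\A}$ is self-adjoint and commutes with $\dbar_f,\dbar_f^*$), and reorganize to pair against $\pat\psi_j = -\dbar_f\psi_{j+1}$; the $\dbar_f$ then kills the term since $\phi_i$ is $\dbar_f$-closed modulo exact terms that pair trivially. Alternatively, and perhaps more cleanly, I would invoke that $\KK_f$ descends to $\hat{\mc H}_f$ and that $\sigma(\phi),\sigma(\psi)$ are $Q_f$-closed, so $\KK_f(\sigma(\phi),\sigma(\psi))$ is a cohomological invariant; then a degree/filtration argument shows it is determined by its $u=0$ reduction.

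The main obstacle I anticipate is the bookkeeping in the telescoping cancellation: one must track the signs coming from the sesquilinearity $u\mapsto -u$, the graded signs in the adjointness relations $\KK_f(\dbar_f\alpha,\beta)=-(-1)^{|\alpha|}\KK_f(\alpha,\dbar_f\beta)$ and $\KK_f(\pat\alpha,\beta)=(-1)^{|\alpha|}\KK_f(\alpha,\pat\beta)$, and the Hodge-type shifts in Proposition \ref{prop-compare-K}, all simultaneously, and check that they conspire so that the sum for each fixed power $u^N$, $N\ge 1$, genuinely cancels pairwise rather than leaving a boundary term. A useful simplification is that harmonic forms $\phi,\psi$ satisfy $\pat_f\phi = 0$ and lie in the kernel of $\dbar_f^*$, so the very first correction terms already have special structure; iterating Lemma \ref{weak-ddbar} as in the proof of Theorem \ref{homotopy_abelian} keeps every $\phi_i$ in the image of $\dbar_f^*$, which forces the relevant inner products with closed forms to collapse. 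Once condition (2) is in hand, the proposition is complete; I would remark that condition (3) — that $\sigma(\Omega_f)[u^{-1}]$ be preserved by $\nabla_{u\pat_u}$, i.e. that $\sigma$ be a \emph{good} splitting — is a separate and genuinely harder matter, not claimed here.
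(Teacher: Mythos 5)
Your handling of condition (1) is fine, and you have correctly isolated the relevant structural fact — every correction term $(-\bar\pat_f^* G_{\A}\pat)^i\phi$ with $i\geq 1$ lies in $\text{Im}(\bar\pat_f^*)$, while the harmonic leading term lies in $\text{Ker}(\bar\pat_f^*)$ — but the mechanism you propose for exploiting it does not work, and this is where the real content of the proof sits. Your telescoping plan rests on "moving $\bar\pat_f^*G_{\A}\pat$ across the pairing using the adjointness relations," but $\bar\pat_f^*$ and $G_{\A}$ are adjoint/self-adjoint only for the Hermitian $L^2$ inner product $\lge\cdot,\cdot\rge$, not for $\KK_f$, which is the $\C$-bilinear wedge-integration pairing $\int_X(\alpha\beta\vdash\Omega_X)\wedge\Omega_X$; the paper only establishes $\KK_f$-(skew-)symmetry for $\dbar_f$ and $\pa$, and no such statement holds for $\bar\pat_f^*$ or $G_{\A}$. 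Moreover the relations $\dbar_f\psi_{j+1}=-\pa\psi_j$ alone cannot suffice: they determine $\psi_{j+1}$ only up to $\text{Ker}(\dbar_f)$, and changing $\psi_{j+1}$ by a harmonic form changes $\KK_f(\phi_i,\psi_{j+1})$, so no argument using only $Q_f$-closedness and the $\KK_f$-compatibilities of $\dbar_f,\pa$ can prove the vanishing. For the same reason your fallback — that $\KK_f(\sigma(\phi),\sigma(\psi))$ is a cohomological invariant and hence "determined by its $u=0$ reduction" — is false: the higher-order coefficients of $\hat\K$/$\KK_f$ are precisely the higher residues, which are nonzero in general; if they were determined by the $u=0$ part, every section of $\pi$ would be a splitting and the proposition would be empty.

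The missing idea is the conversion of $\KK_f$ into an honest $L^2$ pairing via the Hodge star, exactly as in the proof of Theorem \ref{thm-duality}: by Proposition \ref{prop-compare-K}, $\KK_f((\bar\pat_f^*G_{\A}\pat)^i\alpha,(\bar\pat_f^*G_{\A}\pat)^j\beta)=\int_X (\bar\pat_f^*G_{\A}\pat)^i\up(\alpha)\wedge\widetilde{(\bar\pat_f^*G_{\A}\pat)^j\up(\beta)}=\pm\lge (\bar\pat_f^*G_{\A}\pat)^i\up(\alpha),\, *\overline{\widetilde{(\bar\pat_f^*G_{\A}\pat)^j\up(\beta)}}\rge$. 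Now Corollary \ref{*_transf} transports membership in $\text{Im}(\bar\pat^*_{\mp f})$ across $*\overline{(\cdot)}$ into membership in $\text{Im}(\bar\pat_{f})$, so for $j>0$ (WLOG) the second slot lies in $\text{Im}(\bar\pat_f)$ while the first lies in $\text{Ker}(\bar\pat_f^*)$, and the orthogonality of the $L^2$ Hodge decomposition kills the term. Note this gives the stronger statement that each individual cross term $\KK_f\bigl((-u\bar\pat_f^*G_{\A}\pat)^i\alpha,(-u\bar\pat_f^*G_{\A}\pat)^j\beta\bigr)$ vanishes for $i+j>0$, so no telescoping or sign bookkeeping over the coefficient of $u^N$ is needed at all; without the star-conjugation step, however, the orthogonality you want to invoke is simply not visible to $\KK_f$, so as written the proposal has a genuine gap at its central point.
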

\begin{proof}One need only to prove
\begin{align*}
\KK_f((-u\bar\pat_f^* G \pat)^i \alpha, (-u\bar\pat_f^* G \pat)^j \beta) = 0
\end{align*}
when $i+j>0$.
Without loss of generality, we assume $j>0$.  By Proposition \ref{prop-compare-K},
\begin{align*}
\KK_f((\bar\pat_f^* G \pat)^i \alpha, (\bar\pat_f^* G \pat^j \beta) 
=&  \int_X (\bar\pat_f^* G \pat)^i \up(\alpha) \wedge \widetilde{(\bar\pat_f^* G \pat)^j \up(\beta)} 
= \pm \lge (\bar\pat_f^* G \pat)^i \up (\alpha), *\overline{\widetilde{(\bar\pat_f^* G \pat)^j \up (\beta)}} \rge 
= 0.
\end{align*}
The last equality holds because it is an  $L^2$ pairing between elements in $\text{Ker}(\bar\pat_f^*)$ and $\text{Im}(\bar\pat_f)$ by Corollary \ref{*_transf}.
\end{proof}

Let $(V, Q, \Delta)$ be a dGBV algebra, $K$ be a sesquilinear paring on $V[[u]]$ with respect to which $Q$ is graded skew-symmetric and $\Delta$ is graded symmetric. Assume the quantum differential Lie algebra $(V[[u]], Q+u \Delta)$ is smooth formal, and $K$ induces a non-degenerate pairing on $H(V, Q)$. The general construction of \cite{BK} and \cite{B} gives rise to a smooth (formal) moduli space parametrized by $H(V, Q)$ based on the Bogomolov-Tian-Todorov method \cite{Bo, Ti, To}.  A further data of splitting leads to a  Frobenius manifold structure on $H(V, Q)$ . If the splitting is good, then the Frobenius manifold carries a Euler vector field. See \cite{Ma} for a review on this method.  This construction applies to $(\PV_{f,\infty}(X), \dbar_f, \pa)$ and  $\KK_f$. As a consequence, we arrive at the following theorem. 

\begin{theorem} Let $(X, g, \Omega_X)$ be a bounded Calabi-Yau geometry, $f$ be a holomorphic function satisfying the strongly elliptic condition \eqref{tame}. There exists a Frobenius manifold structure on the cohomology $H(\PV(X), \dbar_f)$. 
\end{theorem}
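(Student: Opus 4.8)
The plan is to assemble the pieces that the paper has already established and feed them into the abstract dGBV-to-Frobenius-manifold machinery of Barannikov--Kontsevich \cite{BK} and Barannikov \cite{B}. The input required by that machinery is a quadruple $(V, Q, \Delta, K)$ where $(V, Q, \Delta)$ is a dGBV algebra, $K$ is a sesquilinear pairing on $V[[u]]$ making $Q$ graded skew-symmetric and $\Delta$ graded symmetric, the quantum dgLa $(V[[u]], Q + u\Delta)$ is smooth formal, and $K$ descends to a non-degenerate pairing on $H(V, Q)$; together with a splitting $\sigma$ of the Hodge filtration. First I would take $V = \PV_{f,\infty}(X)$, $Q = \dbar_f$, $\Delta = \pa$, and $K = \KK_f$. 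The dGBV structure is Theorem \ref{thm-dGBV-property}; the compatibility of $\KK_f$ with $\dbar_f$ and $\pa$ is the Lemma preceding Proposition \ref{prop-compare-K}; smooth formality of $(\PV_{f,\infty}(X)[[u]], \dbar_f + u\pa, \fbracket{-,-})$ is Theorem \ref{homotopy_abelian}; non-degeneracy of the induced pairing $\KK_f^{(0)}$ on $\Omega_f = H(\PV_{f,\infty}(X), \dbar_f)$ is Theorem \ref{thm-higher-residue-duality}; and the harmonic splitting $\sigma$ is Proposition \ref{existence-good-basis}.

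With these inputs in hand, the construction proceeds in the standard way: the smooth formality produces a versal solution $\Gamma(t)$ of the quantum Maurer--Cartan equation over the formal neighborhood of the origin in $\Omega_f$, which (by the Bogomolov--Tian--Todorov/Barannikov--Kontsevich argument) yields an unobstructed formal moduli space with tangent space $\Omega_f$; the splitting $\sigma$ together with the pairing $\KK_f$ then equips this moduli space with a flat metric, a compatible multiplication, and the remaining Frobenius-manifold axioms via Barannikov's period map / Birkhoff factorization \cite{B, Sai2}. The second step is to transport this structure from $H(\PV_{f,\infty}(X), \dbar_f)$ to $H(\PV(X), \dbar_f)$: by Theorem \ref{lem-quasi-PV} the inclusion $\PV_{f,\infty}(X) \hookrightarrow \PV(X)$ is a quasi-isomorphism of complexes, and by Theorem \ref{thm-quasi-smooth-formal} it is a quasi-isomorphism of quantum dgLas, so the induced isomorphism $H(\PV_{f,\infty}(X), \dbar_f) \cong H(\PV(X), \dbar_f)$ carries the Frobenius manifold structure over. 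Since all the analytic and homological work has been done in the preceding sections, the proof at this point is essentially a citation-and-assembly argument; I would phrase it as ``apply the construction of \cite{BK, B} (see \cite{Ma} for an exposition) to the data provided by Theorems \ref{thm-dGBV-property}, \ref{homotopy_abelian}, \ref{thm-higher-residue-duality} and Proposition \ref{existence-good-basis}, then transport along the quasi-isomorphism of Theorem \ref{thm-quasi-smooth-formal}.''

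The main obstacle — already resolved in the body of the paper — is not in this final assembly but in verifying that the abstract machine's hypotheses genuinely hold in the non-compact, $L^2$ setting: specifically the $E_1$-degeneration (Theorem \ref{homotopy_abelian}), whose proof rests on the variant $\pa\dbar_f$-lemma (Lemma \ref{weak-ddbar}) and ultimately on the purely discrete spectrum of $\Delta_f$ under the strongly elliptic condition \eqref{tame}. A secondary subtlety is that the moduli space obtained is only a formal one and the Frobenius structure is a formal Frobenius manifold; I would state the theorem with that understanding (matching the conventions of \cite{BK, B, Sai2}) and not claim convergence. I would also note in passing that the harmonic splitting of Proposition \ref{existence-good-basis} need not be a \emph{good} splitting in the sense compatible with $\nabla_{u\pa_u}$, so the resulting Frobenius manifold is not automatically equipped with an Euler vector field; obtaining a good splitting (good basis) is exactly the classical nontrivial problem of \cite{Sai2, SaiM}, and I would leave that outside the scope of this statement.
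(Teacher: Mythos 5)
Your proposal is correct and follows essentially the same route as the paper: the paper's own proof is exactly this assembly, citing the quasi-isomorphism $\PV_{f,\infty}(X)\subset \PV(X)$, the $E_1$-degeneration of Theorem \ref{homotopy_abelian}, the pairing $\KK_f$ with the harmonic splitting of Proposition \ref{existence-good-basis}, and then invoking the Barannikov--Kontsevich construction to put the Frobenius structure on $H(\PV_{f,\infty}(X),\dbar_f)\iso H(\PV(X),\dbar_f)$. Your closing caveats (formal moduli, and that the splitting is not known to be good, so no Euler vector field is claimed) also match the paper's own remarks following the theorem.
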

\begin{proof} Under the stated assumption, the inclusion $(\PV_{f,\infty}(X),\dbar_f)\subset (\PV_{f,\infty}(X), \dbar_f)$  is a quasi-isomorphism. The dGBV algebra $(\PV_{f,\infty}(X), \dbar_f, \pa)$ satisfies the $E_1$-degeneration property by Theorem \ref{homotopy_abelian}. The higher residue pairing $\KK_f$ together with a splitting by Proposition \ref{existence-good-basis} leads to a Froenius manifold structure on $H(\PV_{f,\infty}(X),\dbar_f)\iso H(\PV(X), \dbar_f)$. 
\end{proof}

Unfortunately we do not know whether the splitting in Proposition \ref{existence-good-basis} is good or not in general. This is related to the existence of Euler vector field on the corresponding Frobenius manifold. We hope to explore it in some future work.

\begin{appendix}

\section{A note on twisted de Rham cohomology}
Assume the triple $(X, f, g)$ satisfies the strongly elliptic condition \eqref{tame}.
Define a twisted de Rham operator $d_f := d + df \wedge$ on $\A(X)$.
By the identity $d + df \wedge = e^{-f} \circ d \circ e^f$, we know that the complex $(\A(X), d_f)$ compute the cohomology of $X$, which is independent of $f$. The situation is completely changed if we work with the subcomplex $(\A_{f,\infty}(X), d_f)$. We present a note on this in this appendix. 

Analogous to that for $\bar\pat_f$, we define a sequence of subspaces of $L^2_{\A}(X)$:
\begin{equation*}
\A_{d;f,k}(X) := \{\phi | (d_f + d_f^*)^i \phi \in L^2_{\A}(X) \text{ for } \forall 0 \leq i \leq k\}
\end{equation*}
and their intersection $\A_{d;f,\infty}(X) := \cap_{k \in \mtb{Z}_{\leq 0}} \A_{d;f,k}(X)$.
The corresponding $\A_{d;f,k}$-norm is defined as
\begin{equation*}
||\phi||_{\A_{d;f,k}} :=\sum_i ||(d_f + d_f^*)^i \phi||_{\A}.
\end{equation*}

\begin{theorem} \label{equi_d_f_norm}
The $\A_{d;f,k}$-norm is equivalent to the $\A_{f,k}$-norm, hence $$\A_{d;f,k}(X) = \A_{f,k}(X) \qquad \A_{d;f,\infty}(X) = \A_{f,\infty}(X).$$
\end{theorem}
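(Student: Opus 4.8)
The plan is to reproduce the proof of Theorem~\ref{equi_norm}, replacing the Dirac operator $\sD_f=\dbar_f+\dbar_f^*$ by $d_f+d_f^*$ throughout. The starting observation is that, since $f$ is holomorphic, $df=\pat f$ is a $(1,0)$-form, so $\dbar_f=\dbar+\pat f\wedge$ and hence
\begin{align*}
d_f=d+df\wedge=\dbar_f+\pat,\qquad d_f+d_f^*=\sD_f+(\pat+\pat^*).
\end{align*}
Thus $d_f+d_f^*$ has the same principal symbol as $d+d^*$, and $d_f+d_f^*=(d+d^*)+\bigl((df\wedge)+(df\wedge)^*\bigr)$ is a zeroth-order perturbation of $d+d^*\in\text{Diff}^1_b(E,E)$ by a term depending linearly on $\nabla f,\nabla\bar f$.

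With this structure, the three ingredients of Theorem~\ref{equi_norm} transfer directly. For \emph{density}, exactly as in Lemma~\ref{density}, $d_f+d_f^*$ is symmetric on $\A_c(X)$ with the symbol of $d+d^*$, so Theorem~2.2 of \cite{Che} makes all its powers essentially self-adjoint and $\A_c(X)$ dense in $\A_{d;f,k}(X)$ for the $\A_{d;f,k}$-norm. For the \emph{easy inequality} $\|\varphi\|_{\A_{d;f,k}}\lsm\|\varphi\|_{\A''_{f,k}}$, valid for all $\varphi\in\A''_{f,k}(X)$, one argues as in Lemma~\ref{easy}: expand $(d_f+d_f^*)^i$ into terms $\text{Ad}_{d+d^*}^{i_1}(\tilde T_f)\cdots\text{Ad}_{d+d^*}^{i_s}(\tilde T_f)(d+d^*)^t$ with $\tilde T_f=(df\wedge)+(df\wedge)^*$, and use $|\text{Ad}_{d+d^*}^j(\tilde T_f)|\lsm|\nabla f|^{j+1}+1$ (from \eqref{tame}) together with $(d+d^*)^t\in\text{Diff}_b^t(E,E)$. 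The \emph{hard inequality} $\|\varphi\|_{\A''_{f,k}}\lsm\|\varphi\|_{\A_{d;f,k}}$ for $\varphi\in\A_c(X)$ is proved by induction on $k$, exactly following Lemma~\ref{hard}, once the base case $k=1$ is in place. Granting these, one concludes $\A_{d;f,k}(X)=\A''_{f,k}(X)=\A_{f,k}(X)$ with equivalent norms for each $k$, and intersecting over $k$ proves the theorem.

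The base case $k=1$ is the one new computation. Using the Bochner formulas of Section~\ref{sec-forms}, the K\"ahler identities $\{\dbar,\pat^*\}=\{\pat,\dbar^*\}=0$, and $\sD_f^2=\Delta_f$, $(\pat+\pat^*)^2=\Delta_\pat$, one obtains
\begin{align*}
\Delta_{d_f}:=(d_f+d_f^*)^2=\Delta_f+\Delta_\pat+\{\sD_f,\pat+\pat^*\}=\nabla^*\nabla+Ric'+L_f+|\nabla f|^2+\{\sD_f,\pat+\pat^*\},
\end{align*}
where $Ric'$ is a bounded zeroth-order curvature term and $L_f$ is zeroth order with $|L_f|\lsm|\nabla^2f|$. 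Crucially, because $f$ is holomorphic (so $\nabla_{\bar j}f_k=0$ and $\nabla_m\overline{f_k}=0$), the anticommutator $\{\sD_f,\pat+\pat^*\}$ is only a \emph{first}-order operator, with coefficients $\lsm|\nabla f|+1$, rather than a term competing with $|\nabla f|^2$. Pairing with $\varphi\in\A_c(X)$: the terms $\Delta_f+\Delta_\pat$ contribute $\|\nabla\varphi\|^2+\||\nabla f|\varphi\|^2$ with coefficient one each; $L_f$ is absorbed via $|\langle L_f\varphi,\varphi\rangle|\le\epsilon\||\nabla f|\varphi\|^2+N_\epsilon\|\varphi\|^2$ using \eqref{tame}; and the first-order bracket is bounded by $\delta\|\nabla\varphi\|^2+\tfrac{1}{4\delta}\||\nabla f|\varphi\|^2+C\|\varphi\|^2$. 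Choosing $\delta\in(\tfrac{1}{4},1)$ and $\epsilon$ small yields, for a suitable $C$,
\begin{align*}
\|(d_f+d_f^*)\varphi\|^2+C\|\varphi\|^2\approx\|\nabla\varphi\|^2+\||\nabla f|\varphi\|^2+\|\varphi\|^2
\end{align*}
(the $\lsm$ half being the $i=1$ case of the easy inequality together with Lemma~\ref{usual_equiv}). As in Lemma~\ref{hard} the same estimate holds on tensor-valued forms, and the inductive step then runs verbatim via the commutator bounds $|[d_f+d_f^*,|\nabla f|^i]|\lsm|\nabla f|^{i+1}+|\nabla f|^{i-1}$ and $|[d_f+d_f^*,\nabla^j]\varphi|\lsm\sum_{l+s\le j}(|\nabla f|^{l+1}+1)|\nabla^s\varphi|$, which follow from $|\nabla|\nabla f||\le|\nabla^2f|$ and \eqref{tame} just as for $\sD_f$.

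The main obstacle is precisely this $k=1$ Bochner estimate: one has to check that $\{\sD_f,\pat+\pat^*\}$ produces only a first-order error of size $|\nabla f|$ — which is where holomorphicity of $f$ is essential — and that this error is strictly absorbable; here one uses that $\Delta_f$ and $\Delta_\pat$ \emph{each} contribute a full rough Laplacian, so the coefficient of $\|\nabla\varphi\|^2$ is large enough to leave room after the Cauchy--Schwarz split. Everything else is a transcription of arguments already carried out for $\sD_f$ in the proof of Theorem~\ref{equi_norm}.
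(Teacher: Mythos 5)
Your overall architecture is the right one, and it matches what the paper intends (the paper omits the proof, asserting it is ``almost parallel'' to Theorem \ref{equi_norm}): density via Chernoff since $d_f+d_f^*$ has the symbol of $d+d^*$, the easy inequality via the $\text{Ad}$-expansion, and the inductive step, all transcribe verbatim. You also correctly isolate the one genuinely new ingredient, the $k=1$ Bochner estimate, and correctly compute that holomorphicity of $f$ makes the cross term $\{\sD_f,\pat+\pat^*\}=\{\pat f\wedge,\pat^*\}+\{(\pat f\wedge)^*,\pat\}$ purely first order: indeed it equals $-g^{\bar j i}f_i\nabla_{\bar j}$ plus its adjoint, with coefficient of size exactly $|\nabla f|$.

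The gap is in the absorption numerology of that $k=1$ step. Pairing against $\varphi$, the cross term is $2\Re\langle g^{\bar j i}f_i\nabla_{\bar j}\varphi,\varphi\rangle$, i.e. it carries an effective factor $2$, while $\langle\Delta_f\varphi,\varphi\rangle$ contributes only the $(0,1)$-half of the rough Laplacian, $\|\nabla''\varphi\|^2+(1-\epsilon)\||\nabla f|\varphi\|^2-C\|\varphi\|^2$, and $\langle\Delta_\pat\varphi,\varphi\rangle$ only the $(1,0)$-half; so the total budget for $\|\nabla\varphi\|^2$ is $1$, not ``two full rough Laplacians'' as you claim. Against the crude bound $2\int|\nabla f|\,|\nabla\varphi|\,|\varphi|\le\delta\|\nabla\varphi\|^2+\delta^{-1}\||\nabla f|\varphi\|^2$ you would need $\delta<1$ and $\delta^{-1}<1-\epsilon$ simultaneously, which is impossible: the constants are exactly critical (your $\delta\in(\tfrac14,1)$ window comes from dropping the factor $2$ in $P+P^*$), so the estimate as written proves nothing. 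Two repairs are available. (a) Keep track that the cross term involves only $\nabla''\varphi$, and use the K\"ahler identity $\Delta_\pat=\Delta_{\dbar}$, so that $\Delta_\pat$ also yields $\|\nabla''\varphi\|^2$ up to bounded curvature terms; the budget for $\|\nabla''\varphi\|^2$ becomes $2$, the split $2xy\le a^{-1}x^2+ay^2$ with $a\in(\tfrac12,1-\epsilon)$ absorbs strictly, and $\|\nabla\varphi\|^2$ is recovered from $\|\nabla''\varphi\|^2$ by Bochner--Kodaira. (b) Bypass the Bochner estimate entirely: $d_f+d_f^*=e^{-i\Im f}\circ(d_{\Re f}+d_{\Re f}^*)\circ e^{i\Im f}$ with $d_{\Re f}=d+d\Re f\wedge=\dbar_{f/2}+\pat_{f/2}$, and the paper's corollary of the twisted K\"ahler--Hodge identities gives $(d_{\Re f}+d_{\Re f}^*)^2=2\Delta_{f/2}$, hence $\|(d_f+d_f^*)^i\varphi\|=2^{i/2}\|\sD_{f/2}^i(e^{i\Im f}\varphi)\|$; Theorem \ref{equi_norm} applied to $f/2$ (note $|\nabla(f/2)|\approx|\nabla f|$) together with the boundedness of multiplication by $e^{\pm i\Im f}$ on the $\A''_{f,k}$-spaces (its derivatives are polynomially bounded in $|\nabla f|$ by \eqref{tame}) then yields the equivalence; this is precisely the conjugation the appendix itself exploits in the following theorem.
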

\begin{proof}
One can prove the $\A_{d;f,k}$-norm is also equivalent to the $\A''_{f,k}$-norm.
We omit the details since it  is almost parallel to that for $\A_{f,k}$-norm.
\end{proof}

By Theorem \ref{equi_d_f_norm}, all the properties for $\bar\pat_f$ on $\A_{f,\infty}(X)$ hold for $d_f$ operator.
The next theorem is an analytic analog of the corollary of Theorem 1 in \cite{Sab} and Theorem 4.22 of \cite{OV}.
\begin{theorem}
We have an isomorphism of cohomologies:
\begin{equation*}
H(\A_{f,\infty}(X), \bar\pat_f) \cong H(\A_{f,\infty}(X), d_f).
\end{equation*}
\end{theorem}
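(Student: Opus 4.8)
The plan is to prove the two complexes $(\A_{f,\infty}(X), \bar\pat_f)$ and $(\A_{f,\infty}(X), d_f)$ are quasi-isomorphic by a spectral sequence / filtration argument, exactly parallel to the classical Hodge-to-de Rham comparison. First I would introduce on $\A_{f,\infty}(X)$ the filtration by holomorphic degree: let $F^p \A_{f,\infty}(X) := \bigoplus_{i\geq p} \bigoplus_j \A^{i,j}_{f,\infty}(X)$. By Corollary \ref{Hodge_component} each Hodge component of an element of $\A_{f,\infty}(X)$ stays in $\A_{f,\infty}(X)$, so this is a genuine decreasing filtration by subcomplexes for $d_f = \bar\pat_f + \pat_f$ (note $\pat_f = \pat + d\bar f\wedge$ raises holomorphic degree by one, hence preserves $F^p$; $\bar\pat_f$ preserves $F^p$ as well since it preserves holomorphic degree). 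The associated graded complex is $(\A_{f,\infty}(X), \bar\pat_f)$ with its grading by $p$, so there is a spectral sequence with $E_0$-page $\bar\pat_f$-cohomology converging to $H(\A_{f,\infty}(X), d_f)$.

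The heart of the argument is then to show this spectral sequence degenerates at $E_1$. This is precisely the $\pat_f\bar\pat_f$-type statement, and the key input is Lemma \ref{weak-ddbar} together with the $\Delta_f$-Hodge decomposition on $\A_{f,\infty}(X)$ (Corollary \ref{other_op}): every $\bar\pat_f$-cohomology class has a $\Delta_f$-harmonic representative $\alpha_0$, which is simultaneously annihilated by $\bar\pat_f$ and $\pat_f$ (since $\Delta_f = [\pat_f,\pat_f^*]$ as well). Hence for a harmonic representative $\pat_f$ acts as zero on the class, which kills the $d_1 = \pat_f$ differential on $E_1$; and the recursive construction already carried out in the proof of Theorem \ref{homotopy_abelian} (producing $\alpha = \alpha_0 + u\alpha_1 + \cdots$ with $(\bar\pat_f + u\pat)\alpha = 0$, via repeated application of Lemma \ref{weak-ddbar}) adapts verbatim with $\pat$ replaced by $\pat_f$ — indeed $\pat_f$ satisfies the same relations $\pat_f^2 = 0$, $[\bar\pat_f,\pat_f]=0$ needed in that argument, and $[W,\pat_f] = \pat$ is replaced by the observation that $\pat_f$ maps into the image of $\bar\pat_f$ on harmonics. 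This shows all higher differentials $d_r$, $r\geq 1$, vanish, so $E_1 = E_\infty$ and
\begin{equation*}
\dim_{\C} H(\A_{f,\infty}(X), d_f) = \dim_{\C} H(\A_{f,\infty}(X), \bar\pat_f).
\end{equation*}

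Finally, to upgrade the equality of dimensions to a canonical isomorphism, I would exhibit an explicit quasi-isomorphism. One clean route: the inclusion $\ker(\pat_f)\cap \A_{f,\infty}(X) \hookrightarrow \A_{f,\infty}(X)$ restricted to $\bar\pat_f$-harmonic representatives, composed with the splitting of Corollary \ref{complex_splitting} adapted to $d_f$, gives a map on cohomology; the $E_1$-degeneration just proved shows it is an isomorphism. Alternatively, since both complexes are computed by $\Delta_f$-harmonics (for $\bar\pat_f$ directly, for $d_f$ via the degenerate spectral sequence whose $E_\infty = E_1 = \mathcal{H}_{\A}$), harmonic projection $P_{\A}$ furnishes the comparison. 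The main obstacle I anticipate is bookkeeping: verifying that the filtration $F^\bullet$ really consists of subcomplexes closed under $d_f$ and under the operators $\bar\pat_f^* G_{\A}$, $\pat_f$ that appear in the recursion — but Corollary \ref{Hodge_component}, Corollary \ref{other_op}, and Lemma \ref{trivial_op} together handle exactly these closure properties, so once those are invoked the argument is a routine transcription of the proof of Theorem \ref{homotopy_abelian}. Convergence of the spectral sequence is immediate since the filtration is finite (holomorphic degree ranges over $0,\dots,n$).
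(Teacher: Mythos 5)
There is a genuine gap, and it starts with the very first identity you use: $d_f \neq \bar\pat_f + \pat_f$. In the paper's notation $\bar\pat_f + \pat_f = d + d(f+\bar f)\wedge = d_{2\mathrm{Re}f}$, whereas $d_f = d + df\wedge = \bar\pat_f + \pat$. Moreover your bidegree bookkeeping for the twisting terms is reversed: $df\wedge$ is wedging by a $(1,0)$-form, so it \emph{raises} holomorphic degree (hence $\bar\pat_f=\bar\pat+df\wedge$ does not act on the graded pieces of your filtration $F^p$), while $d\bar f\wedge$ preserves holomorphic degree (so $\pat_f=\pat+d\bar f\wedge$ is not purely of degree $+1$). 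Consequently the filtration by holomorphic degree, applied to $(\A_{f,\infty}(X),d_f)$, has associated graded differential $\bar\pat$, not $\bar\pat_f$, and your $E_0$/$E_1$ pages compute untwisted $\bar\pat$-cohomology rather than the object you want to compare with. There is in fact no filtration of the bigraded complex whose graded differential is $\bar\pat_f$ with $d_f$ as total differential, precisely because $\bar\pat$ and $df\wedge$ sit in different bidegrees; this is why the comparison $H(\bar\pat_f)\cong H(d_f)$ is a nontrivial degeneration-type theorem (the analytic analogue of Sabbah and Ogus--Vologodsky) and not a formal spectral-sequence exercise. The fallback you propose --- rerunning the recursion of Theorem \ref{homotopy_abelian} --- also does not close the gap: that argument concerns the pair $(\bar\pat_f,u\pat)$ and produces \emph{formal} power series in $u$ (the correction terms $(-\bar\pat_f^*G_{\A}\pat)^i\alpha_0$ all have the same total degree, so the series does not terminate), and no convergence at $u=1$ is available to convert it into a statement about $d_f=\bar\pat_f+\pat$; also, for a $\Delta_f$-harmonic $\alpha_0$ the relevant obstruction is $\pat\alpha_0$, which Lemma \ref{weak-ddbar} only shows to be $\bar\pat_f$-exact, not zero.

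The paper's proof goes a completely different route, and the step your proposal never touches is exactly the crux: handling $\mathrm{Im}\,f$. The paper chains three isomorphisms: (i) the rescaling $\alpha^{p,q}\mapsto 2^{-p}\alpha^{p,q}$ identifies $(\A_{f,\infty}(X),\bar\pat_f)$ with $(\A_{f,\infty}(X),\bar\pat_{f/2})$; (ii) the twisted K\"ahler--Hodge identities give $\Delta_{f/2}=\tfrac12\Delta_{\mathrm{Re}f}$, so $\bar\pat_{f/2}$-harmonics coincide with $d_{\mathrm{Re}f}$-harmonics and $H(\A_{f,\infty}(X),\bar\pat_{f/2})\cong H(\A_{f,\infty}(X),d_{\mathrm{Re}f})$ (this is the only place a "K\"ahler/$\pat\bar\pat$-type" input enters, and it lands on $d_{\mathrm{Re}f}$, not on $d_f$); (iii) conjugation by the multiplier $e^{i\,\mathrm{Im}f}$, via $d_f=e^{-i\mathrm{Im}f}\circ d_{\mathrm{Re}f}\circ e^{i\mathrm{Im}f}$, where the substantive analytic point is that multiplication by $e^{i\,\mathrm{Im}f}$ preserves $\A_{f,\infty}(X)$ because its covariant derivatives are bounded by polynomials in $|\nabla f|$ thanks to the strong ellipticity condition. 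Your argument, even if the filtration issue were repaired, would at best compare $H(\bar\pat_f)$ with $H(d_{2\mathrm{Re}f})$ and would still miss step (iii); to fix the proposal you would need both to abandon the claimed filtration and to add the $e^{i\,\mathrm{Im}f}$ conjugation argument (or some substitute that controls the imaginary part of $f$ inside the $f$-twisted Sobolev spaces).
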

\begin{proof}
The isomorphism is a combination of the following three isomorphisms.
Firstly, the map
\begin{align*}
\up_1: \A_{f,\infty}(X) \rightarrow \A_{f,\infty}(X) \quad \alpha^{p,q} \mapsto 2^{-p} \cdot \alpha^{p,q}
\end{align*}
induces an isomorphism between the complex $(\A_{f,\infty}(X), \bar\pat_f)$ and the complex $(\A_{f,\infty}(X), \bar\pat_{\frac{f}{2}})$.
 Secondly, by the K\"ahler property, we have
\begin{equation*}
H(\A_{f,\infty}(X), \bar\pat_{\frac{f}{2}}) \cong H(\A_{f,\infty}(X), d_{\text{Re} f}),
\end{equation*}
where $d_{\text{Re} f} := \bar\pat_{\frac{f}{2}} + \pat_{\frac{f}{2}} = d + d\text{Re} f \wedge$.
The isomorphism is given by identification of $\Delta_{\frac{f}{2}}$-harmonic forms.
Finally, consider the following map
\begin{align*}
\up_2: \A_{f,\infty}(X) \rightarrow \A_{f,\infty}(X) \quad \alpha \mapsto e^{i \text{Im}f} \cdot \alpha.
\end{align*}
This is a well defined isomorphism because $|e^{i \text{Im}f}| = 1$ and derivatives of $e^{i \text{Im}f}$ can be bounded by a polynomial of $|\nabla f|$ by the condition \eqref{tame}.
By the identity
\begin{equation*}
d_f = e^{-i \text{Im}f} \circ (d + d\text{Re}f \wedge) \circ e^{i \text{Im}f},
\end{equation*}
we conclude that $\up_2$ induces an isomorphism
\begin{equation*}
H(\A_{f,\infty}(X), d_f) \cong H(\A_{f,\infty}(X), d_{\text{Re} f}).
\end{equation*}
\end{proof}
\end{appendix}

\Addresses

\end{document}